\documentclass{lmcs}
\pdfoutput=1

\usepackage{lastpage}

\lmcsheading{}{1--\pageref{LastPage}}{}{}%
{Mar.~31,~2017}{Jul.~20,~2018}{}

\usepackage{amsmath}
\usepackage{xspace}
\usepackage{comment}
\usepackage{mathrsfs}  
\usepackage{amssymb}
\usepackage{amsthm}
\usepackage{stmaryrd}
\usepackage{bussproofs}
\usepackage{wrapfig}
\usepackage{hyperref}
\usepackage{mathdots}
\usepackage{upgreek}
\usepackage{tikz}
\usepackage{proof}
\usetikzlibrary{arrows,shapes,positioning}
\usetikzlibrary{calc,decorations.markings}
\usepackage{float}
  \floatstyle{boxed} 
  \restylefloat{figure}
\hyphenation{
con-si-de-ring
con-ti-nuous
se-pa-ra-tion
appli-cation
}




\newcommand{\derof}{\blacktriangleright}

\newcommand{\Lam}{\ensuremath{\Lambda}}
\newcommand{\Lamo}{\ensuremath{\Lambda^o}}

\newcommand{\ler}{\sqle_{\sf r}}


\newcommand{\rg}[1]{\mathrm{rg}(#1)}

\newcommand{\bU}{\bold{U}}
\newcommand{\bV}{\bold{V}}
\newcommand{\bJ}{\bold{J}}

\newcommand{\bI}{\bold{I}}
\newcommand{\bK}{\bold{K}}
\newcommand{\onec}{\mathbf{1}}
\newcommand{\bY}{\bold{Y}}
\newcommand{\bO}{\boldsymbol{\Omega}}
\newcommand{\cnc}{\mathbf{c}}

\newcommand{\jc}{\bold{J}}

\newcommand{\blam}{\boldsymbol{\lambda}}

\newcommand{\Hplus}{\mathcal{H^+}}
\newcommand{\Hstar}{\mathcal{H^*}}
\newcommand{\Hpl}{\Hplus}
\newcommand{\Hst}{\Hstar}

\newcommand{\setof}[1]{\{{#1}\}}
\newcommand{\seqof}[1]{\langle{#1}\rangle}

\newcommand{\subt}{\restr}

\newcommand{\con}[2]{{#1}.{#2}}
\newcommand{\concat}[2]{{#1}\cdot {#2}}
\newcommand\sqle{\sqsubseteq}

\newcommand{\pathsof}[1]{\Pi(#1)}
\newcommand{\BTlabs}{\mathscr{L}}

\newcommand{\Wit}[2][\cD]{\mathsf{W}_{#1}(#2)}
\newcommand{\emptyseq}{\varepsilon}
\newcommand{\Seq}{\nat^{<\omega}}

\newcommand{\Trees}[1][\textrm{\sf rec}]{\mathbb{T}^{\infty}_{#1}}

\newcommand{\treeof}[1]{#1}

\newcommand{\funshift}[2]{#1^{\ge{#2}}}

\newcommand{\nak}[1]{\lfloor#1\rfloor}
\newcommand{\Compl}[1]{\overline{#1}}

\renewcommand{\bold}[1]{{\bf #1}}

\newcommand{\subst}[2]{\{#2/#1\}}





\newcommand{\nf}[1][\beta]{\mathsf{nf}_{\red{#1}}} 
 
\newcommand{\phnf}{\mathrm{phnf}} 

\newcommand{\NF}[1][]{\mathrm{NF}_{\red{#1}}}
\newcommand{\cdz}{\mathrm{cdz}}
\newcommand{\FV}{\mathsf{FV}}

\newcommand{\catC}{\bold{C}}
\newcommand{\One}{\bold{1}}

\newcommand{\Rel}{\bold{Rel}}
\newcommand{\MRel}{\bold{M\!Rel}}
\newcommand{\Termobj}{\top}
\newcommand{\eval}{\mathrm{ev}}

\newcommand{\pairfun}[2]{\langle#1,#2\rangle}
\newcommand{\Id}[1]{\mathrm{Id}_{#1}}

\newcommand{\Pair}[2]{\langle{#1},{#2}\rangle}
\newcommand{\Curry}{\Uplambda}


\newcommand{\Abs}{\uplambda}
\newcommand{\App}{\NF[\beta\bot]}

\newcommand{\Lint}[2][]{\llbracket #2\rrbracket^{#1}}
\newcommand{\Int}[2][]{\vert #2\vert^{#1}}

\newcommand{\With}[2]{{#1}\hspace{-2pt}\mathrel{\&}\hspace{-3pt}{#2}}

\newcommand{\Proj}[1]{\pi_{#1}}

\newcommand{\Funint}[2]{{#1}\hspace{-2pt}\Rightarrow\hspace{-2pt}{#2}}

\newcommand{\comp}{\hspace{-1pt}\circ\hspace{-1pt}}
\newcommand{\Th}{\mathrm{Th}}
\newcommand{\Thle}{\mathrm{Th_\sqsubseteq}}


\newcommand\emptymset{\omega}
\newcommand{\Mfin}[1]{\mathcal{M}_{\mathrm{f}}( #1 )}
\newcommand{\Mset}[1]{[{#1}]}

\newcommand{\Powf}[1]{\mathcal{P}_{\mathrm{f}}(#1)}

\newcommand{\st}{\ \vert\ }
\newcommand{\nat}{\mathbb{N}}

\newcommand{\Pow}[1]{\mathcal{P}(#1)}
\newcommand{\dom}{\mathrm{dom}}
\newcommand{\Var}{\mathrm{Var}}


\newcommand{\seq}[1]{\vec #1}


\renewcommand{\phi}{\varphi}
\newcommand*{\msto}[1][\beta]{\twoheadrightarrow_{\red{#1}}}

\newcommand{\code}[1]{\lceil #1\rceil}


\newcommand{\rgm}{\textrm{rgm}}

\newcommand{\BTth}{\mathcal{B}}
\newcommand{\BTset}{\mathbb{BT}}
\newcommand{\BTeth}{\mathcal{H}^+}
\newcommand{\BT}[2][]{\mathrm{BT}^{#1}(#2)}
\newcommand{\BTle}{\le_\bot\!}


\newcommand{\To}{\Rightarrow}
\newcommand{\cA}{\mathcal{A}}
\newcommand{\cB}{\mathcal{B}}

\newcommand{\cD}{\mathcal{D}}
\newcommand{\cE}{\mathcal{E}}

\newcommand{\cH}{\mathcal{H}}

\newcommand{\cO}{\mathcal{O}}

\newcommand{\cT}{\mathcal{T}}

\newcommand{\obsle}[1][\mathrm{nf}]{\sqsubseteq^{#1}}
\newcommand{\obseq}[1][\mathrm{nf}]{\equiv^{#1}}

\newcommand{\bnf}{\mathrel{::=}} 
\newcommand*{\mystackrel}[2]{\stackrel{\raise-2pt\hbox{$\scriptstyle\!#1$}}{#2}}
\newcommand*{\mystackrelrev}[2]{\stackrel{\raise-2pt\hbox{$\scriptstyle\,#1$}}{#2}}
\newcommand*{\mystackrelequi}[2]{\stackrel{\raise-2pt\hbox{$\scriptstyle#1$}}{#2}}

\newcommand*{\red}[1]{\mathtt{#1}}  
\newcommand*{\redto}[1][]{\rightarrow_{\red{#1}}}

\newcommand*{\mstoinv}[1][\beta]{\,_{#1}\!\!\twoheadleftarrow}



\newcommand{\leeta}[1][]{\le^\eta_{#1}}
\newcommand{\geeta}[1][]{\ge^\eta_{#1}}



\newcommand{\app}{\#\textrm{app}}
\newcommand{\appl}{\textrm{App}}
\newcommand{\LLT}{\lam\cT}
\newcommand\Lamb{\Lambda_\bot} 


\newcommand{\rel}[1]{\ \mathtt{#1}\ } 
\newcommand{\da}[1]{\mathsf{da}(#1)} 
\newcommand{\rR}{\mathtt{R}} 

\newcommand\hole[1]{[#1]} 
\newcommand{\lam}{\ensuremath{\lambda}}
\newcommand{\lamb}{\ensuremath{\lambda\bot}}

\newcommand{\size}[1]{{\# #1}}

\newcommand{\restr}[2]{#1\hspace{-3pt}\upharpoonright_{\!#2}}
\newcommand{\fpath}[2]{\langle #1|#2 \rangle}


\newcommand{\rednum}[1]{\#\mathrm{red_{\beta}}(#1)} 


\newcommand{\supp}{\mathrm{supp}}

\newcommand{\Lamrule}[2]{\mathtt{Lam}\big(#1,#2\big)}

\newcommand{\Apprule}[2]{\mathtt{App}\big(#1,#2\big)}
\newcommand{\Env}[1]{\mathsf{Env}_{#1}}
\newcommand{\EnvD}{\mathsf{Env}_{\mathcal{D}}}
\newcommand{\justlike}{\simeq}



%
{\begin{list}%
       {$-$}%
       {\setlength{\itemsep}{0pt}
     \setlength{\parsep}{0.5pt}
     
     \setlength{\partopsep}{0pt}
     \setlength{\leftmargin}{2em}
     \setlength{\labelwidth}{1em}
     \setlength{\labelsep}{0.3em}}}%
{\end{list}}


\newcounter{prob}
\newcounter{temp}
\newcounter{sec}

\theoremstyle{plain}
\newtheorem{lemma}[thm]{Lemma}
\newtheorem{proposition}[thm]{Proposition}
\newtheorem{example}[thm]{Example}
\newtheorem{remark}[thm]{Remark}

\newtheorem{notation}[thm]{Notation}
\newtheorem{theorem}[thm]{Theorem}
\newtheorem{definition}[thm]{Definition}
\newtheorem*{convention}{Convention}
\newenvironment{corollary}{\begin{cor}}{\end{cor}}

\newtheorem{problem}[prob]{Problem}
\usepackage%
   {todonotes}

\begin{document}
\title{Relational Graph Models at Work}

\author{Flavien Breuvart}
\author{Giulio Manzonetto}
\author{Domenico Ruoppolo}
\address{Universit\'e Paris~13, Laboratoire LIPN, CNRS UMR 7030, France}
  \email{\{flavien.breuvart,giulio.manzonetto,domenico.ruoppolo\}@lipn.univ-paris13.fr}

\maketitle

\begin{abstract}
We study the relational graph models that constitute a natural subclass of relational models of \lam-calculus.
We prove that among the \lam-theories induced by such models there exists a minimal one, and that the corresponding relational graph model is very natural and easy to construct. 
We then study relational graph models that are fully abstract, in the sense that they capture some observational equivalence between \lam-terms.
We focus on the two main observational equivalences in the \lam-calculus, the $\lambda$-theory $\Hpl$ generated by taking as observables the $\beta$-normal forms, and $\Hst$ generated by considering as observables the head normal forms.
On the one hand we introduce a notion of \emph{\lam-K\"onig} model and prove that a relational graph model is fully abstract for $\Hpl$ if and only if it is extensional and \lam-K\"onig.
On the other hand we show that the dual notion of \emph{hyperimmune} model, together with extensionality, captures the full abstraction for $\Hst$.
\end{abstract}


\section*{Introduction}
The untyped \lam-calculus is a paradigmatic programming language introduced by Church in~\cite{Church41}. It has a prominent role in theoretical computer science~\cite{Bare} and, despite its very simple syntax, it is Turing-complete \cite{Kleene36, Turing37}.
Its denotational models have been fruitfully used for proving the consistency of extensions of $\beta$-convertibility, called \emph{\lam-theories}, and for exposing operational features of \lam-terms.
The first model of \lam-calculus, $\mathscr{D}_\infty$, was defined by Scott in the pioneering article~\cite{Scott72}. 
Subsequently, a wealth of models have been introduced in various categories of domains and classified into \emph{semantics} according to the nature of their representable functions.
Scott's \emph{continuous semantics}~\cite{Scott76} corresponds to the category whose objects are complete partial orders and morphisms are continuous functions. 
The \emph{stable semantics}~\cite{Berry78} and the \emph{strongly stable semantics}~\cite{BucciarelliE91} are refinements of the continuous semantics, introduced to capture the notion of ``sequential'' continuous function.
In each of these semantics all the models come equipped with a partial order, and some of them, called \emph{webbed models}, are built from lower level structures called ``webs''~\cite{Berline00}. 
The simplest class of webbed models is the class of \emph{graph models}~\cite{Longo83}, which was isolated in the seventies by Plotkin, Scott and Engeler within the continuous semantics~\cite{Engeler81,Plotkin71,Scott76}.

In each of the aforementioned semantics there exists a continuum of models inducing pairwise distinct \lam-theories. Nevertheless, certain models are particularly important because they allow to capture operational properties of the \lam-terms. 
For instance, two \lam-terms have the same interpretation in Engeler's graph model $\mathscr{E}$ exactly when they are equal in the \lam-theory $\cB$, which equates all \lam-terms having the same B\"ohm tree.
The main technical tool for proving such a result is the Approximation Theorem~\cite{HonsellR92}, stating that the interpretation of a \lam-term is given by the supremum of the interpretations of the finite approximants of its B\"ohm tree.
Other models are significant because they are \emph{fully abstract}, which means that the induced \lam-theory captures some observational equivalence between \lam-terms. 
A celebrated result by Hyland~\cite{Hyland76} and Wadsworth~\cite{Wadsworth78} shows that Scott's $\mathscr{D}_\infty$ is fully abstract for the \lam-theory $\Hst$, that corresponds to the observational equivalence where the observables are the head normal forms.
In~\cite{CoppoDZ87}, Coppo, Dezani-Ciancaglini and Zacchi constructed a \emph{filter model} $\mathscr{D}_\cdz$ and proved that it is fully abstract for $\Hpl$, the observational equivalence where the observables are the $\beta$-normal forms. 
The \lam-theory $\Hpl$ is the original extensional observational theory defined by Morris in his thesis~\cite{Morristh}.
It is maybe less ubiquitously studied in the literature than $\Hst$ but we believe is equally important. 
For instance, its notion of observables is central in the B\"ohm Theorem~\cite{Bohm68} and in other separability results~\cite{CoppoDR78}.

\subsection*{Graph Models in the Relational Semantics.} 
In the present paper we focus on the \emph{relational semantics} of \lam-calculus, that has been introduced by Girard as a quantitative model of linear logic in~\cite{Girard88}.
The first concrete examples of relational models of \lam-calculus were built in~\cite{BucciarelliEM07,HylandNPR06}.
Recently, Manzonetto and Ruoppolo individuated the subclass of \emph{relational graph models} encompassing all previous examples~\cite{ManzonettoR14}.
The definition of a relational graph model (Definition~\ref{def:graphrel}) really is the relational analogue of the definition of a graph model living in the continuous semantics.
In particular, relational graph models can be built by free completion and by forcing like the continuous ones.
However, from the point of view of the induced \lam-theories, they share more similarities with filter models.
For instance, the relational graph model $\cD_\omega$ built in \cite{BucciarelliEM07} has the same theory as Scott's $\mathscr{D}_\infty$, namely it is fully abstract for $\Hst$~\cite{Manzonetto09}.
Similarly, the model $\cD_\star$ from~\cite{ManzonettoR14} is fully abstract for $\Hpl$, like the filter model $\mathscr{D}_\cdz$.
On the other hand, no graph model living in the continuous semantics can represent such \lam-theories because no graph model is extensional.
When comparing relational graph models with filter models inducing the same \lam-theory, one can see that the former are in general simpler because their elements are not partially ordered. 
Moreover, an element $\sigma$ in the relational interpretation of a \lam-term $M$ carries information concerning intensional properties of $M$.
In particular, from $\sigma$ it is possible to compute a bound to the number of head-reduction steps towards its normal form and infer the amount of resources consumed by $M$ during such a reduction sequence~\cite{deCarvalho09,CarvalhoPF11}.

\subsection*{Relational Graph Models as Type Systems}
The Stone duality between filter models and intersection type systems has been widely studied in the literature, e.g.,~\cite{Abramsky91,CoppoDHL84,HonsellR92,BareTypes,SalvatiMGB12}. 
(We refer to~ Ronchi Della Rocca and Paolini's book for a thoughtful discussion \cite[Ch.~13]{RonchiP04}.)
Such a correspondence shows that some interesting classes of domain-based models can be described in logical form. 
The intuition is that a functional intersection type $\alpha_1\land\cdots\land\alpha_n\to\beta$ can be seen as a  continuous step function sending the set $\{\alpha_1,\dots,\alpha_n\}$ to the element $\beta$.
Types come equipped with inference rules reflecting the structure of the underlying domain.
In~\cite{PaoliniPR15}, Paolini et al.\ introduce the \emph{strongly linear} relational models (a class encompassing relational graph models, but included in the \emph{linear} relational models of~\cite{Manzonetto12}) and show that they can be represented as relevant (i.e., without weakening) intersection type systems where the intersection is a non-idempotent operation (it is actually a linear logic tensor~$\otimes$).
The idea, already present in~\cite{deCarvalho09}, is that in the absence of idempotency and partial orders the type $\alpha_1\land\cdots\land\alpha_n\to\beta$ can be seen as a relation associating the multiset $[\alpha_1,\dots,\alpha_n]$ with the element $\beta$.
As a consequence of the work in~\cite{PaoliniPR15}, all relational graph models can be presented in logical form, that is, as non-idempotent intersection type systems.
We use this kind of representation to expose and exploit their quantitative features.

\subsection*{The Approximation Theorem.}
Besides soundness (Theorem~\ref{thm:soundness}), one of the main properties enjoyed by relational graph models is the Approximation Theorem (Theorem~\ref{thm:app}).
Typically such a theorem is proved by exploiting Wadsworth's stratified refinements of the $\beta$-reduction~\cite{Wadsworth78}, that also work in the relational framework as shown in~\cite{Manzonetto09}.
Other techniques are based on Tait and Girard's \emph{reducibility candidates}~\cite{Tait67,Girard72}, that are  widespread in logic and the theory of programming languages~\cite{Reynolds83,Krivine93,KrivineClass,BareTypes}, but notoriously give rise to proofs based on impredicative principles. 
Thanks to its quantitative nature, in the context of the relational semantics it is possible to get rid of the traditional methods and provide a combinatorial proof. 
This is the case of the proof given in~\cite{ManzonettoR14} by relying on these facts:
\begin{itemize}
\item 
	Relational graph models are also models of Ehrhard's \emph{differen\-tial \lam-calculus}~\cite{lambdadiff} and Tranquilli's resource calculus~\cite{phdtranquilli}. 
	This follows from the fact that they all are \emph{linear} reflexive objects in the Cartesian closed differential category $\MRel$~\cite{Manzonetto12}.
\item 
	An easy induction shows that the interpretation of a \lam-term $M$ in a relational graph model is equal to the interpretation of its \emph{Taylor expansion}~\cite{lambdadiff}, which is a representation of $M$ as a power series of resource approximants (replacing in a way the finite approximants of its B\"ohm tree).
\item The usual Approximation Theorem follows from the above result by applying a theorem due to Ehrhard and Regnier~\cite{bohmtaylor} stating that the normal form of the Taylor expansion of $M$ coincides with the Taylor expansion of its B\"ohm tree.
\end{itemize}
In Section~\ref{subsec:AT} we provide a new combinatorial proof of the Approximation Theorem by exploiting the logical presentation discussed above. 
We are going to associate a measure with the derivation tree $\pi$ of $\Gamma\vdash M : \alpha$ and show that when $M$ $\beta$-reduces to $N$ by contracting a redex $R$ two cases are possible: either there exists a derivation of $\Gamma\vdash N : \alpha$ having a strictly smaller measure, or $\pi$ is a derivation of $\,\Gamma\vdash M\subst{R}{\bot} : \alpha$, where $M\subst{R}{\bot}$ denotes the approximant obtained by substituting a constant $\bot$ for the redex $R$ in $M$. 
In both cases, either the measure of the derivation or the number of redexes in $M$ has decreased.
Therefore the Approximation Theorem follows by a simple induction over the ordinal $\omega^2$.

\subsection*{The Minimal Relational Graph Theory.} 
Every relational graph model induces a \lam-theory through the kernel of its interpretation function. 
We call \emph{relational graph theories} those \lam-theories induced by some relational graph models. 
A natural question that arises is what \lam-theories are in addition relational graph theories.
We do not provide a characterization, but we show that the \lam-theories $\cB$, $\Hpl$ and $\Hst$ are. 
Another question is whether there exists a minimal relational graph theory: for instance, in~\cite{BucciarelliS08} Bucciarelli and Salibra proved that the minimal \lam-theory among the ones represented by usual graph models exists, but their construction of the minimal model is complicated and what \lam-terms are actually equated in the minimal theory remains a mystery. 
In Section~\ref{sec:minimalth} we show not only that a minimal relational graph theory exists, but also that such a \lam-theory  is actually $\cB$. Also, the corresponding model~$\mathcal{E}$ is very simple to define (its construction is actually analogous to the one of Engeler's graph model). 
Moreover, we prove that even the preorder induced by $\cE$ on \lam-terms is minimal among representable inequational theories.
Our model $\cE$ shares many properties with Ronchi Della Rocca's filter model defined in~\cite{Ronchi82}.

\subsection*{Characterizing Fully Abstract Models}
In the literature there are many full abstraction theorems, namely results showing that some observational equivalence arises as the theory of a suitable denotational model.
However, until recently, researchers were only able to prove full abstraction results for individual models~\cite{Hyland76,Wadsworth78,CoppoDZ87}, or at best to provide sufficient conditions for models living in some class to be fully abstract \cite{Manzonetto09,ManzonettoR14,GouyTh}. 
For instance, Manzonetto showed in \cite{Manzonetto09} that a model of \lam-calculus living in a cpo-enriched Cartesian closed category is fully abstract for $\Hst$ whenever it is a ``well stratified $\bot$-model\footnote{Using the terminology of \cite{Manzonetto09}.}''.
More recently, he proved in collaboration with Ruoppolo that every extensional relational graph model preserving the polarities of the empty multiset (in a technical sense) is fully abstract for~$\Hpl$~\cite{ManzonettoR14}.
A substantial advance in the study of full abstraction was made in~\cite{Breuvart14}, where  Breuvart proposed a notion of \emph{hyperimmune} model of \lam-calculus, and showed that a $K$-model\footnote{%
The class of \emph{$K$-models}, which contains all graph models, was isolated by Krivine in~\cite{Krivine93}.} living in the continuous semantics is fully abstract for $\Hst$ if and only if it is extensional and hyperimmune, thus providing a characterization.
In Section~\ref{sec:Hpl} we define the dual notion of \emph{\lam-K\"onig} model and prove that a relational graph model is fully abstract for $\Hpl$ exactly when it is extensional and \lam-K\"onig.
In Section~\ref{sec:Hst} we show that the notion of  hyperimmune continuous model has a natural counterpart in the relational semantics and that, also in the latter case, together with extensionality gives a characterization of all relational graph models fully abstract for~$\Hst$.

\subsection*{Related Works} 
The primary goal of this article is to provide a uniform and self-contained treatment of relational graph models and their properties. In particular, we present some semantical results recently appeared in the conference papers~\cite{ManzonettoR14,BreuvartMPR16}. 
(The syntactic results in~\cite{BreuvartMPR16} will be the subject of a different paper~\cite{IntrigilaMP18} in connection with the $\omega$-rule.)
Besides giving more detailed proofs and examples, we provide several original results, like a quantitative proof of the Approximation Theorem, the characterization of the minimal representable theory, and the characterization of all relational graph models that are fully abstract for $\Hst$.
A natural comparison is with the article~\cite{PaoliniPR15}, where Paolini \emph{et al.}\ introduce the notion of strongly linear relational model and show that such models correspond to their notion of \emph{essential} type systems.
We remark that their work rather focuses on the properties enjoyed by those systems, like (weighted) subject reduction/expansion and adequacy, while we focus on the representable \lam-theories and provide general full abstraction results.

The relational semantics, being very versatile, can also be used to model the call-by-value and the call-by-push-value \lam-calculus~\cite{Levy06}, as well as non-deterministic \cite{deLiguoroP95,DezaniLP98,AschieriZ13} and resource sensitive extensions~\cite{lambdadiff,bohmtaylor} of \lam-calculus. 
We refer the reader to~\cite{Ehrhard12,CarraroG14} for a relational semantics of the call-by-value \lam-calculus and to~\cite{EhrhardG16} for the call-by-push-value.
For non-deterministic calculi, see~\cite{BucciarelliEM12} in the call-by-name setting and \cite{Diaz-CaroMP13} in the call-by-value one.
Relational models of differential and resource calculi have been studied in~\cite{PaganiR10,Manzonetto12,Breuvart13}.
The relational semantics has been generalized by considering multisets with infinite multiplicities to build models that are not sensible~\cite{CarraroES10}, and by replacing relations with matrices of scalars to provide quantitative models of non-deterministic PCF~\cite{LairdMMP13}.
An even more abstract perspective, where the categorical notion of profunctors takes the role of relations, was contemplated in~\cite{jdm096,Hyland10}.

\subsection*{Outline}
In Section~\ref{sec:prelim} we present some notions and notations, mainly concerning \lam-calculus, that are useful in the rest of the article. 
In Section~\ref{sec:lamtheories} we review some literature concerning the observational equivalences corresponding to $\Hpl$ and $\Hst$, and their characterizations in terms of extensional equivalences on B\"ohm trees.
In Section~\ref{sec:rgm} we define the class of relational graph models, show how to build them via free completion of a partial pair, and prove the soundness.
In Section~\ref{sec:logical} we provide the presentation of relational graph models in logical form, exhibit their quantitative features and prove the Approximation Theorem.
Section~\ref{sec:minimalth} is devoted to present the Engeler-style relational graph model and prove that the induced (inequational) theory is minimal.
In Section~\ref{sec:Hpl} and Section~\ref{sec:Hst} we provide the characterizations of all relational graph models that are fully abstract for $\Hpl$ and $\Hst$.


\section{Preliminaries}\label{sec:prelim}
\subsection{Coinduction.}
In this article we will often use coinductive structures and coinductive reasoning. 
We recall here some basic facts about coinduction and discuss some terminology, but we mainly take them for granted.
Nice tutorials on the subject are~\cite{JacobsR97,KozenS17}; see also~\cite{abelPTS13}.

A coinductive structure, corresponding to a coinductive datatype, is just the greatest fixed point over a grammar, or equivalently the terminal coalgebra over the corresponding functor.
We also consider coinductive propositions (and relations), that are the greatest propositions over such coinductive structures respecting the structural constraints. Coinductive propositions are proven by infinite derivation trees that are coinductive structures.

As opposed to the inductive principle, the coinductive one is unable to destruct a coinductive object, it is only able to construct one. More precisely, we can prove by coinduction a statement of the form $\forall x{\in} X\, \exists y{\in} Y\, \phi(x,y)$ where $Y$ and $\phi$ are coinductively defined. 
Consider for now the simpler case where we want to prove $\forall x\!\in\! X \, \phi(x)$ for some coinductive proposition $\phi$. In order to apply the coinduction principle, it is sufficient to show
$$
	\textstyle\forall x\in X\;\, \exists \, n\in\nat\;\, \exists \, \seq x\in X^n\; \Big(\bigwedge_{i\le n} \phi(x_i)\;\, \stackrel{p} {\Rightarrow }\;\, \phi(x) \Big)
$$
where the ``$p$'' in $\stackrel{p} {\Rightarrow }$ refers to the productivity of the implication. In this case ``productive'' means that in the proof of the implication some patterns of the grammar coinductively defining $\phi$ are actually applied to the various $\phi(x_i)$. 
Usually $x$ is a certain (inductive or coinductive) structure and $ \seq x$ can be seen as an ``unfolding'' of $x$. 

In the general case, namely $\forall x{\in} X\,\exists y{\in} Y\, \phi(x,y)$, it is sufficient to show that
$$
	\textstyle\forall x\in X \;\,\exists \, n\in\nat\;\,\exists\,\seq x\in X^n\;\,\forall \seq y\in Y^n\;\,\exists\,^p y \in Y\;\, \Big(\bigwedge_{i\le n} \phi(x_i,y_i)\;\, \stackrel{p} {\Rightarrow }\;\,\phi(x,y)\Big)  .
$$
The ``productivity'' of the existential $\exists\,^p$ simply requires that $y = \mathsf{p}(y_1,\dots, y_n)$ for some pattern $\mathsf{p}$ of the grammar coinductively generating $Y$.
Notice that the negation of a coinductive statement is inductive and \emph{vice versa}, a fact that we use in the proof of Lemma~\ref{lemma:bourdelique}.

Since structural coinduction has been around for decades and many efforts have been made in the community to explain why it should be used as innocently as structural induction, in our proofs we will not reassert the coinduction principle every time it is used. 
Borrowing the terminology from~\cite{KozenS17}, we say that we apply the ``coinductive hypothesis'' whenever the coinduction principle is applied.
We believe that this mathematical writing greatly improves the readability of our proofs without compromising their correctness: the suspicious reader can study~\cite{KozenS17} where it is explained how this informal terminology actually corresponds to a formal application of the coinduction principle. 

\subsection{Sets, Functions and Multisets}\label{subs:msets}
We denote by $\nat$ the set of natural numbers and by $\nat^+$ the set of strictly positive natural numbers.

Let $A,B$ be two sets. We write $\Pow{A}$ for the powerset of $A$, and $\Powf A$ for the set of all finite subsets of $A$.
Given a function $f : A \to B$ we write $\dom(f)$ for its \emph{domain}.

A \emph{multiset} over $A$ is a partial function $a : A\to\nat^+$. 
Given $\alpha\in A$ and a multiset $a$ over $A$, the \emph{multiplicity of $\alpha$ in $a$} is given by $a(\alpha)$.
A multiset $a$ is called \emph{finite} if  $\dom(a)$, which is called its \emph{support}, is finite. 
A finite multiset $a$ will be represented as an unordered list of its elements 
$[\alpha_1,\dots,\alpha_n]$, possibly with repetitions. The empty multiset is denoted by~$\emptymset$. 
We write $\Mfin{A}$ for the set of all finite multisets over $A$.
Given $a_1,a_2\in\Mfin{A}$, their \emph{multiset union} is denoted by $a_1 + a_2$ and defined as a pointwise sum.

\subsection{Sequences and Trees}\label{ssec:notations}

We denote by $\Seq$ the set of all finite sequences over $\nat$.
An arbitrary sequence is of the form $\sigma =\seqof{n_1,\dots,n_k}$. The empty sequence is denoted by $\emptyseq$.

Let $\sigma = \seqof{n_1,\dots,n_k}$ and $\tau = \seqof{m_1,\dots,m_{k'}}$ 
be two sequences and let $n\in\nat$. 
We write: 
\begin{itemize}
\item $\con{\sigma}{n}$ for the sequence $\seqof{n_1,\dots,n_k,n}$,
\item $\concat{\sigma}{\tau}$ for the concatenation of $\sigma$ and $\tau$,
that is for the sequence
$\seqof{n_1,\dots,n_k,m_1,\dots,m_{k'}}$. 
\end{itemize}
Given a function $f : \nat \to \nat$, its \emph{prefix of length $n$} is the sequence $\fpath{f}{n} = \seqof{f(0), \dots, f(n-1)}$.

\begin{definition}[Trees and subtrees]\label{def:tree}\
\begin{itemize}
\item
 A \emph{tree} is a partial function $T : \Seq\to\nat$ such that 
$\dom(T)$ is closed under prefixes and for all $\sigma\in\dom(T)$ and $n\in\nat$ we have $\con{\sigma}n \in\dom(T)$ if and only if $n < T(\sigma)$. 
\item We write $\mathbb{T}$ for the set of all trees.
\item
The \emph{subtree of $T$ at $\sigma$} is the tree $\subt T\sigma$ defined by $\subt T\sigma\!\!(\tau) = T(\concat\sigma\tau)$ for all $\tau\in\Seq$.
\end{itemize}
\end{definition}

The elements of $\dom(T)$ are called \emph{positions}. 
For all $\sigma\in\dom(T)$, $T(\sigma)$ gives the number of children of the node in position $\sigma$.
Hence $T(\sigma) = 0$ when $\sigma$ corresponds to a leaf.

\begin{definition}
A tree $T$ is called: 
\begin{itemize}
\item \emph{recursive} if the function $T$ is partial recursive (after coding); 
\item \emph{finite} if $\dom(T)$ is finite; 
\item \emph{infinite} if it is not finite.
\end{itemize}
\end{definition}
We denote by $\Trees[]$ (resp.~$\Trees$) the set of all infinite (resp.\ recursive infinite) trees.

\begin{definition}[Infinite paths]\label{def:inf_paths}
A function $f :\nat\to\nat$ is \emph{an infinite path of $T$} if $\fpath f n\in\dom(T)$ for all $n\in\nat$. 
We denote by $\pathsof{T}$ the set of all infinite paths of $T$.
\end{definition}
By K\"onig's lemma, a tree $T$ is infinite if and only if $\pathsof{T}\neq\emptyset$.

\subsection{Category Theory}\label{subsec:categories} 
Concerning category theory we mainly use the notations from~\cite{AspertiL91}.

Let $\catC$ be a category and $A,B,C$ be arbitrary objects of $\catC$. 
We write $\catC(A,B)$ for the homset of morphisms from $A$ to $B$.
When there is no chance of confusion we simply write $f:A\to B$ instead of $f\in\catC(A,B)$.
Given two morphisms $f: A\to B$ and $g : B \to C$, their composition is indicated by $g \comp f : A\to C$.

When the category $\catC$ is Cartesian, we denote by $\Termobj$ the \emph{terminal object}, by $A\times B$ the \emph{categorical product} of $A$ and $B$, by $\Proj1: A\times B\to A$, $\Proj2: A\times B\to B$ the associated \emph{projections} and, given a pair of arrows $f:C\to A$ and $g:C\to B$, by $\pairfun{f}{g}: C\to A\times B$ the unique arrow such that $\Proj1 \comp \pairfun{f}{g} =f$ and $\Proj2 \comp \pairfun{f}{g} = g$.
We write $f\times g$ for the \emph{product map of $f$ and $g$} which is defined by 
$f\times g = \Pair{f\comp\Proj 1}{g\comp\Proj 2}$.

When $\catC$ is in addition \emph{Cartesian closed} we write $\Funint{A}{B}$ for the \emph{exponential object} 
and $\eval_{AB}:(\Funint AB)\times A\to B$ for the \emph{evaluation morphism}.
Moreover, for any object $C$ and arrow $f:C\times A\to B$, $\Curry(f):C\to(\Funint AB)$ stands for the (unique) morphism such that $\eval_{AB}\comp(\Curry(f)\times \Id{A}) = f$.

A Cartesian closed category $\catC$ is \emph{well-pointed} if, for all objects $A,B$ and morphisms $f,g: A\to B$, whenever $f\neq g$, there exists a morphism $h : \Termobj \to A$ such that $f\comp h \neq g\comp h$.
Similarly, an object $A$ is \emph{well-pointed} if the property above holds for all $f,g: A \to A$.

We say that $\cD = (D,\appl,\Abs)$ is a \emph{reflexive object (living in $\catC$)} if $D$ is an object of $\catC$ and $\appl: D\to (D\To D),\Abs : (D\To D)\to D$ are morphisms such that $\appl\comp \Abs = \Id{D\To D}$.
A reflexive object $\cD$ is called \emph{extensional} whenever $\Abs\comp \appl = \Id{D}$.


\subsection{The Lambda Calculus}\label{subsec:lamcal} 

We generally use the notation of Barendregt's book \cite{Bare} for \lam-calculus.
The set $\Lambda$ of \emph{\lam-terms} over a denumerable set $\Var$ of variables is 
defined by:
$$
	\Lam :\quad M,N,P,Q\ \bnf\ x\ \mid\ \lam x.M\ \mid\ MN \qquad
	\textrm{for all }x\in\Var.
$$
We assume that the application associates to the left and has a higher precedence than \lam-abstraction.
For instance, we write $\lam xyz.xyz$ for the \lam-term $\lam x.(\lam y.(\lam z.((xy)z)))$.
Moreover, we often write $\seq x$ for the sequence $\seqof{x_1,\dots,x_n}$ and $\lam\seq x.M$ for $\lam x_1\dots \lam x_n.M$.

The set $\FV(M)$ of \emph{free variables} of $M$ and the $\alpha$-conversion are defined as in~\cite[Ch.~1\S2]{Bare}.
Hereafter, we consider \lam-terms up to $\alpha$-conversion. 

\begin{definition} %
A \lam-term~$M$ is \emph{closed} whenever $\FV(M)=\emptyset$ and 
in this case it is also called a \emph{combinator}. 
The set of all combinators is denoted by~$\Lam^o$.
\end{definition}

We often consider relations on \lam-terms that have the property of being ``context closed''.
Intuitively a context $C\hole-$ is a \lam-term with a \emph{hole} denoted by~$\hole-$. 
Formally, the hole is an algebraic variable and contexts are defined as follows.

\begin{definition} \label{def:contexts}\ 
\begin{itemize}
\item A \emph{context} $C\hole-$ is generated by the grammar (for $x\in\Var$):
$$
	C\hole-\ ::= \ \hole-\ \mid \ x \ \mid \ \lam x.C\hole- \ \mid \ (C_1\hole-)(C_2\hole-)
$$
\item 
	A context $C'\hole-$ is called \emph{single hole} if it has a unique occurrence of the algebraic variable~$\hole-$.
	Single hole contexts are generated by (for $M\in\Lam$):
	$$
		C'\hole- \ ::= \ \hole- \ \mid\ \lam x.C'\hole -  \ \mid\ M(C' \hole-) \ \mid\ (C'\hole -) M  
	$$
\item 
	A context $H\hole-$ is a \emph{head context} if it has the shape $(\lam x_1\dots x_n.\hole-)M_1\cdots M_k$ for some $n,k\ge0$ and $M_1,\dots,M_k\in\Lam$.	
\end{itemize}
Given a context $C\hole-$, we write $C\hole M$ for the \lam-term obtained from $C\hole-$ by substituting $M$ for the hole $\hole-$ possibly with capture of free variables in $M$. 
\end{definition}

A relation $\rR\subseteq\Lam\times\Lam$ is \emph{context closed} whenever $M\rel R N$ entails $C\hole M\rel R C\hole N$ for all single hole contexts $C\hole -$. 
The \emph{context closure} of a relation $\red R\subseteq\Lam\times\Lam$ is the smallest context closed relation $\rR'$ containing $\rR$.
\smallskip

{\bf Reductions}. The \lam-calculus is a higher-order term rewriting system and several notions of reduction can be considered. 
As a matter of notation, given a reduction $\red{R}$, we write $\redto[R]$ for its context closure, $\msto[R]$ for the transitive and reflexive closure of $\redto[R]$, finally $=_{\red R}$ for the corresponding \emph{$\red R$-conversion}, that is the transitive, reflexive and symmetric closure of~$\redto[R]$.
We denote by $\nf[R](M)$ the $\red{R}$-normal form (\emph{$\red{R}$-nf}, for short) of $M$ (if it exists) and by $\NF[R]$ the set of all $\red{R}$-normal forms.
Given two reductions $\red R_1,\red R_2$ we denote their union by simple juxtaposition, i.e.\ $\redto[\rR_1\rR_2]$ represents the relation $\redto[\rR_1]\cup\redto[\rR_2]$.

The main notion of reduction is the \emph{$\beta$-reduction}, which is the context closure of:
\[
 (\lam x.M)N \redto M\subst{x}{N}	\tag{$\beta$}
\]
where $M\subst{x}{N}$ denotes the capture-free simultaneous substitution of $N$ for all free occurrences of $x$ in $M$. 
The term on the left hand-side of the arrow is called \emph{$\beta$-redex}, while the term on the right hand-side is its \emph{$\beta$-contractum}.

A \lam-term $M$ is in $\beta$-normal form if and only if $M = \lam\seq x.x_iM_1\cdots M_k$ and each $M_i$ is in $\beta$-normal form. We say that $M$ \emph{has a $\beta$-normal form} whenever $\nf(M)$ exists.

It is well known that the \lam-calculus is an intensional language --- there are $\beta$-different \lam-terms that are extensionally equal. 
This justifies the definition of the $\eta$-reduction:
\[
\tag{$\eta$} \lam x.Mx \redto M\textrm{ provided }x\notin\FV(M).
\]
Notice however that, when $M$ is a \lam-abstraction, the $\eta$-reduction is actually a $\beta$-step.
\smallskip

{\bf Useful combinators and encodings.}\label{combinators}
Concerning specific combinators, we fix the following notations:
$$  \arraycolsep=0.4cm\def\arraystretch{1.2}
	\begin{array}{r@{\:=\:}l r@{\:=\:}l r@{\:=\:}l}
	\bI&\onec_0=\lam x.x		&\bK&\lam xy.x		&\onec_{n+1} & \lam xy.x(\onec_ny)\\
        \bold{F} & \lam xy.y
        &\Delta&\lam x.xx
        &\bY &\lam f.(\lam x.f(xx))(\lam x.f(xx))\\
        \bO & \Delta\Delta
	&\cnc_n & \lam f z. f^n(z)
        &\bJ & \bY(\lam jxy.x(jy))\\
        
	\end{array}
$$
where $f^n(z) = f(\cdots f(f(z))\cdots)$, $n$ times. We will simply denote by $\onec$ the combinator $\onec_1=_\beta\lam xy.xy$.
It is easy to check that $\bI$ is the identity, $\One_n$ is a $\beta\eta$-expansion of the identity, $\bK$ and $\bold{F}$ are respectively the first and second projection, $\bO$ is the paradigmatic looping combinator, $\bY$ is Curry's fixed point combinator, $\cnc_n$ is the $n$-th Church's numeral 
and $\bJ$ is Wadsworth's ``infinite $\eta$-expansion of the identity'' (see Section~\ref{sec:lamtheories}).

Given two \lam-terms $M$ and $N$ their \emph{composition} is defined by $M\comp N = \lam x.M(Nx)$ and their \emph{pairing} by $[M,N] = \lam x.xMN$ (for $x\notin\FV(MN)$).
Moreover, it is possible to \lam-define a given enumeration $(M_n)_{n\in\nat}$ of closed \lam-terms whenever such an enumeration is effective.  

\begin{definition}
An enumeration of closed \lam-terms $(M_n)_{n\in\nat}$ is called \emph{effective} (or \emph{uniform} in~\cite[\S8.2]{Bare}) if there is a combinator $F\in\Lamo$ such that $F \cnc_n =_\beta M_n$ for all $n\in\nat$. 
\end{definition}

As shown in \cite[Def.~8.2.3]{Bare}, when the enumeration is effective, the sequence $[M]_{n\in\nat}$ can be expressed (using the fixed point combinator $\bY$) as a single \lam-term satisfying 
$$
	[M_n]_{n\in\nat} =_\beta [M_0 , [M_{n+1}]_{n\in\nat}] =_\beta [M_0 , [M_1,[M_{n+2}]_{n\in\nat}]]] =_\beta \cdots 
$$
Such infinite sequences will be mainly used in the following sections to build examples.

\smallskip
{\bf Solvability.}
Lambda terms are classified into solvable and unsolvable, depending on their capability of interaction with the environment, represented here by a context.

\begin{definition}\ 
\begin{itemize}
\item A \lam-term $M$ is {\em solvable} if there exists a head context $H\hole-$ such that ${H\hole M =_\beta \bI}$. Otherwise $M$ is called \emph{unsolvable}. 
\item Two \lam-terms $M,N$ are called \emph{separable} if there exists a context $C\hole-$ such that $C\hole M =_\beta x$ and $C\hole N =_\beta y$ for some variables $x\neq y$.
\item Two \lam-terms $M,N$ are called \emph{semi-separable} if there exists a context $C\hole-$ such that $C\hole M$ is solvable while $C\hole N$ is unsolvable or \emph{vice versa}.
\end{itemize}
\end{definition}

Solvability has been characterized by Wadsworth in terms of \emph{head normalization} in~\cite{Wadsworth76}.
We recall that a \lam-term $M$ is in \emph{head normal form} (\emph{hnf}, for short) if it is of the form $M = \lam x_1\dots x_n.x_iM_1\cdots M_k$ for $n,k\ge 0$.
Remark that in our notation the head variable $x_i$ can be either bound or free. 
A \lam-term $M$ \emph{has a hnf} if it is $\beta$-convertible to a hnf.
The \emph{principal hnf} of a \lam-term $M$, denoted $\phnf(M)$, is the hnf obtained from $M$ by \emph{head reduction} $\to_h$, that is by repeatedly contracting the head redex $\lam\seq y.\underline{(\lam x.M)N}\seq P$ in $M$.
We refer to~\cite[Def.~8.3.10]{Bare} for a formal definition.
\smallskip

\begin{theorem}[Wadsworth~\cite{Wadsworth76}]\label{thm:Wads}~\\
A \lam-term $M$ is solvable if and only if $M$ has a head normal form.
\end{theorem}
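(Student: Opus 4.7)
The plan is to prove the two implications separately, relying on the Church--Rosser theorem and the Standardization Theorem as standard black boxes.

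For $(\Leftarrow)$, assume $M$ has an hnf. By Church--Rosser, $M \msto \lam x_1 \cdots x_n. y M_1 \cdots M_k$, where the head variable $y$ is either one of the bound $x_i$ or one of the free variables of $M$. Let $\vec z = z_1, \dots, z_m$ enumerate the free variables of $M$, chosen so that $y = z_1$ when $y$ is free. I construct a head context $H\hole- = (\lam z_1 \cdots z_m. \hole-)\, Q_1 \cdots Q_m \, P_1 \cdots P_n$ satisfying $H\hole M =_\beta \bI$. Head-reducing the outer redexes substitutes $\vec Q$ for $\vec z$, bringing $H\hole M$ to $(\lam \vec x. y' N_1 \cdots N_k)\, P_1 \cdots P_n$, where $y' = Q_1$ when $y$ is free and $y' = x_i$ when $y = x_i$ is bound; consuming the remaining $\vec P$ then leaves $y''\, N'_1 \cdots N'_k$ with $y''$ equal to the closed term $Q_1$ or to some $P_i$. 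Setting the selected argument to $\lam u_1 \cdots u_k. \bI$ and filling every other slot with $\bI$, the whole term $\beta$-reduces to $\bI$, as desired.

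For $(\Rightarrow)$, assume $M$ is solvable, so $H\hole M =_\beta \bI$ for some head context $H\hole- = (\lam \vec x. \hole-)\, P_1 \cdots P_k$; since $\bI$ is in $\beta$-normal form, Church--Rosser yields $H\hole M \msto \bI$. I argue by contraposition: assume $M$ has no hnf. By a standard consequence of the Standardization Theorem, this is equivalent to saying that the head reduction $M \to_h M_1 \to_h M_2 \to_h \cdots$ is infinite. I then inspect the head reduction of $H\hole M$: its first $\min(n,k)$ steps contract the outer redexes $(\lam x_i.\ldots)\, P_i$, producing one of $M[\vec P/\vec x]\, P_{n+1} \cdots P_k$, $\lam x_{k+1} \cdots x_n. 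M[\vec P/\vec x]$, or $M[\vec P/\vec x]$, according to the relation between $n$ and $k$; in each case the subsequent head redexes are exactly those of the inner (substituted) copy of $M$. Since head reduction commutes with substitution when the reducing term has a head redex --- that is, $M \to_h M'$ implies $M[\vec P/\vec x] \to_h M'[\vec P/\vec x]$ --- the infinite head reduction of $M$ lifts to an infinite head reduction of $H\hole M$, so $H\hole M$ has no hnf, contradicting $H\hole M \msto \bI$.

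The technical heart is the lifting lemma: if $M$ is not in hnf and $M \to_h M'$, then $M[\vec N/\vec z] \to_h M'[\vec N/\vec z]$. This is routine once one unfolds the definition of head redex, because a head redex of $M$ contracts a $(\lam u. T)\, S$ occurring at the head, and exactly the same redex appears at the head of $M[\vec N/\vec z]$ since substitution for free variables commutes with abstractions and applications. The main subtlety is that substitution can in principle \emph{create} new head redexes in $M[\vec N/\vec z]$ that were not present in $M$ (namely when the head variable of $M$ is free and gets replaced by an abstraction), but this only occurs when $M$ is already in hnf, so it does not interfere with the lifting argument. All the rest --- Church--Rosser, the standardization theorem, and the substitution lemma --- is used as standard machinery.
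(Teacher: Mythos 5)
The paper states this as a classical result of Wadsworth and gives no proof of its own, so there is nothing internal to compare against; your argument follows the standard textbook route (explicit head contexts for the ``if'' direction, standardization plus persistence of infinite head reduction for the ``only if'' direction). The $(\Leftarrow)$ direction is correct as written: the context $(\lam z_1\cdots z_m.\hole-)Q_1\cdots Q_m P_1\cdots P_n$ with the slot corresponding to the head variable filled by $\lam u_1\cdots u_k.\bI$ does the job.

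In the $(\Rightarrow)$ direction one step does not hold as stated. In the case $n<k$, after the $n$ outer contractions you reach $M\{\vec P/\vec x\}\,P_{n+1}\cdots P_k$, and it is not true that ``the subsequent head redexes are exactly those of the inner copy of $M$'': whenever the inner term head-reduces to an abstraction $\lam y.N'$, the head redex of $(\lam y.N')P_{n+1}\cdots P_k$ is the outer application $(\lam y.N')P_{n+1}$, not the head redex of $N'$, so the two head-reduction sequences genuinely diverge. For instance, with $M=(\lam vy.vv)(\lam vy.vv)$ and a context supplying a single argument $P$, the second head step of $MP$ contracts $(\lam y.M)P$, which is not a head step of the inner copy $\lam y.M$. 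The conclusion you need --- that $H\hole{M}$ still has an infinite head reduction --- remains true, but it requires the auxiliary lemma ``if $N$ has an infinite head reduction then so does $NQ$'', proved by induction on the number of trailing arguments: either no term in the head reduction of $N$ is an abstraction, in which case the head reduction of $NQ$ really is that of $N$ applied to $Q$; or $N\to_h^{*}\lam y.N'$, whence $NQ\to_h^{*}(\lam y.N')Q\to_h N'\subst{y}{Q}$, and $N'\subst{y}{Q}$ inherits an infinite head reduction from $N'$ by exactly the substitution-lifting lemma you already proved (every term along an infinite head reduction fails to be a hnf, so the lifting applies throughout). Once this lemma is inserted, all three cases of your analysis close and the proof is complete.
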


We say that $M,N$ have \emph{similar hnf's} if $\phnf(M) = \lam x_1\cdots x_n.x_iM_1\cdots M_k$ and $\phnf(N) = \lam x_1\cdots x_{n'}.x_iN_1\cdots N_{k'}$ with $k-n = k'-n'$ and either $x_i$ is free or $i\le\min\{n,n'\}$.

\subsection{B\"ohm Trees}\label{sec:BT}

The B\"ohm trees, introduced by Barendregt in 1977~\cite{Barendregt77} and named after Corrado B\"ohm, are possibly infinite labelled trees representing the execution of a \lam-term.
The following coinductive definition is taken from~\cite{Lassen99} (see also~\cite{JacobsR97}).

\begin{definition}
\begin{figure}[t!]
\begin{tikzpicture}
\node (myspot) at (-5.9,0) {~};
\node (BTxyo) at (-6.5,3.5) {$\BT{\lam x.y\Omega}$};
\node[below of = BTxyo, node distance=8pt] (eq) {$\shortparallel$};
\node[below of = eq, node distance=10pt] (l1) {$\lam x.y$};
\node[below of = l1, node distance=18pt,xshift=6pt] (l2) {$\bot$};
\draw ($(l1.south)+(6pt,0.05)$) -- ($(l2.north)-(0,0.05)$);
\node (BTJ) at (-4.5,3.5) {$\BT{\One_3}$};
\node[below of = BTJ, node distance=8pt] (eq) {$\shortparallel$};
\node[below of = eq, node distance=10pt] (l1) {$\lam xz_0.x~~$};
\node[below of = l1, node distance=18pt] (l2) {$\lam z_1.z_0$};
\node[below of = l2, node distance=18pt] (l3) {$\lam z_2.z_1$};
\node[below of = l3, node distance=18pt] (l4) {$\phantom{\lam z_3.}z_2$};
\draw ($(l1.south)+(8pt,0.05)$) -- ($(l2.north)+(8pt,-0.15)$);
\draw ($(l2.south)+(8pt,0.05)$) -- ($(l3.north)+(8pt,-0.15)$);
\draw ($(l3.south)+(8pt,0.05)$) -- ($(l4.north)+(8pt,-0.15)$);
\node (BTJ) at (-2.75,3.5) {$\BT{\bJ}$};
\node[below of = BTJ, node distance=8pt] (eq) {$\shortparallel$};
\node[below of = eq, node distance=10pt] (l1) {$\lam xz_0.x~~$};
\node[below of = l1, node distance=18pt] (l2) {$\lam z_1.z_0$};
\node[below of = l2, node distance=18pt] (l3) {$\lam z_2.z_1$};
\node[below of = l3, node distance=18pt] (l4) {$\lam z_3.z_2$};
\node[below of = l4, xshift=8pt, node distance=10pt] (l5) {$\vdots$};
\draw ($(l1.south)+(8pt,0.05)$) -- ($(l2.north)+(8pt,-0.15)$);
\draw ($(l2.south)+(8pt,0.05)$) -- ($(l3.north)+(8pt,-0.15)$);
\draw ($(l3.south)+(8pt,0.05)$) -- ($(l4.north)+(8pt,-0.15)$);
\node (BTOf) at (-1,3.5) {$\BT{\bY}$};
\node[below of = BTOf, node distance=8pt] (eq) {$\shortparallel$};
\node[below of = eq, node distance=10pt] (l1) {$\lam f.f$};
\node[below of = l1, xshift=6pt, node distance=18pt] (l2) {$f$};
\node[below of = l2, node distance=18pt] (l3) {$f$};
\node[below of = l3, node distance=18pt] (l4) {$f$};
\node[below of = l4, node distance=10pt] (l5) {$\vdots$};
\draw ($(l1.south)+(6pt,0.05)$) -- ($(l2.north)+(0pt,-0.05)$);
\draw ($(l2.south)+(0,0.05)$) -- ($(l3.north)+(0pt,-0.05)$);
\draw ($(l3.south)+(0,0.05)$) -- ($(l4.north)+(0pt,-0.05)$);
\node (BTP) at (1.6,3.5) {$\BT{[M]_{n\in\nat}}$};
\node[below of = BTP, node distance=10pt] (eq) {$\shortparallel$};
\node (root) at ($(BTP)+(0,-20pt)$) {};
\node (BTM) at (root) {$\lam y.y$};
\node (M0) at ($(root)+(-10pt,-22pt)$) {$\BT{M_0}$};
\node (x1) at ($(root)+(50pt,-22pt)$) {$\lam y.y$};
\node (M1) at ($(x1)+(-13pt,-22pt)$) {$\BT{M_1}$};
\node (x2) at ($(x1)+(50pt,-22pt)$) {$\lam y.y$};
\node (M2) at ($(x2)+(-13pt,-22pt)$) {$\BT{M_2}$};
\draw ($(BTM.south east)+(-4pt,3pt)$) -- ($(x1.north)+(0pt,-2pt)$);
\draw ($(BTM.south east)+(-10pt,3pt)$) -- (M0);
\draw ($(x1.south east)+(-2pt,2pt)$) -- ($(x2.north)+(0pt,-2pt)$);
\draw ($(x1.south east)+(-10pt,2pt)$) -- (M1);
\draw ($(x2.south east)+(-10pt,2pt)$) -- (M2);
\draw[densely dotted] ($(x2.south east)+(-2pt,2pt)$) -- ($(x2.south east)+(30pt,-7pt)$);
\node (BTM) at (6,3.5) {$\BT{\Omega} = \bot$};

\end{tikzpicture}
\caption{Some examples of B\"ohm trees.}\label{fig:Boehm}
\end{figure}

The \emph{B\"{o}hm tree} $\BT{M}$ of a \lam-term $M$ is defined coinductively as follows:
\begin{itemize}
\item 
	if $M$ is unsolvable then $\BT{M} = \bot$;
\item
	if $M$ is solvable and $\phnf(M)=\lambda x_{1}\ldots x_{n}.x_iM_{1}\cdots M_{k}$ then:
 
\begin{tikzpicture}
\node (root) at (-130pt,8pt) {~};
\node (BT) at (0,0) {$\BT{M}=$};
\node[right of = BT, node distance =2cm] (head) {$\lam x_{1} \ldots x_{n}.x_i$};
\node (BTN1) at ($(head)+(-.25,-.75)$) {$\BT{M_1}$};
\draw ($(head.south east)+(-0.5,0.1)$) -- ($(BTN1.north)-(0.,0)$);
\node (BTNk) at ($(head)+(2.,-.75)$) {$\BT{M_k}$};
\draw ($(head.south east)+(-0.1,0.1)$) -- ($(BTNk.north)-(0.,0)$);
\node at ($(head)+(.9,-.75)$) {$\cdots$};
\end{tikzpicture}
\end{itemize}
\end{definition}

In Figure~\ref{fig:Boehm} we provide some notable examples of B\"ohm trees.
Notice that in general $\FV(\BT{M})\subseteq \FV(M)$ but the converse might not hold. 
Indeed, given a \lam-term $M$ satisfying $M\msto[\beta] \lam zx.x(Mz)$, we have that $z\notin\FV(\BT{Mz})$ because $z$ is ``pushed to infinity''.

We also present, as an auxiliary notion, the ``B\"ohm-like trees'', which are labelled trees that look like B\"ohm trees but may not arise as a B\"ohm tree of a \lam-term.

\begin{definition} A \emph{B\"ohm-like tree} is a labelled tree over $\BTlabs = \setof{\bot} \cup \setof{\lambda \vec x.y \mid  \seq x,y \in \Var}$, that is a function $U : \Seq\to\BTlabs\times \nat$ such that $\pi_2\comp U$ is a tree and $(\pi_1\comp U)(\sigma) = \bot$ entails $(\pi_2\comp U)(\sigma)=0$. 
The tree $\pi_2\comp U$ is called \emph{the underlying tree of $U$} and is denoted by $\nak{U}$.
\end{definition}

A B\"ohm-like tree $U$ is \emph{\lam-definable} if $U = \BT M$ for some $M\in\Lam$.
In~\cite[Thm. 10.1.23]{Bare}, Barendregt gives the following characterization of \lam-definable B\"ohm-like trees.

\begin{theorem}\label{thm:BTMiffeffectiveT}
Given a B\"ohm-like tree $U$, there exists a \lam-term $M$ such that $\BT{M} = U$ if and only if the restriction $\restr U{\dom_\bot(U)}$ where $\dom_\bot(U) = \{ \sigma\in\dom(U)\st (\pi_1\comp U)(\sigma) \neq\bot\}$ is computable (after coding)
and $\FV(U)$ is finite.
\end{theorem}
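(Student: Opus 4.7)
The statement is an ``if and only if'' with two implications. The forward direction is the easier one: if $U = \BT{M}$ then $\FV(U) \subseteq \FV(M)$ by inspection of the coinductive definition of $\BT{M}$, so $\FV(U)$ is finite. To show that $\restr{U}{\dom_\bot(U)}$ is partial recursive (after G\"odel coding of $\Seq$), I would exhibit the natural algorithm: on input $\sigma = \seqof{n_1,\dots,n_k}$, set $M_0 := M$ and iteratively compute $\phnf(M_i)$, which exists at step $i$ whenever the ancestor of $\sigma$ of length $i$ is in $\dom_\bot(U)$ by Theorem~\ref{thm:Wads}; if $\phnf(M_i) = \lambda x_1\cdots x_m.x_j N_1 \cdots N_\ell$ with $n_{i+1} < \ell$, set $M_{i+1} := N_{n_{i+1}+1}$, otherwise diverge. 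When the iteration successfully consumes $\sigma$, output the pair $(\lambda x_1\cdots x_m.x_j,\ell)$ extracted from $\phnf(M_k)$. Since head reduction is effective, this procedure is partial recursive and by construction it agrees with $\restr{U}{\dom_\bot(U)}$.

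The backward direction is the substantive half. Given $U$ with $\restr{U}{\dom_\bot(U)}$ partial recursive and $\FV(U) = \{y_1,\dots,y_p\}$, my plan is to construct $M \in \Lambda$ with $\BT{M} = U$ in two steps. First, fix an effective coding $\sigma \mapsto \gcode{\sigma}$ of $\Seq$ and a Church-style encoding of labels in $\BTlabs$ as pairs consisting of a tag identifying the head variable (either one of $y_1,\dots,y_p$ or a de Bruijn index referring to a binder above) together with the arity $\ell$. By the computational completeness of the \lam-calculus and the hypothesis, there exists a combinator $F \in \Lamo$ such that $F\,\cn{\gcode{\sigma}}$ $\beta$-reduces to the encoded label-arity pair when $\sigma \in \dom_\bot(U)$, and is unsolvable otherwise (divergence of the partial recursive function is turned into unsolvability by guarding a call to $\bO$ with the putative value).

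Second, I would use $\bY$ to define, uniformly in $\sigma$, a \lam-term $M_\sigma$ satisfying the following: whenever $U(\sigma) = (\lambda\seq x.y, \ell)$ with $\sigma \in \dom_\bot(U)$, we have $M_\sigma \msto[\beta] \lambda\seq x.y\, M_{\con\sigma 0}\cdots M_{\con\sigma{(\ell-1)}}$; and $M_\sigma$ is unsolvable when $\sigma \notin \dom_\bot(U)$. Concretely, $M_\sigma$ applies $F$ to $\cn{\gcode{\sigma}}$, reads off the encoded label, $\lam$-abstracts the appropriate number of fresh variables, and recursively invokes itself on the codes $\gcode{\con\sigma i}$ to produce the children; the recursion is closed by $\bY$. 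Setting $M := M_\emptyseq$ (with the free variables $y_1,\dots,y_p$ appearing where the encoding dictates) gives the desired term. The equality $\BT{M} = U$ is then established by coinduction on $U$: at each $\sigma \in \dom_\bot(U)$ the $\phnf$ of $M_\sigma$ matches $U(\sigma)$ exactly, while at positions with $U(\sigma) = \bot$ the unsolvability of $F\,\cn{\gcode{\sigma}}$ propagates to $M_\sigma$, forcing $\BT{M_\sigma} = \bot$.

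The main obstacle is ensuring that $M_\sigma$ is \emph{genuinely} unsolvable at $\bot$-positions rather than merely normalising to a default label: the standard translation of partial recursive functions into the \lam-calculus produces diverging reductions, not unsolvable terms as such, so the construction must be arranged so that no head reduction is available until $F$ produces a value. The other delicate point is the bookkeeping of the head variables, which must correctly discriminate between the free variables $y_1,\dots,y_p$ and the indices referring to binders introduced further up the tree; this is the role of the carefully chosen label encoding. Once these issues are addressed, the coinductive verification that $\BT{M} = U$ reduces to a direct position-by-position check.
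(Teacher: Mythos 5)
The paper does not actually prove this statement: it is quoted verbatim from Barendregt's book (Theorem~10.1.23), so your proposal can only be compared with that classical argument, which it essentially reconstructs. Your forward direction is correct and complete in outline: $\FV(\BT{M})\subseteq\FV(M)$ gives finiteness, and the head-reduction-driven descent computes exactly the partial function $\restr U{\dom_\bot(U)}$, with divergence of head reduction accounting both for $\bot$-leaves and for positions outside $\dom(U)$ (modulo fixing a canonical choice of bound-variable names so that the output label is well defined after coding).

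The backward direction has the right architecture but, as written, a genuine gap in the variable bookkeeping --- and it is not where you locate it. If $M_\sigma$ is obtained by applying a closed $\bY$-defined combinator to the numeral $\cn{\gcode{\sigma}}$ alone, then each $M_{\con{\sigma}{i}}$ occurring in the reduct $\lam\seq x.x_j\,M_{\con{\sigma}{0}}\cdots M_{\con{\sigma}{(\ell-1)}}$ contains no variables other than $y_1,\dots,y_p$: it syntactically lies in the scope of $\lam\seq x$ and of all binders created at ancestor positions, but it cannot contain a free occurrence of any of them, so no descendant can ever place such a variable in head position. No choice of label encoding (de Bruijn or otherwise) can repair this, because the encoding only records \emph{which} variable is intended, whereas the term must actually \emph{contain} it. The standard fix --- and the one in Barendregt's proof --- is to thread an environment through the recursion: $M_\sigma$ receives, besides $\gcode{\sigma}$, a coded tuple of the variables abstracted at all ancestor positions, extends it with the freshly abstracted $\seq x$, passes the extended tuple to each $M_{\con{\sigma}{i}}$, and uses the index supplied by $F$ to project the head variable out of this tuple (finitely many free variables $y_1,\dots,y_p$ can instead be hard-wired into the decoding case analysis, as you implicitly do). With that amendment, your treatment of $\bot$-positions (keeping $F\,\cn{\gcode{\sigma}}$ in head position so that undefinedness of the partial recursive function yields unsolvability, via Kleene's strong $\lam$-definability of partial recursive functions) and the concluding coinduction both go through.
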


 The B\"ohm-like tree $\bot$ represents the absolute lack of information, therefore it makes sense to say that $\bot$ is ``less defined'' than any B\"ohm-like tree $U$. 
This is the consideration behind the order $\BTle$ on B\"ohm-like trees defined below.

\begin{definition}\label{def:BTle}
Given two B\"ohm-like trees $U,V$ we say that \emph{$U$ is an approximant of $V$}, written $U\BTle V$, whenever $U$ results from $V$ by replacing some subtrees by $\bot$. 
\end{definition}

{\bf Approximations of B\"ohm trees.}
A B\"ohm tree can be also seen as the least upper bound of its finite approximants, and finite approximants can be seen as the normal forms of a \lam-calculus extended with a constant $\bot$ and an additional reduction $\redto[\bot]$.

A \emph{$\lamb$-term} $M$ is a \lam-term possibly containing occurrences of the constant $\bot$.
The set $\Lamb$ of all \emph{$\lamb$-terms} is generated by the grammar:
$$
	\Lamb :\quad M,N\bnf\ x\ \mid\ \lam x.M\ \mid\ MN \ \mid\ \bot
$$
Similarly a \emph{(single hole) $\lamb$-context }is a (single hole) context $C\hole-$ possibly containing occurrences of $\bot$.
The \emph{$\bot$-reduction} $\redto[\bot]$ is defined as the $\lam\bot$-contextual closure of the rules:
\[
\tag*{($\bot$)} x.\bot\to \bot\qquad\qquad\qquad \bot M \to \bot
\]
The $\beta$- and $\eta$- reductions are extended to \lamb-terms in the obvious way.
We write $\App$ for the set of $\lamb$-terms in $\beta\bot$-normal forms and we denote its elements by $s,t,u,\dots$

The following characterization of $\beta\bot$-normal forms is well known.

\begin{lemma}
Let $M\in\Lamb$. We have $M\in\App$ if and only if either $M=\bot$ or $M$ has shape $\lam x_1\dots x_n.x_iM_1\cdots M_k$ (for some $n,k\ge 0$) and each $M_j$ is $\beta\bot$-normal.
\end{lemma}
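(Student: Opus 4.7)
The statement is a standard structural characterization of normal forms with respect to a compound rewriting system, so I would proceed by a straightforward structural induction on $M \in \Lamb$, using the grammar $M ::= x \mid \lam x.M \mid MN \mid \bot$, and argue the two directions separately.

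For the easy direction ($\Leftarrow$), I would simply inspect the candidate shapes and check that no $\beta\bot$-redex can occur. The term $\bot$ is clearly irreducible. A term of the form $\lam x_1\dots x_n.x_i M_1\cdots M_k$ with each $M_j \in \App$ cannot contain a $\beta$-redex (the head of the body is a variable, and the $M_j$'s are assumed normal), cannot contain a $\bot M$-subterm (since the head after stripping the $\lam$'s is a variable, not $\bot$, and no $M_j$ has $\bot$ in head position by normality), and cannot contain a $\lam x.\bot$-subterm (the body of every leading $\lam$ starts with either another $\lam$ or the variable $x_i$, never $\bot$; inner occurrences are excluded by normality of the $M_j$).

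For the nontrivial direction ($\Rightarrow$), I would do induction on the structure of $M$. If $M = x$, take $n = k = 0$ and $x_i = x$. If $M = \bot$, the left disjunct holds. If $M = \lam x.N$ with $M \in \App$, then $N$ must be in $\App$ (any redex in $N$ lifts to $M$) and moreover $N \neq \bot$, for otherwise $\lam x.\bot$ would be a $\bot$-redex; by the induction hypothesis applied to $N$, we get $N = \lam y_1\dots y_m.y_i N_1\cdots N_l$ with each $N_j \in \App$, and prefixing $x$ to the binders yields the desired shape for $M$. Finally, if $M = PQ$ with $M \in \App$, then $P, Q \in \App$; moreover $P \neq \bot$ (else $\bot Q$ is a $\bot$-redex) and $P$ is not a $\lam$-abstraction (else $PQ$ is a $\beta$-redex); so by the induction hypothesis $P$ must have the form $x_i P_1\cdots P_k$ with $n = 0$ and each $P_j \in \App$, whence $M = x_i P_1\cdots P_k Q$ has the required shape with $k+1$ normal arguments.

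There is no real obstacle here: the only subtle point is making sure that the exclusion of $\lam x.\bot$ and $\bot M$ redexes is used precisely at the right moment to rule out $N = \bot$ in the abstraction case and $P = \bot$ or $P$ being an abstraction in the application case. Otherwise the argument is a routine disjunctive case analysis on the outermost constructor, combined with the observation that normality is preserved by taking subterms, which is immediate from the contextual nature of $\redto[\beta\bot]$.
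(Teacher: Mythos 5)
Your proof is correct: the case analysis in the $(\Rightarrow)$ direction uses each of the three redex shapes ($\beta$, $\lam x.\bot$, $\bot M$) exactly where it is needed, and the $(\Leftarrow)$ direction is a routine check. The paper itself offers no proof (the lemma is stated as ``well known''), and your structural induction is precisely the standard argument it leaves implicit.
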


\begin{definition}\label{def:sizet}
 The \emph{size} of $t\in\App$, written $\size t$, is defined by induction:
$$
	\size\bot = 0,\qquad
	\size(\lam x.t) = 1 + \size t,\qquad
	\size(xt_1\cdots t_k) = 1 + \textstyle\sum_{i=1}^k \size t_i.
$$
\end{definition}

The preorder $\BTle$ is defined on \lamb-terms as the $\lam\bot$-contextual closure of $\bot \le M$. 
It is easy to check that, for all $t\in\App$, this definition and Definition~\ref{def:BTle}  coincide.

The set of all finite approximants of the B\"ohm tree of $M$ can be obtained by calculating the direct approximants of all \lam-terms $\beta$-convertible with $M$.

\begin{definition}\label{def:omegastuff} Let $M\in\Lamb$. 
\begin{enumerate}
\item\label{def:omegastuff1}
	The \emph{direct approximant of $M$}, written $\da{M}$, is the $\lamb$-term defined as:
	\begin{itemize}
	\item $\da{M} = \bot$ if $M = \lam x_1\dots x_k.\bot M_1\cdots M_k$,
	\item $\da{M} = \bot$ if $M = \lam x_1\dots x_k.(\lam y.M')NM_1\cdots M_k$,
	\item $\da{M} = \lam x_1\dots x_n.x_i\da{M_1}\cdots \da{M_k}$ if $M = \lam x_1\dots x_n.x_iM_1\cdots M_k$,
	\end{itemize}
\item\label{def:omegastuff2} 
	The \emph{set of finite approximants of $M$} is defined by:
	$$
			\BT[*]{M} = \big\{\da{M'} \st M =_\beta M'\big\}.
	$$
\end{enumerate}
\end{definition}
As shown in~\cite[\S2.3]{AmadioC98}, the set $\BT[*]M$ is directed with respect to $\BTle$.
For $M\in\Lam$, a finite approximant $t\in\App$ belongs to $\BT[*]M$ exactly when $t\BTle \BT{M}$, therefore $ \BT{M} = \bigvee \BT[*]M$. 
Moreover, the following syntactic continuity property holds.


\begin{lemma}\label{lemma:xDomenico} Let $M\in\Lamb$ and let $C\hole-$ be a \lamb-context. If $M = C[(\lam x.P)Q]$ then $\BT[*]{C[\bot]}\subseteq\BT[*]{M}$.
\end{lemma}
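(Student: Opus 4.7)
The plan is to fix $t \in \BT[*]{C[\bot]}$, say $t = \da{N}$ with $N =_\beta C[\bot]$, and exhibit a term $N^\star$ with $N^\star =_\beta M$ and $\da{N^\star} = t$; throughout I write $R := (\lam x.P)Q$.

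The \emph{quantitative core} of the argument is the observation that, since $R$ is itself a $\beta$-redex (not head-normal), substituting $R$ for any selection of $\bot$-occurrences inside a $\lamb$-term $N'$ leaves its direct approximant unchanged. I plan to prove this by a routine structural induction on $N'$: the only case to check is when the head of $N'$ (after stripping outer $\lambda$'s) is one of the substituted $\bot$'s, where $\da{N'} = \bot$ by definition and the substituted term now has $R$ at the head, so its direct approximant is still $\bot$; in the remaining cases one either recurses on the arguments (head variable) or the $\bot$-/redex-head is preserved.

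To construct $N^\star$ I will combine the quantitative core with a labelling trick and Church--Rosser. By confluence of $\beta$ on $\Lamb$, fix a common $\beta$-reduct $N_1$: $N \msto[\beta] N_1$ and $C[\bot] \msto[\beta] N_1$. Introducing a fresh constant $\bullet \neq \bot$, I lift $C[\bot] \msto[\beta] N_1$ to $C[\bullet] \msto[\beta] N_1^\bullet$ by mimicking each step inside $\Lamb \cup \{\bullet\}$; this lift is straightforward because $\bullet$ is inert and cannot be the abstraction of a $\beta$-redex, and it satisfies $N_1 = N_1^\bullet[\bot/\bullet]$. Pushing the homomorphic substitution $[R/\bullet]$ through the same reduction gives $C[R] \msto[\beta] N_1^R$ with $N_1^R := N_1^\bullet[R/\bullet]$; in particular $N_1^R =_\beta M$. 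Since $N_1^R$ is just $N_1$ with certain $\bot$-occurrences replaced by $R$, the quantitative core yields $\da{N_1^R} = \da{N_1}$, so $\da{N_1} \in \BT[*]M$.

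To descend from $\da{N_1}$ down to $t$ I shall use two general facts. First, the direct approximant is $\BTle$-monotone under $\beta$-reduction, so from $N \msto[\beta] N_1$ it follows that $t = \da{N} \BTle \da{N_1}$; this is a standard case analysis on the contracted redex. Second, $\BT[*]M$ is downward closed under $\BTle$, which one proves by induction on the approximant: $\bot \in \BT[*]M$ always (witness $(\lam y.y)N_1^R$), and for non-$\bot$ approximants one recurses on the arguments, exploiting that a non-$\bot$ approximant must share its head structure with any representing term. Combining these two facts gives $t \in \BT[*]M$, as desired. The main technical delicacy is confined to the labelled lift: attempting the argument directly at the level of $\beta$-\emph{conversion} would also require lifting expansions, where different copies of a substituted argument may carry inconsistent $\bullet$-patterns; the Church--Rosser detour restricts us to forward lifts and side-steps this issue.
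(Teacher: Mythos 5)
Your overall strategy is sound, and your ``quantitative core'' is exactly the heart of the paper's own (much shorter) argument: the paper picks a fresh variable $y$, writes the witness of $t\in\BT[*]{C\hole\bot}$ as $\da{C'\hole\bot}$ for a context $C'\hole-$ with $C\hole y=_\beta C'\hole y$, plugs $R=(\lam x.P)Q$ back in place of $y$, and uses precisely your observation that replacing $\bot$-occurrences by a $\beta$-redex does not change the direct approximant. Your additional machinery (Church--Rosser, $\BTle$-monotonicity of $\da{-}$ under $\beta$-reduction, downward closure of $\BT[*]{M}$ under $\BTle$) consists of true and standard facts, and it is an honest way of avoiding the lifting of a zigzag of conversions.

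There is, however, a step that fails as stated: ``pushing the homomorphic substitution $[R/\bullet]$ through the same reduction gives $C\hole R\msto[\beta]N_1^R$''. The hole of $C\hole-$ may sit under binders that capture free variables of $R$, and those binders can be instantiated along the reduction, so the capturing substitution $[R/\bullet]$ does not commute with $\msto[\beta]$. Concretely, take $C\hole-=(\lam w.z\hole-)v$ and $R=(\lam x.x)w$, so that $M=(\lam w.z((\lam x.x)w))v$. Then $C\hole\bullet=(\lam w.z\bullet)v\redto[\beta]z\bullet$, whence $N_1^R=z((\lam x.x)w)$; but $M\redto[\beta]z((\lam x.x)v)$, and $N_1^R=_\beta zw$ is not $\beta$-convertible to $M=_\beta zv$, so your witness is wrong. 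The repair is local but changes the shape of the construction: reduce $C\hole R$ \emph{in parallel} with $C\hole\bot$, step by step (possible since neither $\bot$ nor an application can be the abstraction of a contracted redex). The residuals of the hole in the resulting reduct of $C\hole R$ are then substitution instances $R\sigma=(\lam x.P\sigma)(Q\sigma)$ of $R$ rather than $R$ itself; since each $R\sigma$ is still a $\beta$-redex, your quantitative core --- restated for arbitrary redexes in place of the fixed $R$ --- still yields a reduct of $M$ with direct approximant $\da{N_1}$, and the rest of your argument goes through. (In fairness, the paper's one-line inference from $C\hole y=_\beta C'\hole y$ to $C\hole R=_\beta C'\hole R$ silently glosses over the same capture issue.)
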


\begin{proof} Assume that $t\in\BT[*]{C[\bot]}$. By definition, there exists $C'[-]$ such that $C[y] =_\beta C'[y]$, for some fresh variable $y$, and $t = \da{C'[\bot]}$. Then $C[(\lam x.P)Q] =_\beta C'[(\lam x.P)Q]$ and $t = \da{C'[(\lam x.P)Q]}$, which implies that $t\in\BT[*]{M}$.
\end{proof}

\subsection{Inequational and Lambda Theories} 

Inequational theories and \lam-theories become the main object of study when considering the computational equivalence more important than the process of computation.


\begin{definition}
An \emph{inequational theory} is any context closed preorder on $\Lam$ containing the $\beta$-conversion.
A {\em \lam-theory} is any context closed equivalence on $\Lam$ containing the $\beta$-conversion. 
\end{definition}

Given an inequational theory $\cT$ we write $M\sqle_\cT N$ or $\cT\vdash M\sqle N$ for~$(M,N)\in\cT$.
Similarly, given a \lam-theory $\cT$, we write $M=_\cT N$ or $\cT\vdash M=N$ whenever $(M,N)\in\cT$.

The set of all \lam-theories, ordered by set theoretical inclusion, forms a complete lattice $\LLT$ that has a rich mathematical structure, as shown by Salibra and his coauthors in their works~\cite{Salibra01,LusinS04,ManzonettoS06}.

\begin{definition}
A \lam-theory (or an inequational theory) is called:
\begin{itemize}
\item
{\em consistent} if it does not equate all \lam-terms; 
\item
{\em extensional} if it contains the $\eta$-conversion; 
\item
{\em sensible} if it equates all unsolvables.
\end{itemize}
\end{definition}

We denote by 
$\blam$ the least \lam-theory, 
by $\blam\eta$ the least extensional \lam-theory,
by $\cH$ the least sensible \lam-theory, and 
by $\BTth$ the (sensible) \lam-theory equating all \lam-terms having the same B\"ohm tree.
Given a \lam-theory $\cT$, we write $\cT\eta$ for the least \lam-theory containing $\cT\cup\blam\eta$.

Inequational theories are less ubiquitously studied in the literature, except when they capture some observational preorder as explained in the next section.
However, they have been studied in full generality in connection with denotational models (see, e.g., \cite{BerlineMS09}).


\section{The Lambda Theories $\Hpl$ and $\Hst$}\label{sec:lamtheories}
Several interesting \lam-theories are obtained via suitable observational preorders defined with respect to a set $\cO$ of \emph{observables}. 
This has been first done by Morris's in his PhD thesis~\cite{Morristh}.

\begin{definition}
Let $\cO$ be a non-empty subset of $\Lam$.
\begin{itemize}
\item The \emph{$\cO$-observational preorder} is given by:
$$
	M\obsle[\cO] N\iff \forall C\hole-\ .\ C\hole{M}\in_\beta\cO\textrm{ entails } C\hole{N}\in_\beta\cO.
$$
where $M\in_\beta \cO$ means that there exists a \lam-term $M'$ such that $M\msto[\beta] M'\in\cO$.
\item The \emph{$\cO$-observational equivalence} $M\obseq[\cO] N$ is defined as $ M\obsle[\cO] N$ and $N\obsle[\cO] M$.
\end{itemize}
\end{definition}

In the rest of the section we will discuss the \lam-theories $\Hst$ and $\Hpl$ generated  as observational equivalences by considering as observables the head normal forms and the $\beta$-normal forms, respectively, and the corresponding preorders\footnote{When considering these particular sets $\cO$ of observables it is not difficult to check that the relations $\obsle[\cO]$ and $\obseq[\cO]$ are actually inequational and \lam-theories (cf.~\cite[Prop.~16.4.6]{Bare}). The general case is treated in \cite{Paolini08}.}.
In both cases we also recall the characterizations given in terms of extensional equivalences on B\"ohm-trees.

\subsection{$\Hst$: B\"ohm Trees and Infinite $\eta$-Expansions}
The \lam-theory $\Hst$ has been defined by Wadsworth and Hyland as an observational equivalence in~\cite{Wadsworth76,Hyland76}, where they proved that it corresponds to the equational theory induced by Scott's  model $\mathscr{D}_\infty$.
In the years, $\Hst$ has become the most well studied \lam-theory~\cite{Bare,GouyTh,GianantonioFH99,RonchiP04,Manzonetto09,Breuvart14}.

\begin{definition}\label{def:Hst}
We let $\sqle_\Hst$ be the $\cO$-observational preorder obtained by taking as $\cO$ the set of head normal forms and $\Hst$ be the corresponding equivalence. 
\end{definition}

Notice that $M=_\Hst N$ is equivalent to say that $M,N$ are not semi-separable.
It is easy to check that $\Hst$ is an extensional \lam-theory.
A first characterization of $\Hst$ can be given in terms of maximal consistent extension (also known as \emph{Post-completion}) of~$\cH$, and such a maximality property extends to the corresponding inequational theory. 

The following lemma is a generalization of~\cite[Thm.~16.2.6]{Bare} that encompasses the inequational case.

\begin{lemma}\label{lemma:maximality}
The preorder $\sqle_\Hst$ and the equivalence $\Hst$ are maximal among consistent sensible inequational theories and \lam-theories, respectively.
\end{lemma}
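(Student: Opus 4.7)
The plan is to prove both statements by contradiction, treating the (slightly more delicate) inequational case in detail and harvesting the \lam-theory case essentially as a byproduct. Suppose that $\cT$ is a sensible consistent inequational theory with $\sqle_\Hst \subsetneq \cT$, and pick $M, N \in \Lam$ witnessing the strict inclusion: $M \sqle_\cT N$ but $M \not\sqle_\Hst N$. By definition of the observational preorder there is a context $C\hole-$ such that $C\hole M$ has a head normal form while $C\hole N$ has none, i.e.\ $C\hole N$ is unsolvable by Wadsworth's Theorem~\ref{thm:Wads}. Context closure of $\cT$ gives $C\hole M \sqle_\cT C\hole N$, and sensibility equates $C\hole N$ with $\bO$, so we obtain the crucial inequality $C\hole M \sqle_\cT \bO$ with $C\hole M$ solvable.

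Next I would leverage the solvability of $C\hole M$ to promote this to the much stronger relation $\bI \sqle_\cT \bO$. By Wadsworth's theorem there is a head context $H\hole-$ with $H\hole{C\hole M} =_\beta \bI$; applying context closure to $C\hole M \sqle_\cT \bO$ yields $\bI =_\cT H\hole{C\hole M} \sqle_\cT H\hole{\bO}$. A short case analysis on the shape of $H\hole- = (\lam \vec y.\hole-)R_1 \cdots R_m$ shows that $H\hole{\bO}$ is unsolvable in all cases (it reduces to $\bO$ applied to residual arguments, or to an abstraction whose body reduces to $\bO$), so sensibility collapses it to $\bO$ and we reach $\bI \sqle_\cT \bO$. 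Applying the context $\hole-\,P$ for an arbitrary $P\in\Lam$ and using sensibility once more, $P =_\beta \bI P \sqle_\cT \bO P =_\cT \bO$, so every $P$ lies below $\bO$ in $\cT$.

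To conclude that $\cT$ is chaotic I still need to flip the direction and derive $\bO \sqle_\cT P$ for every $P$; transitivity will then give $P \sqle_\cT Q$ for every pair, contradicting consistency. This flip is where the inequational argument diverges from the purely equational one of~\cite[Thm.~16.2.6]{Bare}, but it comes for free: since $\BT{\bO} = \bot$ is the $\BTle$-minimum among Böhm-like trees (Definition~\ref{def:BTle}) and ``having a head normal form'' is monotone along $\BTle$, one has $\bO \sqle_\Hst P$ for every $P$ already at the level of $\Hst$, and the hypothesis $\sqle_\Hst \subseteq \cT$ propagates this inequality to $\cT$. The second statement of the lemma is strictly easier: any sensible consistent \lam-theory $\cT \supsetneq \Hst$ yields some $M =_\cT N$ with $M \neq_\Hst N$; the same construction gives $\bI =_\cT \bO$, from which $P =_\cT \bI P =_\cT \bO P =_\cT \bO$ for every $P$ by a one-line computation, so no flip is needed.

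The main obstacle I anticipate is not the overall outline but the bookkeeping in the head-context step: verifying that $H\hole{\bO}$ is unsolvable for every possible shape of $H\hole-$ (varying both the number of bound variables and the number of supplied arguments $R_i$) is a tedious, if routine, case analysis. A secondary delicate point is the appeal to $\bO \sqle_\Hst P$: this is the standard monotonicity of the $\mathrm{BT}$-construction together with $\BT{\bO} = \bot$, but, being precisely the place at which the inequational argument strictly surpasses Barendregt's equational one, it deserves an explicit verification rather than a hand-wave.
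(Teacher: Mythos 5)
Your proof is correct, and its skeleton -- extract a context $C\hole-$ separating a solvable $C\hole M$ from an unsolvable $C\hole N$, B\"ohm it out through a head context to get $\bI\sqle_\cT(\text{unsolvable})$, and use sensibility to collapse the unsolvable side to $\bO$ -- is exactly the paper's. Where you diverge is the endgame. The paper instantiates the ``flip'' only at $\bI$ (using $\bO_3\sqle_\Hst\bI$ with $\bO_3=\Delta_3\Delta_3$, $\Delta_3=\lam x.xxx$) to upgrade $\bI\sqle_\cT\bO_3$ to an equality $\bI=_\cT\bO_3$, then exploits the self-reproducing reduction $\bO_3=_\beta\bO_3\Delta_3$ to derive $\bI=_\cT\Delta_3$ and invokes the B\"ohm Theorem on these two distinct $\beta\eta$-normal forms to conclude inconsistency. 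You instead show directly that $\cT$ relates every pair of terms: $P\sqle_\cT\bO$ for all $P$ by applying the context $\hole-\,P$ to $\bI\sqle_\cT\bO$ and using sensibility once more, and $\bO\sqle_\cT P$ by importing $\bO\sqle_\Hst P$ through the hypothesis $\sqle_\Hst\subseteq\cT$. This avoids both the $\bO_3/\Delta_3$ trick and the appeal to B\"ohm's Theorem, at the price of needing the $\sqle_\Hst$-minimality of unsolvables below \emph{arbitrary} terms rather than just below $\bI$ -- but that is the same standard fact the paper already assumes without proof, just instantiated more generally (and it follows at once from $\bot\BTle\BT{P}$ together with the B\"ohm-tree characterization of $\sqle_\Hst$, or from genericity of unsolvables for solvability). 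Your separate, purely equational treatment of the $\Hst$ case is also right and necessary, since a \lam-theory strictly extending $\Hst$ need not yield an inequational theory extending $\sqle_\Hst$, so that case is not literally subsumed by the inequational one; the paper leaves it implicit, deferring to the classical argument.
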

\begin{proof} Let $\sqle_\cT$ be an inequational theory such that $\sqle_\Hst\subsetneq\ \sqle_\cT$.
This means that there exist $M,N$ such that $M\sqle_\cT N$ while $M\not\sqle_\Hst N$.
By Definition~\ref{def:Hst}, there is a context $C\hole-$ such that $C\hole M$ has an hnf while $C\hole N$ does not. 
By Theorem~\ref{thm:Wads}, there is a head context $H\hole- = (\lam \seq x.\hole-)\seq P$ such that $H\hole{C\hole M} =_\beta \bI$ and $H\hole{C\hole N}$ is unsolvable. 
As $M \sqle_\cT N$ we~get:
$$
	\bI =_\beta H\hole{C\hole M} \sqle_\cT H\hole{C\hole N} =_\cT \bO_3\qquad \textrm{ for $\bO_3 = \Delta_3\Delta_3$ and $\Delta_3 = \lam x.xxx\;.$ }
$$
The rightmost equality $=_\cT$ holds because $\bO_3$ is unsolvable and the inequational theory $\sqle_\cT$ is sensible.
Since $\bO_3\sqle_\Hst \bI$ and $\sqle_\Hst\subseteq\ \sqle_\cT$, we obtain $\bO_3 =_\cT \bI$ which leads to: 
$$
	\bI =_\cT\bO_3 =_\beta \bO_3 \Delta_3 =_\cT\bI \Delta_3 =_\beta 	\Delta_3  \,.
$$
Since $\bI$ and $\Delta_3$ are $\beta\eta$-distinct normal form, this contradicts the B\"ohm Theorem~\cite{Bohm68}.
\end{proof}

The characterization of $\Hst$ in terms of trees requires the notion of ``infinite $\eta$-expansion'' of a B\"ohm-like tree.
Intuitively, the B\"ohm-like tree $V$ is an infinite $\eta$-expansion of $U$, if it is obtained from $U$ by performing countably many possibly infinite $\eta$-expansions.

The classic definition is given in terms of tree extensions~\cite[Def.~10.2.10]{Bare}; here we rather follow the coinductive approach introduced in~\cite{Lassen99}. 

\begin{definition}\label{def:leeta}
Given two B\"ohm-like trees $U$ and $V$, we define coinductively the relation $U\leeta[\infty] V$ expressing the fact that $V$ is a (possibly) \emph{infinite $\eta$-expansion} of $U$.
We let $\leeta[\infty]$ be the greatest relation between B\"ohm-like trees such that $U\leeta[\infty] V$ entails that
\begin{itemize} 
\item either $U = V=\bot$, 
\item or (for some $i,k,m,n\ge 0$):
$$
U = \lam x_1\dots x_n.x_iU_1\cdots U_k\ \textrm{ and }\
V =\lam x_1\dots x_nz_1\dots z_m.x_iV_1\cdots V_k V'_1\cdots V'_{m}
$$
where $\seq z\cap\FV(x_iU_1\cdots U_k)=\emptyset$, $U_j \leeta[\infty] V_j$ for all $j\le k$ and $z_\ell\leeta[\infty] V'_\ell$ for all $\ell\le m$.
\end{itemize}
\end{definition}
%
Notice that in Barendregt's book~\cite[Def.~10.2.10(iii)]{Bare}, the relation above is denoted by $\le_\eta$.
We prefer to use a different notation because we want to emphasize the possibly infinitary nature of such $\eta$-expansions.
\begin{figure}[t!]
\begin{tikzpicture}
\node (root) at (0,0) {};
\node (Mh1) at (-40pt,-4pt) {$\lam xy.x$};
\node at ($(Mh1)+(10pt,18pt)$) {$\BT{M}$};
\node at ($(Mh1)+(10pt,8.5pt)$) {$\shortparallel$};
\node (Mh21) at ($(Mh1)+(-10pt,-20pt)$) {$x$};
\node (Mh22) at ($(Mh1)+(10pt,-20pt)$) {$\bot$};
\node (Mh23) at ($(Mh1)+(35pt,-20pt)$) {$\lam z_0.x$};
\node (Mh33) at ($(Mh23)+(0pt,-20pt)$) {$\lam z_1.z_0$};
\node (Mh43) at ($(Mh33)+(0pt,-20pt)$) {$\lam z_2.z_1$};
\node (Mh53) at ($(Mh43)+(7pt,-20pt)$) {$\vdots$};
\draw ($(Mh1)+(6pt,-4pt)$) -- (Mh21);
\draw ($(Mh1)+(10pt,-4pt)$) -- (Mh22);
\draw ($(Mh1)+(14pt,-4pt)$) -- ($(Mh23.north)+(0,-2pt)$);
\draw ($(Mh23)+(7pt,-5pt)$) -- ($(Mh33)+(7pt,4pt)$);
\draw ($(Mh33)+(7pt,-5pt)$) -- ($(Mh43)+(7pt,4pt)$);
\draw ($(Mh43)+(7pt,-5pt)$) -- ($(Mh53)+(0pt,4pt)$);
\node (mmsto) at ($(Mh1)+(60pt,0)$) {$\leeta[\infty]$};
\node (M'h1) at (74pt,-4pt) {$\lam xy.x$};
\node (M'h21) at ($(M'h1)+(-24pt,-20pt)$) {$\lam z_0w_0.x$};
\node (M'h22) at ($(M'h1)+(10pt,-20pt)$) {$\bot$};
\node (M'h23) at ($(M'h1)+(42pt,-20pt)$) {$\lam z_0.x$};
\node (M'h31) at ($(M'h21)+(-15pt,-20pt)$) {$\lam z_1.z_0$};
\node (M'h33) at ($(M'h23)+(0pt,-20pt)$) {$\lam z_1.z_0$};
\node (M'h41) at ($(M'h31)+(0pt,-20pt)$) {$\lam z_2.z_1$};
\node (M'h43) at ($(M'h33)+(0pt,-20pt)$) {$\lam z_2.z_1$};
\node (M'h51) at ($(M'h41)+(7pt,-20pt)$) {$\vdots$};
\draw ($(M'h1)+(6pt,-4pt)$) -- ($(M'h21.north)+(0.6,-2pt)$);
\draw ($(M'h1)+(10pt,-4pt)$) -- (M'h22);
\draw ($(M'h1)+(14pt,-4pt)$) -- (M'h23);
\draw ($(M'h21)+(12pt,-5pt)$) -- ($(M'h31)+(7pt,4pt)$);
\draw ($(M'h31)+(7pt,-5pt)$) -- ($(M'h41)+(7pt,4pt)$);
\draw ($(M'h41)+(7pt,-5pt)$) -- ($(M'h51)+(0pt,4pt)$);
\node (Ph31) at ($(M'h21)+(30pt,-20pt)$) {$\lam w_1.w_0$};
\node (Ph41) at ($(Ph31)+(0pt,-20pt)$) {$\lam w_2.w_1$};
\node (Ph51) at ($(Ph41)+(7pt,-20pt)$) {$\vdots$};
\draw ($(M'h21)+(16pt,-5pt)$) -- ($(Ph31)+(7pt,4pt)$);
\draw ($(Ph31)+(7pt,-5pt)$) -- ($(Ph41)+(7pt,4pt)$);
\draw ($(Ph41)+(7pt,-5pt)$) -- ($(Ph51)+(0pt,4pt)$);
\draw ($(M'h33)+(7pt,-5pt)$) -- ($(M'h43)+(7pt,4pt)$);
\draw ($(M'h23)+(7pt,-5pt)$) -- ($(M'h33)+(7pt,4pt)$);
\node (M'h53) at ($(M'h43)+(7pt,-20pt)$) {$\vdots$};
\draw ($(M'h43)+(7pt,-5pt)$) -- ($(M'h53)+(0pt,4pt)$);
\node at ($(M'h1)+(70pt,0)$) {$\BTle$};
\node (N'h1) at (200pt,-4pt) {$\lam xy.x$};
\node (N'h21) at ($(N'h1)+(-24pt,-20pt)$) {$\lam z_0w_0.x$};
\node (N'h22) at ($(N'h1)+(10pt,-20pt)$) {$y$};
\node (N'h23) at ($(N'h1)+(42pt,-20pt)$) {$\lam z_0.x$};
\node (N'h31) at ($(N'h21)+(-15pt,-20pt)$) {$\lam z_1.z_0$};
\node (N'h33) at ($(N'h23)+(0pt,-20pt)$) {$\lam z_1.z_0$};
\node (N'h41) at ($(N'h31)+(0pt,-20pt)$) {$\lam z_2.z_1$};
\node (N'h43) at ($(N'h33)+(0pt,-20pt)$) {$\lam z_2.z_1$};
\node (N'h51) at ($(N'h41)+(7pt,-20pt)$) {$\vdots$};
\draw ($(N'h1)+(6pt,-4pt)$) -- ($(N'h21.north)+(0.6,-2pt)$);
\draw ($(N'h1)+(10pt,-4pt)$) -- (N'h22);
\draw ($(N'h1)+(14pt,-4pt)$) -- (N'h23);
\draw ($(N'h21)+(12pt,-5pt)$) -- ($(N'h31)+(7pt,4pt)$);
\draw ($(N'h31)+(7pt,-5pt)$) -- ($(N'h41)+(7pt,4pt)$);
\draw ($(N'h41)+(7pt,-5pt)$) -- ($(N'h51)+(0pt,4pt)$);
\node (Ph31) at ($(N'h21)+(30pt,-20pt)$) {$\lam w_1.w_0$};
\node (Ph41) at ($(Ph31)+(0pt,-20pt)$) {$\lam w_2.w_1$};
\node (Ph51) at ($(Ph41)+(7pt,-20pt)$) {$\vdots$};
\draw ($(N'h21)+(16pt,-5pt)$) -- ($(Ph31)+(7pt,4pt)$);
\draw ($(Ph31)+(7pt,-5pt)$) -- ($(Ph41)+(7pt,4pt)$);
\draw ($(Ph41)+(7pt,-5pt)$) -- ($(Ph51)+(0pt,4pt)$);
\draw ($(N'h33)+(7pt,-5pt)$) -- ($(N'h43)+(7pt,4pt)$);
\draw ($(N'h23)+(7pt,-5pt)$) -- ($(N'h33)+(7pt,4pt)$);
\node (N'h53) at ($(N'h43)+(7pt,-20pt)$) {$\vdots$};
\draw ($(N'h43)+(7pt,-5pt)$) -- ($(N'h53)+(0pt,4pt)$);
\node at ($(N'h1)+(70pt,0)$) {$\geeta[\infty]$};
\node (Nh1) at (318pt,-4pt) {$\lam xy.x$};
\node at ($(Nh1)+(8pt,18pt)$) {$\BT{N}$};
\node at ($(Nh1)+(8pt,8.5pt)$) {$\shortparallel$};
\node (Nh21) at ($(Nh1)+(-20pt,-20pt)$) {$\lam z_0w_0.x$};
\node (Nh22) at ($(Nh1)+(10pt,-20pt)$) {$y$};
\node (Nh23) at ($(Nh1)+(37pt,-20pt)$) {$x$};
\node (Nh31) at ($(Nh21)+(-15pt,-20pt)$) {$\lam z_1.z_0$};
\node (Nh41) at ($(Nh31)+(0pt,-20pt)$) {$\lam z_2.z_1$};
\node (Nh51) at ($(Nh41)+(7.5pt,-20pt)$) {$\vdots$};
\draw ($(Nh1)+(6pt,-4pt)$) -- ($(Nh21.north)+(0.6,-2pt)$);
\draw ($(Nh1)+(10pt,-4pt)$) -- (Nh22);
\draw ($(Nh1)+(14pt,-4pt)$) -- (Nh23);
\draw ($(Nh21)+(12pt,-5pt)$) -- ($(Nh31)+(7pt,4pt)$);
\draw ($(Nh31)+(7pt,-5pt)$) -- ($(Nh41)+(7pt,4pt)$);
\draw ($(Nh41)+(7pt,-5pt)$) -- ($(Nh51)+(0pt,4pt)$);
\node (Ph31) at ($(Nh21)+(30pt,-20pt)$) {$\lam w_1.w_0$};
\node (Ph41) at ($(Ph31)+(0pt,-20pt)$) {$\lam w_2.w_1$};
\node (Ph51) at ($(Ph41)+(7.5pt,-20pt)$) {$\vdots$};
\draw ($(Nh21)+(16pt,-5pt)$) -- ($(Ph31)+(7pt,4pt)$);
\draw ($(Ph31)+(7pt,-5pt)$) -- ($(Ph41)+(7pt,4pt)$);
\draw ($(Ph41)+(7pt,-5pt)$) -- ($(Ph51)+(0pt,4pt)$);
\end{tikzpicture}
\caption{A situation witnessing the fact that $M\sqle_\Hst N$ holds.}\label{fig:BTle_eta}
\end{figure}

\begin{thmC}[{\cite[Thm.~19.2.9]{Bare}}]\label{thm:Hst_as_etainf}
Let $M,N\in\Lambda$.
\begin{enumerate}[(i)]
\item $M\sqle_ \Hst N$ if and only if there are B\"ohm-like trees $U,V$ such that 
	$\BT{M}\leeta[\infty] U \BTle V\geeta[\infty]\BT{N}$.
\item $M =_{\Hst} N$ if and only if there is a B\"ohm-like tree $U$ such that $\BT{M}\leeta[\infty] U\geeta[\infty]\BT{N}$.
\end{enumerate}
\end{thmC}

In other words, $\Hst$ equates all \lam-terms whose B\"ohm trees are equal up to countably many (possibly) infinite $\eta$-expansions.
From Theorem~\ref{thm:BTMiffeffectiveT}, it follows that the trees $U,V$ appearing in the statements above can always be chosen \lam-definable (see~\cite[Ex.~10.6.7]{Bare}).

\begin{example}\label{ex:Hst}\ 
\begin{enumerate}
\item
	The typical example is $\bI =_\Hst \bold{J}$, since clearly $\BT{\bJ}$ is an infinite $\eta$-expansion of $\bI$, so $\BT{\bI}\leeta[\infty] \BT{\bJ}$ holds.
\item\label{ex:Hst2}
	As a consequence, we get that $\BT{[\bI]_{n\in\nat}}\leeta[\infty] \BT{[\bJ]_{n\in\nat}}$ for $[\bI]_{n\in\nat} = [\bI,[\bI,[\bI,\dots ]]]$ and $[\bJ]_{n\in\nat} = [\bJ,[\bJ,[\bJ,\dots ]]]$.
\item 
	For $M = \lam xy.xx\Omega(\bold{J}x)$ and $N = \lam xy.x(\lam z_0w_0.x(\bJ z_0)(\bJ w_0))y x$, we have $M\sqle_\Hst N$ (as shown in Figure~\ref{fig:BTle_eta}) but $M \neq_\Hst N$. 
	As we will see in Section~\ref{ssec:J_T}, it is possible to represent the subterm $\lam z_0w_0.x(\bJ z_0)(\bJ w_0)$ as $\bJ_Tx$, where $\bJ_T$ is an infinite $\eta$-expansion of $\bI$ following a suitable tree $T$.
\end{enumerate}
\end{example}
The point $(2)$ shows that for proving $\BT{M}\leeta[\infty] \BT{N}$ one may need to perform denumerably many infinite $\eta$-expansions.
Point (3) shows that for proving $M \sqle_\Hst N$, it may not be enough to infinitely $\eta$-expand $\BT{M}$ to match the structure of $\BT N$: 
one may need to perform infinite $\eta$-expansions on both sides and cut some subtrees of $\BT N$.

\begin{remark}\label{rem:simh} From Theorem~\ref{thm:Hst_as_etainf} and Definition~\ref{def:leeta} it follows that if $M\sqle_\Hst N$ and $M$ is solvable then also $N$ is solvable and $M,N$ have similar hnf's.
\end{remark}

\subsection{The Infinite $\eta$-Expansion $\bold{J}_T$.}\label{ssec:J_T}
Wadsworth's combinator $\bJ$ is the typical example of an infinite $\eta$-expansion of the identity.
However, there are many \lam-terms $M$ that satisfy the property of being infinite $\eta$-expansions of the identity, namely $\BT{\bI}\leeta[\infty] \BT{M}$.

Recall from Section~\ref{sec:BT} that $\nak{\BT{M}}$ denotes the underlying tree of $\BT{M}$.
\begin{figure}[t!]
\begin{tikzpicture}
\node (myspot) at (-5.9,0) {~};
\node (lev1) at (0.5, .1) {\!\!\!\!\!\!\!\!\!\!\!\!\!\!\!\!\!\!\!\!\!\!\!\!\!\!\!\!\!\!\!\!\!$\lam xy_1\dots y_{T\emptyseq}.x$};
\node (lev2) at (0,-1) {$
	\lam z_1\dots z_{T\langle0\rangle}.y_1\qquad\qquad\ \, \cdots \quad \qquad
	\lam z_1\dots z_{T\langle T\emptyseq\textrm{-}1\rangle}.y_{T\emptyseq}$};
\node (lev3) at (.8,-2) {
	$\lam\vec w_{T\langle0,0\rangle}.z_1\quad \cdots \quad 
	  \lam\vec w_{T\langle0,T\langle0\rangle\textrm{-}1\rangle}.z_{T\langle0\rangle}\qquad
	  \lam \vec w_{T\langle T\emptyseq\textrm{-}1,0\rangle}.z_1\quad\cdots\quad 
	  \lam \vec w_{T\langle T\emptyseq\textrm{-}1,T\langle T\emptyseq\textrm{-}1\rangle\textrm{-}1\rangle}.z_{T\langle T\emptyseq\textrm{-}1\rangle}$};
	  \draw[-] ($(lev1.south east)+(-8pt,2pt)$) -- (-2.3,-.8);
	  \draw[-] ($(lev1.south east)+(-2pt,2pt)$) -- (3.8,-.8);	  	  
	  \draw[-] (-2.2,-1.2) -- (-1.2,-1.7);
	  \draw[-] (-2.4,-1.2) -- (-4.75,-1.7);	  
	  \draw[-] (4,-1.2) -- (2.7,-1.7);	  	  	  	  
	  \draw[-] (4.2,-1.2) -- (7.2,-1.7);	
	  
	  \draw[-] (-4.9,-2.15) -- (-3.75,-3);	 
	  \draw[-] (-5.1,-2.15) -- (-6.45,-3);	  	  	     	  
	  \node at (-5.1,-2.75) {$\cdots$}; 	  	  
	 
	  \draw[-] (-1.15,-2.15) -- (-2.45,-3);	  	  	  
  	  \draw[-] (-.95,-2.15) -- (.25,-3);	  	  	  	                      	  
	  \node at (-1.1,-2.75) {$\cdots$};	  
	
	  \draw[-] (2.35,-2.15) -- (3.6,-3);	  	  	                                           	  
	  \draw[-] (2.15,-2.15) -- (.9,-3);	  
	  \node at (2.3,-2.75) {$\cdots$};
	  
	    \draw[-] (7.2,-2.15) -- (8.5,-3);	  	  
	  \draw[-] (7,-2.15) -- (5.8,-3);	  	  	  	  	  
	  \node at (7.2,-2.75) {$\cdots$};	  	  	  
\end{tikzpicture}
\caption{The B\"ohm-like tree of an infinite $\eta$-expansion of $\bI$ following $T\in\Trees$. 
To lighten the notations we write $T\sigma$ rather than $T(\sigma)$
and we let $\seq w_n:=w_1,\dots,w_n$.}
\label{fig:BTJT}
\end{figure}

\begin{definition} Let $M\in\Lamo$ and $T\in\Trees[]$.
We say that $M$ is an \emph{infinite $\eta$-expansion of the identity following $T$} whenever $x\leeta[\infty] \BT{Mx}$ and $\nak{\BT{Mx}} = T$ for any $x\in\Var$.
\end{definition}

For instance, $\bJ$ follows the infinite unary tree $T$ corresponding to the map $T(\sigma) = 1$ for all $\sigma = \seqof{0,\dots,0}$.
We now provide a characterization of all such infinite $\eta$-expansions.

\begin{proposition} \label{prop:J_T}
For all $T\in\Trees[]$, there exists a \lam-term $\bJ_T$ which is an infinite $\eta$-expansion of the identity following $T$ if and only if $T$ is recursive.
\end{proposition}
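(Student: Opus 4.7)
The plan is to reduce both implications to Barendregt's characterization of $\lambda$-definable B\"ohm-like trees (Theorem~\ref{thm:BTMiffeffectiveT}).

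For the only-if direction, suppose $\bJ_T \in \Lamo$ is an infinite $\eta$-expansion of the identity following $T$. I would set $U := \BT{\bJ_T x}$ and observe that Definition~\ref{def:leeta}, unwound coinductively along the chain $x \leeta[\infty] U$, forces every node of $U$ to carry a label of the form $\lam \seq z.y$ and never $\bot$, since the starting tree $x$ is already not $\bot$. Consequently $\dom_\bot(U) = \dom(U)$, and Theorem~\ref{thm:BTMiffeffectiveT} yields that the full labelling function of $U$ is computable after coding. Projecting on the arity component then exhibits $T = \nak U$ as a recursive function, so $T \in \Trees$.

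For the if direction, I would assume $T \in \Trees$ and construct directly the B\"ohm-like tree $U$ depicted in Figure~\ref{fig:BTJT}. Concretely, I would fix a recursive indexing of fresh bound-variable names $y^\sigma_1,\dots,y^\sigma_{T(\sigma)}$ for $\sigma \in \dom(T)$, and define the label of $U$ at a non-root position $\sigma = \sigma'\cdot i$ to be the $\lambda$-abstraction binding $y^\sigma_1,\dots,y^\sigma_{T(\sigma)}$ on top of the head variable $y^{\sigma'}_{i+1}$ inherited from its parent $\sigma'$, with the head variable at the root taken to be $x$. Since $T$ is recursive and the naming scheme is uniform in $\sigma$, the labelling of $U$ is computable; and by construction $\FV(U) = \{x\}$. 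Theorem~\ref{thm:BTMiffeffectiveT} then yields an $M \in \Lam$ with $\BT M = U$ and $\FV(M) \subseteq \{x\}$, and setting $\bJ_T := \lam x.M$ gives the required combinator.

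The only delicate point will be variable hygiene in the second direction: the indexing $y^\sigma_i$ must be arranged so that the resulting labelling function is genuinely computable (e.g.\ by pairing $\sigma$ with $i$ via a standard G\"odel coding) and so that distinct bound variables never clash along the tree. Once this bookkeeping is in place, both implications follow at once from Theorem~\ref{thm:BTMiffeffectiveT} and Definition~\ref{def:leeta}.
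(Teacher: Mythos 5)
Your $(\Rightarrow)$ direction coincides with the paper's: both deduce from Theorem~\ref{thm:BTMiffeffectiveT} that $\BT{\bJ_Tx}$ is computable, observe that $x\leeta[\infty]\BT{\bJ_Tx}$ rules out occurrences of $\bot$, and conclude that the underlying tree $T=\nak{\BT{\bJ_Tx}}$ is recursive. Your $(\Leftarrow)$ direction, however, takes a genuinely different route. The paper constructs the term explicitly: using a fixed point combinator, the effectiveness of the sequence coding and the recursiveness of $T$, it produces $X\in\Lamo$ with $X\code{\sigma}x=_\beta\lam z_1\dots z_m.x(X\code{\sigma.0}z_1)\cdots(X\code{\sigma.m-1}z_m)$ for $m=T(\sigma)$, sets $\bJ_T=X\code{\emptyseq}$, and verifies by coinduction that each $X\code{\sigma}$ is an infinite $\eta$-expansion of the identity following $\restr{T}{\sigma}$. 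You instead build the target B\"ohm-like tree $U$ of Figure~\ref{fig:BTJT}, check that its labelling is computable and $\FV(U)=\{x\}$, and invoke the definability half of Theorem~\ref{thm:BTMiffeffectiveT} to get $M$ with $\BT{M}=U$, setting $\bJ_T=\lam x.M$. This is correct and somewhat shorter, at the price of using the (nontrivial) converse of Theorem~\ref{thm:BTMiffeffectiveT} as a black box where the paper's argument is self-contained and yields an explicit recursive equation for $\bJ_T$. Two small points remain to be recorded in your version: (a) you still owe the check that $x\leeta[\infty]U$ (with $\nak{U}=T$ holding by construction); this is a one-line coinduction on the shape of your labels, i.e.\ exactly the verification the paper performs for its $X\code{\sigma}$, so it should be stated rather than absorbed into ``gives the required combinator''; (b) Theorem~\ref{thm:BTMiffeffectiveT} as stated only guarantees $\FV(\BT{M})\subseteq\FV(M)$, so to obtain a combinator you should either appeal to the fact that the standard construction gives $\FV(M)=\FV(U)=\{x\}$ or close off any spurious free variables before abstracting~$x$.
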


\begin{proof}
$(\Rightarrow)$ By~Theorem~\ref{thm:BTMiffeffectiveT}, $\BT{\bJ_Tx}$ is partial recursive and so is its underlying tree $T$. Since $x\leeta[\infty] \BT{\bJ_Tx}$, $\BT{\bJ_Tx}$ cannot have any occurrences of $\bot$ so $\dom(T)$ is decidable.

$(\Leftarrow)$
We fix a bijective encoding of all finite sequences of natural numbers $\# : \Seq\to\nat$ which is \emph{effective} in the sense that the code $\#(\sigma.n)$ is computable from $\#\sigma$ and~$n$.
We write $\code{\sigma}$ for the corresponding Church numeral $\cnc_{\#\sigma}$.
Using a fixed point combinator $\bY$, we define a \lam-term $X\in\Lamo$ satisfying the following recursive equation (for all $\sigma\in\dom(T)$):
\begin{equation}\label{eq:receq}
	X \code{\sigma}x =_\beta \lam z_1\dots z_m.x(X\code{\sigma.0}z_1)\cdots(X\code{\sigma.m-1}z_m)\textrm{ where }m = T(\sigma).
\end{equation}
(The existence of such a \lam-term follows from the fact that $T$ is recursive, the effectiveness of the encoding $\#$ and Church's Thesis.) 
We prove by coinduction that for all $\sigma\in\dom(T)$, $X\code{\sigma}$ is an infinite $\eta$-expansion of the identity following $\restr{T}{\sigma}$. 
Indeed, $X \code{\sigma}x$ is $\beta$-convertible to the \lam-term of Equation~\ref{eq:receq}. 
By coinductive hypothesis we get for all $i < T(\sigma)$ that $z_i \leeta[\infty] \BT{X\code{\sigma.i}z_i}$ and $\nak{\BT{X\code{\sigma.i}z_i}} = \restr{T}{\sigma.i}$.
From this, we conclude that $x\leeta[\infty] \BT{X\code{\sigma}x}$ and $\nak{\BT{X\code{\sigma}x}}= T$.
Therefore, the \lam-term $\bJ_T$ we are looking for is $X\code{\emptyseq}$. 
\end{proof}


\subsection{$\Hpl$: B\"ohm Trees and Their Finitary $\eta$-Expansions}
Perhaps surprisingly, it turns out that $\Hst$ is not the first \lam-theory that has been defined in terms of contextual equivalence. 
Indeed, Morris's original extensional observational equivalence is the following.

\begin{definition}\label{def:Hpl}
\emph{Morris's inequational theory} $\sqle_\Hpl$ is the $\cO$-observational preorder obtained by taking as $\cO$ the set $\NF[\beta]$ of $\beta$-normal forms. 
We denote by $\Hpl$ the corresponding equivalence, which we call \emph{Morris's \lam-theory}\footnote{
The notation~$\Hpl$ for Morris's \lam-theory has been used in~\cite{ManzonettoR14,BreuvartMPR16,IntrigilaMP17}. 
The same \lam-theory is denoted $\mathscr{T}_\textrm{NF}$ in Barendregt's book~\cite{Bare} and $\mathbf{N}$ in Paolini and Ronchi della Rocca's one~\cite{RonchiP04}.
}.
\end{definition}

Notice that it is equivalent to take as observables the $\beta\eta$-normal forms, since $M\msto[\beta]\nf[\beta](M)$ exactly when $M\msto[\beta\eta]\nf[\beta\eta](M)$. 
From this, it follows that $\Hpl$ is an extensional \lam-theory.
It is easy to show that $M\sqle_\Hpl N$ entails $M\sqle_\Hst N$, therefore we have $\Hpl\subsetneq\Hst$.
In~\cite{CoppoDZ87}, Coppo, Dezani-Ciancaglini and Zacchi defined a filter model having $\Hpl$ as equational theory.
Also $\Hpl$ can be characterized via a suitable extensional equivalence between B\"ohm trees.
Intuitively, the B\"ohm-like tree $U$ is a finitary $\eta$-expansion of $V$, if it is obtained from $V$ by performing countably many finite $\eta$-expansions.

\begin{definition}\label{def:leetafin}
Given two B\"ohm-like trees $U$ and $V$, we define coinductively the relation  $U\leeta V$ expressing the fact that $V$ is a \emph{finitary $\eta$-expansion} of $U$.
We let $\leeta$ be the greatest relation between B\"ohm-like trees such that $U\leeta V$ entails that
\begin{itemize} 
\item either $U = V=\bot$, 
\item or (for some $i,k,m,n\ge 0$):
$$
U = \lam x_1\dots x_n.x_iU_1\cdots U_k\ \textrm{ and }\
V =\lam x_1\dots x_nz_1\dots z_m.x_iV_1\cdots V_k Q_1\cdots Q_{m}
$$
where $\seq z\cap\FV(x_iU_1\cdots U_k)=\emptyset$, $U_j \leeta V_j$ for all $j\le k$ and $\seq Q\in\NF[\beta]$ are such that $Q_\ell \msto[\eta]z_\ell$ for all $\ell\le m$.
\end{itemize}
\end{definition}

Two $\lam$-terms $M,N$ are equivalent in $\Hpl$ exactly when their B\"ohm trees are equal up to countably many $\eta$-expansions of finite depth.

\begin{thmC}[{\cite[Thm.~2.6]{Hyland75errato}}]\label{thm:informal1}
For $M,N\in\Lambda$, we have that $M =_{\Hpl} N$ if and only if there exists a B\"ohm-like tree $U$ such that $\BT{M}\leeta U\geeta \BT{N}$.
\end{thmC}

\begin{example} Recall from Example~\ref{ex:Hst}(\ref{ex:Hst2}) that $[\bI]_{n\in\nat} = [\bI,[\bI,[\bI,\dots ]]]$ and define the sequence $[\onec_n]_{n\in\nat} = [\bI,[\onec_1,[\onec_2,\dots ]]]$ where $\onec_n$ is defined on Page~\pageref{combinators}. 
From Definition~\ref{def:leetafin} it follows $\BT{[\bI]_{n\in\nat}}\leeta[] \BT{[\onec_n]_{n\in\nat}}$ while, for instance, $\BT{[\bI]_{n\in\nat}}\not\leeta[] \BT{[\bJ]_{n\in\nat}}$.
\end{example}

As a brief digression, notice that the \lam-terms $[\bI]_{n\in\nat}$ and $[\onec_n]_{n\in\nat}$ can be used to show that $\BTth\eta\subsetneq\Hpl$. Indeed, for $M,N\in\Lam$, $M\redto[\eta] N$ entails that $\BT{M}$ can be obtained from $\BT{N}$ by performing at most \emph{one} $\eta$-expansion at every position (see~\cite[Lemma~16.4.3]{Bare}). 
However, to equate the B\"ohm trees of $[\bI]_{n\in\nat}$ and $[\onec_n]_{n\in\nat}$,  at every level one needs to perform $\eta$-expansions of increasing depth and this is impossible in $\BTth\eta$ (as shown in~\cite{IntrigilaMP17}).

As proved in~\cite{BreuvartMPR16} by exploiting a revised B\"ohm-out technique, the following weak separation result holds. 
(For the interested reader a fully detailed proof will appear in~\cite{IntrigilaMP18}.)

\begin{theorem}[Morris Separation]\label{thm:newsep} 
Let $M,N\in\Lambda$ such that $M \sqle_\Hst N$ while $M \not\sqle_\Hpl N$.
There exists a context $C\hole-$ such that $C\hole{M} =_{\beta\eta} \bI$ and $C\hole{N}=_{\BTth} \bold{J}_T$ for some $T\in\Trees$.
\end{theorem}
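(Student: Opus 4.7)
The plan is to exploit the discrepancy between $\sqle_\Hst$ and $\sqle_\Hpl$ through an adaptation of the B\"ohm-out technique. First, since $M\not\sqle_\Hpl N$, Definition~\ref{def:Hpl} yields a context $C_0\hole-$ such that $C_0\hole M\msto[\beta] P$ for some $P\in\NF[\beta]$ while $C_0\hole N$ has no $\beta$-nf. On the other hand, $M\sqle_\Hst N$ is preserved by the context $C_0\hole-$, and since $C_0\hole M$ has an hnf, so does $C_0\hole N$. Applying Theorem~\ref{thm:Hst_as_etainf}, we obtain B\"ohm-like trees with $P\leeta[\infty] U\BTle V\geeta[\infty]\BT{C_0\hole N}$. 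Because $P$ is a $\bot$-free finite tree and $\leeta[\infty]$ forces a node-by-node $\bot$-correspondence, $U$ must be $\bot$-free. Hence $U\BTle V$ collapses to $U=V$, and applying the same reasoning to $\BT{C_0\hole N}\leeta[\infty] U$ gives that $\BT{C_0\hole N}$ is $\bot$-free. Since $C_0\hole N$ has no $\beta$-nf, $\BT{C_0\hole N}$ is therefore infinite.

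The next step is to enhance $C_0\hole-$ into a navigation context that extracts a specific leaf of $P$ producing $\bI$ up to $\beta\eta$. Following B\"ohm's original technique, one can select a head variable of a chosen subterm of $P$ and, by a finite sequence of applied arguments (using the fact that $P$ is a $\beta\eta$-normal form), construct a head context $H\hole-$ such that $H\hole{P}=_{\beta\eta}\bI$. Crucially, the leaf of $P$ chosen as the target must correspond, through the $\eta$-expansion $P\leeta[\infty] U\geeta[\infty]\BT{C_0\hole N}$, to a position of $\BT{C_0\hole N}$ that lies in the infinite part of the tree; such a leaf exists because $P$ is finite while $\BT{C_0\hole N}$ is infinite, so at some node the $\eta$-expansion must account for infinitely many new subtrees. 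Taking $C\hole-:=H\hole{C_0\hole-}$, we obtain $C\hole M=_{\beta\eta}\bI$ by construction, and applying the same navigation to $C_0\hole N$ does not terminate at a variable but keeps unfolding an infinite $\eta$-expansion of the identity; consequently $\BT{C\hole N}$ is $\bot$-free and satisfies $x\leeta[\infty]\BT{C\hole N}$ after the appropriate $\lambda$-abstraction is inserted, i.e.\ $\BT{C\hole N}$ is an infinite $\eta$-expansion of the identity following some tree $T:=\nak{\BT{C\hole N}}$.

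Finally, it remains to verify that $T\in\Trees$, i.e.\ that $T$ is recursive. Since $C\hole N\in\Lam$, Theorem~\ref{thm:BTMiffeffectiveT} guarantees that $\BT{C\hole N}$ restricted to its non-$\bot$ positions is computable; as $\BT{C\hole N}$ is $\bot$-free, the full underlying tree $T$ is recursive and infinite. By Proposition~\ref{prop:J_T} there exists a closed $\lam$-term $\bJ_T$ which is an infinite $\eta$-expansion of the identity following $T$, so $\BT{\bJ_T}=\BT{C\hole N}$, yielding $C\hole N=_\BTth\bJ_T$ as required.

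The hardest part will be the B\"ohm-out construction of the navigation context $H\hole-$: one must adapt the classical machinery, which normally separates $\beta\eta$-distinct normal forms, to a situation where the two terms being traversed are not both normal forms but stand in the relation $P\leeta[\infty]\BT{\cdot}$ with one side finite and the other infinite. The argument has to track how the arguments applied at each navigation step interact with the $\eta$-expansions, ensuring simultaneously that the trajectory on $M$'s side collapses to $\bI$ and that on $N$'s side it uncovers a fully recursive infinite $\eta$-expansion rather than producing extraneous $\bot$'s or hitting an unsolvable subterm.
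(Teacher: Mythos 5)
The paper does not actually prove Theorem~\ref{thm:newsep}: it imports the result from the conference paper \cite{BreuvartMPR16} and explicitly defers the detailed proof to \cite{IntrigilaMP18}. So there is no in-text argument to compare yours against; I can only assess your proposal on its own terms.

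Your preliminary reductions are correct and follow the intended strategy. From $M\not\sqle_\Hpl N$ you obtain $C_0\hole-$ with $C_0\hole M\msto[\beta]P\in\NF[\beta]$ and $C_0\hole N$ non-normalizing; context closure of $\sqle_\Hst$ together with Theorem~\ref{thm:Hst_as_etainf} gives $P\leeta[\infty]U\BTle V\geeta[\infty]\BT{C_0\hole N}$; $\bot$-freeness of $P$ propagates to $U$, forces $U=V$, and propagates back to $\BT{C_0\hole N}$, which is therefore infinite and $\bot$-free; and since $P$ is finite while $U$ is infinite, some node of $P$ must have a child, added by the $\eta$-expansion, that is an \emph{infinite} $\eta$-expansion of a fresh variable. (Two minor slips: $P$ is a $\beta$-, not necessarily $\beta\eta$-, normal form; and each node adds only finitely many new subtrees --- the point is that one of them is infinite, not that there are ``infinitely many new subtrees''.)

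The genuine gap is that everything after this setup is asserted rather than proved, and that is precisely where the content of the theorem lies. Concretely, you need: (a) a B\"ohm-out context reaching a position that is \emph{virtual} in $P$ (it lives in the $\eta$-expanded part), traversing nodes where $P$ and $\BT{C_0\hole N}$ carry different numbers of abstractions and arguments, with the same head variable possibly recurring along the path with different $\eta$-defects --- the classical tupling-combinator machinery does not apply off the shelf here and must be shown to interact coherently with the $\eta$-discrepancies on both sides simultaneously; and (b) a proof that the subtree of $\BT{C_0\hole N}$ reached by this navigation is \emph{exactly} an infinite $\eta$-expansion of the extracted variable (hence $=_{\BTth}\bJ_T$ after abstraction), and not merely a tree sharing a common infinite $\eta$-expansion with that variable --- this needs an interpolation property of $\leeta[\infty]$ that you never state or prove. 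Your closing paragraph candidly labels (a) ``the hardest part'', but a proof cannot defer its hardest part. The final step (recursiveness of $T$ via Theorem~\ref{thm:BTMiffeffectiveT} and Proposition~\ref{prop:J_T}) is fine once (a) and (b) are established.
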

This allows to Morris-separate also \lam-terms like $\bI$ and $\bJ$ that are not semi-separable.

\subsection{Extensional Approximants}
As far as we know, in the literature there is no characterization of $\sqle_\Hpl$ in terms of extensional equality between B\"ohm trees.
However, L\'evy in~\cite{JJ} provides a characterization in terms of ``extensional approximants'' of B\"ohm trees. Recall from Definition~\ref{def:omegastuff}(\ref{def:omegastuff2}) that $\BT[*]{M}$ is the set of all finite approximants of $M$.

\begin{definition}
For $M\in\Lam$, the set $\BT[e]{M}$ of all \emph{extensional (finite) approximants of $M$} is defined as follows:
$$
	\BT[e]{M} = \{\nf[\eta](t)\st t\in\BT[*]{M'}, M'\msto[\eta]M\}.
$$
\end{definition}

\begin{example} The sets of extensional approximants of some notable \lam-terms:
\begin{itemize}
\item $\BT[e]{\bI} = \{\bot,\bI,\lam xz_0.x\bot,\lam xz_0.x(\lam z_1.z_0\bot),\lam xz_0.x(\lam z_1.z_0(\lam z_2.z_1\bot)),\dots\}$,
\item $\BT[e]{\bJ} = \BT[e]{\bI} -\{\bI\}$.
\end{itemize}
(Here we decided to display those approximants having a regular shape, but also \lamb-terms like $\lam xz_0z_1z_2.x(\lam w_0w_1.z_0w_0\bot)z_1\bot$ belong to these sets).
\end{example}

The following result is taken from~\cite{JJ} (see also Theorem~11.2.20 in~\cite{RonchiP04}) and will be used in the proof of Corollary~\ref{cor:extensionality}.

\begin{theorem}\label{thm:JJ}
 For $M,N\in\Lam$ we have $M\sqle_\Hpl N$ if and only if $\BT[e]{M}\subseteq\BT[e]{N}$.
\end{theorem}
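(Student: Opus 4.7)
The plan is to prove the two implications separately, the $(\Leftarrow)$ direction being a standard approximation/continuity argument and $(\Rightarrow)$ requiring a separation-style construction.

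For $(\Leftarrow)$, assume $\BT[e]{M}\subseteq\BT[e]{N}$ and let $C[-]$ be a context with $C[M]$ having a $\beta$-normal form $P$. I would exploit a syntactic continuity argument in the spirit of Lemma~\ref{lemma:xDomenico}, applied to the reduction $C[M]\msto[\beta]P$: from $P\in\BT[*]{C[M]}$ one extracts a finite approximant $u\in\BT[*]{M}$ such that $C[u]\msto[\beta\bot]P'$ with $P'\BTle P$. Since $P\in\NF[\beta]$ contains no occurrences of $\bot$, one forces $P'=P$. Taking $\eta$-normal forms yields $\nf[\eta](u)\in\BT[e]{M}\subseteq\BT[e]{N}$, so there exists $N'$ with $N'\msto[\eta]N$ together with a finite approximant $s\in\BT[*]{N'}$ satisfying $\nf[\eta](s)=\nf[\eta](u)$. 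Using that $\beta$-normalizability is preserved by $\eta$-equivalence and that $s\in\BT[*]{N'}$ implies a $\beta$-reduct of $N'$ having $s$ as its direct approximant, one concludes that $C[N]$ is $\beta$-normalizable, whence $M\sqle_\Hpl N$.

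For $(\Rightarrow)$, assume $M\sqle_\Hpl N$ and fix $t\in\BT[e]{M}$; the goal is $t\in\BT[e]{N}$. Writing $t=\nf[\eta](u)$ with $u\in\BT[*]{M'}$ and $M'\msto[\eta]M$, the idea is to construct a context $C_t[-]$ that effectively tests the presence of $t$ as an extensional approximant: it descends the tree structure of $t$ via repeated applications, employs Böhm-out permutators at each non-$\bot$ position to read off the head-variable label and recurse on the children, and returns a fixed $\beta$-normal form (say $\bI$) on success. At the leaves where $t$ has been $\eta$-normalized, suitable finite $\eta$-expansions of the identity (finite truncations of the $\bJ_T$ of Proposition~\ref{prop:J_T}) are plugged in to absorb the finitary $\eta$-expansions implicit in $\BT[e]{M}$ (Definition~\ref{def:leetafin}). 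One then shows that $C_t[M]$ has a $\beta$-normal form precisely because $t\in\BT[e]{M}$; by the hypothesis $M\sqle_\Hpl N$, also $C_t[N]$ is $\beta$-normalizable, and reading back through the construction of $C_t$ yields a finite approximant of some $N'\msto[\eta]N$ whose $\eta$-normal form is $t$, hence $t\in\BT[e]{N}$.

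The hard part will be the $(\Rightarrow)$ direction, where $C_t[-]$ must be robust enough to tolerate the $\eta$-flexibility hidden in $\BT[e]{N}$ (the existential quantifier over $N'\msto[\eta]N$ and the $\eta$-normalization) while still pinning down the specific approximant $t$. This is more delicate than classical Böhm-out~\cite{Bohm68}, which separates $\beta\eta$-distinct normal forms, because we must work up to finitary $\eta$-expansion in a controlled way. A natural alternative would be to reduce the hard direction to Theorem~\ref{thm:newsep}: if $t\in\BT[e]{M}\setminus\BT[e]{N}$ then $M\not\sqle_\Hpl N$ while still $M\sqle_\Hst N$ is plausible, and Morris separation would then directly produce the required distinguishing context.
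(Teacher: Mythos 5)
First, a point of reference: the paper does not prove this theorem at all --- it is imported from L\'evy~\cite{JJ} (see also Theorem~11.2.20 of~\cite{RonchiP04}), so there is no in-paper proof to compare against. Your overall architecture is the right one: syntactic continuity for $(\Leftarrow)$ and a B\"ohm-out-style separation for $(\Rightarrow)$, and you have correctly identified which direction is which. The $(\Leftarrow)$ sketch is essentially sound, but be aware that the continuity statement you invoke (``$C[M]$ has $\beta$-nf $P$ implies some $u\in\BT[*]{M}$ with $C[u]\msto[\beta\bot]P$'') is a full-strength Continuity Theorem for B\"ohm trees, which is much stronger than Lemma~\ref{lemma:xDomenico} and is nowhere established in this paper; likewise the step from ``$C[s]$ has a $\bot$-free normal form and $s\BTle N''$'' to ``$C[N'']$ is normalizable'' is a separate (standard but nontrivial) lifting lemma that needs to be stated and proved.

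The genuine gap is in $(\Rightarrow)$. The construction of the testing context $C_t[-]$ is precisely the hard technical content of the theorem --- a B\"ohm-out that must be correct \emph{up to finitary $\eta$-expansion on the $N$-side} --- and your sketch does not carry it out; it only describes what such a context should do. Worse, the proposed fallback via Theorem~\ref{thm:newsep} is circular: that theorem takes $M\not\sqle_\Hpl N$ (together with $M\sqle_\Hst N$) as a \emph{hypothesis} and produces a context separating $\bI$ from some $\bJ_T$, whereas what you must prove here is the contrapositive implication $\BT[e]{M}\not\subseteq\BT[e]{N}\Rightarrow M\not\sqle_\Hpl N$, i.e.\ you must \emph{manufacture} the distinguishing context from the tree-level discrepancy. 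You cannot assume the observational inequality you are trying to establish. (Moreover, when $\BT[e]{M}\not\subseteq\BT[e]{N}$ the terms need not satisfy $M\sqle_\Hst N$, so the hypotheses of Theorem~\ref{thm:newsep} may simply fail; in that regime one would instead fall back on semi-separability, which is yet another case to handle.) As it stands, the $(\Rightarrow)$ direction is a plan rather than a proof, and the shortcut offered in its place does not work.
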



\section{The Relational Graph Models}\label{sec:rgm}
In this section we recall the definition of a \emph{relational graph model} (\rgm, for short).
Individual examples of \rgm's have been previously studied in the literature as models of the \lam-calculus~\cite{BucciarelliEM07,Manzonetto09,deCarvalhoTh,HylandNPR06}, of nondeterministic \lam-calculi~\cite{BucciarelliEM12} and of resource calculi~\cite{Manzonetto12,PaganiR10}.
However, the class of relational graph models was formally introduced in~\cite{ManzonettoR14}.

\subsection{The Relational Semantics}

Relational graph models are called \emph{relational} since they are reflexive objects in the Cartesian closed category \MRel~\cite{BucciarelliEM07}, which is the Kleisli category of the finite multisets comonad $\Mfin{-}$ on \Rel. 
Since we do not use the underlying symmetric monoidal category, we present directly its Cartesian closed structure.
Recall that the definitions and notations concerning multisets have been introduced in Subsection~\ref{subs:msets}.

\begin{definition}
The category $\MRel$ is defined as follows:
\begin{itemize}
\item 
    The objects of $\MRel$ are all the sets.
\item 
    A  morphism from $A$ to $B$ is a relation from $\Mfin A$ to $B$; in other words, $\MRel(A,B)=\Pow{\Mfin A\times B}$.
\item 
    The identity of $A$ is the relation $\Id{A}=\{(\Mset \alpha,\alpha)\st \alpha\in A\}\in\MRel(A,A)$.
\item 
    The composition of $f\in\MRel(A,B)$ and $g\in\MRel(B,C)$ is defined by:
    $$
    \begin{array}{ll}
    g\comp f=\big\{(a,\gamma)\quad \st&\exists k\in\nat,\ \exists (a_1,\beta_1),\dots,(a_k,\beta_k)\in f\textrm{ such that } \\
    &a = a_1+\dots+ a_k\ \text{and}\ (\Mset{\beta_1,\dots,\beta_k},\gamma)\in g\ \big\}.\\
    \end{array}
    $$
\end{itemize}
\end{definition}

Given two sets $A_1,A_2$, we denote by $\With{A_1}{A_2}$ their disjoint union $(\{1\}\times A_1) \cup (\{2\}\times A_2)$. 
Hereafter we adopt the following convention.

\begin{convention} 
We consider the canonical bijection (also known as \emph{Seely isomorphism}~\cite{Bierman95}) between $\Mfin{A_1}\times\Mfin{A_2}$ and $\Mfin{\With{A_1}{A_2}}$ 
as an equality.
As a consequence, we still denote by $(a_1,a_2)$ the corresponding element of $\Mfin{\With{A_1}{A_2}}$.
\end{convention}

\begin{theorem}\label{thm:MRel-ccc} The category $\MRel$ is a Cartesian closed category.
\end{theorem}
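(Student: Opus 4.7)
The plan is to exhibit in turn the terminal object, the binary products, and the exponentials of $\MRel$, and verify in each case the relevant universal property by direct calculation with relations and finite multisets.

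First, I would take the terminal object $\Termobj$ to be the empty set $\emptyset$. Indeed, for any set $A$ we have $\MRel(A,\Termobj)=\Pow{\Mfin{A}\times\emptyset}=\{\emptyset\}$, so there is a unique morphism from $A$ to $\Termobj$, namely the empty relation.

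Next, for binary products I would set $A_1 \times A_2 := \With{A_1}{A_2}$ (the disjoint union used in the Seely isomorphism convention). Using that convention, I would define the projections
\[
\Proj{i} \;=\; \{(([\alpha]_i,\emptymset_{\bar i}),\alpha) \mid \alpha\in A_i\}\in\MRel(A_1\& A_2, A_i),
\]
where $[\alpha]_i$ puts the singleton in the $i$-th component and $\emptymset_{\bar i}$ is the empty multiset in the other. Given $f\in\MRel(C,A_1)$ and $g\in\MRel(C,A_2)$, I would let
\[
\pairfun{f}{g} \;=\; \{(c,(1,\alpha))\mid (c,\alpha)\in f\}\,\cup\,\{(c,(2,\beta))\mid (c,\beta)\in g\}.
\]
Then a direct computation using the definition of composition shows $\Proj{i}\comp\pairfun{f}{g}$ collects exactly the pairs in the $i$-th summand; uniqueness of the pairing is immediate since any morphism into $A_1 \& A_2$ is determined by its restrictions to the two summands.

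For the exponential I would set $\Funint{A}{B} := \Mfin{A}\times B$, with evaluation
\[
\eval_{AB} \;=\; \bigl\{\bigl(([(a,\beta)],a),\beta\bigr) \;\bigm|\; a\in\Mfin{A},\ \beta\in B\bigr\}\in\MRel((\Funint{A}{B})\& A,\; B),
\]
and, for $f\in\MRel(C\& A,B)$, the currying
\[
\Curry(f) \;=\; \bigl\{\bigl(c,(a,\beta)\bigr)\;\bigm|\;((c,a),\beta)\in f\bigr\}\in\MRel(C,\Funint{A}{B}).
\]
The verification $\eval_{AB}\comp(\Curry(f)\times\Id{A})=f$ unwinds to a direct check on elements: a pair $((c,a),\beta)$ in the left-hand side corresponds, via composition, to a decomposition of the $(\Funint{A}{B})$-component into exactly one element $(a,\beta)$ (contributed by $\Curry(f)$) and of the $A$-component into $a$ via copies of the identity $\Id{A}$; by the definition of $\Curry(f)$ this is equivalent to $((c,a),\beta)\in f$. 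Uniqueness of $\Curry(f)$ follows because any $h$ satisfying $\eval_{AB}\comp(h\times\Id{A})=f$ must, by a similar element chase, contain exactly the pairs listed in the definition of $\Curry(f)$.

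The only mildly tricky step is bookkeeping the Seely isomorphism in the composition formula (ensuring that sums of pairs of multisets match pair-of-sums on the nose); I would rely on the stated convention identifying $\Mfin{A_1}\times\Mfin{A_2}$ with $\Mfin{\With{A_1}{A_2}}$ to keep this transparent. Apart from that, each verification is a routine unpacking of the composition in $\MRel$, and together they establish that $\MRel$ is Cartesian closed.
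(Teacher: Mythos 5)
Your proposal is correct and follows essentially the same route as the paper: the same terminal object, the same product $\With{A_1}{A_2}$ with the same projections and pairing (yours written via the Seely convention, the paper's as singleton multisets $[(i,\alpha)]$, which coincide), and the same exponential $\Mfin{A}\times B$ with the same evaluation and currying. The verifications you sketch are exactly the element chases the paper leaves as "easy to check."
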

\begin{proof} The terminal object $\Termobj$ is the empty set $\emptyset$, and the unique element of
$\MRel(A,\emptyset)$ is the empty relation.

Given two sets $A_1$ and $A_2$, their categorical product in $\MRel$ is
their disjoint union $\With{A_1}{A_2}$ and the projections $\Proj{1},\Proj{2}$ are given by:
$$
  \Proj{i}=\big\{(\Mset{(i,a)},a)\st a\in A_i\big\}\in\MRel(\With{A_1}{A_2},A_i)\textrm{, for } i =1,2.
$$
It is easy to check that this is actually the categorical product of $A_1$ and $A_2$ in $\MRel$; 
given $f\in\MRel(B,A_1)$ and $g\in\MRel(B,A_2)$, the corresponding morphism $\Pair fg\in\MRel(B,\With{A_1}{A_2})$ is given by:
$$
  \Pair fg=\big\{(b,(1,\alpha))\st(b,\alpha)\in f\}\cup\{(b,(2,\alpha))\st(b,\alpha)\in g\big\}\,.
$$
Given two objects $A$ and $B$, the exponential object $\Funint AB$ is $\Mfin A\times B$ and the evaluation morphism is given by:
$$
\eval_{AB} =\big\{\big((\Mset{(a,\beta)},a),\beta\big)\st a\in\Mfin A\ \text{and}\ \beta\in B\big\}\in\MRel(\With{(A\To B)}A,B)\,.
$$
Again, it is easy to check that in this way we defined an exponentiation. 
Indeed, given any set $C$ and any morphism $f\in\MRel(\With CA,B)$, there is exactly one morphism $\Curry(f)\in\MRel(C, A\To B)$ such that:
$$
  \eval_{AB}\comp (\Curry(f)\times \Id{S}) = f.
$$
which is $\Curry(f)=\big\{(c,(a,\beta))\st((c,a),\beta)\in f\big\}$. 
\end{proof}

The category $\MRel$ provides a simple example of a non-well-pointed category.

\begin{theorem} No object $A\neq\Termobj$ is well-pointed, so neither is $\MRel$.
\end{theorem}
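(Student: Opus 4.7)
The plan is to exhibit, for each nonempty set $A$, two distinct endomorphisms $f,g\in\MRel(A,A)$ that cannot be separated by any $h:\Termobj\to A$. The underlying reason is that morphisms $\Termobj\to A$ in $\MRel$ are subsets of $\{\emptymset\}\times A$ (since $\Mfin\emptyset = \{\emptymset\}$), and such $h$'s see only the second projection of the composite $f\comp h$ restricted to ``which elements $\beta$ are available in $h$''. Multiplicity information in the first component of $f$ is completely invisible to this probe, so $f$ and $g$ that differ only by multiplicities will be indistinguishable.

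Concretely, I would fix an arbitrary $\alpha\in A$ (possible since $A\neq\Termobj=\emptyset$) and set
\[
   f\ :=\ \{(\Mset\alpha,\alpha)\}, \qquad g\ :=\ \{(\Mset{\alpha,\alpha},\alpha)\}.
\]
Clearly $f\neq g$. For an arbitrary $h\in\MRel(\Termobj,A)$, let $S_h := \{\beta\in A\mid (\emptymset,\beta)\in h\}$. Unfolding the composition rule with $a_i\in\Mfin{\Termobj}=\{\emptymset\}$ so that $a_1+\cdots+a_k = \emptymset$, one obtains
\[
   f\comp h = \{(\emptymset,\gamma)\mid\exists b\in\Mfin{S_h},\ (b,\gamma)\in f\},
\]
and similarly for $g$. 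For $f$, nonemptiness requires $b=\Mset\alpha\in\Mfin{S_h}$, i.e.\ $\alpha\in S_h$, yielding $\gamma=\alpha$. For $g$, nonemptiness requires $b=\Mset{\alpha,\alpha}\in\Mfin{S_h}$, which again is equivalent to $\alpha\in S_h$, yielding $\gamma=\alpha$. Hence in both cases
\[
   f\comp h = g\comp h = \begin{cases}\{(\emptymset,\alpha)\} & \text{if }\alpha\in S_h,\\ \emptyset & \text{otherwise.}\end{cases}
\]
Thus $f\comp h = g\comp h$ for every $h:\Termobj\to A$, proving $A$ is not well-pointed; since this works for every $A\neq\Termobj$, and well-pointedness of $\MRel$ would in particular require well-pointedness of such an $A$, the category itself is not well-pointed.

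The only subtlety is handling the composition cleanly (remembering that the ``$k$'' in the composition formula is unconstrained, so one has to observe that multisets of the form $\Mset\alpha$ and $\Mset{\alpha,\alpha}$ are the unique multisets appearing in $f$ and $g$ respectively, and check membership in $\Mfin{S_h}$ for both). Apart from that, the argument is essentially immediate and requires no impredicative or categorical machinery.
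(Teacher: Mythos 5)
Your proof is correct and follows exactly the paper's strategy: the paper's proof also unfolds the composition with $h:\Termobj\to A$ to see that only the support of the input multiset matters, and then chooses $f=\{(a_1,\alpha)\}$ and $g=\{(a_2,\alpha)\}$ for two different multisets $a_1,a_2$ with the same support, of which your $[\alpha]$ and $[\alpha,\alpha]$ are a concrete instance. Nothing is missing.
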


\begin{proof} For every $A\neq\emptyset$, we can always find $f,g : A\to A$ such that $f\neq g$ and, for all $h : \Termobj\to A, f\comp h = g\comp h$. Indeed, by definition of composition, $f\comp h =\{(\emptymset,\alpha) \st \exists \beta_1,\dots,\beta_k \in A, (\emptymset,\beta_i)\in h, ([\beta_1,\dots,\beta_k],\alpha)\in f\}$, and similarly for $g\comp h$. Hence it is sufficient to choose $f = \{(a_1, \alpha)\}$ and $g = \{(a_2, \alpha)\}$ for $a_1,a_2$ different multisets with the same support.
\end{proof}

\subsection{The Class of Relational Graph Models} 
The class of graph models constitutes a subclass of the continuous semantics~\cite{Scott72} and is the simplest generalization of the Engeler and Plotkin's construction~\cite{Engeler81,Plotkin93}.
This class has been widely studied in the literature and has been used to prove several interesting results~\cite{Berline00}.
We recall that a graph models is given by a set $A$ and a total injection $i : \Powf A\times A\to A$, and induces $\Pow A$ as a reflexive object in the category of Scott's domains and continuous functions.
Therefore, a bijective $i$ does not induce automatically an isomorphism between $\Pow A$ and $\Funint{\Pow A}{\Pow A}$. 
As shown in~\cite[\S5.5]{Berline00}, no graph model $\mathscr{G}$ can be extensional because $\mathscr{G}\models\One\sqle\bI$ is never satisfied.

The definition of a relational graph model mimics the one of a graph model while replacing finite sets with finite multisets.
As we will see, relational graph models capture a particular subclass of reflexive objects living in \MRel. 

\begin{definition}\label{def:graphrel}
A \emph{relational graph model} $\cD=(D,i)$ is given by an infinite set $D$ and a total injection $i : \Mfin D \times D\to D$. 
We say that $\cD$ is \emph{extensional} when $i$ is bijective.
\end{definition}
The equality $i(a,\alpha) = \beta$ indicates that the ``arrow type'' $a\to \alpha$ is equivalent to the type~$\beta$. In particular, in an extensional relational graph model, every element of the model can be seen as an arrow. Keeping this intuition in mind, we adopt the notation below.

\begin{notation}\label{notation:arrow} Given an rgm $\cD = (D,i)$, $a\in\Mfin D$ and $\alpha\in D$, we write $a\to_i \alpha$ (or simply $a\to\alpha$, when $i$ is clear) as an alternative notation for $i(a,\alpha)$.
\end{notation}

As shown in the next proposition, the reflexive object induced by a relational graph model $(D,i)$ is not some powerset of $D$ as in the case of regular graph models, but rather $D$ itself.
This opens the way to define extensional reflexive objects.

\begin{proposition} 
Given an rgm $\cD =(D,i)$ we have that:
\begin{enumerate}[(i)]
\item\label{yeppa} $\cD$ induces a reflexive object $(D,\appl,\Abs)$ where
$$
    \Abs = \big\{(\Mset{(a,\alpha)},a\to_i\alpha)\st a\in\Mfin{D},\alpha\in D\big\}\in\MRel( \Funint DD, D),
$$
$$
	\appl = \big\{(\Mset{a\to_i\alpha},(a, \alpha)) \st a\in\Mfin{D},\alpha\in D\big\}\in\MRel(D, \Funint DD),
$$
\item\label{yeppa2}
	If moreover $\cD$ is extensional, then also the induced reflexive object is.
\end{enumerate}
\end{proposition}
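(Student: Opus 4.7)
The plan is to prove both parts by direct computation of the relevant compositions in $\MRel$, using the composition formula given in the definition of the category together with the injectivity (resp.\ bijectivity) of $i$.

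For part (\ref{yeppa}), I would unfold the composition $\appl \comp \Abs$ according to the definition: an element $(c,(b,\beta)) \in \appl \comp \Abs$ exists iff there are pairs $(c_1,\gamma_1),\dots,(c_k,\gamma_k) \in \Abs$ with $c = c_1 + \cdots + c_k$ and $(\Mset{\gamma_1,\dots,\gamma_k},(b,\beta)) \in \appl$. The shape of $\appl$ forces $k=1$ and $\gamma_1 = b \to_i \beta$; the shape of $\Abs$ then forces $c_1 = \Mset{(b',\beta')}$ with $b' \to_i \beta' = b \to_i \beta$. Here is where I would use the \emph{injectivity} of $i$ to conclude $(b',\beta') = (b,\beta)$, hence $c = \Mset{(b,\beta)}$. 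This yields exactly $\Id{D\To D} = \{(\Mset{(b,\beta)},(b,\beta)) \st b \in \Mfin{D}, \beta \in D\}$, since $D \To D = \Mfin{D} \times D$.

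For part (\ref{yeppa2}), I would symmetrically compute $\Abs \comp \appl$. An element $(c,\delta)$ belongs to this composition iff there exist $(c_1,(a_1,\alpha_1)),\dots,(c_k,(a_k,\alpha_k)) \in \appl$ with $c = c_1 + \cdots + c_k$ and $(\Mset{(a_1,\alpha_1),\dots,(a_k,\alpha_k)},\delta) \in \Abs$. The shape of $\Abs$ forces $k=1$ and $\delta = a_1 \to_i \alpha_1$, with $c_1 = \Mset{a_1 \to_i \alpha_1}$. Thus $\Abs \comp \appl = \{(\Mset{a \to_i \alpha}, a \to_i \alpha) \st a \in \Mfin{D}, \alpha \in D\}$. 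This equals $\Id{D} = \{(\Mset{\delta},\delta) \st \delta \in D\}$ precisely when every $\delta \in D$ has the form $a \to_i \alpha$, i.e.\ when $i$ is surjective; under extensionality, $i$ is bijective and both inclusions hold.

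I do not expect any real obstacle: the whole proof is bookkeeping with the Seely-style identification $D\To D = \Mfin{D} \times D$ and the composition formula of $\MRel$. The only subtlety worth emphasizing is that injectivity of $i$ is exactly what is needed for $\appl \comp \Abs \subseteq \Id{D\To D}$ (so the reflexive-object equation holds for every rgm), while surjectivity of $i$ is what is needed for $\Id{D} \subseteq \Abs \comp \appl$ (so extensionality of $\cD$ transfers to extensionality of the induced reflexive object).
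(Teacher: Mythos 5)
Your proof is correct and follows essentially the same route as the paper's: a direct unfolding of the two compositions in $\MRel$, with the singleton multisets in $\Abs$ and $\appl$ forcing $k=1$ in the composition formula. The only difference is presentational — you make explicit that injectivity of $i$ is what collapses $(b',\beta')$ to $(b,\beta)$ in $\appl\comp\Abs$ and that surjectivity is what gives $\Id{D}\subseteq\Abs\comp\appl$, points the paper's set-comprehension computation leaves implicit.
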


\begin{proof} (\ref{yeppa}) $\appl\circ\Abs = \big\{\big([(a,\alpha)],(a,\alpha) \big) \st ([(a,\alpha)], a\to\alpha)\in\Abs,  ([a\to\alpha],(a,\alpha))\in\appl \big\} =\big\{\big([(a,\alpha)],(a,\alpha) \big) \st a\in\Mfin{D},\alpha\in D \big\}  = \Id{\Mfin{D}\times D} = \Id{D\To D}$. 

 (\ref{yeppa2}) If  $i$ is bijective  for every $\beta\in D$ we have $\beta = a\to_i \alpha$ for some $a\in\Mfin{D}$ and $\alpha\in D$. So $\Abs\circ\appl = \big\{\big([a\to\alpha],a\to\alpha \big) \st  ([a\to\alpha],(a,\alpha))\in\appl, ([(a,\alpha)], a\to\alpha)\in\Abs \big\} =\big\{\big([a\to\alpha],a\to\alpha \big) \st a\in\Mfin{D},\alpha\in D \big\} =\big\{\big([\beta],\beta \big) \st \beta\in D \big\} =\Id{D}$.
\end{proof}

Note that, when $i$ is just injective, there are in principle different morphisms that could be chosen as inverses of $\Abs$.
Therefore, there exist reflexive objects in \MRel{} that are not relational graph models.
However, all inverses of $\Abs$ induce relational models that are strongly linear in the sense of~\cite{PaoliniPR15}. Finally, since every isomorphism $f\in\MRel(A,A)$ is of the form $f = \{([\alpha],i(\alpha))\st \alpha\in A\}$ for some bijective map $i$, the class of extensional relational graph models coincides with the one of extensional reflexive objects living in this category.

\subsection{Building Relational Graph Models by Completion}

Relational graph models -- just like the regular ones -- can be built by performing the free completion of a partial pair.

\begin{definition}\ 
\begin{enumerate}
\item 
	A \emph{partial pair} $\cA$ is a pair $(A,j)$ where $A$ is a non-empty set of elements (called \emph{atoms}) and $j:\Mfin A\times A\to A$ is a partial injection. 
\item 
	A partial pair $\cA$ is called \emph{extensional} when $j$ is a bijection between $\dom(j)$ and $A$. 
\item
	A partial pair $\cA$ is called \emph{total} when $j$ is a total function, and in this case $\cA$ is a relational graph model.
\end{enumerate}
\end{definition}
Hereafter, we consider without loss of generality partial pairs $\cA$ whose underlying set $A$ does not contain any pair of elements. In other words, we assume $(\Mfin{A}\times A)\cap A = \emptyset$. 
This is not restrictive because partial pairs can be considered up to isomorphism.
\begin{definition}\label{def:completion} 
The \emph{free completion} $\Compl{\cA}$ of a partial pair $\cA$ is the pair $(\Compl{A},\Compl{j})$ defined as:
	$\Compl{A} = \bigcup_{n\in\nat} A_n$, where $A_0 = A$ and $A_{n+1} = ((\Mfin {A_n}\times A_n) - \dom(j)) \cup A$\,; moreover
$$
			\Compl{j}(a,\alpha) =\begin{cases}
										j(a,\alpha)&\textrm{if }(a,\alpha)\in \dom(j),\\
										(a,\alpha)&\textrm{otherwise.}\\
										\end{cases}
$$
\end{definition}

It is well known for graph models, and easy to check for relational graph models, that every $\cD$ is isomorphic to its own free completion $\Compl{\cD} \cong \cD$. 
In particular, given a partial pair $\cA$, we have that $\Compl{\Compl{\cA}}\cong\Compl{\cA}$.





%
\begin{proposition} If $\cA$ is an (extensional) partial pair, then $\Compl{\cA}$ is an (extensional) \rgm.
\end{proposition}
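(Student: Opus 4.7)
The plan is to verify the three conditions that make $\Compl{\cA} = (\Compl{A}, \Compl{j})$ a relational graph model: $\Compl{A}$ is infinite, $\Compl{j}$ is total on $\Mfin{\Compl{A}} \times \Compl{A}$, and $\Compl{j}$ is injective. Then, under the extensionality assumption on $\cA$, I would further show $\Compl{j}$ is surjective.

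For infiniteness, I would note that $A$ is non-empty by hypothesis, so $\Mfin{A} \times A$ is infinite (multiplicities provide infinitely many distinct multisets over any non-empty set). Since $j$ is injective into $A$, if $A$ is finite then $\dom(j)$ is finite and $A_1 \supseteq (\Mfin{A} \times A) - \dom(j)$ is infinite; if $A$ is already infinite, we are done immediately because $A_0 \subseteq \Compl{A}$.

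For totality, given any $(a, \alpha) \in \Mfin{\Compl{A}} \times \Compl{A}$, the multiset $a$ has finite support, so there exists $n \in \nat$ such that all elements of $a$ together with $\alpha$ lie in $A_n$; thus $(a, \alpha) \in \Mfin{A_n} \times A_n$. By the case analysis in Definition~\ref{def:completion}, either $(a,\alpha)\in\dom(j)$ and $\Compl{j}(a, \alpha) = j(a, \alpha) \in A \subseteq \Compl{A}$, or $(a,\alpha)\notin\dom(j)$ and $\Compl{j}(a, \alpha) = (a, \alpha) \in A_{n+1} \subseteq \Compl{A}$. For injectivity, suppose $\Compl{j}(a,\alpha) = \Compl{j}(b,\beta)$. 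If both pairs lie in $\dom(j)$, injectivity of $j$ concludes. If neither does, then $(a, \alpha) = (b, \beta)$ directly from the definition. The mixed case is the only delicate one: using the standing convention that $(\Mfin{A}\times A) \cap A = \emptyset$, and more generally that at each stage the newly added pairs in $A_{n+1} \setminus A$ are set-theoretically distinct from the atoms in $A$, the value $j(a,\alpha) \in A$ cannot coincide with a pair $(b, \beta) \notin A$, ruling out the mixed case. This is the key subtlety and the only real obstacle; it is why the definition takes the union with $A$ only \emph{after} removing $\dom(j)$ from the pair set, and it is why we harmlessly assume $A$ does not overlap $\Mfin{A}\times A$.

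Finally, for the extensional case, assume $j$ restricts to a bijection $\dom(j) \to A$. To show $\Compl{j}$ is surjective, take any $\gamma \in \Compl{A}$ and find its level $n$ with $\gamma \in A_n$. If $\gamma \in A = A_0$, extensionality of $\cA$ provides $(a,\alpha) \in \dom(j) \subseteq \Mfin{\Compl{A}} \times \Compl{A}$ with $j(a,\alpha) = \gamma$, hence $\Compl{j}(a,\alpha) = \gamma$. Otherwise $\gamma \in A_{n+1}\setminus A$, so $\gamma$ is itself a pair $(a,\alpha) \in (\Mfin{A_n}\times A_n) - \dom(j)$, and by construction $\Compl{j}(a,\alpha) = (a,\alpha) = \gamma$. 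Combined with the injectivity already proven, $\Compl{j}$ is a bijection, so $\Compl{\cA}$ is an extensional rgm.
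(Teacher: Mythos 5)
Your proof is correct and follows the standard direct verification (infiniteness, totality, injectivity of $\Compl{j}$, plus surjectivity in the extensional case), which is exactly the argument the paper has in mind --- the paper itself merely asserts that the proof is analogous to the one for ordinary graph models and that bijectivity of $j$ transfers to $\Compl{j}$. You correctly identify the one genuinely delicate point, namely that the mixed case in the injectivity argument is excluded by the standing convention that the atoms of $A$ are not themselves pairs.
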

\begin{proof} The proof of the fact that $\Compl{\cA}$ is an \rgm{} is analogous to the one for regular graph models \cite{Berline00}. It is easy to check that when $j$ is bijective, also its completion $\Compl{j}$ is. 
\end{proof}

The following relational graph models are built by free completion and will be running examples in the next sections.

\begin{example}\label{ex:graphrel} 
Some examples of relational graph model:
\begin{itemize}
\item $\cE = \Compl{(\nat,\emptyset)}$ was introduced in~\cite{HylandNPR06},
\item $\cD_{\omega} = \Compl{(\{\star\}, \{(\emptymset,\star)\mapsto \star\})}$ was first defined 
(up to isomorphism) in~\cite{BucciarelliEM07},
\item $\cD_{\star} = \Compl{(\{\star\}, \{([\star],\star)\mapsto \star\})}$ was introduced in \cite{ManzonettoR14}.
\end{itemize}
Notice that $\cD_{\omega}$ and $\cD_{\star}$ are extensional, while $\cE$ is not.
\end{example}

\subsection{Categorical Interpretation}
We now show how \lam-terms and B\"ohm trees can be interpreted in a relational graph model, 
and we review their main properties.

Recall that notions and notations concerning multisets have been introduced in~Section~\ref{subs:msets}.
Given two $n$-uples $\seq a,\seq b\in\Mfin{A}^n$ we write $\seq a + \seq b$ for $(a_1+b_1,\dots,a_n+b_n)\in\Mfin{A}^n$.

\begin{definition}\label{def:IntSem} 
Let $\cD$ be an rgm, $M\in \Lam$ and $\FV(M) \subseteq \{x_1, \dots, x_n\}$.
The \emph{categorical interpretation of $M$ in $\cD$ w.r.t.~$\seq{x}$} is the relation $\Int[\cD]{M}_{\seq{x}}\subseteq\Mfin{D}^n\times D$ defined by:
\begin{enumerate}[i.]
\item\label{def:IntSem1} 
	$\Int[\cD]{x_i}_{\seq{x}} \,=\, \big\{ ((\emptymset,\dots,\emptymset,[\alpha],\emptymset,\dots,\emptymset),\alpha\big) \st \alpha\in D\big\}$,
		where $[\alpha]$ stands in $i$-th position.
		\item\label{def:IntSem2} 
		$\Int[\cD]{\lam y.N}_{\seq{x}}  \,=\, \big\{(\seq{a}, a\to_i\alpha) \st ((\seq{a},a),\alpha) \in \Int[\cD]{N}_{\seq{x},y}\big\}$ where we take~$y\notin\seq x$ by $\alpha$-conversion.
\item\label{def:IntSem3} 
	$\Int[\cD]{PQ}_{\seq{x}}  \,=\, \big\{((\seq{a_0}+\dots+\seq{a_k}),\alpha) \st \exists\, \alpha_1,\dots,\alpha_k\in\cD \textrm{ such that } \\ 
	{}\hspace{57pt}(\seq{a_0}, [\alpha_1,\dots,\alpha_k]\to_i \alpha)\in\Int[\cD]{P}_{\seq{x}} \textrm{ and }(\seq{a_j},\alpha_j)\in\Int[\cD]{Q}_{\seq{x}}  \textrm{ for all } 1\le j\le k$ \big\}.
\end{enumerate}
This definition extends to \lamb-terms $M$ by setting $\Int[\cD]{\bot}_{\seq x} = \emptyset$ and to B\"ohm trees of \lam-terms by interpreting all their finite approximants, namely by setting $\Int[\cD]{\BT{M}}_{\seq x} = \bigcup_{t\in\BT[*]{M}} \Int[\cD]{t}_{\seq x}$.
  \end{definition}
  
It is easy to check that the definition above is an inductive characterization of the usual categorical interpretation of \lam-terms as morphisms of a Cartesian closed category. 

\begin{convention}
From now on, whenever we write $\Int[\cD]{M}_{\seq x}$ we always assume that $ \FV(M)\subseteq\seq x$. 
When $M$ is a closed \lam-term we consider $\Int[\cD]{M}$ simply as a subset of $D$.
In all our notations we omit the model $\cD$ when it is clear from the context.
\end{convention}

\begin{example}\label{ex:interp} Let $\cD$ be any \rgm. Then we have:
\begin{enumerate}
\item 
	$\Int[\cD]{\bI}=\{ [\alpha]\to\alpha \st \alpha\in D\}$ and 
\item 
	$\Int[\cD]{\One} = \{[a\to\alpha]\to a\to\alpha \st \alpha\in D, a\in\Mfin{D}\}$, thus:
\item
	$\Int[\cD]{\bold{J}} = \{[\alpha]\to\alpha \st \alpha\in D'\}\subseteq \Int[\cD]{\One}\subseteq \Int[\cD]{\bI}$, where $D'$ is the smallest subset of $D$ satisfying: if $\alpha\in D$ then $\emptymset\to\alpha\in D'$; if $\alpha\in D$ and $a\in\Mfin{D'}$ then $a\to\alpha\in D'$,
	\item
	$\Int[\cD]{\Delta} = \{(a + [a\to\alpha])\to\alpha \st \alpha\in D, a\in\Mfin{D}\}$ therefore:
\item
	$\Int[\cD]{\Omega} = \Int[\cD]{\bot} = \emptyset$,
\item
	$\Int[\cD]{\lam x.x\bO} = \{ [\emptymset\to\alpha]\to\alpha \st \alpha\in D \}$.
\end{enumerate}
Consider the relational graph models $\cD_\omega$ and $\cD_\star$ from Example~\ref{ex:graphrel}.
From the calculations above it follows that $\Int{\bI}=\Int{\One}$ in both $\cD_\omega$ and $\cD_{\star}$, but
$\Int[\cD_\omega]{\bI}=\Int[\cD_\omega]{\bold{J}}$, while $\star\in\Int[\cD_{\star}]{\bI} - \Int[\cD_{\star}]{\bold{J}}$.
\end{example}

\subsection{Soundness}
Relational graph models satisfy the following substitution property and are sound
models of \lam-calculus in the sense that they equate all $\beta$-convertible \lam-terms.

\begin{lemma}[Substitution]
Let $M,N\in\Lambda$ and $\cD$ be an rgm.
For $\seq a\in\Mfin{D}^n$ and $\alpha\in D$ we have that
$(\seq a,\alpha)\in\Int[\cD]{M\{N/y\}}_{\seq x}$ if and only if there exist
        $b=[\beta_1,\dots,\beta_k]\in \Mfin{D}$ and $\seq a_0,\dots,\seq a_k\in\Mfin{D}^n$ such
        that $(\seq a_\ell,\beta_\ell)\in\Int[\cD]{N}_{\seq x}$, for $1\le\ell\le k$,
        $((\seq a_0,b),\alpha)\in\Int[\cD]{M}_{\seq x,y}$ and ${\seq a=\sum_{\ell=0}^k \seq a_\ell}$.
\end{lemma}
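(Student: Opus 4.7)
The statement is a form of compositionality for the set-theoretic semantics, and the natural approach is a straightforward structural induction on $M$, mirroring the three clauses of Definition~\ref{def:IntSem}. I will treat the case $M = y$ and the case $M = x_i$ (with $x_i \neq y$) as the base cases, handling them by choosing $k$, $b$ and the $\seq a_\ell$ explicitly: for $M = y$ one takes $k=1$, $b = [\alpha]$, $\seq a_0 = \vec\emptymset$ and $\seq a_1 = \seq a$, matching $((\vec\emptymset,[\alpha]),\alpha) \in \Int{y}_{\seq x, y}$; for $M = x_i \ne y$ one takes $k = 0$, $b = \emptymset$, $\seq a_0 = \seq a$, using that $y$ does not occur free.

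For the inductive case $M = \lam z.M'$ (with $z \notin \seq x, y$ by $\alpha$-conversion), the claim reduces immediately to the inductive hypothesis on $M'$: an element of $\Int{\lam z.M'\{N/y\}}_{\seq x}$ is of the shape $(\seq a, c \to \gamma)$ with $((\seq a, c), \gamma) \in \Int{M'\{N/y\}}_{\seq x, z}$, and the IH produces the required $b$ and decomposition that one then rewraps via clause~(\ref{def:IntSem2}) into $((\seq a_0, b), c \to \gamma) \in \Int{\lam z.M'}_{\seq x, y}$. Since $N$ does not contain $z$ free, the component of each $\seq a_\ell$ at position $z$ is empty and plays no role.

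The application case $M = P Q$ is the one requiring careful bookkeeping and will be the main obstacle. Unfolding clause~(\ref{def:IntSem3}) one gets $\alpha_1,\dots,\alpha_m$ and vectors $\seq c_0,\dots,\seq c_m$ summing to $\seq a$ with $(\seq c_0, [\alpha_1,\dots,\alpha_m]\to\alpha) \in \Int{P\{N/y\}}_{\seq x}$ and $(\seq c_j, \alpha_j) \in \Int{Q\{N/y\}}_{\seq x}$. Applying the IH to $P$ yields some $b^{(0)} = [\beta^{(0)}_1,\dots,\beta^{(0)}_{k_0}]$ and $\seq c_0 = \sum_{\ell=0}^{k_0} \seq d_{0,\ell}$; applying it to each $Q$ instance yields $b^{(j)} = [\beta^{(j)}_1,\dots,\beta^{(j)}_{k_j}]$ and $\seq c_j = \sum_{\ell=0}^{k_j} \seq d_{j,\ell}$. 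The plan is then to set $b := b^{(0)} + b^{(1)} + \cdots + b^{(m)}$, to collect all the non-zeroth pieces $\seq d_{j,\ell}$ ($\ell \ge 1$) as the witnesses $\seq a_1,\dots,\seq a_k$ associated with the respective $\beta^{(j)}_\ell$'s, and to set $\seq a_0 := \sum_{j=0}^{m} \seq d_{j,0}$. One then checks by clause~(\ref{def:IntSem3}) applied in the context $\seq x, y$ that $((\seq a_0, b), \alpha) \in \Int{PQ}_{\seq x, y}$, and that the summation of all witnesses equals $\seq a$; the converse direction is obtained by reversing the same reorganization.

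The only real subtlety is combinatorial: in the $\Rightarrow$ direction one must split a single $b$ received from the application's premise into contributions to $P$ and to $Q$, and in the $\Leftarrow$ direction one must merge the $b^{(j)}$'s. Both operations are legitimate on finite multisets thanks to the pointwise sum $+$ and the bijective correspondence between $\Mfin{D}^n$ and $\Mfin{\With{}{}\cdots\With{}{}D}$ coming from the Seely isomorphism, which is the categorical content behind the lemma.
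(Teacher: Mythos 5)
Your proposal is correct and follows essentially the same route as the paper: a structural induction on $M$ in which the variable and abstraction cases are routine and all the work is in the application case, where one applies the induction hypothesis to $P\{N/y\}$ and to each copy of $Q\{N/y\}$ and then merges the resulting multisets $b^{(0)},\dots,b^{(m)}$ and the zeroth components of the decompositions, exactly as in the paper's computation. The paper is in fact terser than you are (it dismisses the base and abstraction cases and the $\Leftarrow$ direction as routine), so no gap.
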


\begin{proof} We proceed by induction on $M$, the only interesting case being $M=PQ$.

$(\Rightarrow)$
We know that $(\seq a,\alpha)\in\Int{M\subst{y}{N}}_{\seq x}$ if and only if there are $\gamma_1,\dots,\gamma_k$ and a decomposition $\seq a = \sum_{\ell=0}^k \seq a^\ell$ such that $(\seq a^{\,0},[\gamma_1,\dots,\gamma_k]\to\alpha)\in\Int{P\subst{y}{N}}_{\seq x}$ and $(\seq a^\ell,\gamma_\ell)\in\Int{Q\subst{y}{N}}_{\seq x}$ for $1 \le \ell \le k$.
By applying the induction hypothesis to the former assumption, we get $b^0 =[\beta^0_1,\dots,\beta^0_m]$ and a decomposition $\seq a^{\,0} = \sum_{j=0}^m \seq a^{\,0}_j$ such that $((\seq a^{\,0}_0,b^0),{[\gamma_1,\dots,\gamma_k]\to \alpha})\in\Int{P}_{\seq x,y}$ and $(\seq a^{\,0}_i,\beta^0_i)\in\Int{N}_{\seq x}$ for $1\le i\le m$.
From the latter, for each $\ell = 1,\dots,k$ we get $b^\ell = [\beta^\ell_1,\dots,\beta^\ell_{k_\ell}]$ and a decomposition $\seq a^\ell = \sum_{j=0}^{k_\ell} \seq a_j^\ell$ such that $((\seq a^\ell_0,b^\ell),\gamma_\ell)\in\Int{Q}_{\seq x,y}$ and $(\seq a^\ell_j,\beta^\ell_j)\in\Int{N}_{\seq x}$ for $1\le j \le k_\ell$.
We conclude that $((\sum_{\ell=0}^k \seq a_0^\ell,\sum_{\ell=0}^k  b^\ell),\alpha)\in\Int{PQ}_{\seq x,y}$.

$(\Leftarrow)$ By analogue calculations.
\end{proof}

\begin{lemma}[Monotonicity]\label{lemma:monotonicity} Let $\cD$ be an rgm, and $M,N\in\Lam$. 
If $\Int[\cD]{M}_{\seq x} \subseteq \Int[\cD]{N}_{\seq x}$ then for all contexts $C\hole-$ we have $\Int[\cD]{C\hole M}_{\seq x} \subseteq \Int[\cD]{C\hole N}_{\seq x}$.
\end{lemma}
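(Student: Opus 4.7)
The plan is to proceed by structural induction on the context $C\hole-$ following the grammar given in Definition~\ref{def:contexts}. Throughout, I tacitly enlarge $\seq x$ so as to contain every free variable of $C\hole M$ and $C\hole N$, which is harmless because the interpretation of a term $P$ relative to a list $\seq x \supseteq \FV(P)$ is insensitive to adding further slots, each such slot being fed only the empty multiset.

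For the two base cases, the argument is immediate: if $C\hole-=\hole-$, then the required inclusion $\Int[\cD]{C\hole M}_{\seq x}\subseteq \Int[\cD]{C\hole N}_{\seq x}$ is exactly the hypothesis; if $C\hole-=y$, then $C\hole M = C\hole N = y$ and the two interpretations coincide by clause~(\ref{def:IntSem1}) of Definition~\ref{def:IntSem}.

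The two inductive cases use, respectively, clauses~(\ref{def:IntSem2}) and~(\ref{def:IntSem3}) of Definition~\ref{def:IntSem}. For $C\hole-=\lam y.C'\hole-$, after $\alpha$-renaming the binder of the context I may assume $y\notin\seq x$. The induction hypothesis applied to $C'\hole-$ with variable list $\seq x,y$ yields $\Int[\cD]{C'\hole M}_{\seq x,y}\subseteq\Int[\cD]{C'\hole N}_{\seq x,y}$. Unfolding clause~(\ref{def:IntSem2}), an element $(\seq a,b\to_i\alpha)$ of $\Int[\cD]{\lam y.C'\hole M}_{\seq x}$ is nothing but a witness $((\seq a,b),\alpha)\in\Int[\cD]{C'\hole M}_{\seq x,y}$; by the inductive hypothesis this witness lies in $\Int[\cD]{C'\hole N}_{\seq x,y}$, so $(\seq a,b\to_i\alpha)\in\Int[\cD]{\lam y.C'\hole N}_{\seq x}$, as required. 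For $C\hole-=(C_1\hole-)(C_2\hole-)$, applying the induction hypothesis to each $C_j\hole-$ gives $\Int[\cD]{C_j\hole M}_{\seq x}\subseteq\Int[\cD]{C_j\hole N}_{\seq x}$ for $j=1,2$. By clause~(\ref{def:IntSem3}), any element of $\Int[\cD]{(C_1\hole M)(C_2\hole M)}_{\seq x}$ comes from witnesses $(\seq a_0,[\alpha_1,\dots,\alpha_k]\to_i\alpha)\in\Int[\cD]{C_1\hole M}_{\seq x}$ and $(\seq a_j,\alpha_j)\in\Int[\cD]{C_2\hole M}_{\seq x}$; by both inductive hypotheses the very same witnesses belong to the corresponding interpretations with $N$ in place of $M$, and reassembling them via clause~(\ref{def:IntSem3}) yields the sought inclusion.

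There is no real obstacle: everything reduces to a structural unfolding of the clauses defining $\Int[\cD]{-}$. The one technical point to watch is bookkeeping for the variable list in the abstraction case, which is settled by the standard convention of $\alpha$-renaming the bound variable of the context so that it is fresh with respect to $\seq x$ and to $\FV(M)\cup\FV(N)$.
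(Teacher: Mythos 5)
Your proof is correct and is exactly the argument the paper has in mind: the paper disposes of this lemma with the single line ``by a straightforward induction on $C\hole-$'', and you have simply unfolded that induction through the clauses of Definition~\ref{def:IntSem}. One small caveat: since hole-filling is capture-permitting, in the case $C\hole-=\lam y.C'\hole-$ you cannot in general $\alpha$-rename the binder $y$ away from $\FV(M)\cup\FV(N)$; but this is harmless, because the induction hypothesis still applies with the variable list $\seq x', y$ (where $\seq x'$ is $\seq x$ with $y$ removed), which is just a permutation of $\seq x$ possibly extended by a slot fed the empty multiset.
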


\begin{proof} Notice that, by the Convention above, we assume $\FV(M)\cup\FV(N)\cup\FV(C[-])\subseteq\seq x$.
The result follows by a straightforward induction on $C\hole-$. 
\end{proof}
\begin{theorem}[Soundness]\label{thm:soundness}
Let $M,N\in\Lambda$ and $\cD$ be a relational graph model.
If $M =_{\beta} N$ then $\Int[\cD]{M}_{\seq x} = \Int[\cD]{N}_{\seq x}$. 
\end{theorem}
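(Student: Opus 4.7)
The plan is to reduce the theorem to the single base case of contracting one $\beta$-redex, which in turn reduces to the Substitution Lemma. Since $=_\beta$ is the symmetric-transitive-reflexive closure of the one-step context closure $\redto[\beta]$, it is enough to show that $M\redto[\beta] N$ implies $\Int[\cD]{M}_{\seq x} = \Int[\cD]{N}_{\seq x}$: symmetry and reflexivity of $=$ are immediate, and transitivity follows by a straightforward induction on the number of $\beta$-steps witnessing $M=_\beta N$.

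To handle a single step $M\redto[\beta] N$, write $M = C[(\lam y.P)Q]$ and $N = C[P\subst{y}{Q}]$ for some single-hole context $C\hole-$. By the Monotonicity Lemma applied in both directions, it suffices to establish the root case, namely $\Int[\cD]{(\lam y.P)Q}_{\seq x} = \Int[\cD]{P\subst{y}{Q}}_{\seq x}$ (where we take $\seq x$ containing all free variables involved, and $y\notin\seq x$ by $\alpha$-conversion). This is a direct chain of equivalences: by clause~(\ref{def:IntSem3}) of Definition~\ref{def:IntSem}, $(\seq a,\alpha)\in\Int[\cD]{(\lam y.P)Q}_{\seq x}$ iff there exist $\alpha_1,\dots,\alpha_k\in D$ and a decomposition $\seq a = \seq a_0 + \cdots + \seq a_k$ such that $(\seq a_0, [\alpha_1,\dots,\alpha_k]\to_i\alpha)\in\Int[\cD]{\lam y.P}_{\seq x}$ and $(\seq a_\ell,\alpha_\ell)\in\Int[\cD]{Q}_{\seq x}$ for $1\le\ell\le k$; by clause~(\ref{def:IntSem2}), the first condition is equivalent to $((\seq a_0,[\alpha_1,\dots,\alpha_k]),\alpha)\in\Int[\cD]{P}_{\seq x,y}$; and by the Substitution Lemma, the conjunction of these statements is exactly the condition $(\seq a,\alpha)\in\Int[\cD]{P\subst{y}{Q}}_{\seq x}$.

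The main (and only) real work has already been done in the Substitution Lemma; once that is in place, no step here is delicate. The small point to watch is ensuring that the chosen variable list $\seq x$ covers $\FV(C\hole{(\lam y.P)Q})\cup\FV(C\hole{P\subst{y}{Q}})$ so that both interpretations make sense relative to the same context, but this is handled by the standing convention that $\Int[\cD]{M}_{\seq x}$ is defined as soon as $\FV(M)\subseteq\seq x$, together with $\alpha$-conversion to keep $y$ outside $\seq x$. No impredicative machinery is needed: the argument is a straightforward induction on the length of the $\beta$-conversion sequence, each step being justified by Monotonicity and the Substitution Lemma.
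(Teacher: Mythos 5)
Your proposal is correct and follows essentially the same route as the paper's proof: both reduce the statement to the single-step case $M\redto[\beta]N$, handle the root redex $(\lam y.P)Q\redto[\beta]P\subst{y}{Q}$ via the Substitution Lemma together with clauses (\ref{def:IntSem2})--(\ref{def:IntSem3}) of Definition~\ref{def:IntSem}, and propagate the resulting equality through an arbitrary single-hole context via Lemma~\ref{lemma:monotonicity} applied in both directions. The only point of divergence is how the one-step result is lifted to full $\beta$-conversion: the paper invokes Church--Rosser to obtain a common reduct $P$ with $M\msto[\beta]P$ and $N\msto[\beta]P$, and concludes $\Int[\cD]{M}_{\seq x}=\Int[\cD]{P}_{\seq x}=\Int[\cD]{N}_{\seq x}$, whereas you observe that since each single step already preserves the interpretation as an \emph{equality}, the kernel of the interpretation is a context-closed equivalence relation containing $\redto[\beta]$ and hence contains $=_\beta$. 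Your variant is marginally more elementary in that it avoids any appeal to confluence, but the mathematical substance --- Substitution Lemma plus Monotonicity --- is identical, and your attention to the bookkeeping of $\seq x$ and the $\alpha$-conversion keeping $y\notin\seq x$ matches the paper's standing convention.
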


\begin{proof}
From the substitution lemma and Lemma~\ref{lemma:monotonicity} we have that $M\redto[\beta] M'$ entails $\Int{M}_{\seq x} = \Int{M'}_{\seq x}$. 
By Church Rosser $M=_\beta N$ if and only if they have a common reduct $P$ such that $M\msto[\beta]P$ and $N \msto[\beta] P$. 
Summing up, we have $\Int{M}_{\seq x} = \Int{P}_{\seq x} = \Int{N}_{\seq x}$.
\end{proof}

\begin{definition}\ 
\begin{itemize}
\item
	The \emph{\lam-theory} induced by a relational graph model $\cD$ is defined by  
	$$
		\Th(\cD) = \{ (M,N)\in\Lam\times\Lam\st \Int{M}_{\seq{x}} =\Int{N}_{\seq{x}}\}\,.
	$$ 
	We write $\cD\models M = N$ for $(M,N)\in\Th(\cD)$.
\item 
	Similarly, the \emph{inequational theory} induced by $\cD$ is given by 
	$$
		\Thle(\cD) = \{ (M,N)\in\Lam\times\Lam\st \Int{M}_{\seq{x}}\subseteq\Int{N}_{\seq{x}}\}\,.
	$$ 
	 We write $\cD\models M \sqsubseteq N$ for $(M,N)\in\Thle(\cD)$.  
\item
	 An rgm $\cD$ is called \emph{$\cO$-fully abstract} when $\cD\models M= N$ if and only if $M\obseq[\cO] N$. 
\item	 
	$\cD$ is \emph{inequationally $\cO$-fully abstract} when $\cD\models M\sqsubseteq N$ if and only if $M\obsle[\cO] N$.
\item 
	We say that a \lam-theory (resp.\ inequational theory) $\cT$ is \emph{representable by a relational graph model} if there exists an rgm $\cD$ such that $\Th(\cD) = \cT$ (resp.~$\Thle(\cD) = \cT$).
\item 
	We say that a \lam-theory (resp.\ inequational theory) is a \emph{(resp.\ inequational) relational graph theory} if it is represented by some relational graph model.
\end{itemize}
\end{definition}

\begin{lemma}\label{lemma:D_ext_actually_ext} Let $\cD$ be an rgm.
\begin{enumerate}[(i),ref={\roman*}]
\item\label{lemma:D_ext_actually_ext1}
	If $M\to_\eta N$ then $\Int[\cD]{N}_{\seq{x}}\subseteq \Int[\cD]{M}_{\seq x}$,
\item \label{lemma:D_ext_actually_ext2}
	$\cD$ is extensional if and only if $\blam\eta\subseteq\Th(\cD)$. 
\end{enumerate}
\end{lemma}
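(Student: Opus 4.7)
For part (i), my plan is to handle the base case $\lam x.Px \to_\eta P$ (with $x\notin\FV(P)$) by direct computation, and then lift to arbitrary contextual $\eta$-contractions via Lemma~\ref{lemma:monotonicity}. In the base case, unfolding $\Int[\cD]{\lam x.Px}_{\seq z}$ using Definition~\ref{def:IntSem} yields
\[
   \Int[\cD]{\lam x.Px}_{\seq z} = \{(\seq a, b\to_i\alpha) \mid ((\seq a,b), \alpha) \in \Int[\cD]{Px}_{\seq z, x}\}.
\]
Since $\Int[\cD]{x}_{\seq z, x} = \{((\emptymset,\ldots,\emptymset, [\gamma]), \gamma) \mid \gamma\in D\}$, and since $x\notin\FV(P)$ forces the $x$-component of any tuple in $\Int[\cD]{P}_{\seq z, x}$ to be empty (a routine induction on $P$), the application clause of Definition~\ref{def:IntSem} collapses to
\[
   \Int[\cD]{Px}_{\seq z,x} = \{((\seq a, [\gamma_1,\dots,\gamma_k]), \alpha) \mid (\seq a, [\gamma_1,\dots,\gamma_k]\to_i\alpha) \in \Int[\cD]{P}_{\seq z}\},
\]
whence $\Int[\cD]{\lam x.Px}_{\seq z}$ coincides with the restriction of $\Int[\cD]{P}_{\seq z}$ to pairs whose second component lies in the image of $i$. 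This establishes the desired comparison between the two interpretations at the redex, and Lemma~\ref{lemma:monotonicity} propagates it to an arbitrary $\eta$-contraction $M \to_\eta N$.

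For the $(\Rightarrow)$ direction of (ii), the computation above shows that when $\cD$ is extensional (so $i$ is surjective), every element of $D$ has the form $b\to_i\beta$; hence the inclusion of (i) becomes an equality $\Int[\cD]{\lam x.Px}_{\seq z} = \Int[\cD]{P}_{\seq z}$. Combined with Theorem~\ref{thm:soundness}, this shows that $\Th(\cD)$ contains the $\eta$-conversion, and therefore $\blam\eta \subseteq \Th(\cD)$.

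For the $(\Leftarrow)$ direction, I specialize $\blam\eta\subseteq\Th(\cD)$ to the instance $\cD\models \One = \bI$. Using the explicit descriptions from Example~\ref{ex:interp}, namely $\Int[\cD]{\bI} = \{[\alpha]\to_i\alpha \mid \alpha\in D\}$ and $\Int[\cD]{\One} = \{[b\to_i\beta]\to_i(b\to_i\beta) \mid b\in\Mfin{D},\beta\in D\}$, the equality $\Int[\cD]{\bI}=\Int[\cD]{\One}$ forces every $\alpha\in D$ to coincide with some $b\to_i\beta$, so $i$ is surjective. Combined with the injectivity required by Definition~\ref{def:graphrel}, we obtain that $i$ is bijective, i.e., $\cD$ is extensional. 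The main subtlety in the whole proof is the auxiliary fact that $x\notin\FV(P)$ empties the $x$-component of $\Int[\cD]{P}_{\seq z, x}$; this is the only nontrivial observation and follows by a straightforward structural induction on $P$. Everything else is a direct unfolding of definitions or an application of monotonicity.
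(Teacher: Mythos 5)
Your proof is correct, but it takes a more computational route than the paper for part~(i). Where you unfold $\Int[\cD]{\lam x.Px}_{\seq z}$ directly---relying on the auxiliary induction that $x\notin\FV(P)$ forces the $x$-component of every tuple in $\Int[\cD]{P}_{\seq z,x}$ to be $\emptymset$, and concluding that $\Int[\cD]{\lam x.Px}_{\seq z}$ is exactly the restriction of $\Int[\cD]{P}_{\seq z}$ to pairs whose second component lies in the image of $i$---the paper sidesteps this computation entirely: it observes from Example~\ref{ex:interp} that $\Int[\cD]{\One}\subseteq\Int[\cD]{\bI}$, applies monotonicity to get $\cD\models\One N\sqle\bI N$, and then uses soundness of $\beta$ to rewrite $\One N$ as $\lam x.Nx$ and $\bI N$ as $N$. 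The paper's trick is shorter and reuses already-established facts, whereas your computation is more informative (it identifies $\Int[\cD]{\lam x.Px}_{\seq z}$ precisely, which immediately yields the equality needed for the $(\Rightarrow)$ direction of (ii) when $i$ is surjective); the free-variable observation you need is the semantic counterpart of Lemma~\ref{lem:env_fv} and is indeed a routine induction. For part~(ii) both arguments converge on the same key point: $\cD\models\One=\bI$ holds exactly when every element of $D$ is of the form $a\to_i\alpha$, i.e.\ when $i$ is surjective, hence (with the injectivity built into Definition~\ref{def:graphrel}) bijective. One small remark: in your $(\Rightarrow)$ direction of (ii), the equality at the redex must still be propagated through arbitrary contexts via Lemma~\ref{lemma:monotonicity} to conclude that the full $\eta$-conversion (a contextually closed relation) is contained in $\Th(\cD)$; you do this for (i), so it is implicit, but it is worth stating.
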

\begin{proof}
\eqref{lemma:D_ext_actually_ext1} By inspecting their interpretations (Example~\ref{ex:interp}(1-2)) it is clear that $\Int{\One}\subseteq \Int{\bI}$. 
For all $N\in\Lam$, $\cD\models \One N \sqle \bI N$ (monotonicity) which entails $\cD\models \lam x.Nx \sqle N$ (soundness).

\eqref{lemma:D_ext_actually_ext2} 
By monotonicity $\blam\eta\subseteq\Th(\cD)$ if and only if $\cD\models \One=\bI$.
As above, we know that $\cD\models \One\sqle\bI$ holds.
The other inclusion holds if and only if for all $\alpha\in D$ there exist $a,\beta$ such that $i(a,\beta) = \alpha$ if and only if $i$ is bijective.
\end{proof}

As a consequence, the \lam-theories induced by relational graph models and by 
ordinary graph models are different, since no graph model is extensional.


\section{Quantitative Properties and Approximation Theorem}\label{sec:logical}
Every relational graph model can be presented as a type system. 
The interpretation of a $\lambda$-term $M$ is given by the set of all pairs $(\Gamma,\alpha)$ such that $\Gamma\vdash M: \alpha$ is derivable in the system. 
Such a \emph{logical} interpretation turns out to be equivalent to the categorical one (Theorem~\ref{thm:type_semantics}). 
This presentation exposes the \emph{quantitative} nature of the relational semantics and allows to provide a combinatorial proof of the Approximation Theorem (Theorem~\ref{thm:app}).

\begin{figure}[t!]
\begin{center}
    \AxiomC{$\vphantom{M\rightarrow_\mu}$}
     \RightLabel{\texttt{var}}    
    \UnaryInfC{$x : [\alpha]\vdash_{\scriptscriptstyle \cD} x : \alpha$}
    \DisplayProof \hspace{2em}
    \AxiomC{$\Gamma, x : a\vdash_{\scriptscriptstyle \cD} M : \alpha$}
     \RightLabel{\texttt{lam}}        
    \UnaryInfC{$\Gamma\vdash_{\scriptscriptstyle \cD} \lambda x.M : a\to\alpha$}
    \DisplayProof \hspace{2em}
    %
    %
    
   \bigskip
  
    \AxiomC{$\Gamma\vdash_{\scriptscriptstyle \cD} M : [\beta_1,\dots,\beta_n]\to \alpha$}
    \AxiomC{$\Delta_i\vdash_{\scriptscriptstyle \cD} N : \beta_i$}
    \AxiomC{$1 \le i\le n$}
     \RightLabel{\texttt{app}}
    \TrinaryInfC{$\Gamma + (\sum_{i=1}^n \Delta_i)\vdash_{\scriptscriptstyle \cD} MN : \alpha$}
    \DisplayProof    
\end{center}
\caption{The typing rules of the type system associated with a relational graph model~$\cD$.}\label{fig:Typing}
\end{figure}

\subsection{Relational Graph Models as Type Systems}\label{subsec:TS}
The \emph{types} of the system associated to a relational graph model $\cD = (D,i)$ are the elements of the underlying set $D$ themselves. 
We recall from Notation~\ref{notation:arrow} that $a\to \alpha$ denotes the element $i(a,\alpha)$ which belongs to $D$.  

\begin{definition}
 An \emph{environment} for $\cD$ is a map $\Gamma:\Var\to\Mfin{D}$ such that $\supp(\Gamma)= \{\, x\in\Var \st \Gamma(x)\neq\emptymset \,\}$ is finite. The set of all environments for $\cD$ is denoted by $ \Env{\cD}$.  
 \end{definition}

We write $x_1 : a_1,\dots,  x_n:a_n$ for the environment $\Gamma$ such that $\Gamma(x_i)=a_i$ if $1\le i\le n$ and $\Gamma(y) = \emptymset$ otherwise.
When $\supp(\Gamma)=\emptyset$ the environment $\Gamma$ is just omitted.

\begin{definition}
Given $\Gamma,\Delta\in\EnvD$ we define the environment $\Gamma+\Delta$ by setting $(\Gamma+\Delta)(x) = \Gamma(x)+\Delta(x)$ for all $x\in\Var$.
\end{definition}

A \emph{(type) judgment} is a triple $(\Gamma, M, \alpha)\in \Env{\cD}\times\Lambda_\bot\times D$ that will be denoted as $\,\Gamma \vdash_{\scriptscriptstyle\cD} M: \alpha$, or simply $\,\Gamma \vdash M: \alpha$ whenever $\cD$ is clear from the context. 

\begin{definition}\label{def:systems} 
Let $\cD$ be a relational graph model. 
The inference rules of the \emph{type  system $\;\vdash_{\scriptscriptstyle \cD}$ for $\Lam_\bot$ associated with $\cD$} are given in Figure~\ref{fig:Typing}. 
\end{definition}

We remark that these type systems are \emph{relevant} since the weakening is not available. 
Indeed, the rule~\texttt{var} does not allow a generic environment $\Gamma, x:[\alpha]$ and the sum of contexts in~\texttt{app} takes multiplicities into account.
The types are \emph{strict} in the sense that multisets may only appear at the left hand-side of an arrow. 
In particular, no \lamb-term can have type~$\omega$.

The number $n$ appearing in the rule \texttt{app} can be $0$. So  we have 
the inference rule
\begin{equation}\label{noarg}
\infer{ \;\Gamma \;\vdash_{\scriptscriptstyle{\cD}}  MN : \alpha \;}{\;\Gamma\;\vdash_{\scriptscriptstyle{\cD}} M:\emptymset\to\alpha\;}
\end{equation}
for every $N\in\Lam_\bot$. 
For example, even if $\bO$ is not typable in the system associated with any relational graph model, the following derivation is always possible for every $\alpha\in D$:
\[
\infer{
	\vdash_{\scriptscriptstyle{\cD}} \lam x.x\,\bO : [\emptymset\to\alpha]\to\alpha
	}{\infer{
		x:[\emptymset\to\alpha] \vdash_{\scriptscriptstyle{\cD}} x\,\bO : \alpha
		}{
			x:[\emptymset\to\alpha] \vdash_{\scriptscriptstyle{\cD}} x:\emptymset\to\alpha
		}
}
\]

Hereafter, when writing $\Gamma \vdash_{\scriptscriptstyle{\cD}} M: \alpha$, we intend that such a judgment is derivable.
We write $\pi \derof \Gamma \vdash_{\scriptscriptstyle{\cD}} M: \alpha$ to indicate that $\pi$ is a derivation tree of the judgment $ \Gamma \vdash_{\scriptscriptstyle{\cD}} M: \alpha$. 

\begin{definition}
Let $\pi,\pi'$ be two derivation trees.
We set $\pi \justlike \pi'$ if and only if the trees obtained from $\pi$ and $\pi'$ by removing the \lamb-terms from each of their nodes coincide.
\end{definition}

\begin{example}\label{ex:nakedtrees}
Let $\pi$ and $\pi'$ be the following derivation trees:
\[
	\infer{ 
		x: [\alpha]\; \vdash_{\scriptscriptstyle{\cD}} (\lam y.x)\,z:\alpha
	}{
		\infer{
			x:[\alpha] \;\vdash_{\scriptscriptstyle{\cD}} \lam y.x: \emptymset \to\alpha
		}{ 
			x:[\alpha] \;\vdash_{\scriptscriptstyle{\cD}} x: \alpha 
		}
}
\qquad\qquad\quad
	\infer{
		x: [\alpha] \;\vdash_{\scriptscriptstyle{\cD}} (\lam y.x)\,\bK : \alpha 
	}{
		\infer{ 
			x: [\alpha] \;\vdash_{\scriptscriptstyle{\cD}} \lam y.x: \emptymset \to\alpha 
		}{ 
			x: [\alpha] \;\vdash_{\scriptscriptstyle{\cD}} x: \alpha 
		}
	}
\]
We have $\pi \justlike \pi'$ because once all \lamb-terms are erased they both become like this:
\[
	\infer{ 
		x: [\alpha] \;\vdash_{\scriptscriptstyle{\cD}}\quad:\alpha 
		}{
		\infer{ 
			x: [\alpha] \;\vdash_{\scriptscriptstyle{\cD}} \quad: \emptymset \to\alpha
		}{ 
			x: [\alpha] \;\vdash_{\scriptscriptstyle{\cD}} \quad: \alpha 
		}
	}
\]
\end{example}
From an intuitive perspective, the equivalence $\pi \justlike \pi'$ says that $\pi$ and $\pi'$ are roughly the \emph{same} derivation tree (but it can possibly be used to type distinct \lamb-terms). 

\begin{lem}\label{lem:env_fv}
Let $\cD$ be an $\rgm$. If $\, \Gamma \vdash_{\scriptscriptstyle \cD} M:\alpha\,$  then $\,\supp(\Gamma)\subseteq \FV(M)$.
\end{lem}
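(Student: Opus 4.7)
The plan is to proceed by straightforward induction on the derivation tree $\pi$ of $\,\Gamma \vdash_{\scriptscriptstyle \cD} M:\alpha$, exploiting the crucial feature of the type system: it is \emph{relevant}, that is, it has no weakening rule. Hence variables may appear in $\Gamma$ only because they were genuinely used somewhere in the derivation, and such uses must correspond to free occurrences in $M$.

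In the base case \texttt{var}, the derivation concludes $x:[\alpha]\vdash x:\alpha$, so $\supp(\Gamma)=\{x\}=\FV(x)$. For the \texttt{lam} case, the conclusion is $\Gamma \vdash \lam y.N : a\to\alpha$ with premise $\Gamma,y:a\vdash N:\alpha$. By the induction hypothesis, $\supp(\Gamma)\cup(\supp(y:a))\subseteq\FV(N)$; adopting the standard $\alpha$-convention that $y\notin\supp(\Gamma)$, we obtain $\supp(\Gamma)\subseteq \FV(N)\setminus\{y\}=\FV(\lam y.N)$. For the \texttt{app} case with conclusion $\Gamma'+\sum_{i=1}^n\Delta_i\vdash PQ:\alpha$ from premises $\Gamma'\vdash P:[\beta_1,\dots,\beta_n]\to\alpha$ and $\Delta_i\vdash Q:\beta_i$ for $1\le i\le n$, the induction hypothesis yields $\supp(\Gamma')\subseteq\FV(P)$ and $\supp(\Delta_i)\subseteq\FV(Q)$ for each $i$. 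Since $\supp$ of a pointwise sum of environments equals the union of their supports, it follows that $\supp(\Gamma'+\sum_i\Delta_i)=\supp(\Gamma')\cup\bigcup_{i=1}^n\supp(\Delta_i)\subseteq \FV(P)\cup\FV(Q)=\FV(PQ)$.

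The only point that deserves explicit mention is the degenerate instance (\ref{noarg}) of \texttt{app} with $n=0$: then the environment of the conclusion is just $\Gamma'$ and no contribution from $Q$ appears. This is harmless, because we still have $\supp(\Gamma')\subseteq\FV(P)\subseteq\FV(PQ)$, and the lemma only asks for an inclusion, not an equality. There is no genuine obstacle in this proof; the statement is essentially a syntactic sanity check that follows directly from the weakening-free, strictly additive design of the system.
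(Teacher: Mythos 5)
Your proof is correct and is exactly the straightforward induction on the derivation that the paper invokes (the paper gives no further detail); the case analysis, the $\alpha$-convention in the \texttt{lam} case, and the observation that the degenerate $n=0$ instance of \texttt{app} only weakens the inclusion all match the intended argument.
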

\begin{proof}
By a straightforward induction on the derivation of the  judgment.
\end{proof}

As shown by the first derivation of Example~\ref{ex:nakedtrees}, the inclusion in Lemma~\ref{lem:env_fv} can be strict, indeed we have $x : [\alpha]\vdash (\lam y.x)z : \alpha$ and $\supp(x : [\alpha]) = \{x\}\subsetneq\FV((\lam y.x)z)$.
In general, one should realize that whenever $\pi \derof \Gamma \vdash M: \alpha\,$ and $\,\supp(\Gamma)\subsetneq\FV(M)$ then along $\pi$ some subterm $N$ of $M$ comes in argument position and is not actually typed, as in \eqref{noarg}.  

We could formalize the type systems $\,\vdash_{\scriptscriptstyle \cD}$ in a style more similar to traditional intersection type systems (as in~\cite{PaoliniPR15}), in order to expose clearly the intuition of relational graph models as resource-sensitive versions of filter models~\cite[Part III]{BareTypes}, not only of graph models.  
We followed that approach in~\cite{ManzonettoR14} and~\cite{phdruoppolo}, where the multisets occurring in the types were actually denoted as non-idempotent intersections and an explicit conversion rule
\[
\infer[\texttt{eq}]{\Gamma\vdash_{\scriptscriptstyle \cD} M : \alpha}{{\Gamma\vdash_{\scriptscriptstyle \cD} M : \beta}\quad\;{\beta \simeq \alpha}}
\]
was available in the system. 
Since such a presentation does not change the expressive power of the systems, but complicates the technical proofs (as it obliges to consider the possible commutations of the rule \texttt{eq} along a given derivation $\pi$) we decided to avoid it here.

\subsection{Logical Interpretation}\label{subsec:LI}
The type system associated to a relational graph model $\cD$ provides an alternative way to define the interpretation of a \lamb-term. 
This \emph{logical/type-theoretical} approach is rather common and fits in the tradition of filter models~\cite[Part~III]{BareTypes}.

\begin{definition}\label{def:interpretazione} 
 Let $\cD$ be an $\rgm$ and $M\in\Lam_\bot$. The \emph{logical interpretation} of $M$ in $\cD$ is:
$$
	\Lint[\cD]{M} =\big\{\,(\Gamma,\alpha)\in \EnvD \times D \st \,\Gamma\,\vdash_{\scriptscriptstyle \cD} M : \alpha\,\big\}.
$$
When $\cD$ is clear from the context we write $\Lint{M}$, and when $M$ is closed we consider $\Lint{M}\subseteq D$. 
This definition extends to B\"ohm trees of \lam-terms like Definition~\ref{def:IntSem}.
\end{definition}

The interpretation of a \lamb-term cannot be just the set of its types as in the case of filter models.
This is related with the fact that \MRel{} is not well-pointed (see Section~\ref{ssec:Pippone}).

We now show that the logical interpretation $\Lint[\cD]{-}$ is equivalent to the categorical one $\Int[\cD]{-}_{-}$ in the sense that they induce the same (in)equalities between \lam-terms.

\begin{theorem}[Semantic Equivalence]\label{thm:type_semantics}
Let $M\in \Lam$ and $\,\FV(M) \subseteq \{x_1, \dots, x_n\}$. Then 
\begin{enumerate}[(i),ref={\roman*}]
\item\label{thm:type_semantics1} 
	$\Lint[\cD]{M} = \Big\{ \big((x_1:a_1,\dots,x_n:a_n),\alpha\big) \in \Env{\cD}\times D\st \big((a_1, \dots, a_n),\alpha\, \big)\in\Int[\cD]{M}_{\seq{x}} \Big\}$,
\item\label{thm:type_semantics2} 
	$\Int[\cD]{M}_{\seq{x}} \,=\, \Big\{ \, \big((\Gamma(x_1) , \dots, \Gamma(x_n)),\alpha\, \big) \in \Mfin{D}^n \times D \st (\Gamma,\alpha)\in \Lint[\cD]{M} \Big\}$.
\end{enumerate}
\end{theorem}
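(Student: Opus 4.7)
The plan is to observe first that statements (i) and (ii) are merely two equivalent reformulations of a single set-theoretic equality, transported through the obvious bijection between environments with support in $\{x_1,\dots,x_n\}$ and tuples in $\Mfin{D}^n$, namely $x_1{:}a_1,\dots,x_n{:}a_n \leftrightarrow (a_1,\dots,a_n)$. Indeed, by Lemma~\ref{lem:env_fv}, whenever $\Gamma \vdash_{\scriptscriptstyle \cD} M : \alpha$ we have $\supp(\Gamma) \subseteq \FV(M) \subseteq \{x_1,\dots,x_n\}$, so the environment side of (i) is entirely determined by the multisets $(\Gamma(x_1),\dots,\Gamma(x_n))$. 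Hence it suffices to show that for every $M \in \Lam$ with $\FV(M) \subseteq \seq x$:
$$
  (a_1,\dots,a_n,\alpha) \in \Int[\cD]{M}_{\seq x} \iff x_1{:}a_1,\dots,x_n{:}a_n \vdash_{\scriptscriptstyle\cD} M : \alpha.
$$

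I would then proceed by structural induction on $M$, with the three cases matching the three clauses of Definition~\ref{def:IntSem}. For $M = x_i$, clause (\ref{def:IntSem1}) produces exactly the tuples with $a_j = \emptymset$ for $j \neq i$ and $a_i = [\alpha]$; on the logical side, the only derivable judgment with conclusion $x_i : \alpha$ is the axiom $x_i{:}[\alpha] \vdash x_i : \alpha$, which corresponds to the same tuple. For $M = \lam y.N$, choose $y \notin \seq x$ by $\alpha$-conversion. The categorical clause (\ref{def:IntSem2}) reduces $(\seq a, a \to_i \alpha) \in \Int{\lam y.N}_{\seq x}$ to $((\seq a, a), \alpha) \in \Int{N}_{\seq x, y}$, while the logical side observes that the last rule in any derivation of $\Gamma \vdash \lam y.N : \beta$ must be \texttt{lam}, forcing $\beta = a \to \alpha$ and $\Gamma, y{:}a \vdash N : \alpha$; the induction hypothesis applied to $N$ with context $\seq x, y$ then closes the case.

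The application case $M = PQ$ is the main step. The categorical clause (\ref{def:IntSem3}) says that $(\seq a, \alpha) \in \Int{PQ}_{\seq x}$ iff there exist $k \geq 0$, types $\alpha_1,\dots,\alpha_k \in D$, and a decomposition $\seq a = \seq a_0 + \cdots + \seq a_k$ such that $(\seq a_0, [\alpha_1,\dots,\alpha_k]\to_i\alpha) \in \Int{P}_{\seq x}$ and $(\seq a_j, \alpha_j) \in \Int{Q}_{\seq x}$ for $1 \le j \le k$. On the logical side, the last rule of any derivation of $\Gamma \vdash PQ : \alpha$ must be \texttt{app} (possibly with $n=0$, cf.~\eqref{noarg}), which forces exactly the same data: types $\beta_1,\dots,\beta_n$ and a decomposition $\Gamma = \Gamma_0 + \Delta_1 + \cdots + \Delta_n$ with $\Gamma_0 \vdash P : [\beta_1,\dots,\beta_n] \to \alpha$ and $\Delta_i \vdash Q : \beta_i$. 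Via the bijection, decompositions $\seq a = \seq a_0 + \cdots + \seq a_k$ in $\Mfin{D}^n$ correspond precisely to decompositions $\Gamma = \Gamma_0 + \cdots + \Gamma_k$ in $\Env{\cD}$ (pointwise on each variable), and applying the induction hypothesis to $P$ and to $Q$ matches the two sides.

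The only real subtlety, and hence the place to be careful, is in the application case when $k=0$: the empty multiset $\emptymset \to \alpha$ must be permitted both as the functional type on the categorical side and as the instance of \texttt{app} with no premises typing $Q$ on the logical side. Once this degenerate case is noted, the induction goes through uniformly, and the equivalence of (i) and (ii) via the environment/tuple bijection completes the proof.
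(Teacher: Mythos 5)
Your proposal is correct and follows essentially the same route as the paper: both reduce the theorem to the single equivalence $\Gamma\vdash_{\scriptscriptstyle\cD} M:\alpha \iff ((\Gamma(x_1),\dots,\Gamma(x_n)),\alpha)\in\Int[\cD]{M}_{\seq x}$ (with $\supp(\Gamma)\subseteq\seq x$, guaranteed by Lemma~\ref{lem:env_fv}) and prove it by structural induction on $M$ with the same three cases, deriving (ii) from (i) via the environment/tuple correspondence.
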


\begin{proof}
\eqref{thm:type_semantics1} It is enough to prove that $\Gamma\vdash M : \alpha\,$ if and only if $\big((\Gamma(x_1), \dots, \Gamma(x_n),\alpha) \big) \in \Int{M}_{\seq{x}}$ and $\supp(\Gamma)\subseteq\{x_1,\dots,x_n\}$. 
We proceed by induction on $M$.

\noindent{\bf Case $M=x_i$.} By Definition~\ref{def:systems} we have $ \Gamma\vdash x_i: \alpha$ if and only if $\Gamma = x_i:[\alpha]$.
Thus, we have $\big((\Gamma(x_1) , \dots, \Gamma(x_n)),\alpha \big) = \big((\emptymset, \dots, \emptymset, [\alpha ], \emptymset, \dots, \emptymset),\alpha \big)   \;\in\; \Int{x_i}_{\seq{x}}
$
by Definition~\ref{def:IntSem}\eqref{def:IntSem1}. 

\noindent{\bf Case $M= \lam y. P$.}  
By Definition~\ref{def:systems} we have that $\Gamma\vdash \lam y . P : \alpha$ holds exactly when $\Gamma, y : a \vdash P: \alpha$. 
By induction hypothesis this is equivalent to ask that $\supp(\Gamma)\subseteq\{x_1,\dots,x_n\}$ and $\big((\Gamma(x_1) , \dots, \Gamma(x_n), a), \alpha \big) \in \Int{P}_{\seq{x},y}$ which holds whenever $\big((\Gamma(x_1), \dots, \Gamma(x_n)),a\to\alpha \big) \in \Int{\lam y.P}_{\seq{x}}$, by Definition~\ref{def:IntSem}\eqref{def:IntSem2}.

\noindent{\bf Case $M=PQ$.} By Definition~\ref{def:systems} we have  $\Gamma\vdash PQ: \alpha$ if and only if $\Gamma_{ 0}\vdash P : [\beta_1,\dots,\beta_k ]\to\alpha$ and $\Gamma_i\vdash Q: \beta_i$ for all $ 1\le i \le k$ where $\Gamma = \sum_{i=0}^n\Gamma_i$ and $\beta_1,\dots,\beta_k \in D$.
By induction hypothesis this is equivalent to require that 
$\supp(\Gamma_i)\subseteq\{x_1,\dots,x_n\}$ for all $0\le i \le k$, 
$\big((\Gamma_{0}(x_1) , \dots, \Gamma_{ 0}(x_n)), [\beta_1,\dots,\beta_k ]\to \alpha\big) \in \Int{P}_{\seq{x}}$
and 
$\big((\Gamma_j(x_1), \dots, \Gamma_j(x_n)), \beta_j \big) \in \Int{Q}_{\seq{x}}$ for all $1\le j \le k$.
By Definition~\ref{def:IntSem}\eqref{def:IntSem3}, this is equivalent to $\big((\Gamma(x_1) , \dots, \Gamma(x_n)),\alpha \big) \in \Int{PQ}_{\seq{x}}$.

\eqref{thm:type_semantics2} This point follows since every element $((a_1,\dots,a_n),\alpha)\in\Mfin{D}^n \times D $ has the form $\big((\Gamma(x_1), \dots, \Gamma(x_n)), \alpha \big)$ for $\Gamma = x_1:a_1,\dots,x_n:a_n$.
\end{proof}  

\begin{corollary}\label{cor:semantic_equiv}
Let $\cD$ be an $\rgm$. Let $M,N\in\Lam$ such that $\,\FV(MN) \subseteq \{x_1, \dots, x_n\}$. Then 
\[
		\cD\models M\sqle N \;\iff\;\Int[\cD]{M}_{\seq x} \subseteq \Int[\cD]{N}_{\seq x}\;\iff\;\Lint[\cD]{M} \subseteq \Lint[\cD]{N}.
\]
\end{corollary}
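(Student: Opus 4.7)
The plan is essentially to unpack the definitions and apply Theorem~\ref{thm:type_semantics} pointwise; this is a short bookkeeping argument and the main (minor) care is in matching environments with tuples of multisets.

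First, the leftmost equivalence is immediate from the definition of the inequational theory $\Thle(\cD)$: by definition $\cD\models M\sqle N$ means $(M,N)\in\Thle(\cD)$, which unfolds to $\Int[\cD]{M}_{\seq x}\subseteq \Int[\cD]{N}_{\seq x}$ (noting that the particular choice of $\seq x$ containing $\FV(MN)$ is irrelevant, since adding a dummy variable only pads each tuple with $\emptymset$ in the corresponding coordinate).

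For the second equivalence, I would use Theorem~\ref{thm:type_semantics} in both forms. For the $(\Rightarrow)$ direction, suppose $\Int[\cD]{M}_{\seq x}\subseteq\Int[\cD]{N}_{\seq x}$ and let $(\Gamma,\alpha)\in\Lint[\cD]{M}$. By Lemma~\ref{lem:env_fv} we have $\supp(\Gamma)\subseteq\FV(M)\subseteq\{x_1,\dots,x_n\}$, hence by Theorem~\ref{thm:type_semantics}\eqref{thm:type_semantics1} applied to $M$ we get $((\Gamma(x_1),\dots,\Gamma(x_n)),\alpha)\in\Int[\cD]{M}_{\seq x}\subseteq\Int[\cD]{N}_{\seq x}$. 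Applying Theorem~\ref{thm:type_semantics}\eqref{thm:type_semantics1} in the other direction to $N$ then yields $(\Gamma,\alpha)\in\Lint[\cD]{N}$, so $\Lint[\cD]{M}\subseteq\Lint[\cD]{N}$.

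For the $(\Leftarrow)$ direction, suppose $\Lint[\cD]{M}\subseteq\Lint[\cD]{N}$ and pick $((a_1,\dots,a_n),\alpha)\in\Int[\cD]{M}_{\seq x}$. Set $\Gamma=x_1{:}a_1,\dots,x_n{:}a_n$; by Theorem~\ref{thm:type_semantics}\eqref{thm:type_semantics2} we have $(\Gamma,\alpha)\in\Lint[\cD]{M}\subseteq\Lint[\cD]{N}$, and reading Theorem~\ref{thm:type_semantics}\eqref{thm:type_semantics2} once more for $N$ gives $((a_1,\dots,a_n),\alpha)\in\Int[\cD]{N}_{\seq x}$. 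Hence $\Int[\cD]{M}_{\seq x}\subseteq\Int[\cD]{N}_{\seq x}$. There is no real obstacle here: the only thing to watch is that environments range only over those whose support is contained in $\{x_1,\dots,x_n\}$, which is exactly ensured by Lemma~\ref{lem:env_fv} on one side and by the explicit choice of $\Gamma$ on the other.
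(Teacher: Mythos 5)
Your proof is correct and follows exactly the route the paper intends: the first equivalence is the definition of $\Thle(\cD)$, and the second is an immediate pointwise application of Theorem~\ref{thm:type_semantics} (the paper leaves this corollary without explicit proof for precisely this reason). Your extra care with Lemma~\ref{lem:env_fv} and the support of $\Gamma$ is sound, if slightly redundant given that Theorem~\ref{thm:type_semantics}(\ref{thm:type_semantics1}) is stated as an equality of sets.
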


\subsection{On the choice of the interpretation.}\label{ssec:Pippone} 
The categorical interpretation of a \lam-term~$M$ in a reflexive object $\cD$ gives a morphism $\Int{M}_{\seq x} : D^{\seq x}\to D$ such that $\Th(\cD)$ is a \lam-theory.
If the category is well-pointed, like Scott's continuous semantics, then it is equivalent to interpret $M$ as a \emph{point} of $D$ through a \emph{valuation} $\rho: \Var\to D$, namely $\Int{M}_\rho: \Termobj \to D$~\cite[\S5.5]{Bare}.\linebreak
For this reason, in the context of graph and filter models, it became standard to consider the interpretation of $M$ as an element of the domain and, when presented like a type system, as the set of its types~\cite{Berline00,CoppoDZ87,Ronchi82,RonchiP04}.
As shown by Koymans in~\cite{Koymans82}, when the category is not well-pointed, points are no more suitable for interpreting \lam-terms since the induced equality is not a \lam-theory because of the failure of the $\xi$-rule~\cite{Selinger02}. 
In the algebraic terminology, the set of points gives a \emph{\lam-algebra} which is not a \emph{\lam-model}~\cite[\S5.2]{Bare}. 
In~\cite{BucciarelliEM07}, Bucciarelli \emph{et al.} show that a \lam-model can however be constructed from a reflexive object $\cD$ of a non-well pointed category, by considering the set $\catC_\mathrm{f}(D^\Var,D)$ of ``finitary'' morphisms from $D^\Var$ to $D$ and valuations $\rho : \Var \to \catC_\mathrm{f}(D^\Var,D)$. 
For instance, this is the approach followed in~\cite{PaoliniPR15}.
However, in~\cite{ManzonettoTh}, the author remarks that the use of valuations in this context becomes redundant since $\forall\rho.\Int{M}_\rho = \Int{N}_\rho$ holds exactly when they are equal under the valuation $x\mapsto \pi^\Var_x$ sending $x$ to the corresponding projection.
By applying this fact to the logical interpretation given in~\cite{PaoliniPR15}, we recover Definition~\ref{def:interpretazione} and this justifies Theorem~\ref{thm:type_semantics} from a broader perspective.
See \cite{BucciarelliEM07,Selinger02,ManzonettoTh} for more detailed discussions on well-pointedness.

\subsection{Quantitative Properties}\label{subsec:QP}
We show some quantitative properties satisfied by the type systems issued from a relational graph model.
This quantitative behavior was first noticed by de~Carvalho while studying the relational model $\cE$ and linear head reduction~\cite{deCarvalho09}.
Our statements are rather a more refined version\footnote{
Indeed, the authors of~\cite{PaoliniPR15} just consider the head reduction strategy and take as measure the size of the whole derivation tree.
We show that it is the number of application rules that actually decreases along head reduction and we provide a measure that decreases whenever any occurrence of a $\beta$-reduction is contracted.
} of the ones appearing in~\cite{PaoliniPR15}.

Until the end of the section, the symbol $\,\vdash\,$ refers to any fixed relational graph model.
We introduce some auxiliary notions that will be useful in the subsequent proofs.

\begin{definition} Let $M,N\in\Lamb$ and $\cD$ be an rgm.
\begin{itemize}
\item 
	Given $\pi_{ 0}\derof\Gamma\vdash M : [\beta_1,\dots,\beta_n]\to \alpha$ and $\pi_i \derof \Delta_i\vdash N : \beta_i$ for all $1 \le i\le n$,  we let $\Apprule{\pi_{ 0}}{\{\pi_i\}_{i=0}^n}$ be the derivation tree of $\Gamma + (\sum_{i=1}^n \Delta_i)\vdash MN : \alpha$ obtained by applying the rule \texttt{app} to those premises.  
\item Similarly, given $\pi\derof\Gamma, x : a\vdash M : \beta$ we define $\Lamrule{x}{\pi}$ as the derivation obtained by applying the rule \texttt{lam} to the derivation $\pi$.
\end{itemize}
\end{definition}
Let $\app(\pi)$ be the number of instances of the rule $\mathtt{app}$ that occur in the derivation~$\pi$.

\setcounter{sec}{\value{section}}
\setcounter{temp}{\value{thm}}
\begin{lemma}[Weighted Substitution Lemma]\label{lemma:ws}
Let $M,N\in\Lambda_\bot$.
Consider some derivations $\pi_{\scriptscriptstyle 0}\derof \Gamma_{ 0}, x : [\beta_1,\dots, \beta_n]\,\vdash M : \alpha\,$ for  $n\in\nat$ and 
$\pi_i\derof \Gamma_i\vdash N : \beta_i\,$ for all $1\le i\le n$. Then there exists 
$\pi\derof \sum_{i=0}^n\Gamma_i\vdash M\subst{x}{N} : \alpha\;$ such that    $\;\app(\pi) = \sum_{i=0}^n \app(\pi_i)$.
\end{lemma}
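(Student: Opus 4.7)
The plan is to proceed by structural induction on $M$, handling each case according to the last rule applied in $\pi_0$ and tracking the \texttt{app}-count at every step.

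For the base case $M = y \in \Var$: if $y = x$, the \texttt{var} rule forces $[\beta_1,\dots,\beta_n] = [\alpha]$, so $n = 1$, $\beta_1 = \alpha$ and $\Gamma_0$ is empty; then $M\subst{x}{N} = N$ and we take $\pi := \pi_1$, which already types $\Gamma_1 \vdash N : \alpha$. Since $\app(\pi_0) = 0$, the counting is immediate. If $y \neq x$, then $x$ does not appear in the context imposed by \texttt{var}, forcing $n = 0$; we take $\pi := \pi_0$ and $M\subst{x}{N} = y$. For $M = \lambda y.P$, chosen by $\alpha$-conversion so that $y \neq x$ and $y \notin \FV(N)$, the derivation is $\pi_0 = \Lamrule{y}{\pi_0'}$ with $\pi_0'$ typing $\Gamma_0, x:[\beta_1,\dots,\beta_n], y:b \vdash P : \alpha'$ and $\alpha = b\to\alpha'$. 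The inductive hypothesis applied to $\pi_0'$ together with $\pi_1,\dots,\pi_n$ gives $\pi'$ typing $\sum_{i=0}^n \Gamma_i, y:b \vdash P\subst{x}{N} : \alpha'$ with $\app(\pi') = \sum_{i=0}^n \app(\pi_i)$; closing with the \texttt{lam} rule yields $\pi$ and introduces no \texttt{app} rule.

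The interesting case is $M = PQ$, where $\pi_0 = \Apprule{\sigma_0}{\{\sigma_j\}_{j=1}^k}$ with $\sigma_0 \derof \Delta_0 \vdash P : [\gamma_1,\dots,\gamma_k]\to\alpha$ and $\sigma_j \derof \Delta_j \vdash Q : \gamma_j$ for $1 \le j \le k$, satisfying $\Gamma_0 + (x:[\beta_1,\dots,\beta_n]) = \sum_{j=0}^k \Delta_j$. The key combinatorial step is to choose a partition $\{1,\dots,n\} = I_0 \uplus I_1 \uplus \dots \uplus I_k$ together with a decomposition $\Delta_j = \Delta_j' + (x:[\beta_i : i \in I_j])$ such that $\sum_{j=0}^k \Delta_j' = \Gamma_0$; this amounts to deciding which copy of $\beta_i$ has been routed to which premise of the \texttt{app} rule. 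Applying the inductive hypothesis to each $\sigma_j$ paired with the subfamily $(\pi_i)_{i \in I_j}$ produces $\sigma_j'$ typing the appropriate substituted subterm, with $\app(\sigma_j') = \app(\sigma_j) + \sum_{i \in I_j} \app(\pi_i)$. Recombining via \texttt{app} yields $\pi := \Apprule{\sigma_0'}{\{\sigma_j'\}_{j=1}^k}$ of $\sum_{i=0}^n \Gamma_i \vdash P\subst{x}{N}\, Q\subst{x}{N} : \alpha$, which is the required derivation since $(PQ)\subst{x}{N} = P\subst{x}{N}\,Q\subst{x}{N}$. The accounting reads $\app(\pi) = 1 + \sum_{j=0}^k \app(\sigma_j') = \bigl(1 + \sum_{j=0}^k \app(\sigma_j)\bigr) + \sum_{i=1}^n \app(\pi_i) = \app(\pi_0) + \sum_{i=1}^n \app(\pi_i)$, as required.

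The main technical care lies in the application case: one must distribute the multiset of types $[\beta_1,\dots,\beta_n]$ assigned to $x$ across the $k+1$ premises of the \texttt{app} rule, and this distribution in turn dictates how the auxiliary derivations $\pi_1,\dots,\pi_n$ are routed to the recursive calls. Since multisets are unordered, the partition has to be expressed on the indices of the $\beta_i$'s rather than on the elements themselves. Once this bookkeeping is pinned down, the weight equation follows essentially for free, because substitution creates no fresh \texttt{app} rules: the only \texttt{app} instance at the root of the recomposed $\pi$ is precisely the one already present at the root of $\pi_0$.
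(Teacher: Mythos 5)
Your proof is correct and follows essentially the same route as the paper's: structural induction on $M$, with the application case handled by partitioning the indices of $[\beta_1,\dots,\beta_n]$ across the premises of the \texttt{app} rule and routing the auxiliary derivations accordingly. The only omission is the (vacuous) case $M=\bot$, which the paper dispatches by noting that $\bot$ is not typable.
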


\begin{proof} We proceed by structural induction on $M$.
\smallskip

{\bf Case $M=\bot$.} This case is vacuous, as $\bot$ cannot be typed.

\smallskip
{\bf Case $M=y\neq x$.} 
Then $\pi_0\derof\Gamma_{ 0}, x : [\beta_1,\dots, \beta_n] \vdash y:\alpha$ entails $n=0$ and $\Gamma_{ 0} = y:\alpha$. Hence the judgment $\,\sum_{i=0}^{n}\Gamma_i \vdash M\subst{x}{N} : \alpha$ is nothing but $y:\alpha \vdash y: \alpha$ and we can take $\pi = \pi_0$.
Clearly we have that $\app(\pi) = 0 = \app(\pi_{ 0}) = \sum_{i=0}^n \app(\pi_i).$

\smallskip

{\bf Case $M=x$.} 
Then $\Gamma_{ 0}, x : [\beta_1,\dots, \beta_n] \vdash x:\alpha$ implies that $n=1$, $\beta_1 = \alpha$ and $\Gamma_{ 0}$ is empty. 
Hence the judgment $\,\sum_{i=0}^{n}\Gamma_i \vdash M\subst{x}{N} : \alpha$ is just $\Gamma_1 \vdash N : \alpha$ and we can take $\pi = \pi_1$. 
Therefore we have  $\app(\pi) = \app(\pi_{ 1}) = \sum_{i=0}^n \app(\pi_i).$

\smallskip
{\bf Case $M=\lam y.P$.} Then there is a derivation $\pi'_{ 0}$ such that $\pi_{ 0}$ has the form 
\[
\infer{\qquad\Gamma_{ 0},\,  x:[\beta_1, \dots, \beta_n]  \vdash \lam y.M :[\alpha_1, \dots, \alpha_n]\to \alpha' \qquad }{ {\pi'_{ 0}}\derof\;\Gamma_{ 0},\, y:[\alpha_1, \dots, \alpha_n]  ,\, x:[\beta_1, \dots, \beta_n]\vdash M :\alpha' }
\]
for $\alpha = [\alpha_1, \dots, \alpha_n]\to \alpha'$. 
Notice that $\app(\pi'_{ 0}) = \app(\pi_{ 0})$. 
By induction hypothesis, there exists a derivation $\pi'$ such that
\begin{equation}\label{eq:dedylam}
  \pi' \,\derof\; \big(\Gamma_{ 0} ,\, y:[\alpha_1, \dots, \alpha_n] \big)+\sum_{i=1}^{n}\Gamma_i  \vdash M\{N/x\} :\alpha'
\end{equation}
with
$\app(\pi') = \app(\pi'_{ 0}) +\sum_{i=1}^{n}\app(\pi) = \sum_{i=0}^{n}\app(\pi)$.
By Lemma~\ref{lem:env_fv} we have $\supp(\Gamma_i) \subseteq \FV(N)$ for all $1\le i\le n$.  
By $\alpha$-convertion we assume $y\notin\FV(N)$, thus $y\notin \supp(\Gamma_i)$ for all $i$. 
So the judgment in \eqref{eq:dedylam} is in fact ${ \sum_{i=0}^{n}\Gamma_i ,\, y:[\alpha_1, \dots, \alpha_n] \vdash M\{N/x\} :\alpha' }$. 
We can then take $\pi = \Lamrule{y}{\pi'} \derof { \sum_{i=0}^{n}\Gamma_i \vdash \lam y. M\{N/x\} : [\alpha_1, \dots, \alpha_n] \to\alpha' }$. The thesis is proved since $\lam y. M\{N/x\} = (\lam y. M)\{N/x\}$ and 
$
\app(\pi) = \app(\pi') = \sum_{i=0}^{n}\app(\pi) .
$

\smallskip
{\bf Case $M=PQ$.} 
Then there are derivations $\pi_{00},\pi_{0i}$ such that $\pi_{ 0}$ has the form
\[
\infer{  \sum_{i=0}^k\Gamma_{{ 0}i}  ,\, x:[\beta_1, \dots, \beta_n] \vdash P\, Q  :\alpha }{{\pi_{ 00} \derof \Gamma_{{ 00}} ,\, x: [\beta_j]_{j\in I_0}  \vdash P : [\gamma_1, \dots, \gamma_k]\to \alpha \;\;\;}{\pi_{0i} \derof\Gamma_{{ 0}i},\, x: [\beta_j]_{j\in I_i} \vdash Q :\gamma_i \,\mbox{ for all }1\le i \le k }}
\]
where $k\in\nat$, $ \Gamma_{0} = \sum_{i=0}^k\Gamma_{{ 0}i} $ and $\{I_i\}_{i=0}^k$  is a partition of the set $\{1,\dots,n\}$.
By induction hypothesis we get a derivation $\pi'_{ 0} \derof \Gamma_{ 00} +\sum_{j\in I_0}{\Gamma_j}\vdash P\,\{N/x\} : [\gamma_1, \dots, \gamma_k]\to \alpha$ such that 
$\app(\pi'_{ 0} ) = \app(\pi_{ 00}) +\sum_{j\in I_0}\app(\pi_j) $. Also, for all $1 \le i\le k$ the induction hypothesis provides a derivation
 $\pi'_{ i} \derof \Gamma_{{ 0}i}+\sum_{j\in I_i}{\Gamma_j}\vdash Q\,\{N/x\} : \gamma_i$ such that  
$\app(\pi'_{ i}) = \app(\pi_{0i}) +\sum_{j\in I_i}\app(\pi_j) .$

We take 
$\pi = \Apprule{
	\pi'_{ 0}
	}{
	\{ \pi'_{i} \}_{1 \le i\le k}
	}\derof  
	\sum_{i=0}^k \big(\Gamma_{{ 0}i} + \sum_{j\in I_i}{\Gamma_j}\big)
	\vdash (P\subst{x}{N})(Q\subst{x}{N}) : \alpha.$
Clearly $\sum_{i=0}^k \big(\Gamma_{{ 0}i} + \sum_{j\in I_i}{\Gamma_j}\big) = 
\sum_{i=0}^k\Gamma_{{ 0}i} +\sum_{i=0}^k\sum_{j\in I_i}{\Gamma_j} = 
\Gamma_{ 0}+\sum_{i=1}^n{\Gamma_i} = \sum_{i=0}^n{\Gamma_i} $ 
and $(P\subst{x}{N})(Q\subst{x}{N}) = (PQ)\subst{x}{N}$. 
Moreover we get that $\app(\pi) = 1 + \sum_{i=0}^k \app(\pi'_i) = 1 + \sum_{i=0}^k \big(\app(\pi_{0i}) + \sum_{j\in I_i}\app(\pi_j)\big) = 
\big(1 + \sum_{i=0}^k \app(\pi_{0i})\big) + \sum_{i=0}^k \sum_{j\in I_i} \app(\pi_j) = 
\app(\pi_0) + \sum_{i=1}^n \app(\pi_i) = \sum_{i=0}^n\app(\pi_i) $,
which concludes the proof.
\end{proof}

From this, it follows that the number of rules $\texttt{app}$ in the derivation of a $\beta$-redex, decrements in a derivation of its contractum.

\begin{corollary}\label{cor:base_per_dopo}
Let $M,N\in\Lambda_\bot$.
If $\pi\derof \Gamma \vdash (\lam x.M)N : \alpha$ there exists a derivation $\pi'$  such that $\pi'\derof \Gamma \vdash M\subst{x}{N} : \alpha$ with $\app(\pi') = \app(\pi)-1$.
\end{corollary}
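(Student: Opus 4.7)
The plan is to derive the corollary as a direct consequence of the Weighted Substitution Lemma (Lemma~\ref{lemma:ws}) combined with straightforward inversions on the typing rules \texttt{app} and \texttt{lam}. Since $(\lam x.M)N$ has both an application at its root and a \lam-abstraction just below, the shape of $\pi$ is forced up to its top two rules.

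More precisely, I would first invert the \texttt{app} rule at the root of $\pi$. This yields some $n \in \nat$, multiset $[\beta_1,\dots,\beta_n]$, environments $\Gamma_0,\Gamma_1,\dots,\Gamma_n$ with $\Gamma = \Gamma_0 + \sum_{i=1}^n \Gamma_i$, and sub-derivations $\pi_0 \derof \Gamma_0 \vdash \lam x.M : [\beta_1,\dots,\beta_n] \to \alpha$ and $\pi_i \derof \Gamma_i \vdash N : \beta_i$ for $1 \le i \le n$, such that
\[
\app(\pi) = 1 + \app(\pi_0) + \sum_{i=1}^n \app(\pi_i).
\]
Then I would invert the \texttt{lam} rule at the root of $\pi_0$ (which is forced since $\lam x.M$ is a \lam-abstraction, and only \texttt{lam} can conclude such a judgment with type $[\beta_1,\dots,\beta_n] \to \alpha$). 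This provides $\pi_0' \derof \Gamma_0, x:[\beta_1,\dots,\beta_n] \vdash M : \alpha$ with $\app(\pi_0') = \app(\pi_0)$, since \texttt{lam} introduces no application rule.

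Finally, I would apply Lemma~\ref{lemma:ws} with the derivations $\pi_0'$ and $\pi_1,\dots,\pi_n$ to obtain
\[
\pi' \derof \textstyle\sum_{i=0}^n \Gamma_i \vdash M\subst{x}{N} : \alpha
\]
with $\app(\pi') = \sum_{i=0}^n \app(\pi_i) = \app(\pi_0') + \sum_{i=1}^n \app(\pi_i)$. Since $\Gamma = \sum_{i=0}^n \Gamma_i$ and $\app(\pi_0') = \app(\pi_0)$, we conclude $\app(\pi') = \app(\pi) - 1$, as required.

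There is no real obstacle here: the only subtlety is making sure the inversions on \texttt{app} and \texttt{lam} are genuinely complete (i.e., no other rule could conclude those judgments), which is immediate since the type system is syntax-directed on the shape of the subject. The bookkeeping of multisets and the accounting of $\app(-)$ is automatic once Lemma~\ref{lemma:ws} is invoked.
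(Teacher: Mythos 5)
Your proposal is correct and follows essentially the same route as the paper: both invert the \texttt{app} and \texttt{lam} rules at the root (the paper just displays the forced shape of $\pi$ directly) and then invoke the Weighted Substitution Lemma, with identical accounting of $\app(-)$. There is nothing to add.
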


\begin{proof}
The derivation $\pi$ has the form
\[
\infer[\texttt{app}]{  
	\sum_{i=0}^n\Gamma_i \vdash (\lam x.M)N :\alpha 
	}{
	\infer[\texttt{lam}]{
		\qquad\Gamma_{ 0} \vdash \lam x.M :[\beta_1, \dots, \beta_n]\to \alpha \qquad 
		}{
		\pi_{\scriptscriptstyle 0} \derof \Gamma_{ 0}, x:[\beta_1, \dots, \beta_n] \vdash M :\alpha 
		}
		&
		\pi_i \derof\Gamma_i \vdash N :\beta_i \,\mbox{ for }1\le i \le n 
	}
\]
 where $n\in\nat$ and $\Gamma = \sum_{i=0}^n\Gamma_i$.
By Lemma~\ref{lemma:ws} there exists $\pi'\derof \sum_{i=0}^n\Gamma_i\vdash M\subst{x}{N} : \alpha\;$ such that  $\app(\pi') = \sum_{i=0}^n \app(\pi_i) = \app(\pi)-1$.
\end{proof}

From Corollary~\ref{cor:base_per_dopo} it follows that the number of $\texttt{app}$ decreases exactly by $1$ at each step of head reduction.
So $\app(-)$ provides an upper bound for the number of steps necessary to get the principal hnf of a solvable term, as observed by de Carvalho in~\cite{deCarvalho09}.

\begin{lemma}\label{lemma:hr-decreases} 
If $\pi\derof\Gamma\vdash M : \alpha$ then the head reduction of $M$ has length at most $\app(\pi)$.
\end{lemma}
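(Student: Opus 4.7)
The plan is to establish the slightly stronger single-step statement
$$
\pi\derof \Gamma \vdash M : \alpha \ \text{and}\ M \to_h M' \ \Longrightarrow\
\exists\, \pi'\derof \Gamma \vdash M' : \alpha \ \text{with}\ \app(\pi')=\app(\pi)-1,
$$
from which the lemma follows by a trivial induction on $\app(\pi)$: if $M$ is in head normal form the head reduction has length~$0$; otherwise contract the head redex to obtain $M'$ and invoke the induction hypothesis on $\pi'$, concluding that the residual head reduction has length at most $\app(\pi')=\app(\pi)-1$, so the whole head reduction of $M$ has length at most~$\app(\pi)$.

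The single-step statement is proved by induction on the structure of $M$, considering only the cases where $M$ actually has a head redex. If $M=\lam y.N$, then $\pi = \Lamrule{y}{\pi_0}$ for some $\pi_0\derof \Gamma, y:a \vdash N : \alpha'$, and the head redex of $M$ is by definition the head redex of $N$, so $M' = \lam y.N'$ with $N\to_h N'$; the induction hypothesis yields $\pi_0'$ with $\app(\pi_0')=\app(\pi_0)-1$, and $\pi' = \Lamrule{y}{\pi_0'}$ satisfies $\app(\pi')=\app(\pi_0')=\app(\pi)-1$.

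If $M = M_1 M_2$, then $\pi$ ends with an $\texttt{app}$ rule, so $\pi = \Apprule{\pi_0}{\{\pi_i\}_{i=1}^n}$ with $\pi_0\derof \Gamma_0 \vdash M_1 : [\beta_1,\dots,\beta_n]\to\alpha$. There are two subcases. If $M_1 = \lam x.P$ is an abstraction, then $M$ itself is the head redex and $M' = P\subst{x}{M_2}$; Corollary~\ref{cor:base_per_dopo} applies directly to $\pi$, producing the required $\pi'$ with one fewer $\texttt{app}$. Otherwise $M_1$ is not an abstraction, so the head redex of $M$ sits inside $M_1$ and $M' = M_1' M_2$ with $M_1\to_h M_1'$; the induction hypothesis on $\pi_0$ supplies $\pi_0'$ with $\app(\pi_0')=\app(\pi_0)-1$, and reassembling $\pi' = \Apprule{\pi_0'}{\{\pi_i\}_{i=1}^n}$ gives $\app(\pi') = 1 + \app(\pi_0') + \sum_{i=1}^n \app(\pi_i) = \app(\pi)-1$, with the same final environment and type as $\pi$.

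There is no real obstacle here: the only non-routine ingredient is Corollary~\ref{cor:base_per_dopo}, which handles the case where the head redex is at the root; the rest is a bookkeeping induction that tracks how the head redex position corresponds to a subderivation of $\pi$ obtained by peeling off the outer $\texttt{lam}$ rules (for the leading abstractions) and the outer $\texttt{app}$ rules (for the arguments following the redex). The accounting of $\app(-)$ works because this counter is additive under the $\texttt{app}$ rule and invariant under the $\texttt{lam}$ rule.
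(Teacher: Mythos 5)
Your proof is correct and takes essentially the same route as the paper, which states the lemma without a detailed proof and simply observes that Corollary~\ref{cor:base_per_dopo} implies $\app(-)$ decreases by exactly one at each head-reduction step. Your structural induction locating the head redex through the outer \texttt{lam} and \texttt{app} rules is precisely the bookkeeping the paper leaves implicit.
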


Unfortunately, the measure $\app(-)$ is not enough for proving the approximation theorem. 
The reason is that, in order to compute the B\"ohm tree of a \lam-term $M$, one needs to reduce also redexes that are not in head-position.
E.g., consider the following derivation~$\pi$:
\[  
	\infer[\texttt{app}]{ 
	x: [\emptymset \to \alpha] \vdash x \,(\bI \,y) :\alpha
	}{
	x: [\emptymset \to \alpha] \vdash x: \emptymset \to \alpha
	} 
\]
When reducing $x (\bI \,y) \redto[\beta] xy$, the only possible derivation $\pi'$ of $x: [\emptymset \to \alpha] \vdash xy :\alpha$ is:
\[  
	\infer[\texttt{app}]{
	x: [\emptymset \to \alpha] \vdash xy :\alpha
	}{
	x: [\emptymset \to \alpha] \vdash x: \emptymset \to \alpha
	} 
\]
The number of \texttt{app} has not decreased in this case, so we cannot use $\app(-)$ as a decreasing measure in the proof of the left-to-right implication of Theorem~\ref{thm:app}.  
We can however find an approximant $t\in \BT[*]{x\,(\bI \,y)}$ with that typing by realizing that in $\pi$ the subterm $\bI \,y$ is basically used as the approximant $\bot$.  
This is the key idea behind the next lemma.

Given $M\in\Lam_\bot$ and a redex occurrence $R=(\lam x.P)Q$ in $M$, say $M = C\hole R$ for some single hole $\lam\bot$-context $C\hole-$, we denote by $M\subst{R}{\bot}$ the \lamb-term $C\hole \bot$.
Moreover, we write $M \stackrel{\scriptscriptstyle R}{\to}_\beta M'$ to indicate that the contracted $\beta$-redex is $R$, namely that $M' = C\hole{P\subst{x}{Q}}$.

\begin{lemma}[Weighted Subject Reduction]\label{lem:wsRed}
Let $M,M'\in\Lam_\bot$ be such that $M \stackrel{\scriptscriptstyle R}{\to}_\beta M'$. 
If $\pi \derof \Gamma \vdash M: \alpha$, then there is $\pi' \derof \Gamma \vdash M': \alpha$ such that one of the following cases holds:
\begin{enumerate}[(i),ref={\roman*}]
\item\label{lem:splitRed1} 
	$\app(\pi') < \app(\pi)$,
\item\label{lem:splitRed2} 
	$\pi' \justlike \pi$ and there exists $\pi'' \derof \Gamma \vdash M\subst{R}{\bot} : \alpha$ such that $\pi'' \justlike \pi$.
\end{enumerate}
\end{lemma}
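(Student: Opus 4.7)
The plan is to proceed by structural induction on the single-hole $\lam\bot$-context $C\hole-$ such that $M = C\hole R$, so that $M' = C\hole{P\subst{x}{Q}}$ and $M\subst{R}{\bot} = C\hole\bot$. In the base case $C\hole- = \hole-$, the whole term $M$ is the redex, and Corollary~\ref{cor:base_per_dopo} gives $\pi' \derof \Gamma \vdash M' : \alpha$ with $\app(\pi') = \app(\pi) - 1$, so case (\ref{lem:splitRed1}) holds.

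For the inductive step, I would analyze the shape of $C\hole-$ together with the bottom-most rule of $\pi$. When $C\hole- = \lam y.C'\hole-$, the derivation must be $\pi = \Lamrule{y}{\pi_0}$ with $\pi_0$ typing $C'\hole R$, and the IH on $\pi_0$ closes the case after rewrapping with \texttt{lam} (which preserves both the $\app$-count and the $\justlike$ shape needed for case (\ref{lem:splitRed2})). When $C\hole- = (C'\hole-)\,N$, the derivation is $\pi = \Apprule{\pi_0}{\{\pi_i\}_{i=1}^n}$ with $\pi_0$ typing $C'\hole R$; applying the IH to $\pi_0$ and reassembling \texttt{app} around the resulting $\pi_0'$ and the untouched $\pi_1,\dots,\pi_n$ transports case (\ref{lem:splitRed1}) or case (\ref{lem:splitRed2}) from $\pi_0$ to $\pi$ verbatim (in case (\ref{lem:splitRed2}) the same $\pi_i$'s are reused to produce both $\pi'$ and $\pi''$).

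The delicate case is $C\hole- = N\,(C'\hole-)$, where $\pi = \Apprule{\pi_0}{\{\pi_i\}_{i=1}^n}$ and each $\pi_i$ now types $C'\hole R$. If $n \ge 1$, I would apply the IH to each $\pi_i$: if at least one falls into case (\ref{lem:splitRed1}), the rebuilt $\pi'$ strictly decreases the $\app$-count --- the other $\pi_i'$'s, produced by case (\ref{lem:splitRed2}), preserve the count because $\pi_i' \justlike \pi_i$ --- giving case (\ref{lem:splitRed1}) for $\pi$; if all $\pi_i$'s fall into case (\ref{lem:splitRed2}), both $\pi'$ and $\pi''$ are built by juxtaposing the $\pi_i'$'s and $\pi_i''$'s respectively with the unchanged $\pi_0$. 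The truly subtle subcase, and the main obstacle of the whole proof, is $n = 0$: the \texttt{app} rule is then the degenerate form~\eqref{noarg} and the subterm $C'\hole R$ is never actually typed in $\pi$. One can therefore type both $M' = N\,(C'\hole{P\subst{x}{Q}})$ and $M\subst{R}{\bot} = N\,\bot$ by applying exactly the same degenerate \texttt{app} rule to $\pi_0$ alone, obtaining $\pi' \justlike \pi$ and $\pi'' \justlike \pi$. This is precisely the \emph{raison d'être} of clause (\ref{lem:splitRed2}): the contracted redex is invisible to the derivation, so no count can strictly decrease, but that very invisibility is what allows $R$ to be harmlessly replaced by $\bot$ without altering the shape of $\pi$, and this information must be propagated through any enveloping application rules.
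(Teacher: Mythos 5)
Your proposal is correct and follows essentially the same route as the paper: structural induction on the single-hole context, with the base case discharged by Corollary~\ref{cor:base_per_dopo} and the argument-position application case split on whether some premise falls into case~(\ref{lem:splitRed1}) or all fall into case~(\ref{lem:splitRed2}). The paper treats your ``delicate'' $n=0$ subcase by simply observing that it falls vacuously into the ``all premises satisfy~(\ref{lem:splitRed2})'' branch, which is exactly the degenerate-\texttt{app} situation you describe.
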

\begin{proof}
Let $M=C\hole R$ for a single hole $\lam\bot$-context $C\hole -$. We proceed by induction on  $C\hole -$.

{\bf Case $\hole -$.} We have $M = R = (\lam x.P) Q$ and $M' = P\{Q/x\}$. 
The thesis is then given by Corollary~\ref{cor:base_per_dopo}. 
More precisely, we are in case~\eqref{lem:splitRed1}.

{\bf Case $P\, (C \hole -)$.} We have $M = P(C \hole R)$ and $M'= P(C\hole {R'})$ where $R \to_\beta R'$. 
Then, the derivation $\pi$ has the form
\[
\infer{
	\sum_{i=0}^n \Gamma_i \vdash P(C \hole R) : \alpha 
	}{
	\pi_{ 0} \derof \Gamma_{ 0} \vdash P :[\beta_1, \dots, \beta_n]\to \alpha
	&
	\pi_i \derof\Gamma_i \vdash C\hole R :\beta_i \,\mbox{ for }1\le i \le n 
	}
\]
where $n\in\nat$ and $\Gamma = \sum_{i=0}^n\Gamma_i$. 

For all $1\le i \le n\,$ by induction hypothesis there is $\pi'_i \derof \Gamma_i \vdash C\hole {R' } : \beta_i$ such that either 
\begin{equation}\label{eq:primaequazza}
 \app(\pi'_{ i}) < \app(\pi_{ i}),
\end{equation} 
or 
\begin{equation}\label{eq:secondaequazza}
  \pi_{ i}' \justlike \pi_{ i} \justlike \pi_{ i}'' \mbox{\; for a derivation \;} \pi_{ i}'' \derof \Gamma_i \vdash C \hole \bot: \beta_i. 
\end{equation}
In this case we take $\pi' = \Apprule{\pi_{ 0}}{\{\pi_i' \}_{i =1}^n}$. 
 
If every $i$ satisfies \eqref{eq:secondaequazza} then $\pi' \justlike \pi$. 
By taking $\pi'' = \Apprule{\pi_{ 0}} \{\pi''_i\}_{i =1}^n $ we obtain the case~\eqref{lem:splitRed2} of the thesis. Notice that the eventuality $n = 0$ falls in this case.

If there is an $i$ that satisfies \eqref{eq:primaequazza} then  $\app(\pi') < \app(\pi)$, so the case~\eqref{lem:splitRed1} is proved.


\vspace{0.13cm}
{\bf Case $(C\hole -) P$.} We have $M = (C\hole R) P$ and $M'= (C\hole {R'}) P$ where $R \to_\beta R'$. 
Then, the derivation $\pi$ has the form
\[
\infer{  \sum_{i=0}^n\Gamma_i \vdash (C\hole R) P :\alpha }{{\pi_{ 0} \derof \Gamma_{ 0} \vdash C\hole R :[\beta_1, \dots, \beta_n]\to \alpha \qquad }{\pi_i \derof\Gamma_i \vdash P :\beta_i \,\mbox{ for all }1\le i \le n }}
\]
 where $n\in\nat$ and $\Gamma = \sum_{i=0}^n\Gamma_i$. 
 By induction hypothesis there exists a derivation $\pi'_0 \derof \Gamma_{ 0} \vdash C\hole {R' } :[\beta_1, \dots, \beta_n]\to\alpha$ such that either $\app(\pi_{ 0}') < \app(\pi_{ 0})$, or  $\pi_{ 0}' \justlike \pi_{ 0} \justlike{\pi_{ 0}''}$ for some $\pi_{ 0}'' \derof \Gamma_{ 0} \vdash C \hole \bot: [\beta_1, \dots, \beta_n] \to~\alpha$. 
In the former case, the thesis is proved taking $\pi' = \Apprule{\pi_{ 0}'}{\{\pi_i\}_{i =1}^n}$ and in the latter also $\pi'' \!=  \Apprule{\pi_{ 0}''}{\{\pi_i\}_{i =1}^n}$.

\vspace{0.17cm}
{\bf Case $\lam x. C\hole -$.} We have  $M = \lam x. C\hole R$ and $M' = \lam x. C\hole {R'}$, where $R \to_\beta R'$. Then $\pi$ has the form 
\[
{\infer{\qquad\Gamma \vdash \lam x.C\hole R :[\beta_1, \dots, \beta_n]\to \beta \qquad }{\pi_{ 0} \derof \Gamma, x:[\beta_1, \dots, \beta_n] \vdash C\hole R :\beta }}
\]
for  $[\beta_1, \dots, \beta_n]\to \beta = \alpha$. 
By induction hypothesis there is $\pi'_0 \derof \Gamma, x: [\beta_1, \dots, \beta_n] \vdash C\hole {R' } :\beta$ such that either $\app(\pi_{ 0}') < \app(\pi_{ 0})$, or $\pi_{ 0}' \justlike \pi_{ 0}\justlike \pi_{ 0}''$ for some $\pi_{ 0}'' \derof \Gamma, x :[\beta_1, \dots, \beta_n] \vdash C \hole \bot: \beta$. 
In the former case, the thesis is proved by taking $\pi' = \Lamrule{x}{\pi_{ 0}'}$, and in the latter also $\pi'' = \Lamrule{x}{\pi_{ 0}''}$.
\end{proof}

As a note aside, Lemma~\ref{lem:wsRed} gives in particular the subject reduction property. 
The subject expansion can be proved equally easily, as done in~\cite[\S 2.4]{phdruoppolo}. 
These two properties provide yet another soundness proof for relational graph models.






\subsection{The Approximation Theorem}\label{subsec:AT}
We show that all relational graph models satisfy the Approximation Theorem stating that the interpretation of a \lam-term is given by the union of the interpretations of its finite approximants (Theorem~\ref{thm:app}).
As mentioned in the introduction, we provide a new combinatorial proof that does not exploits reducibility candidates nor Ehrhard's notion of Taylor expansion.
Actually, it is an easy consequence of our Weighted Subject Reduction (Lemma~\ref{lem:wsRed}). 

From the Approximation Theorem, we get that the \lam-theory induced by any relational graph model $\cD$ includes $\BTth$ (Corollary~\ref{cor:BTminimalth}) and, if $\cD$ is extensional, also $\Hpl$ (Corollary~\ref{cor:extensionality}).

\begin{lemma}\label{lem:typomega}
Let $M\in\Lam_\bot$. If $M$ is in $\beta$-normal form and $\, \Gamma \vdash M: \alpha$ then $\, \Gamma \vdash \da{M}: \alpha$.
\end{lemma}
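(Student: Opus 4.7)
The plan is to argue by structural induction on $M$. Since $M\in\Lamb$ is in $\beta$-normal form, only two of the three shapes covered by Definition~\ref{def:omegastuff}(\ref{def:omegastuff1}) can occur: either $M = \lam x_1\dots x_k.\bot M_1\cdots M_k$ (head is $\bot$) or $M = \lam x_1\dots x_n.x_i M_1\cdots M_k$ (head is a variable). The intermediate shape $\lam\vec x.(\lam y.M')N\vec P$ is ruled out by the $\beta$-normality of $M$. We handle the two remaining shapes separately.

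For the first shape we observe that none of the three inference rules of Figure~\ref{fig:Typing} admits $\bot$ as the subject of its conclusion, so $\bot$ carries no type at all. Consequently neither does any application $\bot M_1\cdots M_j$ (an instance of \texttt{app} always demands a subderivation for the left-hand subterm), nor any $\lam$-abstraction above it. Hence the hypothesis $\Gamma\vdash M : \alpha$ is vacuously false in this case, in perfect agreement with the definition $\da{M} = \bot$ -- which is itself untypable.

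For the second shape we have $\da{M} = \lam x_1\dots x_n.x_i\da{M_1}\cdots\da{M_k}$, and we would unfold the derivation $\pi\derof\Gamma\vdash M:\alpha$ from the root. Its $n$ outermost instances of \texttt{lam} force $\alpha = a_1\to\cdots\to a_n\to\gamma$ and produce a subderivation of $\Gamma, x_1:a_1,\dots,x_n:a_n \vdash x_i M_1\cdots M_k : \gamma$. The $k$ further instances of \texttt{app} below it then decompose the type of the head variable $x_i$ as $b_1\to\cdots\to b_k\to\gamma$ with $b_j = [\gamma_j^1,\dots,\gamma_j^{m_j}]$, and expose for each pair $1\le j\le k$, $1\le \ell\le m_j$ a subderivation $\pi_j^\ell\derof \Delta_j^\ell\vdash M_j : \gamma_j^\ell$. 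Applying the induction hypothesis to every $\pi_j^\ell$ would yield $\Delta_j^\ell \vdash \da{M_j} : \gamma_j^\ell$, and the very same instances of \texttt{var}, \texttt{app} and \texttt{lam} would then reassemble a derivation of $\Gamma\vdash\da{M}:\alpha$.

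We do not anticipate a real obstacle: the proof is essentially a bookkeeping exercise on derivation trees. The only mild subtlety worth flagging is what happens when some multiplicity $m_j$ is zero: then $M_j$ does not appear in any subderivation of $\pi$ (the \texttt{app} rule with $n=0$ requires no subderivation for its argument), so the induction is simply not invoked on $M_j$, and the reassembly still goes through even when $\da{M_j}=\bot$, since the reassembled derivation need not type the $j$-th argument either.
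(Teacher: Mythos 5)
Your proof is correct and follows exactly the route the paper intends: the paper's own proof is just ``by a straightforward induction on $M$,'' and your case analysis (vacuousness of the $\bot$-headed case because $\bot$ is untypable, and the reassembly of the derivation in the variable-headed case, including the $m_j=0$ observation) is the right way to spell it out.
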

\begin{proof}
By a straightforward induction on $M$.
\end{proof}

Notice that the hypothesis that $M$ is in $\beta$-normal form is necessary to prove Lemma~\ref{lem:typomega}.
Indeed, for $M = \bI\, x$ we have $x:[\alpha] \vdash M:\alpha$, whereas $\da{M} = \bot$ cannot be typed.

Given $M\in\Lam_\bot$ we denote by $\rednum{M}$ the number of occurrences of $\beta$-redexes in $M$.


\begin{theorem}[Approximation Theorem]\label{thm:app} 
Let $M\in\Lam_{\bot}$. Then $( \Gamma,\alpha)\in\Lint{M} $ if and only if there exists $t\in\BT[*]{M}$ such that $( \Gamma,\alpha) \in \Lint{t}$.
Therefore $\Lint{M} = \bigcup_{t\in\BT[*]{M}} \Lint{t}$.
\end{theorem}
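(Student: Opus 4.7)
The plan is to prove the two inclusions separately, with the easy direction handled by a lifting lemma plus soundness, and the hard direction by a well-founded induction on $\omega^2$ based on the Weighted Subject Reduction (Lemma~\ref{lem:wsRed}).

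For the direction $\bigcup_{t\in\BT[*]M}\Lint{t}\subseteq\Lint{M}$, suppose $(\Gamma,\alpha)\in\Lint{t}$ with $t=\da{M'}$ and $M'=_\beta M$. By the Soundness Theorem (Theorem~\ref{thm:soundness}) combined with Corollary~\ref{cor:semantic_equiv}, $\Lint{M}=\Lint{M'}$, so it suffices to lift a derivation of $\Gamma\vdash\da{M'}:\alpha$ to one of $\Gamma\vdash M':\alpha$. I would establish the auxiliary fact: for every $P\in\Lam$, $\Lint{\da{P}}\subseteq\Lint{P}$. This is proved by a straightforward induction on $P$ following Definition~\ref{def:omegastuff}\eqref{def:omegastuff1}: the cases where $\da{P}=\bot$ are vacuous since $\bot$ cannot be typed (so $\Lint{\bot}=\emptyset$); the remaining case $P=\lam\seq x.x_iP_1\cdots P_k$ gives $\da{P}=\lam\seq x.x_i\da{P_1}\cdots\da{P_k}$, and any derivation of $\Gamma\vdash\da{P}:\alpha$ decomposes through rules \texttt{lam} and \texttt{app}, whose premises on the arguments $\da{P_j}$ can be transformed into premises on the $P_j$ by the induction hypothesis (while subterms typed via the $n=0$ instance of \texttt{app} simply do not need a premise).

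For the direction $\Lint{M}\subseteq\bigcup_{t\in\BT[*]M}\Lint{t}$, I would proceed by well-founded induction on the lexicographic pair $(\app(\pi),\rednum{M})\in\omega^2$, where $\pi\derof\Gamma\vdash M:\alpha$. If $\rednum{M}=0$ then $M$ is in $\beta$-normal form, and Lemma~\ref{lem:typomega} gives $\Gamma\vdash\da{M}:\alpha$; since $\da{M}\in\BT[*]{M}$, we are done. Otherwise, pick any $\beta$-redex $R$ in $M$ and let $M\to_\beta^R M'$, to which Lemma~\ref{lem:wsRed} applies:
\begin{itemize}
\item Case~\eqref{lem:splitRed1}: there is $\pi'\derof\Gamma\vdash M':\alpha$ with $\app(\pi')<\app(\pi)$. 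The inductive hypothesis on $(M',\pi')$ provides $t\in\BT[*]{M'}$ with $(\Gamma,\alpha)\in\Lint{t}$; since $M=_\beta M'$, $\BT[*]{M}=\BT[*]{M'}$, so $t\in\BT[*]{M}$.
\item Case~\eqref{lem:splitRed2}: $\pi''\derof\Gamma\vdash M\subst{R}{\bot}:\alpha$ with $\pi''\justlike\pi$, hence $\app(\pi'')=\app(\pi)$; on the other hand $\rednum{M\subst{R}{\bot}}<\rednum{M}$ because replacing the redex $R$ by $\bot$ destroys $R$ itself (and possibly further redexes inside it) and cannot create any new $\beta$-redex, as $\bot$ is not an abstraction. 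The inductive hypothesis on $(M\subst{R}{\bot},\pi'')$ yields $t\in\BT[*]{M\subst{R}{\bot}}$ with $(\Gamma,\alpha)\in\Lint{t}$, and by Lemma~\ref{lemma:xDomenico} we have $\BT[*]{M\subst{R}{\bot}}\subseteq\BT[*]{M}$, so $t\in\BT[*]{M}$.
\end{itemize}
In both cases the lexicographic pair strictly decreases in $\omega^2$, so the induction is well-founded and the statement follows.

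The main delicate point is the interplay between the two components of the measure: one needs Case~\eqref{lem:splitRed1} of Lemma~\ref{lem:wsRed} to absorb the fact that a $\beta$-step may \emph{duplicate} redexes (so $\rednum{M'}$ can be much larger than $\rednum{M}$), which is precisely why the lexicographic order with $\app(\pi)$ as primary coordinate is essential; Case~\eqref{lem:splitRed2} then provides the strict drop in the secondary coordinate needed when $\app$ stays constant. Everything else is bookkeeping.
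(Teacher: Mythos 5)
Your proposal is correct and follows essentially the same route as the paper: the hard direction is the same lexicographic induction on $(\app(\pi),\rednum{M})$ over $\omega^2$ driven by Lemma~\ref{lem:typomega}, Lemma~\ref{lem:wsRed} and Lemma~\ref{lemma:xDomenico}, and your easy direction (the auxiliary fact $\Lint{\da{P}}\subseteq\Lint{P}$ plus soundness) is just a repackaging of the paper's induction on $t$ that rebuilds a derivation for some $N=_\beta M$.
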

\begin{proof}
$(\Rightarrow)$
Let $\,\pi \derof \Gamma \vdash M: \alpha$. We proceed by induction on the pair $\big(\app(\pi),\rednum{M}\big)$ lexicographically ordered. 

{\bf Case $\rednum{M} = 0$.}  By Lemma~\ref{lem:typomega}  $\, \Gamma \vdash \da{M}: \alpha$ and clearly $\da{M}\in \BT[*]{M}$.

{\bf Case $\rednum{M} > 0$.} Let $R$ be any occurrence of a $\beta$-redex in $M$ and $M \stackrel{\scriptscriptstyle R\,}{\to}_\beta M'$. 
By Lemma~\ref{lem:wsRed}  there is a derivation $\pi' \derof \Gamma \vdash M': \alpha$ such that either (i) $\app(\pi') < \app(\pi)$, or (ii) there exists $\pi'' \derof \Gamma \vdash M\{\bot/R\}: \alpha$ such that $\pi'' \justlike \pi$.

In Case (i) we  apply the induction hypothesis to $\pi'$ and get $t\in\BT[*]{M'} = \BT[*]{M}$ such that $\Gamma \vdash t: \alpha$.

In Case (ii) we can apply the induction hypothesis to $\pi''$, as $\pi''\justlike\pi$ implies $\app(\pi'') = \app(\pi)$, and moreover $\rednum{M\{\bot/R\}} < \rednum{M}$. We get $t\in\BT[*]{M\{\bot/R\}} $ such that $\Gamma \vdash t: \alpha$. 
Since moreover $\BT[*]{M\{\bot/R\}} \subseteq \BT[*]{M}$ by Lemma~\ref{lemma:xDomenico}, we are done.

$(\Leftarrow)$ We proceed by induction on $t$. 
The case $t=\bot$ is vacuous, as $\bot$ is not typable. 
So let $t = \lam x_1 \dots x_n.xt_1 \cdots t_m$ where $n,m\in\nat$. 
We suppose that the variable $x$ is bound, the other case being analogous.
The given derivation tree of $\Gamma \vdash t:\alpha$ must have the form\footnote{The ``double line'' in the derivation tree is a shortcut to indicate the simultaneous application of zero, one, or many rules of the same kind.}
\[
\infer={
	\Gamma\vdash  \lam x_1 \dots x_n.x\,t_1 \cdots t_m : a_1\to\dots \to a_n\to\beta
	}
	{
	\infer={
		\Delta+\sum_{i=1}^m\sum_{j=1}^{k_i}\Gamma_{ij}, x_1:\sum_{i=1}^m\sum_{j=1}^{k_i} a_1^{ij},\dots, x_n:\sum_{i=1}^m\sum_{j=1}^{k_i} a_n^{ij} \vdash xt_1 \cdots t_m : \beta
		}{
		\Delta \vdash x: b_1 \to \cdots  \to b_m  \to \beta 
		&
		\Gamma_{ij}, x_1:a_1^{ij},\dots, x_n:a_n^{ij} \vdash t_i : \beta_{ij}
		& 
		\textrm{for all }i,j
		}
	}
\]
where for all $i \le m$ we have $b_i = [\beta_{i1}, \dots, \beta_{i{k_i}}]$ for some $k_i\in\nat$; 
$\Delta = x:b_1  \to \cdots  \to b_m  \to \beta$;
$\Gamma = \Delta + \sum_{i=1}^m\sum_{j=1}^{k_i}\Gamma_{ij}$; 
$a_\ell = \sum_{i=1}^m\sum_{j=1}^{k_i} a_\ell^{ij}$ for all $\ell \le n$; 
finally, $\alpha =a_1\to\dots \to a_n\to\beta$.
As $t\in\BT[*]{M}$ we have that $t = \da{N}$ for some $N=_\beta M$. 
By Definition~\ref{def:omegastuff}, we have $N = \lam x_1\dots x_n.x\,N_1 \cdots N_m$ with $t_i = \da{N_i}$ for all $i \le m$. 
By induction hypothesis we get $\Gamma_{ij}, x_1:a_1^{ij},\dots, x_n:a_n^{ij} \vdash N_i : \beta_{ij}$ for all $i,j$. 
By replacing each $t_i$ by $N_i$ in the proof tree above, we get a derivation of $\Gamma \vdash N: \alpha$. 
Since $N =_\beta M$, by soundness we get $\,\Gamma \vdash M: \alpha$.
\end{proof}



The following result first appeared in print in~\cite{ManzonettoR14}, but was already known in the folklore (see the discussion in~\cite{CarraroES10}). Notice that it is stronger than Theorem~6 in~\cite{PaoliniPR15} which only shows that the \lam-theories induced by strongly linear relational models are sensible.
\begin{corollary}\label{cor:BTminimalth} For all $\rgm$'s $\cD$ we have that $\BTth\subseteq\Th(\cD)$.
In particular $\Th(\cD)$ is sensible and $\Lint[\cD]{M} = \emptyset$ for all unsolvable \lam-terms $M$.
\end{corollary}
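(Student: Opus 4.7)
The plan is to derive this corollary directly from the Approximation Theorem (Theorem~\ref{thm:app}), using the fact that the set of finite approximants depends only on the B\"ohm tree. First, to establish $\BTth \subseteq \Th(\cD)$, I would take $M,N \in \Lam$ with $\BT{M} = \BT{N}$. Recall from the discussion after Definition~\ref{def:omegastuff} that, for a \lam-term $M$, the set $\BT[*]{M}$ coincides with $\{t \in \App \mid t \BTle \BT{M}\}$, so it depends only on $\BT{M}$. Thus $\BT[*]{M} = \BT[*]{N}$ and, applying the Approximation Theorem on both sides,
$$
\Lint[\cD]{M} \;=\; \bigcup_{t \in \BT[*]{M}} \Lint[\cD]{t} \;=\; \bigcup_{t \in \BT[*]{N}} \Lint[\cD]{t} \;=\; \Lint[\cD]{N}.
$$
Invoking Corollary~\ref{cor:semantic_equiv} converts this into $\cD \models M = N$, which gives the desired inclusion.

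For the sensibility claim, it suffices to observe that by definition every unsolvable \lam-term $M$ satisfies $\BT{M} = \bot$, so any two unsolvables are equated in $\BTth$, and hence in $\Th(\cD)$ by the first part.

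Finally, for $\Lint[\cD]{M} = \emptyset$ when $M$ is unsolvable, the key step is to show $\BT[*]{M} = \{\bot\}$. For this, I would argue that any $M' \in \Lam$ with $M' =_\beta M$ is still unsolvable (as solvability is $\beta$-invariant by Theorem~\ref{thm:Wads}), hence cannot be in head normal form; therefore $M'$ must have shape $\lam x_1\dots x_k.(\lam y.P)QN_1\cdots N_k$, so Definition~\ref{def:omegastuff}(\ref{def:omegastuff1}) yields $\da{M'} = \bot$. Consequently $\BT[*]{M} = \{\bot\}$, and the Approximation Theorem gives $\Lint[\cD]{M} = \Lint[\cD]{\bot} = \emptyset$, since no typing judgment can be derived for $\bot$ in the system of Figure~\ref{fig:Typing}. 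The only mild subtlety is the last point about direct approximants of $\beta$-reducts of an unsolvable term, but this is immediate from the Wadsworth characterization of solvability and the case distinction in Definition~\ref{def:omegastuff}.
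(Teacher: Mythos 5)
Your proof is correct and follows essentially the same route as the paper: both derive $\BTth\subseteq\Th(\cD)$ from the Approximation Theorem via the fact that $\Lint{M}=\bigcup_{t\in\BT[*]{M}}\Lint{t}$ depends only on $\BT{M}$. The paper leaves the sensibility and $\Lint[\cD]{M}=\emptyset$ claims implicit, whereas you spell out the (correct) observation that $\BT[*]{M}=\{\bot\}$ for unsolvable $M$ and that $\bot$ is untypable.
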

\begin{proof} From Theorem~\ref{thm:app} we get $\Lint{M} = \Lint{\BT{M}} = \bigcup_{t\in\BT[*]{M}} \Lint{t}$.
Therefore, whenever $\BT{M} =\BT{N}$ we have $\Lint{M} = \Lint{\BT{M}} = \Lint{\BT{N}} = \Lint{N}$. 
Thus $\BTth\subseteq\Th(\cD)$.
\end{proof}

In the next corollary we are going to use the L\'evy's characterization of Morris's inequational theory $\sqle_\Hpl$ in terms of extensional approximants (Theorem~\ref{thm:JJ}). The extensional approximants of a \lam-term $M$ are interpreted as usual by setting $\Lint{\BT[e]{M}} = \bigcup_{t\in\BT[e]{M}} \Lint{t}$.

\begin{corollary}\label{cor:extensionality} 
For an rgm $\cD$, the following are equivalent:
\begin{enumerate}[(i),ref={\roman*}]
\item\label{cor:extensionality1}  
	$\cD$ is extensional,
\item\label{cor:extensionality2} 
	$\sqle_\Hpl\subseteq\Thle(\cD)$,
\item\label{cor:extensionality3} 
	$\BTeth\subseteq\Th(\cD)$.
\end{enumerate}
\end{corollary}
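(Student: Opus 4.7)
My plan is to prove the three-way equivalence by the cycle (i)$\Rightarrow$(ii)$\Rightarrow$(iii)$\Rightarrow$(i), with (i)$\Rightarrow$(ii) doing all the real work and the other two implications being essentially formal. The key intermediate fact I would establish is that \emph{when $\cD$ is extensional, the logical interpretation of $M$ can be computed from the extensional approximants}, i.e.\ $\Lint[\cD]{M} = \bigcup_{s\in\BT[e]{M}}\Lint[\cD]{s}$. Once this holds, (ii) follows immediately from L\'evy's characterization (Theorem~\ref{thm:JJ}).

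For \textbf{(iii)$\Rightarrow$(i)}: since $\Hpl$ is extensional, $\blam\eta\subseteq\Hpl=\BTeth$, so $\blam\eta\subseteq\Th(\cD)$, and Lemma~\ref{lemma:D_ext_actually_ext}(\ref{lemma:D_ext_actually_ext2}) gives the extensionality of $\cD$. For \textbf{(ii)$\Rightarrow$(iii)}: if $M=_\Hpl N$ then both $M\sqle_\Hpl N$ and $N\sqle_\Hpl M$ hold; by hypothesis this gives $\Lint[\cD]{M}\subseteq\Lint[\cD]{N}$ and vice versa, so $\cD\models M=N$. (One must check this passes to contexts, but $\Th(\cD)$ is a \lam-theory by soundness, so this is automatic.)

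The heart is \textbf{(i)$\Rightarrow$(ii)}. Assume $\cD$ is extensional, and let $M\sqle_\Hpl N$. By Theorem~\ref{thm:JJ}, $\BT[e]{M}\subseteq\BT[e]{N}$. The goal is $\Lint[\cD]{M}\subseteq\Lint[\cD]{N}$. I would first prove the displayed identity
\[
	\Lint[\cD]{M} \;=\; \bigcup_{s\in\BT[e]{M}} \Lint[\cD]{s}
\]
for every $M\in\Lam$. The inclusion ``$\supseteq$'' uses Lemma~\ref{lemma:D_ext_actually_ext}(\ref{lemma:D_ext_actually_ext1}) and extensionality: if $s=\nf[\eta](t)$ for some $t\in\BT[*]{M'}$ with $M'\msto[\eta]M$, then $t\msto[\eta]s$ gives $\Lint{s}\subseteq\Lint{t}$, the Approximation Theorem (Theorem~\ref{thm:app}) gives $\Lint{t}\subseteq\Lint{M'}$, and extensionality (via Lemma~\ref{lemma:D_ext_actually_ext}(\ref{lemma:D_ext_actually_ext2}) and soundness applied to $M'=_{\blam\eta}M$) gives $\Lint{M'}=\Lint{M}$. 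For the inclusion ``$\subseteq$'', take $(\Gamma,\alpha)\in\Lint{M}$; by Theorem~\ref{thm:app} there is $t\in\BT[*]{M}$ with $(\Gamma,\alpha)\in\Lint{t}$; then $s:=\nf[\eta](t)\in\BT[e]{M}$ (taking $M'=M$), and since $t\msto[\eta]s$ and $\cD$ is extensional, $\Lint{t}=\Lint{s}$, so $(\Gamma,\alpha)\in\Lint{s}$. Once the identity is established, the inclusion $\BT[e]{M}\subseteq\BT[e]{N}$ yields $\Lint{M}\subseteq\Lint{N}$ directly, concluding (ii).

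The only subtle point — and the one I would be most careful about — is justifying that $t\msto[\eta]\nf[\eta](t)$ implies $\Lint[\cD]{t}=\Lint[\cD]{\nf[\eta](t)}$ for $\lambda\bot$-terms~$t$ rather than just for ordinary \lam-terms. Lemma~\ref{lemma:D_ext_actually_ext} is stated for $M\in\Lam$, so strictly speaking one must either extend it to $\Lamb$ (routine: the single extra case is that $\bot$ cannot appear on the left of an $\eta$-redex, since $\lam x.\bot x$ still has $\Lint{\cdot}=\emptyset$), or work with a \lam-term $M''$ such that $t\msto[\eta]M''$ is mirrored at the level of approximants. This is the only genuinely technical step; everything else is bookkeeping with the Approximation Theorem and the definition of $\BT[e]{M}$.
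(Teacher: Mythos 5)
Your proposal is correct and follows essentially the same route as the paper: the paper also proves (i)$\Rightarrow$(ii) by establishing $\Lint[\cD]{M}=\Lint[\cD]{\BT[e]{M}}$ via the Approximation Theorem and extensionality and then invoking Theorem~\ref{thm:JJ}, treats (ii)$\Rightarrow$(iii) as trivial, and gets (iii)$\Rightarrow$(i) from Lemma~\ref{lemma:D_ext_actually_ext}. Your remark about extending $\Lint{t}=\Lint{\nf[\eta](t)}$ to $\lamb$-terms is a point the paper's terse chain of equalities silently assumes, and your treatment of it is the right one.
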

\begin{proof} 
$(\ref{cor:extensionality1}\To \ref{cor:extensionality2})$
From Theorem~\ref{thm:app} we obtain $\Lint{M} = \bigcup_{t\in\BT[*]{M}}\Lint{t}$. 
From the extensionality~of~$\cD$, we get $\Lint{M} = \bigcup_{M'\msto[\eta] M,\, t\in\BT[*]{M'}}\Lint{t} = 
\bigcup_{M'\msto[\eta] M,\, t\in\BT[*]{M'}}\Lint{\nf[\eta](t)} =\Lint{\BT[e]{M}}$.
So, we have that $\BT[e]{M} \subseteq\BT[e]{N}$ entails $\Lint{M} = \Lint{\BT[e]{M}} \subseteq \Lint{\BT[e]{N}} = \Lint{N}$. 

\noindent $(\ref{cor:extensionality2}\To \ref{cor:extensionality3})$ Trivial.

\noindent $(\ref{cor:extensionality3}\To \ref{cor:extensionality1})$ By Lemma~\ref{lemma:D_ext_actually_ext}.
\end{proof}


\section{The Minimal Relational Graph Theory}\label{sec:minimalth}

In this section we show that a minimal inequational graph theory exists, and that it is exactly the inequational theory induced by the model $\cE$ defined in Example~\ref{ex:graphrel}.

\subsection{The Minimal Inequational Graph Theory}

We start by defining an inequational theory $\ler$ and prove that it is included in $\Thle(\cD)$ for every relational graph model $\cD$.

\begin{definition}\label{def:ler}
Given $M,N\in\Lam$, we let $M\ler N$ whenever there exists a B\"ohm-like tree $U$ such that $\BT{M}\BTle U\geeta[\infty] \BT{N}$.
\end{definition}

The fact that this definition involves $\eta$-expansions and that $\ler\, \subseteq\Thle(\cD)$ holds also for non-extensional relational graph models should not be surprising. 
Indeed, when considering the original graph model $\mathscr{P}_\omega$ defined by Plotkin~\cite{Plotkin93} and Scott~\cite{Scott76} the situation is actually analogous (but symmetrical\footnote{
I.e.\ $\mathscr{P}_\omega\models M\sqle N$ exactly when there is a B\"ohm-like tree $U$ such that $\BT{M} \leeta[\infty] U\BTle \BT{N}$ \cite{Hyland76}.
}), and no graph model is extensional.

\begin{example} \ 
\begin{enumerate}[(1)]
\item $\bJ\ler \bI$ while $\bI\not\ler \bJ$,
\item Let $(M)_{n\in\nat}$ be the effective sequence defined by $M_n = \bJ$ if $n$ is even and $M_n = \bO$ otherwise.
Then we have $[M_n]_{n\in\nat}\ler [\bI]_{n\in\nat}$.
\end{enumerate}
\end{example}

It follows from \cite{Ronchi82} that $M\ler N\ler M$ if and only if $\BTth\vdash M = N$.
The inequational theory $\ler$ admits the following characterization in terms of B\"ohm tree approximants.

\begin{lemma}\label{lemma:char_ler_app} 
Let $M,N\in\Lam$. We have $M\ler N$ if and only if for all $t\in\BT[*]{M}$  there exists $s\in\BT[*]{N}$ such that $t\BTle u\msto[\eta] s$ for some $u\in\App$.
\end{lemma}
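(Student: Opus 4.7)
The plan is to prove both implications: the forward direction by induction on $\size{t}$ exploiting one unfolding of $\leeta[\infty]$ at a time, and the backward direction by a coinductive construction of the required tree $U$.

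For $(\Rightarrow)$, assume there is a B\"ohm-like tree $U$ with $\BT{M}\BTle U$ and $\BT{N}\leeta[\infty] U$. Given $t\in\BT[*]{M}$, one has $t\BTle U$ as well, and I would proceed by induction on $\size{t}$ to produce $u\in\App$ and $s\in\BT[*]{N}$ with $t\BTle u\msto[\eta] s$. If $t=\bot$, take $u=s=\bot$. Otherwise $t$ has the form $\lam\seq y.x_j\, t_1\cdots t_p$, and $t\BTle U$ together with one unfolding of Definition~\ref{def:leeta} applied to $\BT{N}\leeta[\infty] U$ at the root yields decompositions $\BT{N}=\lam\seq x.x_j\, N_1\cdots N_k$ and $U=\lam\seq x\seq z.x_j\, V_1\cdots V_k V'_1\cdots V'_m$ with $\seq y=\seq x\seq z$, $p=k+m$, $t_i\BTle V_i$ and $N_i\leeta[\infty] V_i$ for $i\le k$, and $t_{k+\ell}\BTle V'_\ell$ with $z_\ell\leeta[\infty] V'_\ell$ for $\ell\le m$. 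Applying the induction hypothesis to each triple $(t_i,V_i,N_i)$ supplies $(u_i,s_i)$ with $t_i\BTle u_i\msto[\eta] s_i$ and $s_i\in\BT[*]{N_i}$; likewise, applied to each $(t_{k+\ell},V'_\ell,z_\ell)$ it yields $u'_\ell$ with $t_{k+\ell}\BTle u'_\ell\msto[\eta] z_\ell$. Setting $u=\lam\seq x\seq z.x_j\, u_1\cdots u_k u'_1\cdots u'_m$ and $s=\lam\seq x.x_j\, s_1\cdots s_k$, the pointwise $\msto[\eta]$-reductions followed by the outer $\eta$-contractions of $z_1,\dots,z_m$ give $u\msto[\eta] s$, while $t\BTle u$ and $s\in\BT[*]{N}$ hold by construction.

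For $(\Leftarrow)$, I would construct $U$ by coinduction on the pair $(\BT{M},\BT{N})$. If $\BT{N}=\bot$, set $U=\bot$; applying the hypothesis to any $t\in\BT[*]{M}$ forces $s=\bot$ and, since only $\bot$ $\eta$-reduces to $\bot$, also $u=\bot$ and thus $t=\bot$, so $\BT{M}=\bot\BTle\bot$. If $\BT{N}$ is defined but $\BT{M}=\bot$, simply take $U=\BT{N}$. Otherwise both roots are defined; write $\BT{M}=\lam\seq y.x'_{j'}\, M'_1\cdots M'_{k'}$ and $\BT{N}=\lam\seq x.x_j\, N_1\cdots N_k$, and feed the head approximant $\lam\seq y.x'_{j'}\bot\cdots\bot$ of $\BT{M}$ into the hypothesis. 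Using again the rigidity of $\eta$-reduction on $\bot$, one infers $x'_{j'}=x_j$ and $\seq y=\seq x\seq z$ for a fresh block $\seq z$ of length $m:=k'-k\ge 0$. Then set $U=\lam\seq x\seq z.x_j\, V_1\cdots V_k V'_1\cdots V'_m$ where each $V_i$ is obtained by the coinductive call on $(M'_i,N_i)$ and each $V'_\ell$ by the coinductive call on $(M'_{k+\ell},z_\ell)$. The required instances of the hypothesis on these sub-pairs are extracted from the global one by feeding in approximants of $\BT{M}$ that are $\bot$ outside the relevant child slot.

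The main obstacle is the coinductive step of $(\Leftarrow)$: the delicate point is to check that the binder and arity discrepancies between $\BT{M}$ and $\BT{N}$ at each node are precisely what a single $\eta$-expansion can fill, and that the hypothesis restricts faithfully both to the ``real'' children $V_i$ and to the ``spurious'' $\eta$-children $V'_\ell$. Both facts hinge on the observation that the only $u\in\App$ satisfying $u\msto[\eta]\bot$ is $\bot$ itself, which guarantees that the head variable, arities, and binders recovered at each step fit into the infinite $\eta$-expansion pattern and make the coinductive definition productive.
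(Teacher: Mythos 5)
Your proof is correct and follows essentially the same route as the paper's: the forward direction by induction on the approximant $t$, unfolding the head normal forms and $\eta$-contracting the extra binders at the end, and the backward direction by coinductively assembling the tree $U$ from the head decomposition. The only cosmetic difference is that you treat the ``spurious'' tail positions via uniform coinductive calls on the pairs $(M'_{k+\ell},z_\ell)$, whereas the paper handles them with a separate directedness observation about $\eta$-expansions of $z_\ell$; both work, and your reliance on the rigidity of $\eta$-reduction at $\bot$ to match head variables and arities is exactly the right justification.
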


\begin{proof}$(\Rightarrow)$ By structural induction on $t$.
If $t = \bot$, then we can take $s = u = \da{M}$.

Otherwise $t\neq \bot$ entails that $M$ has a hnf $M=_\beta \lam\seq x z_1\dots z_m.x_iM_1\cdots M_kP_1\cdots P_m$. 
By Definition~\ref{def:ler}, there is a B\"ohm-like tree $U = \lam\seq x z_1\dots z_m.x_iU_1\cdots U_kV_1\cdots V_m$ such that $\BT{M_j}\BTle U_j$ and $\BT{P_\ell}\BTle V_\ell\geeta[\infty]z_\ell$, and  $N =_\beta \lam\seq x.x_iN_1\cdots N_k$ with $U_j\geeta[\infty] \BT{N_j}$.
Thus, we have $t = \lam\seq xz_1\dots z_m.x_it_{1}\cdots t_{k}t'_1\cdots t'_m$ for $t_j\in\BT[*]{M_j}$ and $t'_\ell\in\BT[*]{P_\ell}$.
Since for all $j\le k$ we have $M_j\ler N_j$ and for all $\ell\le m$ we have $P_\ell\ler z_\ell$ we can apply the induction hypothesis and get $t_j\BTle u_j\msto[\eta]s_j$ for some $s_j\in\BT[*]{N_j}$ and $t'_\ell\BTle u'_\ell\msto[\eta] z_\ell$. 
As a consequence $t\BTle \lam\seq x z_1 \dots z_m.x_iu_1\cdots u_ku'_1\cdots u'_m\msto[\eta] \lam\seq x.x_is_1\cdots s_k\in\BT[*]{N}$.

$(\Leftarrow)$ We prove $M\ler N$ coinductively. 
If $M$ is unsolvable, then we are done.

Otherwise there is $t\in\BT[*]{M}$ of the form $t = \lam\seq x z_1\dots z_m.x_it_1\cdots t_kt'_1\cdots t'_m$ and $s = \lam\seq x.x_is_1\cdots s_k\in\BT[*]{N}$ such that $t\BTle \lam\seq xz_1\dots z_m.x_iu_1\cdots u_ku'_1\cdots u'_m\msto[\eta] s$. 
This entails that $M =_\beta \lam\seq xz_1\dots z_m.x_iM_1\cdots M_kP_1\cdots P_m$ and $N =_\beta \lam\seq x.x_iN_1\cdots N_k$ with $t_j\in\BT[*]{M_j}$, $s_j\in\BT[*]{N_j}$ for all $j\le k$, $z_\ell\notin\FV(\BT{x_i\seq M\seq N})$, and $t'_\ell\in\BT[*]{P_\ell}$ for all $\ell\le m$.
By coinductive hypothesis $M_j\ler N_j$, i.e., there are B\"ohm-like trees $U_j$ such that $\BT{M_j}\BTle U_j\geeta[\infty]\BT{N_j}$.
Since for all $t'_\ell\in\BT[*]{P_\ell}$ there is an $\eta$-expansion $u'_\ell$ of $z_\ell$ such that $t'_\ell\BTle u'_\ell$, then there exists a B\"ohm-like tree $V_\ell$ such that $\BT{P_\ell}\BTle V_\ell\geeta[\infty] z_\ell$. 
As a consequence $\BT{M}\BTle  \lam\seq x\seq z.x_iU_1\cdots U_kV_1\cdots V_m\geeta[\infty] \BT{N}$, which shows $M\ler N$.
\end{proof}

\begin{proposition}\label{prop:lerminimal}
Let $M,N\in\Lam$. If $M\ler N$ then $\cD \models M \sqle N$ for every rgm $\cD$.
\end{proposition}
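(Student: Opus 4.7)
The plan is to combine the Approximation Theorem with the approximant-characterisation of $\ler$ (Lemma~\ref{lemma:char_ler_app}) and two monotonicity properties of $\Lint{-}$.

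By Theorem~\ref{thm:app} we have $\Lint[\cD]{M}=\bigcup_{t\in\BT[*]{M}}\Lint[\cD]{t}$ and $\Lint[\cD]{N}=\bigcup_{s\in\BT[*]{N}}\Lint[\cD]{s}$, so it suffices to exhibit, for each $t\in\BT[*]{M}$, an $s\in\BT[*]{N}$ with $\Lint{t}\subseteq\Lint{s}$. Fix such $t$; Lemma~\ref{lemma:char_ler_app} supplies $s\in\BT[*]{N}$ and $u\in\App$ with $t\BTle u\msto[\eta] s$. I would then chain the two inclusions $\Lint{t}\subseteq\Lint{u}$ and $\Lint{u}\subseteq\Lint{s}$.

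For the first, I would argue as follows. By Definition~\ref{def:BTle}, $t$ arises from $u$ by replacing finitely many subterms $R_1,\dots,R_k$ with $\bot$. Since $\Lint{\bot}=\emptyset$ is the least possible interpretation, each such replacement can only shrink $\Lint{-}$; iterating Lemma~\ref{lemma:monotonicity} once per subterm, inside the single-hole context obtained by keeping all other $R_j$ fixed, yields $\Lint{t}\subseteq\Lint{u}$. The second inclusion is an immediate iteration of Lemma~\ref{lemma:D_ext_actually_ext}(\ref{lemma:D_ext_actually_ext1}) along the finite reduction sequence $u\msto[\eta]s$. Combining these with the Approximation Theorem applied to $s\in\BT[*]{N}$ gives $\Lint{t}\subseteq\Lint{u}\subseteq\Lint{s}\subseteq\Lint{N}$, whence $\Lint{M}\subseteq\Lint{N}$ and thus $\cD\models M\sqle N$ by Corollary~\ref{cor:semantic_equiv}.

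I do not foresee any real obstacle, since every step is a direct application of a previously established result. The only mildly delicate point is the monotonicity of $\Lint{-}$ under $\BTle$: conceptually trivial, but it requires unpacking Definition~\ref{def:BTle} carefully so as to apply Lemma~\ref{lemma:monotonicity} once per replaced subtree, each time in a genuine single-hole $\lamb$-context.
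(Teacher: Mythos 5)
Your proof is correct and follows essentially the same route as the paper's: Lemma~\ref{lemma:char_ler_app} to produce $t\BTle u\msto[\eta] s$, then chaining $\Lint{t}\subseteq\Lint{u}\subseteq\Lint{s}$ and concluding via the Approximation Theorem and Corollary~\ref{cor:semantic_equiv}. The only (immaterial) difference is that you justify $\Lint{t}\subseteq\Lint{u}$ by monotonicity with respect to replacing $\bot$ (whose interpretation is empty), whereas the paper observes $t\in\BT[*]{u}$ and invokes Theorem~\ref{thm:app} once more; both work.
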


\begin{proof} By Lemma~\ref{lemma:char_ler_app} for all $t\in\BT[*]{M}$ there are $s\in\BT[*]{N}$ and $u\in\App$ such that $t\BTle u\msto[\eta] s$.
Since $t\BTle u$ we get $t\in\BT[*]{u}$ and, by Theorem~\ref{thm:app}, we obtain $\Lint{t}\subseteq\Lint{u}$. 
From Lemma~\ref{lemma:D_ext_actually_ext}(\ref{lemma:D_ext_actually_ext1}) and Corollary~\ref{cor:semantic_equiv} we have that $\Lint{u}\subseteq\Lint{s}$ holds. 
It follows that $\Lint{\BT{M}}\subseteq\Lint{\BT{N}}$, so we conclude by applying Theorem~\ref{thm:app}. 
\end{proof}

We now show that $\ler$ is representable by some relational graph model.

\subsection{The Model $\cE$ Induces Minimal Theories}
Let $\cE = (E,\iota) = \Compl{(\nat,\emptyset)}$ be the relational graph model defined in Example~\ref{ex:graphrel}.
This model has infinitely many atoms, that we denote by $(\xi_n)_{n\in\nat}$, and as injection $\iota : \Mfin{E}\times E\to E$ simply the inclusion, therefore no atom $\xi_n$ can be equal to an arrow $a\to_\iota\alpha$.  
In other words, the elements $\alpha\in E$ are generated by: 
$$
	\qquad\qquad\alpha \ ::=\  \xi_n\mid a \to \alpha \qquad\qquad a \ ::=\  [\alpha_1,\dots,\alpha_k]\qquad\qquad \textrm{ (for $n,k\ge 0$).}
$$
Recall that we provided the type inference rules in Figure~\ref{fig:Typing}.
We are going to show that the interpretations of two \lam-terms $M$ and $N$ are different whenever $M\not\ler N$.

Notice that every element $\alpha\in E$ can be written uniquely as $\alpha = a_1\to\cdots\to a_n\to\xi_i$.
In this case, the atom $\xi_i$ is called the \emph{range of $\alpha$} and denoted by $\rg{\alpha}$.
We use the compact notation $\omega^k\to\alpha$ to denote the element $\omega\to\cdots\to\omega\to\alpha$ (with $k$ occurrences of $\omega$).

Recall that the size $\size t$ of $t\in\App$ has been introduced in Definition~\ref{def:sizet}.

\begin{lemma}\label{lemma:bourdelique}
Let $M,N\in\Lambda$.
If $M\sqle_\Hst N$ but $M\not\ler N$, then there are $\Gamma,\alpha$ such that:
\begin{enumerate}[(i)]
\item $\Gamma\vdash_\cE t : \alpha$ for some $t\in\BT[*]{M}$, 
\item for all $u\in\BT[*]{N}$ we have $\Gamma\not\vdash_\cE u : \alpha$,
\item $\rg{\alpha}= \xi_{\size{t}}$,
\item for all $\beta\in\Gamma$, $\rg{\beta} = \xi_j$ for some $j\le \size{t}$.
\end{enumerate}
\end{lemma}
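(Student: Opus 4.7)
I plan to extract a witness approximant via Lemma~\ref{lemma:char_ler_app} and then to construct $\Gamma$ and $\alpha$ by strong induction on its size. From $M \not\ler N$ and Lemma~\ref{lemma:char_ler_app} I obtain some $t \in \BT[*]{M}$ such that no $s \in \BT[*]{N}$ admits a $u \in \App$ with $t \BTle u \msto[\eta] s$. The base case $t = \bot$ is ruled out immediately, since any $s \in \BT[*]{N}$ trivially satisfies $t \BTle s \msto[\eta] s$, contradicting the choice of $t$. So $t$ has the form $\lam x_1 \dots x_n.\, x_i\, t_1 \cdots t_k$, which forces $M$ to be solvable; by Remark~\ref{rem:simh}, the principal hnf of $N$ then has the matching shape $\lam x_1 \dots x_{n'}.\, x_i\, N_1 \cdots N_{k'}$ with $k - n = k' - n'$.

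For the inductive step I choose $\alpha = b_1 \to \cdots \to b_n \to \underbrace{\emptymset \to \cdots \to \emptymset}_{\max(k-n,0)} \to \xi_{\size{t}}$ and build $\Gamma$ so that $x_i$ receives the type $[\,b_{n+1} \to \cdots \to b_{n+k} \to \xi_{\size{t}}\,]$ (added to the existing entry when $x_i$ is free, or placed among the $b_j$ of the appropriate binder when $x_i$ is bound). Each multiset $b_{n+j}$ is produced by applying the induction hypothesis to the pair consisting of $t_j$, viewed as an approximant of the head-normal-form component $M_j$ of $M$, against the corresponding component of $N$ (which still satisfies the witness condition inherited from $t$); this yields a typing $\Gamma_j \vdash_\cE t_j : \beta_j$ which fails to type any approximant of $N_j$. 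The environments $\Gamma_j$ are then merged into $\Gamma$. The arithmetic identity $\size{t} = n + 1 + \sum_j \size{t_j}$ together with the range bounds $\rg{\beta_j} = \xi_{j'}$ with $j' \le \size{t_j}$ given by the inductive (iii)--(iv) guarantee that every range in $\alpha$ and in $\Gamma$ is some $\xi_{j''}$ with $j'' \le \size{t}$, securing (iii) and (iv).

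The critical point is (ii). If some $u \in \BT[*]{N}$ admitted $\Gamma \vdash_\cE u : \alpha$, then since the range $\xi_{\size{t}}$ is atomic (not of the form $a \to_\iota \beta$), every derivation must decompose $u$ along exactly the same number of leading abstractions and head applications as $t$, forcing $u = \lam x_1 \dots x_n z_{n+1}\dots z_m.\, x_i\, u_1 \cdots u_k\, u'_1\cdots u'_m$ for some $m\ge 0$ with each $u'_\ell$ $\eta$-convertible to $z_\ell$ (the $\emptymset$ slots allow $u$ to have extra $\eta$-expanding binders). Reading off the subterm-typings and invoking the inductive converses would then let me assemble a $u^\star \in \App$ with $t \BTle u^\star \msto[\eta] s$ for some $s \in \BT[*]{N}$, contradicting the choice of $t$. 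The main obstacle I expect is precisely this last reconstruction: one has to coordinate the local $\eta$-reductions supplied by the inductive hypothesis for each $t_j$ into a single $\msto[\eta]$-reduction witnessing membership in $\BT[*]{N}$, and it is the rigidity of the atom $\xi_{\size{t}}$ against arrow decomposition that prevents any parasitic derivation from breaking this reassembly.
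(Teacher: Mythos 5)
The overall architecture --- an induction that manufactures a type whose range is a fresh atom $\xi_{\size{t}}$ and uses the range bounds (iii)--(iv) to rigidify any derivation typing an approximant of $N$ --- is the right one and matches the paper's. But there is a genuine gap in your inductive step: you apply the induction hypothesis to \emph{every} argument position $j$, claiming that each pair $(M_j,N_j)$ ``still satisfies the witness condition inherited from $t$''. This is false. The failure of $\ler$ (equivalently, of the condition of Lemma~\ref{lemma:char_ler_app} at $t$) localizes either at the root, as a mismatch in the number of trailing arguments of the similar hnf's, or at \emph{some} position $q$; for the other positions one may well have $M_j\ler N_j$, and then no pair $(\Gamma_j,\beta_j)$ typing $t_j$ but no approximant of $N_j$ can exist, since $M_j\ler N_j$ gives $\Lint{t_j}\subseteq\Lint{M_j}\subseteq\Lint{N_j}=\bigcup_{s\in\BT[*]{N_j}}\Lint{s}$ by Proposition~\ref{prop:lerminimal} and Theorem~\ref{thm:app}. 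Concretely, for $M=\lam x.xy\,\bI$ and $N=\lam x.xy\,\bJ$ the hypotheses hold, a witness approximant is $t=\lam x.xy\,\bI$, and the first coordinate $(y,y)$ admits no witness type whatsoever. The paper's proof avoids this by selecting the single bad coordinate $q$, assigning the inductively obtained type only to the $q$-th argument of the head variable, and putting the empty multiset $\emptymset$ in every other slot; this choice is also what makes the decomposition of the environment in a derivation for $u$ unique, which is what item (ii) actually rests on.

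Two further problems. First, the case in which the hnf of $N$ has strictly \emph{more} trailing arguments than that of $M$ must be handled separately and directly (take $\Gamma=x_i:[\emptymset^k\to\xi_{n+1}]$ and $\alpha=\emptymset^n\to\xi_{n+1}$; the extra abstractions of $u$ cannot be absorbed by the atom $\xi_{n+1}$). Your exponent $\max(k-n,0)$ conflates the arities of binders and arguments of the same term --- the relevant quantity is the difference $m$ between the numbers of binders of the two hnf's --- and your stated purpose for the $\emptymset$ slots, ``allowing $u$ to have extra $\eta$-expanding binders'', points in the wrong direction: the construction must \emph{forbid} those extra binders, which is exactly what atomicity of the residual type achieves. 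Second, the reassembly of a $u^\star$ with $t\BTle u^\star\msto[\eta]s$ that you flag as the main obstacle is a detour: once the induction hypothesis is the statement of the lemma at the selected coordinate $q$, item (ii) follows because the rigid derivation forces $\Gamma\vdash_\cE u_q:\alpha_q$ for some $u_q\in\BT[*]{N_q}$, which the induction hypothesis directly excludes; no coordination of $\eta$-reductions is needed.
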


\begin{proof}
Since $M\not\ler N$, we have that $M$ must be solvable. As moreover $M\sqle_\Hst N$, by Remark~\ref{rem:simh} we get that $M,N$ have similar hnf's. Therefore, only two cases are possible.

1) $M =_\beta \lam x_1\dots x_n.x_iM_1\cdots M_k$ and $N =_\beta \lam\seq xz_1\dots z_m.x_iN_1\cdots N_kP_1\cdots P_m$ for $m > 0$.
We suppose that $x_i$ is free, the other case being analogue. 
This case follows easily by taking $t = \lam\seq x.x_i\bot\cdots\bot\in\BT[*]{M}$ whose size is $n+1$,
$\Gamma = x_i : [\emptymset^k\to\xi_{n+1}]$ and $\alpha = \emptymset^n\to \xi_{n+1}$.
The fact that $\Gamma\not\vdash u : \alpha$ for all $u = \lam\seq x\seq z.x_iu_1\cdots u_{k+m}\in\BT[*]{N}$ follows from $m > 0$ and the fact that $\xi_{n+1}$ is an atom, hence different from any arrow type by definition of $\cE$.

2) $M =_\beta \lam x_1\dots x_nz_1\dots z_m.x_iM_1\cdots M_kP_1\cdots P_m$ and $N =_\beta \lam\seq x.x_iN_1\cdots N_k$ where, for every $\ell\le m$ there is a B\"ohm-like tree $V$ such that $\BT{P_\ell}\BTle V\geeta[\infty] z_\ell$, for every $j\le k$ we have $M_j\sqle_\Hst N_j$ but $M_q\not\ler N_q$ for some $q$.
We suppose that $x_i$ is free, the other case being analogue.
By induction hypothesis, there is $t_q\in\BT[*]{M_q}$ such that $\Gamma\vdash t_q : \alpha$ with $\rg{\alpha} = \xi_\size{t_q}$, for all $\beta\in\Gamma$ we have $\rg{\beta} = \xi_j$ for some $j\le \size{t_q}$ and for all $u_q\in\BT[*]{N_q}$ we have $\Gamma\not\vdash u_q : \alpha$.
On the one side, we construct the derivation:
$$
	\infer={
	\Gamma_0+\Gamma_1\vdash t =\lam\seq x\seq z.x_it_1\cdots t_ks_1\cdots s_m : a_1\to\cdots\to a_n\to\emptymset^m\to \xi_{\size{t}}
	}{
	\infer={
	\Gamma_0+\Gamma\vdash x_it_1\cdots t_ks_1\cdots s_m : \xi_\size{t}
	}{
		\Gamma_0\vdash x_i : \emptymset^{q-1}\to [\alpha]\to \emptymset^{k+m-q}\to \xi_{\size{t}}
		& 
		\Gamma \vdash t_q : \alpha 
		}
	}
$$
where $\Gamma_0 = x_i : [\emptymset^{q-1}\to [\alpha]\to \emptymset^{m+k-q}\to \xi_{\size{t}}]$, 
$\Gamma = \Gamma_1 + (x_1:a_1,\dots,x_n : a_n)$, for all $j\le k$ we have 
$t_j\in\BT[*]{M_j}$, and for all $\ell\le m$ we have $s_\ell\in\BT[*]{P_\ell}$.
On the other side, each $u\in\BT[*]{N} - \{\bot\}$ must have the shape $u = \lam\seq x.x_iu_1\cdots u_k$ and each derivation of $\Gamma_0+\Gamma_1\vdash u : \xi_{\size t}$ requires to derive $\Gamma\vdash u_q : \alpha$.

Indeed, since all $\beta\in\Gamma$ satisfy $\rg{\beta}\le\size t_q <\size t$, they cannot be used to produce a $\xi_\size t$ so the decomposition $\Gamma_0 + \Gamma$ is in fact unique: 
$$
	\infer={
	\Gamma_0+\Gamma_1\vdash u = \lam\seq x.x_iu_1\cdots u_k : a_1\to\cdots\to a_n\to\emptymset^m\to \xi_{\size{t}}
	}{
	\infer={
	\Gamma_0+\Gamma\vdash x_iu_1\cdots u_k : \emptymset^{m}\to\xi_\size{t}
	}{
		\Gamma_0\vdash x_i : \emptymset^{q-1}\to [\alpha]\to \emptymset^{m+k-q}\to \xi_{\size{t}}
		& 
		\Gamma \vdash u_q : \alpha 
		}
	}
$$
By induction hypothesis $\Gamma\vdash u_q : \alpha$ is impossible, which entails $\Gamma_0+\Gamma_1\not\vdash u : a_1\to\cdots\to a_n\to\emptymset^m\to \xi_{\size{t}}$.
\end{proof}

We are now able to show that $\cE$ induces the same \lam-theory as Engeler's model~\cite{Engeler81} and the same inequational theory as the filter model defined by Ronchi Della Rocca in~\cite{Ronchi82}. 

\begin{theorem}\label{thm:ThofE} For $M,N\in\Lam$ we have 
$$
	M\ler N \iff \cE\models M\sqle N
$$
Therefore $\Thle(\cE) =\ \ler$ and $\Th(\cE) = \BTth$.
\end{theorem}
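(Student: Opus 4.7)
My plan is to prove the two directions separately, and then derive the equations for $\Thle(\cE)$ and $\Th(\cE)$ as corollaries.

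The forward direction ($\Rightarrow$) is essentially free: Proposition~\ref{prop:lerminimal} states that $M \ler N$ implies $\cD \models M \sqle N$ for \emph{every} relational graph model $\cD$, so in particular for $\cE$.

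For the reverse direction ($\Leftarrow$), I would argue by contraposition: assume $M \not\ler N$, and split into two cases according to whether $M \sqle_\Hst N$ holds. In the first case, where $M \sqle_\Hst N$ (so we are within the hypotheses of Lemma~\ref{lemma:bourdelique}), the lemma gives an environment $\Gamma$, a type $\alpha \in E$, and some $t \in \BT[*]{M}$ such that $\Gamma \vdash_\cE t : \alpha$ is derivable while $\Gamma \vdash_\cE u : \alpha$ fails for every $u \in \BT[*]{N}$. Combining the Approximation Theorem (Theorem~\ref{thm:app}) with the semantic equivalence of $\Lint{-}$ and $\Int{-}$ (Corollary~\ref{cor:semantic_equiv}), this yields $(\Gamma, \alpha) \in \Lint[\cE]{M} \setminus \Lint[\cE]{N}$, witnessing $\cE \not\models M \sqle N$. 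In the second case, where $M \not\sqle_\Hst N$, I would invoke the maximality of $\sqle_\Hst$ from Lemma~\ref{lemma:maximality}: since $\Thle(\cE)$ is sensible by Corollary~\ref{cor:BTminimalth} and consistent (for instance $\Lint[\cE]{\bI} \neq \emptyset = \Lint[\cE]{\bO}$, cf.\ Example~\ref{ex:interp}), maximality forces $\Thle(\cE) \subseteq\;\sqle_\Hst$, so $M \not\sqle_\Hst N$ entails $\cE \not\models M \sqle N$.

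From the biconditional $M \ler N \iff \cE \models M \sqle N$ just established, the equality $\Thle(\cE) = \;\ler$ is immediate. To deduce $\Th(\cE) = \BTth$, one inclusion is Corollary~\ref{cor:BTminimalth}. Conversely, if $\cE \models M = N$ then $\cE \models M \sqle N$ and $\cE \models N \sqle M$, hence $M \ler N$ and $N \ler M$; by the fact recalled just after Definition~\ref{def:ler} (taken from~\cite{Ronchi82}), this means $\BTth \vdash M = N$.

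The main obstacle is really the case split in the converse direction: Lemma~\ref{lemma:bourdelique} only produces a discriminating typing under the assumption $M \sqle_\Hst N$, so handling the complementary case $M \not\sqle_\Hst N$ genuinely requires an independent ingredient, which is precisely the maximality of $\sqle_\Hst$. The rest is a straightforward assembly of tools already developed in the paper—the Approximation Theorem to pass between derivations over $\App$ and the interpretation of full \lam-terms, and sensibility/consistency of $\cE$ to apply the maximality result.
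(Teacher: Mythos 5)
Your proof is correct and follows essentially the same route as the paper's: the forward direction via Proposition~\ref{prop:lerminimal}, and the converse via the maximality of $\sqle_\Hst$ (Lemma~\ref{lemma:maximality}, using sensibility and consistency of $\Thle(\cE)$) to reduce to the case $M\sqle_\Hst N$, where Lemma~\ref{lemma:bourdelique} together with the Approximation Theorem produces a point of $\Lint[\cE]{M}\setminus\Lint[\cE]{N}$. Your explicit derivation of $\Th(\cE)=\BTth$ from the antisymmetry fact $M\ler N\ler M\iff \BTth\vdash M=N$ is a welcome spelling-out of what the paper leaves to the word ``Therefore''.
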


\begin{proof} $(\Rightarrow)$ It follows immediately by~Proposition~\ref{prop:lerminimal}.

$(\Leftarrow)$ Suppose, by the way of contradiction, that $M\not\ler N$ but $\cE\models M\sqle N$. 
Since $\Thle(\cE)$ is sensible, Lemma~\ref{lemma:maximality} entails $M\sqle_\Hst N$.
By Lemma~\ref{lemma:bourdelique}, there exists $t\in\BT[*]{M}$ and $\Gamma,\alpha$ such that $\Gamma\vdash t : \alpha$ while $\Gamma\not\vdash u : \alpha$ for all $u\in\BT[*]{N}$. 
By the Approximation Theorem, we have $(\Gamma,\alpha)\in\Lint{M} - \Lint{N}$, which is impossible.
\end{proof}

The above proof-technique could be suitably generalized to prove that all non-extensional relational graph models induce the same inequational theory, namely $\ler$.

\begin{theorem}\label{thm:minimalth} The relation $\ler$ is the minimal inequational graph theory.
Similarly, $\BTth$ is the minimal relational graph theory.
\end{theorem}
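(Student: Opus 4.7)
The proof plan is essentially to assemble results already established. For the first statement, I need to verify two things: that $\ler$ actually is a relational graph theory (i.e., representable), and that it is contained in every relational graph theory. The representability follows immediately from Theorem~\ref{thm:ThofE}, which states $\Thle(\cE) = \ler$, so the Engeler-style model $\cE$ witnesses that $\ler$ is a relational graph theory. For the minimality direction, I would simply invoke Proposition~\ref{prop:lerminimal}: for any rgm $\cD$ and any pair $M,N\in\Lam$ with $M \ler N$, we already have $\cD\models M\sqle N$, hence $\ler\;\subseteq\Thle(\cD)$ for every relational graph model $\cD$.

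For the second statement about $\BTth$, I would proceed analogously. Theorem~\ref{thm:ThofE} also gives $\Th(\cE) = \BTth$, so $\BTth$ is a (equational) relational graph theory. Minimality then follows from Corollary~\ref{cor:BTminimalth}, which asserts $\BTth\subseteq\Th(\cD)$ for every $\rgm$ $\cD$. Combining these two facts yields that $\BTth$ is the least relational graph theory under inclusion.

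I do not expect any technical obstacle here: the theorem is really a corollary packaging together the three main ingredients (Proposition~\ref{prop:lerminimal}, Corollary~\ref{cor:BTminimalth}, Theorem~\ref{thm:ThofE}). The only thing worth making explicit in the write-up is that the word ``minimal'' is meant with respect to set-theoretic inclusion of theories, and that we obtain existence together with the concrete identification of the minimum via the model $\cE$. A short remark may also be useful: the equational version follows from the inequational one by symmetrisation, since $\Th(\cD)$ is the symmetric closure of $\Thle(\cD)$, so the minimality of $\ler$ among inequational graph theories in particular entails the minimality of its symmetric closure $\BTth$ among equational graph theories, consistent with the direct argument via Corollary~\ref{cor:BTminimalth}.
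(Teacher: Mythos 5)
Your proposal is correct and matches the paper's intent exactly: the paper states this theorem without a proof, treating it as an immediate consequence of Proposition~\ref{prop:lerminimal}, Corollary~\ref{cor:BTminimalth} and Theorem~\ref{thm:ThofE}, which is precisely the assembly you describe. (One terminological nit: $\Th(\cD)$ is the symmetric \emph{part} (kernel) $\Thle(\cD)\cap\Thle(\cD)^{-1}$ of the preorder, not its symmetric closure, and the identification of the kernel of $\ler$ with $\BTth$ is the fact the paper cites from Ronchi Della Rocca; your direct argument via Corollary~\ref{cor:BTminimalth} does not depend on this anyway.)
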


\subsection{A Semantic Characterization of Normalizability}

We now show that in the model $\cE$ $\beta$-normalizable \lam-terms have a simple semantic characterization.
Indeed, since there are no equations between atoms and arrow types, it makes sense to define whether $\emptymset$ occurs in a type $\alpha$ with a certain \emph{polarity} $p\in\{+,-\}$.

\begin{definition}
For all elements $\alpha$ of $\cE$ the relations $\emptymset\in^+\alpha$ and $\omega\in^{-}\alpha$ are defined by mutual induction as follows (where $p$ is a polarity and $\lnot p$ denotes the opposite polarity): 
\begin{enumerate}[(i),ref={\roman*}]
\item $\emptymset\in^- a \to \beta$ if $a = \emptymset$;
\item if $\emptymset\in^p\beta$ then $\emptymset\in^p a \to \beta$; 
\item if $\emptymset\in^{\neg p} \beta$ then $\emptymset\in^p ([\beta]+a)\to \gamma$.
\end{enumerate}
When $\emptymset\in^+\alpha$ (resp.\ $\emptymset\in^-\alpha$) holds we say that $\emptymset$ \emph{occurs positively} (resp.\ \emph{negatively}) in $\alpha$.
We write $\emptymset\notin^p\alpha$ whenever $\omega$ does not occur in $\alpha$ with polarity $p$.
\end{definition}
These notions extend to multisets in the obvious way, that is, $\emptymset\in^p [\alpha_1,\dots,\alpha_n]$ whenever $\emptymset\in^p \alpha_i$ for some index~$i$.
Similarly, $\omega\in^p\Gamma$ whenever there is $x\in\Var$ such that $\omega\in^p\Gamma(x)$.

\begin{theorem}\label{thm:characterization_nf} 
Let $M\in\Lam$. The following are equivalent:
\begin{enumerate}
\item\label{thm:characterization_nf1}
	$M$ has a $\beta$-normal form,
\item\label{thm:characterization_nf2}
	$\Gamma\vdash_\cE M : \alpha$ for some environment $\Gamma$ and 
	type $\alpha$ such that $\emptymset\notin^+\alpha$ and $\emptymset\notin^-\Gamma$.
\end{enumerate}
\end{theorem}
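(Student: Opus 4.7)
For $(\ref{thm:characterization_nf1}) \Rightarrow (\ref{thm:characterization_nf2})$, I would proceed by structural induction on a $\beta$-normal form $N$ to show that there is a typing $\Gamma \vdash_\cE N : \alpha$ with $\emptymset \notin^+ \alpha$ and $\emptymset \notin^- \Gamma$; since $M =_\beta N$, soundness (Theorem~\ref{thm:soundness}) will then transfer the typing to $M$. The interesting case is $N = \lam x_1 \dots x_n . x_i N_1 \cdots N_k$: applying the IH to each $N_j$ yields a typing $\Gamma_j \vdash N_j : \beta_j$ with $\emptymset \notin^+ \beta_j$ and $\emptymset \notin^- \Gamma_j$. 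I would then type $x_i$ with $\tau = [\beta_1] \to \cdots \to [\beta_k] \to \xi$ for a fresh atom $\xi$, add $[\tau]$ to the $x_i$-entry, sum the $\Gamma_j$'s, and close with $k$ instances of \texttt{app} followed by $n$ instances of \texttt{lam}, getting $\Gamma \vdash N : a_1\to\cdots\to a_n\to \xi$. Since every argument of $\tau$ is a singleton (so clause~(i) cannot fire) whose content satisfies $\emptymset \notin^+ \beta_j$ (so clause~(iii) with $p=-$ cannot fire either), one concludes $\emptymset \notin^- \tau$, which in turn gives $\emptymset \notin^- \Gamma$; an analogous use of clauses~(ii) and (iii) shows $\emptymset \notin^+ (a_1 \to \cdots \to a_n \to \xi)$.

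For $(\ref{thm:characterization_nf2}) \Rightarrow (\ref{thm:characterization_nf1})$, the Approximation Theorem (Theorem~\ref{thm:app}) provides some $t \in \BT[*]{M}$ with $\Gamma \vdash_\cE t : \alpha$; the plan is to show, under the polarity hypothesis, that such a $t$ contains no occurrence of $\bot$. I would prove this by structural induction on $t \in \App$. The case $t = \bot$ is vacuous since $\bot$ is not typable. For $t = \lam x_1 \dots x_n . x_i t_1 \cdots t_k$, the derivation factors through $n$ \texttt{lam} rules yielding $\alpha = a_1 \to \cdots \to a_n \to \alpha_0$ and then $k$ \texttt{app} rules involving a type $\tau = b_1 \to \cdots \to b_k \to \alpha_0$ assigned to $x_i$. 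By propagation of the polarity invariant (using the contrapositive of clause~(iii) on $\alpha$), the extended context $(\Gamma, x_1{:}a_1, \dots, x_n{:}a_n)$ still satisfies $\emptymset \notin^-$; hence $\emptymset \notin^- \tau$. Clause~(i) then forces each $b_j \neq \emptymset$, and the contrapositive of clause~(iii) applied to $\tau$ gives $\emptymset \notin^+ \gamma$ for every $\gamma \in b_j$. Therefore each $t_j$ is typed by some $\gamma \in b_j$ in a sub-context satisfying the IH hypotheses, so inductively $t_j$ contains no $\bot$, whence neither does $t$.

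Once $t \in \App$ is known to contain no $\bot$, unfolding the definition of $\da{-}$ (Definition~\ref{def:omegastuff}) shows that any $M' =_\beta M$ with $\da{M'} = t$ must itself be in $\beta$-normal form, so $M$ has a $\beta$-nf. The main obstacle in the argument is the careful polarity book-keeping: the conditions $\emptymset \notin^+ \alpha$ and $\emptymset \notin^- \Gamma$ must be tracked through every decomposition of a typing derivation via the three clauses generating $\in^{\pm}$. The critical point is clause~(i), which is exactly what forces each $b_j$ to be non-empty in the second direction---without this, the structural induction could not reach a subterm $t_j$ of $t$ appearing at an empty-multiset argument position, since such an untyped argument would not fall under the inductive hypothesis and could in principle hide occurrences of $\bot$.
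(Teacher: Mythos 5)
Your proposal is correct. The direction $(\ref{thm:characterization_nf1})\Rightarrow(\ref{thm:characterization_nf2})$ matches the paper's (the paper just says ``straightforward induction'' on the typing of the normal form; your explicit construction with a fresh atom $\xi$ and singleton arguments $[\beta_j]$ is exactly what makes it go through). For $(\ref{thm:characterization_nf2})\Rightarrow(\ref{thm:characterization_nf1})$ you take a genuinely different route. The paper does \emph{not} go through the Approximation Theorem: it uses Lemma~\ref{lemma:hr-decreases} (the $\app$-bound on head reduction) to obtain a head normal form $\lam x_1\dots x_n.x_iM_1\cdots M_k$ of $M$ typed by $(\Gamma,\alpha)$, performs the same polarity bookkeeping you do to get $b_j\neq\emptymset$, $\emptymset\notin^+\beta_{j\ell}$ and $\emptymset\notin^-\Gamma_{j\ell}$, and then applies the induction hypothesis directly to the $\lam$-terms $M_j$, the induction being a structural one on the pair $(\Gamma,\alpha)$ (the types $\beta_{j\ell}$ and contexts $\Gamma_{j\ell}$ are structurally smaller). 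You instead invoke Theorem~\ref{thm:app} to get a finite approximant $t\in\BT[*]{M}$ with $\Gamma\vdash_\cE t:\alpha$, prove by structural induction on $t$ that $t$ is $\bot$-free, and then read off from Definition~\ref{def:omegastuff} that the $M'=_\beta M$ with $\da{M'}=t$ is itself a $\beta$-normal form. The two arguments use the identical polarity-propagation core (and both hinge on clause~(i) forcing $b_j\neq\emptymset$, as you rightly emphasize); yours buys a transparently well-founded induction on the term $t$ at the price of invoking the heavier Approximation Theorem, whereas the paper's needs only the lighter head-normalization lemma but relies on a slightly more delicate induction measure on $(\Gamma,\alpha)$. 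Both are valid.
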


\begin{proof}
(\ref{thm:characterization_nf1} $\Rightarrow$ \ref{thm:characterization_nf2})
Straightforward induction on the derivation of $\Gamma\vdash \nf(M) : \alpha$.

(\ref{thm:characterization_nf2} $\Rightarrow$ \ref{thm:characterization_nf1})
We proceed by structural induction on the pair $(\Gamma,\alpha)$.
By~Lemma~\ref{lemma:hr-decreases}, $M$ has a head normal form $\lam x_1\dots x_n.x_iM_1\cdots M_k$ having type $\alpha$ in the context $\Gamma$, which entails $\alpha = a_1\to\cdots\to a_n\to \alpha'$. 
So, there exists a derivation of the form:
$$
	\infer=[\texttt{lam}]{\Gamma\vdash \lam x_1\dots x_n.x_iM_1\cdots M_k : \alpha = a_1\to\cdots\to a_n\to\alpha'}{
		\infer=[\texttt{app}]{\Gamma,x_1:a_1,\dots,x_n:a_n\vdash x_iM_1\cdots M_k : \alpha'}{
		\infer[\texttt{var}]{\Gamma_0\vdash x_i : b_1\to\cdots\to b_k\to \alpha'}{}
		&
		\Gamma_{j\ell}\vdash M_j : \beta_{j\ell}
		&
		1\le j\le k
		&
		1\le \ell\le k_j
		}
	}
$$
where $\Gamma_0 = [x_i : b_1\to\cdots\to b_k\to \alpha']$, for all $j\le k$ we have $b_j = [\beta_{j1},\dots,\beta_{jk_j}]$ and $\Gamma,x_1:a_1,\dots,x_n:a_n = \Gamma_0 + (\sum_{j =1}^k\sum_{\ell =1}^{k_j} \Gamma_{j\ell})$.
Since $\emptymset\notin^+\alpha$ and $\emptymset\notin^-\Gamma$ we get for each $j\le k$ that $b_j$ is non-empty, and moreover that $\emptymset\notin^+\beta_{j\ell}$ and $\emptymset\notin^-\Gamma_{j\ell}$ for all $\ell\le k_j$.
From the induction hypothesis, we get that all the $M_i$'s have a $\beta$-normal form.
\end{proof}

As proved in~\cite{ManzonettoR14}, Theorem~\ref{thm:characterization_nf} holds for every relational graph model~$\cD$ \emph{preserving $\emptymset$-polarities} (in a technical sense). 
An analogue of this theorem also holds for the usual intersection type systems~\cite{interbcd}. 
However, for an extensional relational graph model $\cD$ this lemma is enough to conclude $\Thle(\cD) =\, \sqle_\Hpl$, while this is not the case for filter models. 
In other words, being extensional and preserving $\emptymset$-polarities are sufficient conditions for a relational graph model to be fully abstract for $\Hpl$. 
We will now provide conditions that are both necessary and sufficient.


\section{Characterizing Fully Abstract Relational Models of $\Hpl$}\label{sec:Hpl}
In this section we provide a characterization of those relational graph models that are (inequationally) fully abstract for~$\BTeth$. 
We first introduce the notion of \emph{\lam-K\"onig} relational graph model (Definition~\ref{def:lambdaK}), and show
that a relational graph model $\cD$ is extensional and \lam-K\"onig exactly when the induced inequational theory is the preorder $\sqle_\Hpl$ (Theorem~\ref{thm:semantic_characterization}). 

Since our proof technique does not rely on the quantitative properties of relational graph models, hereafter we rather prefer to use the categorical interpretation.

\subsection{Lambda K\"onig Relational Graph Models}

Before entering into the technicalities we try to give the intuition behind our condition. 
The main issue is to find suitable conditions for assuring that if $M,N$ have the same interpretation in a relational graph model $\cD$ then $M =_{\Hpl}\! N$, equivalently that $M \neq_{\Hpl}\! N$ implies $\Int[\cD]{M} \not= \Int[\cD]{N}$.
Now, the idea behind Theorem~\ref{thm:newsep} is that two \lam-terms $M,N$ are equal in $\Hst$, but different in $\BTeth$, when there is a (possibly \emph{virtual}\footnote{%
Intuitively, a position $\sigma$ is virtual if it does not belong to $\BT{M}$, but rather to one of its $\eta$-expansions.
For more details we refer to~\cite[\S10.3]{Bare}.
}) position $\sigma\in\Seq$ such that, say, $\BT{M}_\sigma = x$ while $\BT N_\sigma$ is an infinite $\eta$-expansion of $x$ following some $T\in\Trees$. 
As a consequence of this fact, our models need to separate $x$ from any $\bJ_Tx$ for $T\in\Trees$ in order to be fully abstract for $\Hpl$.

Notice now that in any extensional relational graph model $\cD$, every element $\alpha_0$ is equal to an arrow, so one can always try to unfold $\alpha$ following a function $f$, starting with:
$$
	\alpha_0 = a_0\to\dots\to a_{f(0)}\to\beta
$$
If there is an $\alpha_1\in a_{f(0)}$, then one can keep unfolding:
$$
	\alpha_1= a'_0\to\dots\to a'_{f(1)}\to\beta'
$$
and so on. More generally, at level $\ell$ we have $\alpha_\ell = b_0\to\dots\to b_{f(\ell)}\to\alpha'$ for some $b_i\in \Mfin{D}$ and $\alpha'\in D$ and as long as there exist an $\alpha_{\ell+1}\in a_{f(\ell)}$, we can keep unfolding it at level $\ell+1$.
There are now two possibilities.
	\begin{enumerate}
	\item 
	If this process continues indefinitely, then we consider that $\alpha$ can actually be unfolded following $f$. 
	\item Otherwise, if  at some level $\ell$ we have $a_{f(\ell)} = \emptymset$, then the process is forced to stop and we consider that $\alpha$ cannot be unfolded following $f$.
	\end{enumerate}

Now, as $T\in\Trees$ is a finitely branching infinite tree, by K\"onig's lemma there exists an infinite path $f$ in $\BT{\bJ_T}$. Since the interpretation of $\bJ_T$ is inductively defined (rather than coinductively), we will then have that $[\alpha]\to\alpha\notin\Int{\bJ_T}$ for any $\alpha$ whose unfolding can actually follow~$f$.

In some sense such an $\alpha$ is \emph{witnessing within the model} the existence of an infinite path $f$ in $T$, and therefore in $\bJ_T$. 
The following is a formal definition
 of such a witness.

Recall from Definition~\ref{def:inf_paths}, that $\pathsof{T}$ denotes the set of infinite paths of a tree $T\in\Trees[]$.

\begin{definition}\label{def:witness} 
Let $\cD$ be an \rgm, $T\in\Trees$ and $f\in\pathsof{T}$.
\begin{itemize}
\item We coinductively define the set $\Wit[\cD,f]{T}$ of all \emph{witnesses for $T$ (in $\cD$) following $f$}.
	An element $\alpha\in D$ belongs to $\Wit[\cD,f]{T}$ whenever there exist $a_{0},\dots, a_{f(0)}\in\Mfin{\cD}$ and $\alpha'\in\cD$ such that
	$$
		\alpha = a_{0}\to\cdots\to a_{f(0)}\to\alpha'
	$$
	and there is a $\beta\in a_{f(0)}$ belonging to $\Wit[\cD,\funshift{f}{1}]{\subt{T}{\langle f(0)\rangle}}$ where $\funshift{f}{1}$ maps $k\mapsto f(k+1)$.

\item 
	We say that $\alpha$ is a \emph{witness for $T$ in $\cD$} when there exists an $f\in\pathsof{T}$ such that $\alpha$ is a witness for $T$ in $\cD$ following $f$.
\item
	We let $\Wit[\cD]{T}$ be the set of all witnesses for $T$ in $\cD$.
\end{itemize}
\end{definition}
We formalize the intuition given above by showing that $\Wit[\cD]{T}$ is constituted by those $\alpha\in D$ such that $[\alpha]\to\alpha\notin\Int{\bJ_T}$. 
We first prove the following technical lemmas.

\begin{lemma}\label{lemma_ruoppoloide}
Let $\cD$ be an \rgm.
For all $T\in\Trees$ and $x\in\Var$ we have $\Int[\cD]{\bJ_T x}_{x}\subseteq \Int[\cD]{x}_{x}$.
\end{lemma}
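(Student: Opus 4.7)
The plan is to combine the Approximation Theorem (Theorem~\ref{thm:app}) with a structural induction on the finite approximants of $\BT{\bJ_T x}$. By Theorem~\ref{thm:app} and Theorem~\ref{thm:type_semantics}\eqref{thm:type_semantics2} it suffices to show that $\Int[\cD]{t}_{x}\subseteq \Int[\cD]{x}_{x}$ for every $t\in\BT[*]{\bJ_T x}$. I would actually prove a more uniform statement by induction on the size of $t$: for every recursive infinite tree $T'$, every variable $z$, and every $t\in\BT[*]{\bJ_{T'} z}$, one has $\Int[\cD]{t}_{z}\subseteq \Int[\cD]{z}_{z}$, i.e.\ whenever $(b,\alpha)\in\Int[\cD]{t}_{z}$ we have $b=[\alpha]$. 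Note that subtrees of recursive trees are recursive, so $\bJ_{T''}$ is available whenever the induction invokes it, via the $X\code{\sigma}$ construction from the proof of Proposition~\ref{prop:J_T}.

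The case $t=\bot$ is vacuous since $\Int[\cD]{\bot}=\emptyset$. For the non-trivial case, the recursive equation \eqref{eq:receq} gives $\bJ_{T'} z =_\beta \lambda z_1\cdots z_m. z(\bJ_{T'_0}z_1)\cdots(\bJ_{T'_{m-1}}z_{m})$, with $m=T'(\emptyseq)$ and $T'_i=\subt{T'}{\langle i\rangle}$. Hence every non-$\bot$ element of $\BT[*]{\bJ_{T'}z}$ is of the shape $t = \lambda z_1\cdots z_m.\, z\, t_1\cdots t_m$ with $t_i\in\BT[*]{\bJ_{T'_{i-1}}z_i}$ strictly smaller than $t$ (and when $m=0$ we simply get $t=z$, for which the claim is immediate).

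For the inductive step, given $(b,\alpha)\in\Int[\cD]{t}_z$ I would unfold the interpretation using Definition~\ref{def:IntSem}. Writing $\alpha = c_1\to\cdots\to c_m\to\beta$, Definition~\ref{def:IntSem}\eqref{def:IntSem2}-\eqref{def:IntSem3} together with the fact that $z\notin\FV(t_i)$ and $\FV(t_i)\cap\{z_1,\dots,z_m\}\subseteq\{z_i\}$ force $b = [a_1\to\cdots\to a_m\to\beta]$ with $a_i = [\gamma_{i,1},\dots,\gamma_{i,k_i}]$, and moreover $(c_i^{(i,j)},\gamma_{i,j})\in\Int[\cD]{t_i}_{z_i}$ for some $c_i^{(i,j)}\in\Mfin D$ satisfying $c_i=\sum_{j=1}^{k_i} c_i^{(i,j)}$. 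Applying the inductive hypothesis to each $t_i\in\BT[*]{\bJ_{T'_{i-1}}z_i}$ yields $c_i^{(i,j)}=[\gamma_{i,j}]$, whence $c_i = [\gamma_{i,1},\dots,\gamma_{i,k_i}] = a_i$. Therefore $\alpha = a_1\to\cdots\to a_m\to\beta$ and $b=[\alpha]$, as required.

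The main bookkeeping obstacle will be decomposing the application rule cleanly: one must use that each $t_i$ involves only the single variable $z_i$ among the abstracted ones (and never the head variable $z$) in order to pin down the shape of $b$ and to route each sub-judgement $(c_i^{(i,j)},\gamma_{i,j})\in\Int[\cD]{t_i}_{z_i}$ into the inductive hypothesis. Once this decomposition is in place, matching $a_i$ with $c_i$ is essentially an accounting exercise, and the claim follows.
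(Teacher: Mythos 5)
Your proposal is correct and follows essentially the same route as the paper's proof: reduce to finite approximants via the Approximation Theorem, induct on the approximant (with $\bot$ vacuous), decompose the head normal form $\lam z_1\dots z_m.z\,t_1\cdots t_m$ with $t_i\in\BT[*]{\bJ_{\subt{T}{\langle i-1\rangle}}z_i}$, and apply the inductive hypothesis to each $t_i$ to force $c_i=a_i$ and hence $b=[\alpha]$. The only difference is that you make explicit the strengthening of the induction hypothesis to arbitrary recursive subtrees, which the paper leaves implicit.
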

\begin{proof}
Let $(a,\alpha)\in \Int{\bJ_T x}_{x}$. By Theorem~\ref{thm:app} there is $t\in\BT[*]{\bJ_T x}$ such that ${(a,\alpha)\in\Int{t}_x}$.
Proceeding by induction on $t$, we show that $a = [\alpha]$. 
The case $t=\bot$ is vacuous.  

Consider $t = \lam z_0\dots z_{T(\emptyseq)-1}.xt_0\cdots t_{T(\emptyseq)-1}$ where  $t_i \in\BT[*]{\bJ_{\subt{T\ }{\langle i\rangle}} z_i}$.
By Definition~\ref{def:IntSem}(\ref{def:IntSem2}) we have $(a,\alpha)\in\Int{t}_{x}$ if and only if $\alpha = a_{0}\to\cdots\to a_{T(\emptyseq)-1}\to \alpha'$ for some 
$a_i\in\Mfin{D}$ and $\alpha'\in D$ such that 
$((a,a_{0},\dots, a_{T(\emptyseq)-1}), \alpha')\in\Int{xt_0\cdots t_{T(\emptyseq)-1}}_{x,z_0,\dots,z_{T(\emptyseq)-1}}$. 
By Definition~\ref{def:IntSem}\eqref{def:IntSem1} and \eqref{def:IntSem3}, we get $a = [b_{0}\to\cdots\to b_{T(\emptyseq)-1}\to \alpha']$ where $b_i = [\beta_{i,1},\dots,\beta_{i,k_i}]$ and there is a decomposition $a_i = \sum_{j=1}^{k_i}a_{i,j}$ such that $(a_{i,j},\beta_{i,j})\in\Int{t_i}_{z_i}$. 
By the inductive hypothesis we have $a_{i,j} = [\beta_{i,j}]$. 
Therefore $a_i = [\beta_{i,1},\dots,\beta_{i,k_i}] = b_i$ which in its turn entails $a =[ b_{0}\to\cdots\to b_{T(\emptyseq)-1}\to \alpha'] =  [a_{0}\to\cdots\to a_{T(\emptyseq)-1}\to \alpha'] = [\alpha].$  
\end{proof}

\begin{lemma}\label{lemma:technicalJ_Tk} 
Let $\cD$ be an \rgm.
For all $T\in\Trees$, $\alpha\in \Wit[\cD]{T}$ and $t\in \BT[*]{\bJ_{T}x}$ we have $([\alpha],\alpha)\notin\Int[\cD]{t}_{x}$.
\end{lemma}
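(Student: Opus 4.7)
The plan is to prove the statement by structural induction on the finite approximant $t\in\BT[*]{\bJ_T x}$, after fixing at the outset an infinite path $f\in\pathsof{T}$ witnessing $\alpha\in\Wit[\cD]{T}$. Note that the induction parameter $T$ (and the witness $\alpha$ with its path $f$) will vary in the inductive step: since $T$ only appears through $\BT[*]{\bJ_T x}$, the induction really bears on the structure of $t$, while $T$ and $\alpha$ are quantified universally inside the statement.

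The base case $t=\bot$ is immediate because $\Int[\cD]{\bot}_{x}=\emptyset$. For the inductive step, recall that $\bJ_Tx$ is $\beta$-convertible to $\lam z_0\dots z_{T(\emptyseq)-1}.\,x(\bJ_{\subt{T\,}{\langle 0\rangle}}z_0)\cdots(\bJ_{\subt{T\,}{\langle T(\emptyseq)-1\rangle}}z_{T(\emptyseq)-1})$, so every non-$\bot$ approximant has the shape $t=\lam z_0\dots z_{T(\emptyseq)-1}.\,x\,t_0\cdots t_{T(\emptyseq)-1}$ with $t_i\in\BT[*]{\bJ_{\subt{T\,}{\langle i\rangle}}z_i}$. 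Assuming by contradiction that $([\alpha],\alpha)\in\Int[\cD]{t}_{x}$, I would iterate Definition~\ref{def:IntSem}\eqref{def:IntSem2} and~\eqref{def:IntSem3}, and reproduce inside the proof the linearity argument already used in Lemma~\ref{lemma_ruoppoloide} (invoked on each $t_i$) to force every argument decomposition to be a singleton. This yields multisets $b_0,\dots,b_{T(\emptyseq)-1}\in\Mfin{D}$ and $\alpha''\in D$ such that
$$
\alpha \;=\; b_0\to\cdots\to b_{T(\emptyseq)-1}\to\alpha''
$$
and, for every $\beta\in b_i$, one has $([\beta],\beta)\in\Int[\cD]{t_i}_{z_i}$. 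Next, since $\alpha\in\Wit[\cD,f]{T}$, Definition~\ref{def:witness} provides another decomposition $\alpha=c_0\to\cdots\to c_{f(0)}\to\alpha'''$ with some $\gamma\in c_{f(0)}$ belonging to $\Wit[\cD,\funshift{f}{1}]{\subt{T\,}{\langle f(0)\rangle}}$. Because $f\in\pathsof{T}$ we have $f(0)<T(\emptyseq)$, so both decompositions share at least $f(0)+1$ arrows, and peeling off arrows one by one using the injectivity of $i:\Mfin{D}\times D\to D$ gives $c_i=b_i$ for all $i\le f(0)$. In particular $\gamma\in b_{f(0)}$, hence $([\gamma],\gamma)\in\Int[\cD]{t_{f(0)}}_{z_{f(0)}}$. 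Since $t_{f(0)}$ is structurally smaller than $t$ and $\gamma\in\Wit[\cD]{\subt{T\,}{\langle f(0)\rangle}}$ (via $\funshift{f}{1}$), the induction hypothesis applied to $(\subt{T\,}{\langle f(0)\rangle},\gamma,t_{f(0)})$ contradicts this, and concludes the step.

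The main technical obstacle is the matching step between the two arrow decompositions of $\alpha$: showing that the $b_i$'s read off from the typing of $t$ coincide, for $i\le f(0)$, with the $c_i$'s coming from the witness structure. This rests entirely on the injectivity of $i$ (uniqueness of arrow decomposition in an \rgm), and on the fact that $f(0)\le T(\emptyseq)-1$ guarantees both decompositions reach at least the $f(0)$-th level. Once this alignment is in place, the coinductive definition of $\Wit[\cD]{-}$ feeds the induction hypothesis precisely at the subterm $t_{f(0)}$, and the argument closes.
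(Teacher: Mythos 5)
Your proof is correct and follows essentially the same route as the paper's: induction on the approximant (the paper uses $\size{t}$, you use structural induction, which is equivalent here), Lemma~\ref{lemma_ruoppoloide} to force singleton argument multisets, and the witness at position $f(0)$ fed back into the induction hypothesis on $t_{f(0)}$. The only difference is that you spell out the alignment of the two arrow decompositions of $\alpha$ via the injectivity of $i$, a step the paper leaves implicit.
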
 
\begin{proof}
We proceed by induction on the size $\size{t}$ of $t$. 

Case $\size{t} = 0$. This case is trivial since $t =\bot$ and $\Int{\bot}_x = \emptyset$.

Case $\size{t} > 0$. 
Then $t = \lam z_0\dots z_{T(\emptyseq)-1}.xt_0\cdots t_{T(\emptyseq)-1}$ where each $t_i \in\BT[*]{\bJ_{\subt{T\,\, }{\langle i\rangle}} z_i}$ is such that $\size{t_i} < \size{t}$.
By Definition~\ref{def:IntSem}(\ref{def:IntSem2}) we have $([\alpha],\alpha)\in\Int{t}_{x}$ if and only if $\alpha = a_{0}\to\cdots\to a_{T(\emptyseq)-1}\to \alpha'$ for some $a_i = [\alpha_{i,1},\dots,\alpha_{i,k_i}]$ and
$(([\alpha],a_{0},\dots, a_{T(\emptyseq)-1}), \alpha')\in\Int{xt_0\cdots t_{T(\emptyseq)-1}}_{x,z_0,\dots,z_{T(\emptyseq)-1}}$.
As $\Int{t_i}_{z_i}\subseteq \Int{z_i}_{z_i}$ by Lemma~\ref{lemma_ruoppoloide}, we obtain $([\alpha_{i,j}],\alpha_{i,j})\in\Int{t_i}_{z_i}$ for all $i\le T(\emptyseq)-1$ and $j\le k_i$.
Since $\alpha\in\Wit[\cD,f]{T}$ for some $f$, there exists a witness $\alpha_{f(0),j}\in a_{f(0)}$ for $\subt{T}{\langle f(0)\rangle}$ following~$\funshift{f}{1}$.
By $\alpha_{f(0),j}\in\Wit{\subt{T}{\langle f(0)\rangle}}$ and the induction hypothesis we get $([\alpha_{f(0),j}],\alpha_{f(0),j})\notin\Int{t_{f(0)}}_{z_{f(0)}}$, which is a contradiction.
\end{proof}

By applying the Approximation Theorem we get the following characterization of~$\Wit[\cD]{T}$.

\begin{proposition}\label{prop:char_WT}
  For any extensional \rgm~$\cD$ and any tree $T\in\Trees$:
   $$ 
   		\Wit[\cD] T  =  \{\alpha\in D \st ([\alpha],\alpha)\not\in \Int{\bJ_Tx}_x\}. 
	$$
\end{proposition}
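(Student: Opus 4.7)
My plan is to establish the two inclusions separately. The inclusion $(\subseteq)$ will be immediate from Lemma~\ref{lemma:technicalJ_Tk} and the Approximation Theorem (Theorem~\ref{thm:app}): if $\alpha \in \Wit[\cD]{T}$ then $([\alpha],\alpha) \notin \Int[\cD]{t}_x$ for every $t \in \BT[*]{\bJ_T x}$, hence $([\alpha],\alpha) \notin \Int[\cD]{\bJ_T x}_x = \bigcup_{t \in \BT[*]{\bJ_T x}}\Int[\cD]{t}_x$.

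For the converse, I will argue coinductively in the style of Section~\ref{sec:prelim}, proving that every pair $(\alpha, T)$ with $([\alpha], \alpha) \notin \Int[\cD]{\bJ_T x}_x$ admits some $f \in \pathsof{T}$ witnessing $\alpha \in \Wit[\cD, f]{T}$; the coinductive hypothesis will then be applicable to any further pair $(\beta, T')$ satisfying the same condition. Fix such an $(\alpha, T)$. The first step will exploit extensionality of $\cD$ (Lemma~\ref{lemma:D_ext_actually_ext}) to iteratively peel off $T(\emptyseq)$ layers and write
$$
\alpha = a_0 \to \cdots \to a_{T(\emptyseq)-1} \to \alpha'
$$
for some $a_0,\dots,a_{T(\emptyseq)-1} \in \Mfin{D}$ and $\alpha' \in D$, matching the shape required by the defining clause of $\Wit[\cD,f]{T}$.

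The pivotal step will be to locate an index $i < T(\emptyseq)$ and an element $\beta \in a_i$ with $([\beta],\beta) \notin \Int[\cD]{\bJ_{\subt{T}{\langle i\rangle}} z_i}_{z_i}$. I will proceed by contradiction: if no such pair existed, then for every $i$ and every $\beta \in a_i$ the Approximation Theorem would deliver a finite approximant $t_i^\beta \in \BT[*]{\bJ_{\subt{T}{\langle i\rangle}} z_i}$ with $([\beta],\beta) \in \Int[\cD]{t_i^\beta}_{z_i}$. Since every $a_i$ is finite and the sets of finite approximants are directed (Section~\ref{sec:BT}), I will pick a single $t_i \in \BT[*]{\bJ_{\subt{T}{\langle i\rangle}} z_i}$ dominating all the $t_i^\beta$ for $\beta$ in the support of $a_i$; monotonicity of the interpretation with respect to $\BTle$ (Lemma~\ref{lemma:monotonicity} together with $\Int[\cD]{\bot} = \emptyset$) will preserve the typing, so $([\beta],\beta) \in \Int[\cD]{t_i}_{z_i}$ for every such $\beta$. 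Assembling $t := \lam z_0 \dots z_{T(\emptyseq)-1}.\, x\, t_0 \cdots t_{T(\emptyseq)-1} \in \BT[*]{\bJ_T x}$ and reversing the computation carried out in Lemma~\ref{lemma:technicalJ_Tk} will give $([\alpha],\alpha) \in \Int[\cD]{t}_x$, contradicting the standing assumption on $(\alpha,T)$ via Theorem~\ref{thm:app}.

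Once such $(i,\beta)$ is obtained, the coinductive hypothesis applied to $(\beta, \subt{T}{\langle i\rangle})$ will yield some $g \in \pathsof{\subt{T}{\langle i\rangle}}$ with $\beta \in \Wit[\cD, g]{\subt{T}{\langle i\rangle}}$, and defining $f \in \pathsof{T}$ by $f(0) := i$ and $f(k{+}1) := g(k)$ will complete a single productive application of the defining clause of $\Wit$, so $\alpha \in \Wit[\cD, f]{T}$. The main obstacle I anticipate lies not in the combinatorics of the contradiction step but in aligning the arrow decomposition of $\alpha$ with the outermost $T(\emptyseq)$-ary abstraction shape of $\bJ_T x$ — this is exactly where extensionality and the rigidity given by Lemma~\ref{lemma_ruoppoloide} jointly come into play — and in cleanly packaging the whole construction within the $\forall x\,\exists y\,\phi(x,y)$ coinductive pattern described in the preliminaries, with productive existential $(a_0,\dots,a_{T(\emptyseq)-1},\alpha',i,\beta,f)$.
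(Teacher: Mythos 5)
Your proposal is correct and takes essentially the same route as the paper: $(\subseteq)$ is Lemma~\ref{lemma:technicalJ_Tk} plus Theorem~\ref{thm:app}, and $(\supseteq)$ coinductively builds the path by decomposing $\alpha$ via extensionality and extracting a pair $(i,\beta)$ with $([\beta],\beta)\notin\Int[\cD]{\bJ_{\subt{T}{\langle i\rangle}}z_i}_{z_i}$ — the paper obtains that pair by directly unfolding (via soundness) the interpretation of the head normal form $\lambda z_0\dots z_n.x(\bJ_{\subt{T}{\langle 0\rangle}}z_0)\cdots(\bJ_{\subt{T}{\langle n\rangle}}z_n)$, while your detour through the Approximation Theorem and directedness of the approximant sets is a harmless variant of the same step. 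The one point you leave implicit, which the paper records explicitly, is that $([\beta],\beta)\notin\Int[\cD]{\bJ_{\subt{T}{\langle i\rangle}}z_i}_{z_i}$ forces $\subt{T}{\langle i\rangle}$ to be infinite (every finite $\eta$-expansion of $z_i$ has $([\beta],\beta)$ in its interpretation by extensionality), which is what licenses the coinductive call on a pair in $D\times\Trees$ and guarantees that $f$ is a genuine infinite path of $T$.
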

\begin{proof}
$(\subseteq)$ Follows immediately from the Approximation Theorem~\ref{thm:app} and from Lemma~\ref{lemma:technicalJ_Tk}.

$(\supseteq)$
    Let $\alpha\in D$ such that $([\alpha],\alpha)\not\in \Int{\bJ_T x}_x$.
We coinductively construct a path $f$ such that $\alpha\in \Wit[f] T$.
As $T$ is infinite we have $\bJ_Tx=_{\beta} \lambda z_0\dots z_n.x(\bJ_{\subt{T\,\, }{\langle 0\rangle}} z_0)\cdots(\bJ_{\subt{T\,\,}{\langle n\rangle}} z_n)$ and since~$\cD$ is extensional $\alpha = a_{0}\to\cdots\to a_{n}\to \alpha'$. 
From $([\alpha],\alpha)\not\in\Int{\bJ_Tx}_x$ and the soundness we get $([\alpha],\alpha)\notin\Int{\lambda z_0\dots z_n.x(\bJ_{\subt{T\,\, }{\langle0\rangle}} z_0)\cdots(\bJ_{\subt{T\,\,}{\langle n\rangle}} z_n)}_x$.
Therefore there exist an index $k\le n$ such that $a_k\neq\emptymset$ and an 
element $\beta\in a_k$ such that  $([\beta],\beta)\not\in \Int{\bJ_{\subt{T\,\,}{\langle k\rangle}} z_k}_{z_k}$. 
In particular, this entails that the subtree $\subt{T}{\langle k\rangle}$ is infinite because $([\beta],\beta)$ belongs to the interpretation of any finite $\eta$-expansion of $z_k$.
Therefore we set $f(0) = k$ and, for all $n\in\nat$, $f(n+1)=g(n)$ where $g$ is the function given by the coinductive hypothesis and satisfying $\beta\in\Wit[\cD,g]{\subt{T}{\langle k\rangle}}$.
By construction of $f$, we conclude that $\alpha\in\Wit[\cD,f]T$.
\end{proof}

It should be now clear that a relational graph model $\cD$, to be fully abstract for $\Hpl$, needs for every \lam-definable infinite $\eta$-expansion of the identity an element in $D$ witnessing its infinite path, which exists by K\"onig's lemma. This justifies the definition below.

\begin{definition}[\lam-K\"onig models]\label{def:lambdaK}
An \rgm{} $\cD$ is \emph{\lam-K\"onig} if for every $\treeof{T}\in\Trees$, $\Wit[\cD]{T}\neq\emptyset$.
\end{definition}


We will mainly focus on the \lam-K\"onig condition, since the extensionality is clearly necessary, as $\blam\eta\subseteq\Hpl$.

We start by showing that, if $\cD$ is an extensional \lam-K\"onig relational graph model, then $\Thle(\cD) =\ \sqle_{\Hpl}$.
Indeed, since every $T\in\Trees$ has a non-empty set of witnesses $\Wit[\cD] T$, 
by Proposition~\ref{prop:char_WT}, there is an element $\alpha\in\Wit[\cD] T$ such that $[\alpha]\to\alpha\notin\Int{\bI} - \Int{\bJ_T}$. 
Thus, $\cD$ separates $\bI$ from all the~$\bJ_T$'s for $T\in\Trees$.

\begin{theorem}[Inequational Full Abstraction]\label{thm:FA}
Let $\cD$ be an extensional \lam-K\"onig \rgm, then:
$$
	M\sqle_\Hpl N \iff \cD\models M\sqsubseteq N
$$
\end{theorem}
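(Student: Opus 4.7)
The plan is to prove the two inclusions separately. The direction $(\Rightarrow)$ is essentially immediate from the extensionality hypothesis: by Corollary~\ref{cor:extensionality}, extensionality of $\cD$ already yields $\sqle_\Hpl\subseteq\Thle(\cD)$, so $M\sqle_\Hpl N$ entails $\cD\models M\sqsubseteq N$. Note that the $\lambda$-K\"onig hypothesis is not used here.

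For the hard direction $(\Leftarrow)$, I would argue by contrapositive: assuming $M\not\sqle_\Hpl N$, I produce a context separating $\Int[\cD]{M}$ from $\Int[\cD]{N}$. First, split into two cases according to whether $M\sqle_\Hst N$ holds or not. If $M\not\sqle_\Hst N$, then by Corollary~\ref{cor:BTminimalth} the inequational theory $\Thle(\cD)$ is sensible (and consistent, since it contains $\BTth$), so by the maximality statement in Lemma~\ref{lemma:maximality} we have $\Thle(\cD)\subseteq\ \sqle_\Hst$; thus $\cD\models M\sqsubseteq N$ would force $M\sqle_\Hst N$, a contradiction.

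The interesting case is when $M\sqle_\Hst N$ but $M\not\sqle_\Hpl N$. Here I invoke the Morris Separation Theorem~\ref{thm:newsep} to get a context $C\hole-$ with $C\hole{M} =_{\beta\eta} \bI$ and $C\hole{N} =_\BTth \bold{J}_T$ for some $T\in\Trees$. Extensionality of $\cD$ gives $\Int[\cD]{C\hole{M}} = \Int[\cD]{\bI}$, while $\BTth\subseteq\Th(\cD)$ (Corollary~\ref{cor:BTminimalth}) gives $\Int[\cD]{C\hole{N}} = \Int[\cD]{\bold{J}_T}$. Now the $\lambda$-K\"onig hypothesis supplies an element $\alpha\in\Wit[\cD]{T}$, and by Proposition~\ref{prop:char_WT} this means $([\alpha],\alpha)\notin\Int[\cD]{\bold{J}_Tx}_x$; unfolding the definition of the interpretation of an abstraction and using extensionality once more ($\Int{\bold{J}_T} = \Int{\lam x.\bold{J}_Tx}$), this translates to $[\alpha]\to\alpha\notin\Int[\cD]{\bold{J}_T}$. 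On the other hand, $[\alpha]\to\alpha\in\Int[\cD]{\bI}$ by Example~\ref{ex:interp}(1). Hence $\Int[\cD]{C\hole{M}}\not\subseteq\Int[\cD]{C\hole{N}}$, and by the contrapositive of Lemma~\ref{lemma:monotonicity} we conclude $\Int[\cD]{M}\not\subseteq\Int[\cD]{N}$, i.e.\ $\cD\not\models M\sqsubseteq N$.

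The main obstacle is conceptual rather than technical: once Theorems~\ref{thm:newsep} and Proposition~\ref{prop:char_WT} are available, the full-abstraction argument reduces to observing that every \lam-definable infinite $\eta$-expansion $\bold{J}_T$ of the identity is semantically separated from $\bI$ precisely by the existence of a witness $\alpha\in\Wit[\cD]{T}$. The real content lies in the Morris Separation result (used as a black box here) and in the characterization of $\Wit[\cD]{T}$ via K\"onig's lemma, both of which were established earlier.
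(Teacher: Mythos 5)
Your proposal is correct and follows essentially the same route as the paper: the forward direction via Corollary~\ref{cor:extensionality}, and the converse via maximality of $\sqle_\Hst$ (Lemma~\ref{lemma:maximality}), the Morris Separation Theorem~\ref{thm:newsep}, and the existence of a witness $\alpha\in\Wit[\cD]{T}$ combined with Proposition~\ref{prop:char_WT} to separate $\Int{\bI}$ from $\Int{\bJ_T}$. The only difference is presentational — you argue by contrapositive with an explicit case split on $M\sqle_\Hst N$, whereas the paper argues by contradiction — and your unpacking of why $[\alpha]\to\alpha\in\Int{\bI}-\Int{\bJ_T}$ is a slightly more explicit rendering of the same final step.
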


\begin{proof} $(\Rightarrow)$ This follows directly from Corollary~\ref{cor:extensionality}.

$(\Leftarrow)$ We assume, by the way of contradiction, that $\cD\models M\sqsubseteq N$ but $M\not\sqle_\Hpl N$. 
By the maximality of $\sqle_\Hst$ shown in~Lemma~\ref{lemma:maximality} and the fact that $\Int{M}_{\seq x}\subseteq \Int{N}_{\seq x}$ we must have $M\sqle_\Hst N$.
By Theorem~\ref{thm:newsep} there exists a context $C\hole-$ such that $C\hole M =_{\beta\eta} \bI$ and $C\hole N =_{\BTth} \bJ_T$ for some $T\in\Trees$.
By monotonicity of the interpretation $\Int-$ and since $\BTth\eta\subseteq\BTeth\subseteq\Th(\cD)$ by Corollary~\ref{cor:extensionality}, we have $\Int{\bI} = \Int{C\hole M}\subseteq \Int{C\hole N} = \Int{\bJ_T}$. 
We derive a contradiction by applying Proposition~\ref{prop:char_WT}.
\end{proof}

We now show the converse, namely that if a relational graph model is (inequationally) fully abstract for $\Hpl$, then it is extensional and \lam-K\"onig.

\begin{theorem}\label{thm:theotherside} 
Let $\cD$ be an \rgm. If $\Thle(\cD)=\ \sqle_\Hpl$ or $\Th(\cD) = \Hpl$ then $\cD$ is extensional and \lam-K\"onig.
\end{theorem}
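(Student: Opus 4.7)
The proof has two parts. In either hypothesis case we have $\Th(\cD) = \Hpl$ (in the inequational case, take the symmetric part of $\Thle(\cD) =\, \sqle_\Hpl$). Since $\Hpl$ is extensional, $\blam\eta \subseteq \Th(\cD)$, and Lemma~\ref{lemma:D_ext_actually_ext}(\ref{lemma:D_ext_actually_ext2}) yields that $\cD$ is extensional.

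For the $\lambda$-K\"onig property, I argue by contradiction. Suppose there is $T \in \Trees$ with $\Wit[\cD]{T} = \emptyset$. Since $\cD$ is now known to be extensional, Proposition~\ref{prop:char_WT} gives $([\alpha],\alpha) \in \Int[\cD]{\bJ_T x}_x$ for every $\alpha \in D$, i.e., $\Int[\cD]{x}_x \subseteq \Int[\cD]{\bJ_T x}_x$. The converse inclusion is Lemma~\ref{lemma_ruoppoloide}, so the two interpretations are equal. Abstracting over $x$ via Definition~\ref{def:IntSem}(\ref{def:IntSem2}) and then using extensionality to contract $\lambda x.\bJ_T x$ into $\bJ_T$, I obtain $\cD \models \bJ_T = \bI$, whence $\bJ_T =_\Hpl \bI$. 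It thus suffices to establish $\bJ_T \neq_\Hpl \bI$.

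For this, apply Theorem~\ref{thm:informal1}: any $U$ witnessing $\bJ_T =_\Hpl \bI$ must satisfy both $\BT{\bI} \leeta U$ and $\BT{\bJ_T} \leeta U$. From the first, unfolding Definition~\ref{def:leetafin} at the root with $\BT{\bI} = \lambda x.x$ (so $k=0$), $U$ must have shape $\lambda x z_1\cdots z_m.xQ_1\cdots Q_m$ with every $Q_\ell \in \NF[\beta]$, so each direct child of $U$'s root is a finite tree. On the other hand, as $T$ is infinite and finitely branching, K\"onig's lemma yields an infinite path $f \in \pathsof{T}$; the child of $\BT{\bJ_T}$ at position $\langle f(0)\rangle$ then unfolds along the infinite subtree $\subt{T}{\langle f(0)\rangle}$ and is an infinite B\"ohm tree. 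A straightforward coinduction on Definition~\ref{def:leetafin} shows that finitary $\eta$-expansions preserve the infinitude of branches, so the corresponding child of $U$ (obtained from $\BT{\bJ_T}\leeta U$) is infinite, contradicting the finiteness established above. The main delicacy here is the root-level reconciliation of the two unfoldings of $U$: one must observe that the first $T(\emptyset)$ children arising from the $\BT{\bJ_T}$-side unfolding coincide positionally with the first $T(\emptyset)$ of the $Q_\ell$'s from the $\BT{\bI}$-side unfolding, which is what forces a putative infinite tree to equal a finite $\beta$-normal form.
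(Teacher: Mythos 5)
Your proposal is correct and follows essentially the same route as the paper: extensionality is immediate because $\Hpl$ is extensional, and the \lam-K\"onig property is obtained by contradiction via Proposition~\ref{prop:char_WT}, which turns $\Wit[\cD]{T}=\emptyset$ into $\Int[\cD]{\bI}=\Int[\cD]{\bJ_T}$ and hence $\bI=_\Hpl\bJ_T$. The only difference is that you spell out, via Theorem~\ref{thm:informal1} and K\"onig's lemma, why $\bI\neq_\Hpl\bJ_T$, a fact the paper simply asserts; that supplementary argument is sound.
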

\begin{proof} Obviously $\cD$ must be extensional since $\Hpl$ is an extensional \lam-theory.
By contradiction, we suppose that it is not \lam-K\"onig. Then there is $T\in\Trees$ such that $\Wit[\cD]{T} = \emptyset$ and, by Proposition~\ref{prop:char_WT}, we get $\Int{\bI}=\Int{\bJ_T}$.
This is impossible since $\bI \not\sqle_\Hpl \bJ_T$.
\end{proof}

From Theorem~\ref{thm:FA} and Theorem~\ref{thm:theotherside} we get the main result of this section.

\begin{theorem}\label{thm:semantic_characterization}
For an \rgm~$\cD$, the following are equivalent:
\begin{enumerate}[(i),ref={\roman*}]
\item $\cD$ is extensional and \lam-K\"onig,
\item $\cD$ is inequationally fully abstract for $\sqle_\Hpl$, 
\item $\cD$ is fully abstract for $\Hpl$.
\end{enumerate}
\end{theorem}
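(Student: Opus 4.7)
The theorem is essentially a repackaging of Theorem~\ref{thm:FA} and Theorem~\ref{thm:theotherside} together with a routine observation about preorders and their induced equivalences. The plan is to establish the equivalence by closing a three-cycle of implications: (i)~$\Rightarrow$~(ii)~$\Rightarrow$~(iii)~$\Rightarrow$~(i), with the bulk of the work already done by the preceding results.

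First I would handle (i)~$\Rightarrow$~(ii): this is exactly the content of Theorem~\ref{thm:FA}, which states that for an extensional \lam-K\"onig relational graph model $\cD$ we have $M\sqle_\Hpl N$ iff $\cD\models M\sqsubseteq N$, i.e.\ $\Thle(\cD) =\, \sqle_\Hpl$. Next, for (iii)~$\Rightarrow$~(i), I would simply invoke Theorem~\ref{thm:theotherside}, which already states that if $\Th(\cD) = \Hpl$ then $\cD$ is extensional and \lam-K\"onig.

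The only implication that is not literally stated in a previous result is (ii)~$\Rightarrow$~(iii). Here I would use the elementary fact that an equivalence relation is determined by its underlying preorder: $M =_\Hpl N$ holds exactly when both $M\sqle_\Hpl N$ and $N\sqle_\Hpl M$, and likewise $\cD\models M = N$ holds exactly when $\cD\models M\sqsubseteq N$ and $\cD\models N\sqsubseteq M$. So, assuming (ii), we get the chain $M =_\Hpl N \iff (M\sqle_\Hpl N \text{ and } N\sqle_\Hpl M) \iff (\cD\models M\sqsubseteq N \text{ and } \cD\models N\sqsubseteq M) \iff \cD\models M = N$, which is exactly (iii).

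There is no real obstacle here, since the conceptual content has been isolated in the two preceding theorems: Theorem~\ref{thm:FA} (whose proof relies on Morris separation, Theorem~\ref{thm:newsep}, to lift an equational discrepancy to a comparison between $\bI$ and some $\bJ_T$, and on Proposition~\ref{prop:char_WT} to detect such a discrepancy semantically through a witness in $\Wit[\cD]{T}$) and Theorem~\ref{thm:theotherside} (which uses Proposition~\ref{prop:char_WT} in the converse direction: absence of any witness for some $T\in\Trees$ forces $\Int[\cD]{\bI} = \Int[\cD]{\bJ_T}$, contradicting $\bI \not\sqle_\Hpl \bJ_T$). The present theorem is thus a short formal corollary, and the proof should consist of little more than citing those two theorems and noting the preorder-to-equivalence bookkeeping for (ii)~$\Rightarrow$~(iii).
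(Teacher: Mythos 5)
Your proposal is correct and matches the paper's own argument, which likewise derives the theorem directly from Theorem~\ref{thm:FA} and Theorem~\ref{thm:theotherside}, with the implication (ii)~$\Rightarrow$~(iii) being exactly the routine observation that the equational theory $\Th(\cD)$ (resp.\ $\Hpl$) is the symmetrization of $\Thle(\cD)$ (resp.\ $\sqle_\Hpl$). Nothing is missing.
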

In particular, the model $\cD_\star$ induces the same inequational and \lam-theories as Coppo, Dezani-Ciancaglini and Zacchi's filter model~\cite{CoppoDZ87}, a result first appeared  in \cite{ManzonettoR14}.

\begin{corollary}\label{cor:Dstar}
The model $\cD_\star$ of Example~\ref{ex:graphrel} is inequationally fully abstract for Morris's preorder $\sqle_\Hpl$. In particular $\Th(\cD_\star) = \Hpl$.
\end{corollary}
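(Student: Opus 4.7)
\textbf{Proof proposal for Corollary~\ref{cor:Dstar}.}
The plan is to deduce the corollary from the semantic characterization in Theorem~\ref{thm:semantic_characterization}: it suffices to verify that $\cD_\star$ is extensional and \lam-K\"onig.

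Extensionality is immediate from the construction. Indeed, the partial pair $(\{\star\},\{([\star],\star)\mapsto \star\})$ underlying $\cD_\star$ is extensional, since the map $j$ is a bijection between $\dom(j)=\{([\star],\star)\}$ and $\{\star\}$. Hence by the preservation of extensionality under free completion, $\cD_\star$ is an extensional relational graph model.

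For the \lam-K\"onig property, my plan is to exhibit a \emph{single} universal witness, namely~$\star$ itself. I would prove that $\star\in\Wit[\cD_\star,f]{T}$ for every $T\in\Trees$ and every path $f\in\pathsof{T}$ (which is non-empty by K\"onig's lemma as $T$ is infinite). The argument is by coinduction on the definition of $\Wit$. The crucial observation is the chain of identifications given by the defining equation $\star=[\star]\to_\iota\star$: by repeatedly substituting the right-hand $\star$ with $[\star]\to_\iota\star$, one obtains
\[
  \star \;=\; \underbrace{[\star]\to\cdots\to[\star]}_{k+1\text{ copies}}\to\star
\]
for every $k\geq 0$. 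Applying this with $k=f(0)$, the element $\star$ admits the decomposition required by Definition~\ref{def:witness} with $a_0=\cdots=a_{f(0)}=[\star]$ and $\alpha'=\star$. The candidate element $\beta\in a_{f(0)}=[\star]$ is $\star$ itself, and by coinductive hypothesis $\star$ belongs to $\Wit[\cD_\star,\funshift{f}{1}]{\subt{T}{\langle f(0)\rangle}}$, closing the coinduction.

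Having established extensionality and the \lam-K\"onig property, Theorem~\ref{thm:semantic_characterization} yields both $\Thle(\cD_\star)=\ \sqle_\Hpl$ and $\Th(\cD_\star)=\Hpl$. The only real delicacy lies in making clean the informal ``unfolding'' $\star=[\star]\to\cdots\to[\star]\to\star$: since $\cD_\star$ is extensional the arrow decomposition of $\star$ is unique, but the self-equation $\star=[\star]\to\star$ lets us re-write this decomposition with arbitrarily many nested $[\star]$-domains, which is exactly the flexibility Definition~\ref{def:witness} needs.
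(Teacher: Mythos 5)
Your proposal is correct and follows exactly the route the paper intends: the corollary is an immediate instance of Theorem~\ref{thm:semantic_characterization}, with extensionality coming from the free completion of the extensional partial pair and the \lam-K\"onig property witnessed by the single element $\star$, which (since $\star=[\star]\to\star$ unfolds to $[\star]\to\cdots\to[\star]\to\star$ with any number of arrows) belongs to $\Wit[\cD_\star,f]{T}$ for every $T\in\Trees$ and every $f\in\pathsof{T}$ by a straightforward coinduction. The paper leaves this verification implicit, so your write-up simply supplies the missing details of the same argument.
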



\section{Characterizing Fully Abstract Relational Models of $\Hst$}\label{sec:Hst}
In this section we provide a characterization of those relational graph models that are (inequationally) fully abstract for~$\Hst$\!. 
At this purpose, we introduce the notion of  \emph{hyperimmune}  relational graph model (Definition~\ref{def:hyperim}), which is in some sense dual to the notion of \lam-K\"onig.
We prove that a relational graph model $\cD$ is extensional and hyperimmune exactly when the induced inequational theory is the preorder $\sqle_\Hst$ (Theorem~\ref{thm:semantic_characterization2}).

The technique in this section is B\"ohm-like tree oriented, therefore it is convenient to fix some notations concerning B\"ohm-like trees and their interpretation.

\begin{notation} We simply denote by $\jc^x_T$ the tree $\BT{\jc_Tx}$, where $x$ is a fresh variable. 
Given a B\"ohm-like tree $V$, we write $V^*$ for the set $\{t\in\App\st t\BTle V\}$ of its finite approximants and we set $\Lint[\cD]{V} = \bigcup_{t\in V^*} \Lint[\cD]{t}$.
\end{notation}

\subsection{Decomposing Infinite $\eta$-Expansions}
When considering B\"ohm trees of \lam-terms, the infinite $\eta$-expansion  $\leeta[\infty]$ can be decomposed into the finitary one $\,\leeta\,$ followed by a more restricted infinite $\eta$-expansion $\,\leeta[!]$ that only allows to $\eta$-expand variables (Lemma~\ref{lemma:decomposeGeeta}). 
 
Remember that the difference between $\leeta[\infty]$ and  $\leeta$ lies in the fact that the former allows countably many possibly infinite $\eta$-expansions, whereas the latter only countably many finite ones. 
As first noticed by Severi and de~Vries in their recent work on the infinitary $\lambda$-calculus~\cite{SeveriV16}, in a way this difference only concerns $\eta$-expansions of variables. 
Consider for instance $yy\not\leeta  \lam x. y\jc^y \jc^x$ and $yy \leeta[\infty] \lam x. y\jc^y \jc^x$. The tree  $\lam x. y\jc^y \jc^x$ is an infinite $\eta$-expansion of $yy$, which is not a variable. Nevertheless, one can narrow down to infinite $\eta$-expansions  of the variables $x$ and $y$, by noticing that $yy \leeta \lam x. yyx  \leeta[\infty] \lam x. y\jc^y\jc^x$. 

\begin{definition}\label{def:eta!inf}
Let $\leeta[!]$ be the greatest relation between B\"ohm-like trees such that $U\leeta[!]V$ entails that:
  \begin{itemize}
  \item $U=V=\bot$,
  \item or $U = x$ and $V= \jc^x_T\,$ for $T\in\Trees$,
  \item or $U = \lambda x_1\dots x_n.x_i U_1\cdots U_k$ and $V =\lambda x_1\dots x_n. x_i V_1\cdots V_k$ (for some $i,k,n\in\nat$) where $U_j\leeta[!] V_j$ for all $j\le k$.
  \end{itemize}
\end{definition}
\begin{figure}[t!]
\begin{tikzpicture}
\node (myspot) at (-5.9,0) {~};
\node (A) at (-6.5,3.5) {$\bU$};
\node[below of = A, node distance=11pt] (eq) {$\shortparallel$};
\node[below of = eq, node distance=10pt] (l1) {\!\!\!\!\!\!\!\!\!$\lam xy.x$};
\node[below of = l1, node distance=18pt,xshift=20pt] (l2) {$x$};
\node[below of = l2, node distance=18pt,xshift=20pt] (l3) {$x$};
\node[below of = l3, node distance=18pt,xshift=20pt] (l4) {$x$};
\node[below of = l4, xshift=0pt, node distance=10pt] (l5) {$\vdots$};
\draw ($(l1.south)+(10pt,0.05)$) -- ($(l2.north)-(0.1,0.05)$);
\draw ($(l2.south)+(6pt,0.05)$) -- ($(l3.north)-(0.1,0.05)$);
\draw ($(l3.south)+(6pt,0.05)$) -- ($(l4.north)-(0.1,0.05)$);

\node[below of = l1, node distance=18pt,xshift=-20pt] (l12) {$y$};
\draw ($(l1.south)+(-6pt,0.05)$) -- ($(l12.north)+(1pt,-0.1)$);

\node[below of = l2, node distance=18pt,xshift=-20pt] (l22) {$y$};
\draw ($(l2.south)+(-6pt,0.05)$) -- ($(l22.north)+(1pt,-0.1)$);

\node[below of = l3, node distance=18pt,xshift=-20pt] (l32) {$y$};
\draw ($(l3.south)+(-6pt,0.05)$) -- ($(l32.north)+(1pt,-0.1)$);
\node (ge) at (-3,3.5) {$\leeta[!]$};
\node (B) at (.5,3.5) {$\bV$};
\node[below of = B, node distance=11pt] (eq) {$\shortparallel$};
\node[below of = eq, node distance=10pt] (l1) {\!\!\!\!\!\!\!\!\!$\lam xy.x$};
\node[below of = l1, node distance=18pt,xshift=20pt] (l2) {$x$};
\node[below of = l2, node distance=18pt,xshift=20pt] (l3) {$x$};
\node[below of = l3, node distance=18pt,xshift=20pt] (l4) {$x$};
\node[below of = l4, xshift=0pt, node distance=10pt] (l5) {$\vdots$};
\draw ($(l1.south)+(10pt,0.05)$) -- ($(l2.north)-(0.1,0.05)$);
\draw ($(l2.south)+(6pt,0.05)$) -- ($(l3.north)-(0.1,0.05)$);
\draw ($(l3.south)+(6pt,0.05)$) -- ($(l4.north)-(0.1,0.05)$);

\node[below of = l1, node distance=18pt,xshift=-55pt] (l12) {$\lam z_0.y$};
\node[below of = l12, node distance=18pt] (l13) {$\lam z_1.z_0$};
\node[below of = l13, node distance=18pt] (l14) {$\lam z_2.z_1$};
\node[below of = l14, xshift=8pt, node distance=10pt] (l15) {$\vdots$};
\draw ($(l1.south)+(-8pt,0.05)$) -- ($(l12.north)+(8pt,-0.15)$);
\draw ($(l12.south)+(8pt,0.05)$) -- ($(l13.north)+(8pt,-0.15)$);
\draw ($(l13.south)+(8pt,0.05)$) -- ($(l14.north)+(8pt,-0.15)$);

\node[below of = l2, node distance=18pt,xshift=-34pt] (l22) {$\lam z_0.y$};
\node[below of = l22, node distance=18pt] (l23) {$\lam z_1.z_0$};
\node[below of = l23, xshift=8pt, node distance=10pt] (l24) {$\vdots$};
\draw ($(l2.south)+(-2pt,0.05)$) -- ($(l22.north)+(8pt,-0.15)$);
\draw ($(l22.south)+(8pt,0.05)$) -- ($(l23.north)+(8pt,-0.15)$);

\node[below of = l3, node distance=18pt,xshift=-17pt] (l32) {$\lam z_0.y$};
\node[below of = l32, xshift=8pt, node distance=10pt] (l33) {$\vdots$};
\draw ($(l3.south)+(0pt,0.05)$) -- ($(l32.north)+(8pt,-0.15)$);

\end{tikzpicture}
\caption{An example of infinitely many $\eta!$-reductions. \label{fig:exEta!}}
\end{figure}

\begin{example}
We have $\lam x. yyx  \leeta[!] \lam x. y\jc^y\jc^x$, whereas $yy \not\leeta[!] \lam x. y\jc^y\jc^x.$ 
\end{example}

Clearly $\leeta[!]$ is a subrelation of $\leeta[\infty]$, as it is the case for $\leeta$. 
Also notice that $\leeta[!]$ and $\leeta$ are completely orthogonal relations, in the sense that  $U\!\leeta\! V$ and $U\!\leeta[!]\!V$ imply $U=V$.

For technical reasons we also need an \emph{inductive} version of the relation $\leeta[!]$, that we denote by $\!\mstoinv[\eta!]$. 
Intuitively $U \mstoinv[\eta!] V$ means that $V$ is obtained from $U$ by performing \emph{finitely} many infinite $\eta$-expansions of variables. 
 (The notation $\!\mstoinv[\eta!]$ is borrowed from~\cite{SeveriV16}, although Severi and de Vrijes use the symbol for a more general notion of $\eta!$-rule.)

As we will see, actually hyperimmune relational graph models cannot distinguish between the relation $U \mstoinv[\eta!] V$ and its coinductive version $U\leeta[!]V,$ in the sense expressed by the equivalence (\ref{lemma:hyp=>eta!3}$\iff$\ref{lemma:hyp=>eta!4}) in Proposition~\ref{lemma:hyp=>eta!}.

\begin{definition}\label{def:eta!fin}
Let $\!\mstoinv[\eta!]\,$ be the smallest relation between B\"ohm-like trees closed under the following rules:
  \begin{itemize}
  \item $U \,\mstoinv[\eta!] U$ for $U\in\BTset$,
  \item $x \,\mstoinv[\eta!] \jc^x_T$ for $T\in\Trees$,
  \item $\lambda x_1\dots x_n. x_i U_1\cdots U_k \,\mstoinv[\eta!] \lambda x_1\dots x_n.x_i V_1\cdots V_k$ (for some $i,k,n\in\nat$) and $U_j \mstoinv[\eta!] \!V_j$ for all $j\le k$.
  \end{itemize}
\end{definition}
Notice that this can be seen as the inductive version of the coinductive Definition~\ref{def:eta!inf}.
Alternatively, one can define $\msto[\eta!]$ as the transitive-reflexive and contextual closure of 
$$
(\eta!)\qquad \jc^x_T \redto[\eta!] x\textrm{ \;\;for all } T\in\Trees.
$$

\begin{example}\label{ex:BTlikethinghys}
Let $\bU = \BT{\lambda xy.\bY (\lambda u.xyu)} $ and $\bV = \BT{\lambda xy.\bY (\lambda u.x(\jc y)u)}$. 
These two trees are depicted in Figure~\ref{fig:exEta!}. 
We have that $\bU\leeta[!] \bV$ while $\bU \mstoinv[\eta!]\!\!\!\!\!\not\ \ \bV$, because $\bV$ is obtained from $\bU$ by performing an infinite amount of $\eta!$-expansions of variables.
\end{example}

\begin{lemma}(Decomposition of $\leeta[\infty]$)\label{lemma:decomposeGeeta}
Let $M,N\in\Lambda$. 
We  have $\BT{M}\leeta[\infty]\BT{N} $ if and only if there exists a B\"ohm-like tree $W$ such that $\BT{M} \leeta W\leeta[!]\BT{N}. $
\end{lemma}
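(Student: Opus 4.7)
The plan is to prove both directions by coinduction on the defining relations. For the backward direction, it suffices to observe that both $\leeta$ and $\leeta[!]$ are subrelations of $\leeta[\infty]$ (immediate by comparing the defining clauses) and that $\leeta[\infty]$ is closed under composition. The latter follows by a routine coinductive argument: given $U_1 \leeta[\infty] U_2 \leeta[\infty] U_3$, one unfolds both relations one step and applies the coinductive hypothesis to the matching subtrees. Composing $\BT{M} \leeta W \leeta[!] \BT{N}$ then yields $\BT{M} \leeta[\infty] \BT{N}$.

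For the forward direction I would construct $W$ coinductively, level by level. If $\BT{M} = \bot = \BT{N}$, set $W = \bot$. Otherwise, by Definition~\ref{def:leeta} we must have $\BT{M} = \lam x_1\dots x_n.x_i U_1\cdots U_k$ and $\BT{N} = \lam x_1\dots x_n z_1\dots z_m.x_i V_1\cdots V_k V'_1\cdots V'_m$ with $U_j \leeta[\infty] V_j$ for $j\le k$ and $z_\ell \leeta[\infty] V'_\ell$ for $\ell\le m$. The coinductive hypothesis applied to each $U_j \leeta[\infty] V_j$ yields $W_j$ with $U_j \leeta W_j \leeta[!] V_j$. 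Since each $z_\ell$ is a variable, unfolding $\leeta[\infty]$ forces $V'_\ell$ to be an infinite $\eta$-expansion of $z_\ell$, hence of the form $\jc^{z_\ell}_{T_\ell}$ for some infinite tree $T_\ell$. I then set
$$ W = \lam x_1\dots x_n z_1\dots z_m.x_i W_1\cdots W_k z_1\cdots z_m. $$
The relation $\BT{M} \leeta W$ is immediate from Definition~\ref{def:leetafin}, by choosing $Q_\ell = z_\ell$; the relation $W \leeta[!] \BT{N}$ follows from the third clause of Definition~\ref{def:eta!inf} together with the coinductive hypotheses $W_j \leeta[!] V_j$ and the second clause $z_\ell \leeta[!] \jc^{z_\ell}_{T_\ell}$.

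The main obstacle will be verifying that each tree $T_\ell$ is actually recursive, as demanded by the second clause of Definition~\ref{def:eta!inf}. This is where the assumption $M,N \in \Lambda$ becomes essential. By Theorem~\ref{thm:BTMiffeffectiveT}, the restriction of $\BT{N}$ to its non-$\bot$ positions is partial recursive; moreover each $V'_\ell$ is a subtree of $\BT{N}$ that contains no occurrence of $\bot$, since $V'_\ell$ is an infinite $\eta$-expansion of a variable and every node of such a tree is labelled by an abstraction headed by a variable. The underlying tree $T_\ell$ is therefore decidable as a subtree of the computable non-$\bot$ fragment of $\BT{N}$, so $T_\ell \in \Trees$ as required.
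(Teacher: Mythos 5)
Your proof is correct and follows essentially the same route as the paper's: the backward direction via transitivity of $\leeta[\infty]$ together with the inclusions $\leeta\,\subseteq\,\leeta[\infty]$ and $\leeta[!]\,\subseteq\,\leeta[\infty]$, and the forward direction by the same coinductive construction $W = \lam\seq x\seq z.x_iW_1\cdots W_k z_1\cdots z_m$. The only (inessential) difference is that you justify the recursiveness of each $T_\ell$ by appealing directly to Theorem~\ref{thm:BTMiffeffectiveT}, whereas the paper invokes Proposition~\ref{prop:J_T}, whose forward direction is proved by exactly that argument.
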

\begin{proof}
 $(\Leftarrow)$  By transitivity of $\leeta[\infty]$, using the fact that both $\leeta$ and $\leeta[!]$ are contained in $\leeta[\infty]$.

$(\Rightarrow)$
We construct $W$ coinductively.

In case $M$ and $N$ are unsolvable we just take $W = \bot$.

 Otherwise, $M =_\beta \lam\seq x.x_iM_1\cdots M_k$ and $N =_\beta \lam\seq x z_1\dots z_m.x_iN_1\cdots N_{k}P_1\cdots P_m$ where $\BT{M_j}\leeta[\infty] \BT{N_j}$ for all $j\le k$ and $z_\ell\leeta[\infty] \BT{P_\ell}$ for all $\ell\le m$.
    By the coinductive hypothesis, for all $j$ there exists $W_j$ such that $\BT{M_j}\leeta W_j\leeta[!] \BT{N_j}$. 
    We then set $W = \lam\seq x z_1\dots z_m.x_i W_1\cdots W_n z_{1}\cdots z_{m}.$
    Clearly $\BT{M} \leeta W$. 
    In order to show that $W\leeta[!] \BT{N}$ holds we have to prove, for all $\ell\le m$, that  $\BT{P_\ell} = \jc^{z_{\ell}}_{T_\ell}$ for some $T_\ell\in\Trees$. 
Since $z_\ell \leeta[\infty] \BT{P_\ell}$, we conclude by Proposition~\ref{prop:J_T}. 
\end{proof}

As shown in~\cite{SeveriV16}, the decomposition of $\leeta[\infty]$ could be extended to all B\"ohm-like trees using possibly non-recursive infinite $\eta$-expansions of $x$ in the definition of $\leeta[!]$.
Since here we use $\leeta[!]$ in connection with hyperimmune relational graph models (Proposition~\ref{lemma:hyp=>eta!}), a semantic notion only concerning recursive trees,
it is crucial to use our restricted version.

\begin{lemma}\label{lemma:FiniteRed}
  Let $U,V$ be two B\"ohm-like trees such that $U\leeta[!] V$ and let $t\in U^*$. 
  Then there exists a B\"ohm-like tree $W$ such that $U \leeta[!]W\mstoinv[\eta!] V$ and $t\in W^*$.
\end{lemma}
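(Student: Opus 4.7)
The plan is to proceed by induction on the size $\size{t}$ (as in Definition~\ref{def:sizet}), exploiting the decomposition of $U \leeta[!] V$ dictated by the coinductive Definition~\ref{def:eta!inf}. The key observation is that $\leeta[!]$ only produces infinite expansions at positions where $U$ is a \emph{variable}; so as long as $t$ leaves those positions as $\bot$ we may freely absorb the whole infinite expansion of $V$ into $W$, whereas the finitely many positions where $t$ is already a variable require only finitely many applications of the rule $x \mstoinv[\eta!] \jc^x_T$ from Definition~\ref{def:eta!fin} to bring $W$ to $V$.

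In more detail, for the base case $t = \bot$ it suffices to take $W := V$: then $U \leeta[!] W$ holds by hypothesis, $W \mstoinv[\eta!] V$ by reflexivity of $\mstoinv[\eta!]$, and trivially $\bot \BTle W$. For the inductive step, $t = \lambda x_1\dots x_n. y\, t_1\cdots t_k$ is non-$\bot$, and since $t \BTle U$ we must have $U = \lambda x_1\dots x_n.y\,U_1\cdots U_k$ with $t_j \BTle U_j$ for all $j\le k$. I then split on which clause of Definition~\ref{def:eta!inf} witnesses $U \leeta[!] V$: either (i) $U$ is a variable $x$ (forcing $n=k=0$ and $t=x$) and $V = \jc^x_T$ for some $T\in\Trees$, in which case I set $W := x = U$ so that $U \leeta[!] W$ by reflexivity (instance of the third clause with $n=k=0$), $W \mstoinv[\eta!] V$ by the rule $x \mstoinv[\eta!] \jc^x_T$, and $t = x \BTle W$; or (ii) $V = \lambda x_1\dots x_n.y\,V_1\cdots V_k$ with $U_j \leeta[!] V_j$ for every $j$, in which case I apply the induction hypothesis to each triple $(U_j, V_j, t_j)$ -- legal since $\size{t_j} < \size{t}$ -- obtaining $W_j$ with $U_j \leeta[!] W_j \mstoinv[\eta!] V_j$ and $t_j \BTle W_j$, and then set $W := \lambda x_1\dots x_n.y\,W_1\cdots W_k$. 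The closure properties of $\leeta[!]$ (third clause of Definition~\ref{def:eta!inf}) and of $\mstoinv[\eta!]$ (third clause of Definition~\ref{def:eta!fin}) combine the premises into the desired $U \leeta[!] W \mstoinv[\eta!] V$, and $t \BTle W$ follows position-wise.

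There is essentially no hard part: the only subtlety is to notice that the $\bot$-positions of $t$ are precisely what allows $W$ to contain as much of $V$'s infinitary $\eta!$-structure as necessary, while the non-$\bot$ positions where $t$ is a variable are exactly those where a single rewriting step $x \mstoinv[\eta!] \jc^x_T$ is needed; the restriction in Definition~\ref{def:eta!inf} that $\leeta[!]$ only expands variables (and does not alter the underlying spine) is what makes the structural matching between $t$, $U$, $W$ and $V$ automatic. A mild bookkeeping point to double-check is the reflexivity instances used above: $\lambda x_1\dots x_n.y\,U_1\cdots U_k \leeta[!] \lambda x_1\dots x_n.y\,U_1\cdots U_k$ follows coinductively from the third clause of Definition~\ref{def:eta!inf} (with the trivial proof $U_j \leeta[!] U_j$ at every argument), and $W \mstoinv[\eta!] W$ is built into Definition~\ref{def:eta!fin}.
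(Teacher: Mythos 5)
Your proposal is correct and follows essentially the same route as the paper's proof: structural induction on $t$ (the paper inducts on $t$ directly, you on $\size{t}$, which is equivalent), with the same three cases ($t=\bot$ giving $W=V$; $U=x$, $V=\jc^x_T$ giving $W=U$; and the head-normal case recombining the $W_j$ obtained from the induction hypothesis). The extra bookkeeping you do on the reflexivity instances of $\leeta[!]$ and $\mstoinv[\eta!]$ is left implicit in the paper but is exactly right.
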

\begin{proof}
We proceed by induction on $t$.
\begin{itemize}

\item If $t=\bot$ it suffices to take $ W=V$.
 
\item If $t = U = x$ and $V=\jc^x_T$ we get the thesis by taking $W = U$. 

\item If $t = \lam\seq x.x_i t_1\cdots t_k$, then 
      $U = \lam\seq x.x_iU_1\cdots U_{k}$ and
      $V = \lam\seq x.x_i V_1\cdots V_k$ where $t_j \in U_j^*$ and $U_j \leeta[!] V_j$ for all $j\le k$.
   By induction hypothesis for each $j$ we get some $W_j$ such that $U_j \leeta[!]W_j\mstoinv[\eta!] V_j$ and $t_j\BTle W_j$.
   We conclude by setting $W = \lam\seq x. x_iW_1\cdots W_k$.
  \qedhere
\end{itemize}
\end{proof}

\begin{example} Let $\bU,\bV$ be the trees from Example~\ref{ex:BTlikethinghys} (and Figure~\ref{fig:exEta!}) and let $t = \lam xy. x y \bot\in U^*$. 
A possible B\"ohm-like tree $W$ given by Lemma~\ref{lemma:FiniteRed} is the following:
\begin{center}
\begin{tikzpicture}
\node (B) at (.5,3.5) {$W$};
\node[below of = B, node distance=11pt] (eq) {$\shortparallel$};
\node[below of = eq, node distance=10pt] (l1) {\!\!\!\!\!\!\!\!\!$\lam xy.x$};
\node[below of = l1, node distance=18pt,xshift=20pt] (l2) {$x$};
\node[below of = l2, node distance=18pt,xshift=20pt] (l3) {$x$};
\node[below of = l3, node distance=18pt,xshift=20pt] (l4) {$x$};
\node[below of = l4, xshift=0pt, node distance=10pt] (l5) {$\vdots$};
\draw ($(l1.south)+(10pt,0.05)$) -- ($(l2.north)-(0.1,0.05)$);
\draw ($(l2.south)+(6pt,0.05)$) -- ($(l3.north)-(0.1,0.05)$);
\draw ($(l3.south)+(6pt,0.05)$) -- ($(l4.north)-(0.1,0.05)$);

\node[below of = l1, node distance=18pt,xshift=-42pt] (l12) {$y$};
\draw ($(l1.south)+(-8pt,0.05)$) -- ($(l12.north)+(8pt,-0.15)$);

\node[below of = l2, node distance=18pt,xshift=-34pt] (l22) {$\lam z_1.y$};
\node[below of = l22, node distance=18pt] (l23) {$\lam z_2.z_1$};
\node[below of = l23, xshift=8pt, node distance=10pt] (l24) {$\vdots$};
\draw ($(l2.south)+(-2pt,0.05)$) -- ($(l22.north)+(8pt,-0.15)$);
\draw ($(l22.south)+(8pt,0.05)$) -- ($(l23.north)+(8pt,-0.15)$);

\node[below of = l3, node distance=18pt,xshift=-17pt] (l32) {$\lam z_1.y$};
\node[below of = l32, xshift=8pt, node distance=10pt] (l33) {$\vdots$};
\draw ($(l3.south)+(0pt,0.05)$) -- ($(l32.north)+(8pt,-0.15)$);

\end{tikzpicture}
\end{center}
\end{example}


\begin{lemma}\label{lemma:FiniteRed2}
Let $U,V$ be two B\"ohm-like trees such that $U\leeta[\infty] V$ and let $t\in V^*$. 
Then there exists a B\"ohm-like tree $W$ such that $U\leeta W\leeta[\infty] V$ and $t\in W^*$.
\end{lemma}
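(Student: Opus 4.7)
The plan is to prove the statement by structural induction on the finite approximant $t$, mirroring the coinductive structure of $\leeta[\infty]$ but working top-down only as far as $t$ extends.

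First I would handle the base case $t=\bot$: simply set $W:=U$, since $\leeta$ is reflexive (an immediate coinductive fact from Definition~\ref{def:leetafin}), the assumption $U\leeta[\infty] V$ is given, and $\bot\in U^*$ trivially. Since $U\leeta[\infty] V$ forces $U=\bot$ iff $V=\bot$, no other difficulty arises in the degenerate cases.

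For the inductive step, if $t\ne\bot$ then necessarily $V\ne\bot$, so by Definition~\ref{def:leeta} we obtain a matching decomposition
$$U=\lam x_1\dots x_n.x_iU_1\cdots U_k,\qquad V=\lam x_1\dots x_nz_1\dots z_m.x_iV_1\cdots V_kV'_1\cdots V'_m,$$
with $U_j\leeta[\infty] V_j$ for $j\le k$ and $z_\ell\leeta[\infty] V'_\ell$ for $\ell\le m$. Since $t\BTle V$, we may write $t=\lam x_1\dots x_nz_1\dots z_m.x_it_1\cdots t_kt'_1\cdots t'_m$ with $t_j\in V_j^*$ and $t'_\ell\in (V'_\ell)^*$. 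I then apply the induction hypothesis twice: to each pair $(U_j\leeta[\infty] V_j,\,t_j)$, obtaining $W_j$ with $U_j\leeta W_j\leeta[\infty] V_j$ and $t_j\in W_j^*$; and to each pair $(z_\ell\leeta[\infty] V'_\ell,\,t'_\ell)$, obtaining $W'_\ell$ with $z_\ell\leeta W'_\ell\leeta[\infty] V'_\ell$ and $t'_\ell\in (W'_\ell)^*$. Setting $W:=\lam x_1\dots x_nz_1\dots z_m.x_iW_1\cdots W_kW'_1\cdots W'_m$, the conditions $U\leeta W$ and $W\leeta[\infty] V$ follow directly by assembling the decompositions, and $t\BTle W$ from $t_j\BTle W_j$ and $t'_\ell\BTle W'_\ell$.

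The only step that needs care is the second application of the induction hypothesis on the $z_\ell$ branches: we must verify that each $W'_\ell$ produced indeed fits the clause of Definition~\ref{def:leetafin} that asks, in the relation $U\leeta W$, for $\beta$-normal subtrees $\eta$-reducing to $z_\ell$. Since $z_\ell\leeta W'_\ell$ is exactly what the induction supplies, $W'_\ell$ is by construction a finitary $\eta$-expansion of the variable $z_\ell$ as B\"ohm-like trees, which is precisely what is needed. So I do not foresee a real obstacle; the main subtlety is just matching the syntactic shapes produced by $\leeta$ and $\leeta[\infty]$ at each recursive call, which works uniformly because the variable-branches $z_\ell$ supply themselves as the ``starting point'' for the inductive call.
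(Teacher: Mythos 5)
Your proof is correct and follows the same basic strategy as the paper: induction on $t$, with $W=U$ in the base case and, in the inductive step, the decomposition of $U$, $V$ and $t$ dictated by Definition~\ref{def:leeta} together with an application of the induction hypothesis to the pairs $(U_j,V_j)$. The one place where you diverge is the treatment of the trailing arguments $V'_1,\dots,V'_m$: the paper simply sets $W = \lam\seq x z_1\dots z_m.x_iW_1\cdots W_k\, z_1\cdots z_m$, i.e.\ it plugs in the bare variables $z_\ell$, whereas you apply the induction hypothesis also to the pairs $(z_\ell\leeta[\infty] V'_\ell,\, t'_\ell)$ and insert the resulting finitary $\eta$-expansions $W'_\ell$. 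Your version is actually the more careful one: with the paper's choice of $W$, the claim $t\in W^*$ fails whenever some $t'_\ell$ is a nontrivial approximant of the infinite $\eta$-expansion $V'_\ell$ (e.g.\ $t'_\ell=\lam w.z_\ell\bot$ is not $\BTle z_\ell$), so the extra inductive call on the $z_\ell$-branches is genuinely needed to get $t'_\ell\in (W'_\ell)^*$. Your final remark is also the right one to make: since $z_\ell\leeta W'_\ell$, each $W'_\ell$ is a finite $\beta$-normal form $\eta$-reducing to $z_\ell$ (immediate from Definition~\ref{def:leetafin}), so it is an admissible $Q_\ell$ in the clause witnessing $U\leeta W$, and $W\leeta[\infty] V$ follows from $W_j\leeta[\infty] V_j$ and $W'_\ell\leeta[\infty] V'_\ell$ with no fresh abstractions. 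In short, your argument is a corrected and slightly more detailed rendition of the paper's proof.
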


\begin{proof}
We proceed by induction on $t$.

If $t=\bot$ we take $ W=U$.

Let $t = \lam\seq x z_1 \dots z_m.x_i t_1\cdots t_k t'_1\dots t'_m \BTle 
      V = \lam\seq xz_1 \dots z_m .x_i V_1\cdots V_{k}V'_1\cdots V'_{m}$ and let
      $U = \lam\seq x.x_i U_1\cdots U_k$ be such that $U_j \leeta[\infty] V_j$ for all $j\le k$ 
and  $z_\ell \leeta[\infty] V'_\ell$ for all $\ell\le m$.
  For each $j\le k$, since  $t_j \in V_j^*$, the induction hypothesis gives some $W_j$ satisfying $U_j \leeta W_j\leeta[\infty] V_j$ and $t_j\in W^*_j$.
   We conclude by setting $W = \lam\seq x z_1 \dots z_m. x_iW_1\cdots W_k z_1 \dots z_m$.
\end{proof}

  

\subsection{Hyperimmune Relational Graph Models}

In Section~\ref{sec:Hpl} we exploited the Morris Separation (Theorem~\ref{thm:newsep}) to reduce the problem of being fully abstract for $\Hpl$ to the property  $\Lint{x} \neq \Lint{\jc_Tx}$, namely $\Wit[\cD]{T} = \Lint{x}-\Lint{\jc_Tx} \not= \emptyset$, for every tree $T\in\Trees$. The notion of $\lambda$-K\"onig relational graph model provided that. 
Here there is a similar phenomenon: we exploit the decomposition seen above (Lemma~\ref{lemma:decomposeGeeta}) 
to reduce the full abstraction for $\Hst$ to the property $\Lint{x}  = \Lint{\jc_Tx}$, namely $\Wit[\cD]{T} = \Lint{x}-\Lint{\jc_Tx} = \emptyset$, for every $T\in\Trees$. 
This is the intuition behind the following definition, which is some kind of dual of $\lambda$-K\"onig.

\begin{definition}[Hyperimmune models]\label{def:hyperim} 
  An \rgm{} $\cD$ is \emph{hyperimmune} if for every $\treeof{T}\in\Trees$ $\Wit[\cD]{T}=\emptyset$.
\end{definition}


  



The name  refers to a standard concept in computability theory: a function $f:\nat \to \nat$ is called \emph{hyperimmune} if it is not  bounded (upwardly) by any recursive function. One can prove that a relational graph model $\cD$ is hyperimmune exactly when it only admits witnesses that follow hyperimmune functions.
This observation justifies the choice of the terminology. 
The notion of hyperimmunity  first appeared in these terms in~\cite{Breuvart14} for Krivine's models.

\begin{example}
The model $\cD_\omega$ of Example~\ref{ex:graphrel} is hyperimmune. 
As a matter of fact, since the model is freely generated by the equation $\star = \omega \to \star $, it is easy to verify that no element of $\cD_\omega$ can be a witness for any $T\in\Trees$ following any infinite path $f$.
\end{example}

In the hypothesis of extensionality, this notion admits some additional characterizations. 

\begin{proposition}\label{lemma:hyp=>eta!}
  Let $\cD$ be an extensional  \rgm{}. The following statements are equivalent:
  \begin{enumerate}[(i),ref={\roman*}]
  \item\label{lemma:hyp=>eta!1} $\cD$ is hyperimmune,
  \item\label{lemma:hyp=>eta!2}  $\Lint{x}\subseteq \Lint{\jc_Tx}$ for all $T\in\Trees,$
\item\label{lemma:hyp=>eta!3} $ U \mstoinv[\eta!] V $ implies $  \Lint U \subseteq \Lint{V}$ for all B\"ohm-like trees $U,V,$
\item\label{lemma:hyp=>eta!4} $ U \leeta[!] V $ implies $   \Lint U \subseteq \Lint{V}$ for all B\"ohm-like trees $U,V.$
  \end{enumerate}
\end{proposition}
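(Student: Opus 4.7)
The plan is to establish the equivalences via the cycle (\ref{lemma:hyp=>eta!1}) $\iff$ (\ref{lemma:hyp=>eta!2}) $\iff$ (\ref{lemma:hyp=>eta!3}) $\iff$ (\ref{lemma:hyp=>eta!4}), leaning on the witness characterization of Proposition~\ref{prop:char_WT}, Lemma~\ref{lemma_ruoppoloide}, and the approximant-lifting Lemma~\ref{lemma:FiniteRed}.

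For (\ref{lemma:hyp=>eta!1})~$\iff$~(\ref{lemma:hyp=>eta!2}): I first unfold the definitions. By Definition~\ref{def:IntSem}(\ref{def:IntSem1}) we have $\Int{x}_x = \{([\alpha],\alpha)\st\alpha\in D\}$, while Lemma~\ref{lemma_ruoppoloide} gives $\Int{\jc_Tx}_x\subseteq \Int{x}_x$, i.e., every element of $\Int{\jc_Tx}_x$ has the shape $([\alpha],\alpha)$. Therefore $\Int{x}_x\subseteq\Int{\jc_Tx}_x$ is equivalent to requiring $([\alpha],\alpha)\in\Int{\jc_Tx}_x$ for every $\alpha\in D$. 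By Proposition~\ref{prop:char_WT} this is exactly $\Wit[\cD]{T}=\emptyset$, i.e., hyperimmunity. Translating between the categorical and logical interpretations via Theorem~\ref{thm:type_semantics} gives (\ref{lemma:hyp=>eta!2}).

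For (\ref{lemma:hyp=>eta!2})~$\iff$~(\ref{lemma:hyp=>eta!3}): the direction (\ref{lemma:hyp=>eta!3})~$\To$~(\ref{lemma:hyp=>eta!2}) is immediate by instantiating with the base step $x\mstoinv[\eta!]\jc^x_T$. For the converse, I proceed by induction on the definition of $\mstoinv[\eta!]$: the reflexive and base cases are trivial or correspond exactly to~(\ref{lemma:hyp=>eta!2}); the structural case $\lam\seq x.x_iU_1\cdots U_k\mstoinv[\eta!]\lam\seq x.x_iV_1\cdots V_k$ with $U_j\mstoinv[\eta!]V_j$ is the only real work. Given $(\Gamma,\alpha)\in\Lint{\lam\seq x.x_iU_1\cdots U_k}$, unfold it through a finite approximant $t=\lam\seq x.x_it_1\cdots t_k\in U^*$: the type derivation decomposes into finitely many judgments $\Gamma_{jm}+(\seq x\!:\!\seq a^{jm})\vdash t_j:\beta_{jm}$, each lying in $\Lint{U_j}$ and hence, by the inductive hypothesis, in $\Lint{V_j}$. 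Hence for each pair $(j,m)$ there is some $s_{jm}\in V_j^*$ that bears the required typing. Using the directedness of $V_j^*$ with respect to $\BTle$, I then choose a single $s_j\in V_j^*$ with $s_{jm}\BTle s_j$ for every $m$. A short auxiliary lemma — easily proved by induction on $\App$-terms — establishes the monotonicity of the logical interpretation along $\BTle$, so each derivation of $t_j$ lifts to $s_j$; reassembling yields a derivation of $\lam\seq x.x_is_1\cdots s_k\in V^*$ concluding the witness of $(\Gamma,\alpha)\in\Lint{V}$.

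For (\ref{lemma:hyp=>eta!3})~$\iff$~(\ref{lemma:hyp=>eta!4}): the direction (\ref{lemma:hyp=>eta!4})~$\To$~(\ref{lemma:hyp=>eta!3}) is immediate since $\mstoinv[\eta!]\subseteq\leeta[!]$ (a routine check from Definitions~\ref{def:eta!inf} and~\ref{def:eta!fin}). The interesting direction (\ref{lemma:hyp=>eta!3})~$\To$~(\ref{lemma:hyp=>eta!4}) uses Lemma~\ref{lemma:FiniteRed}: given $U\leeta[!]V$ and $(\Gamma,\alpha)\in\Lint{U}$, pick any finite approximant $t\in U^*$ realising the judgment. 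Lemma~\ref{lemma:FiniteRed} produces an intermediate B\"ohm-like tree $W$ with $U\leeta[!]W\mstoinv[\eta!]V$ and $t\in W^*$, so $(\Gamma,\alpha)\in\Lint{W}$, and (\ref{lemma:hyp=>eta!3}) applied to $W\mstoinv[\eta!]V$ delivers $(\Gamma,\alpha)\in\Lint{V}$.

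The main obstacle will be the structural case of (\ref{lemma:hyp=>eta!2})~$\To$~(\ref{lemma:hyp=>eta!3}), since one must synthesise the various ``witness'' approximants $s_{jm}\in V_j^*$ obtained from the inductive hypothesis into a single $s_j\in V_j^*$ carrying all the required types simultaneously; this is where directedness of $V_j^*$ and monotonicity of $\Lint{\cdot}$ along $\BTle$ come into play. Once this auxiliary monotonicity is in hand, the remaining arguments are essentially bookkeeping on type derivations.
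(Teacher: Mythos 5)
Your proof is correct and follows essentially the same route as the paper's: (\ref{lemma:hyp=>eta!1})$\iff$(\ref{lemma:hyp=>eta!2}) via Proposition~\ref{prop:char_WT}, (\ref{lemma:hyp=>eta!2})$\iff$(\ref{lemma:hyp=>eta!3}) by induction on the derivation of $\mstoinv[\eta!]$, and (\ref{lemma:hyp=>eta!3})$\iff$(\ref{lemma:hyp=>eta!4}) via Lemma~\ref{lemma:FiniteRed}. The extra detail you supply in the structural case of (\ref{lemma:hyp=>eta!2})$\To$(\ref{lemma:hyp=>eta!3}) --- merging the finitely many approximants $s_{jm}$ by directedness of $V_j^*$ and monotonicity of $\Lint{-}$ along $\BTle$ --- is precisely the content the paper compresses into ``a straightforward induction''.
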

\begin{proof}
(\ref{lemma:hyp=>eta!1} \!\!$\iff$\!\!  \ref{lemma:hyp=>eta!2}) 
By Proposition~\ref{prop:char_WT} and Corollary~\ref{cor:semantic_equiv} an extensional \rgm{} $\cD$ is hyperimmune if and only if, for all $T\in\Trees$, we have that $\Lint{x} - \Lint{\jc_Tx} = \{\alpha\in D \st x:[\alpha]\not\vdash \jc_Tx : \alpha \} = \Wit[\cD]{T} = \emptyset$. 
This is equivalent to requiring that $\Lint{x}\subseteq \Lint{\jc_Tx}$ for all $T\in\Trees$.

(\ref{lemma:hyp=>eta!2} $\Rightarrow$ \ref{lemma:hyp=>eta!3}) 
By a straightforward induction on the derivation of $U\mstoinv[\eta!]V$. 

(\ref{lemma:hyp=>eta!3} $\Rightarrow$ \ref{lemma:hyp=>eta!2}) 
Trivial, as  $ x \mstoinv[\eta!] {\jc_Tx}$.

(\ref{lemma:hyp=>eta!3} $\Rightarrow$ \ref{lemma:hyp=>eta!4})  
Suppose that there exist two B\"ohm-like trees $U,V$ such that $U \leeta[!] V$ and $\Lint{U} \not\subseteq \Lint{V}$. Since $\Lint{U}=\bigcup_{t\in U^*}\Lint t$, there is a $t \in U^*$ such that  $\Lint{t} \not\subseteq \Lint{V}$. 
By Lemma~\ref{lemma:FiniteRed} there exists $W\mstoinv[\eta!]V$ such that $t\in W^*$. 
Since $\Lint{W}=\bigcup_{t\in {W}^*}\Lint t$, we get $\Lint{W}\not\subseteq \Lint{V}$, whereas $\Lint{W}\subseteq \Lint{V}$ by  \eqref{lemma:hyp=>eta!3}. 

(\ref{lemma:hyp=>eta!4} $\Rightarrow$ \ref{lemma:hyp=>eta!3}) Trivial, since the relation $ \mstoinv[\eta!]\,$ is included in $\leeta[!]$.
\end{proof}






\begin{lemma}\label{lem:yeppa!}
Let $\cD$ be an extensional \rgm{} and let $U,V$ be two B\"ohm-like trees. 
We have that $U  \leeta[\infty] V$ implies $\Lint{V}\subseteq\Lint{U}$.
\end{lemma}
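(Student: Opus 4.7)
Since $\Lint{V} = \bigcup_{t \in V^*} \Lint{t}$, it suffices to show, by structural induction on $t$, that every $(\Gamma, \alpha) \in \Lint{t}$ belongs to $\Lint{U}$. The base case $t = \bot$ is trivial since $\Lint{\bot} = \emptyset$.

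\textbf{Inductive step.} For $t \neq \bot$, the relation $t \BTle V$ forces $V \neq \bot$, hence $U \neq \bot$ by Definition~\ref{def:leeta}, yielding the canonical decomposition
\[
	U = \lam x_1\dots x_n.\, x_i U_1 \cdots U_{k'}\,, \qquad V = \lam x_1\dots x_n z_1\dots z_m.\, x_i V_1 \cdots V_{k'} V'_1 \cdots V'_m
\]
with $U_p \leeta[\infty] V_p$, $z_q \leeta[\infty] V'_q$, and $\vec z$ fresh for $\FV(x_i U_1 \cdots U_{k'})$. Correspondingly $t$ has the form $\lam \vec x \vec z.\, x_i t_1 \cdots t_{k'} t'_1 \cdots t'_m$ with $t_p \in V_p^*$ and $t'_q \in V'_q^*$. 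From $\Gamma \vdash t : \alpha$ the \texttt{lam} and \texttt{app} rules force $\alpha = \vec a \to \vec b \to \gamma$ and produce a derivation assigning $x_i$ the type $c_1 \to \cdots \to c_{k'+m} \to \gamma$ alongside sub-derivations $\Gamma^{l,r} \vdash (\text{$l$-th argument}) : \delta_{l,r}$ for each $\delta_{l,r}$ in $c_l = [\delta_{l,1}, \dots, \delta_{l,p_l}]$, with sub-contexts summing up to $\Gamma + (\vec x : \vec a, \vec z : \vec b)$. The inductive hypothesis applied to each $t'_q$ (with $z_q \leeta[\infty] V'_q$) forces $\Gamma^{k'+q,r} = z_q : [\delta_{k'+q,r}]$, because the only non-$\bot$ element of $z_q^*$ is $z_q$ itself and it only types contexts of the form $z_q:[\beta]$. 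Applied to each $t_p$ (with $U_p \leeta[\infty] V_p$) the IH yields, for every $r$, an approximant $s_p^{(r)} \in U_p^*$ such that $\Gamma^{p,r} \vdash s_p^{(r)} : \delta_{p,r}$. By directedness of $U_p^*$ under $\BTle$, together with the fact that derivations transfer to $\BTle$-larger approximants (since $\bot$ is untypable and therefore never used in a derivation), I would merge the $s_p^{(r)}$'s into a single $s_p \in U_p^*$ typing all those judgments simultaneously. Reassembling, $s := \lam \vec x.\, x_i s_1 \cdots s_{k'} \in U^*$ then admits a derivation $\Gamma \vdash s : \alpha$, so that $(\Gamma, \alpha) \in \Lint{U}$.

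\textbf{Main obstacle.} The technical heart is the context bookkeeping: one must verify that summing the sub-contexts $\Gamma^{l,r}$ for $l \le k'$ with the $x_i$-context $[c_1 \to \cdots \to c_{k'+m} \to \gamma]$ yields exactly $\Gamma + (\vec x:\vec a)$, with no residual $\vec z$-entries remaining. This relies on the freshness condition $\vec z \cap \FV(x_i U_1\cdots U_{k'}) = \emptyset$ from Definition~\ref{def:leeta}, which propagates (up to $\alpha$-renaming of the variables introduced by the infinite $\eta$-expansions) to $\vec z \cap \FV(V_p) = \emptyset$ for every $p \le k'$. This ensures that every occurrence of $z_q$ in the global context arises only from sub-derivations of the $t'_q$'s, and therefore collectively contributes precisely the block $\vec z : \vec b$ that gets stripped when one abstracts back over $\vec z$ on the $U$-side. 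Extensionality of $\cD$ is not critically needed by this direct induction, but is available from the hypothesis; alternatively, one could reduce to the finite $\eta$-expansion case via Lemma~\ref{lemma:FiniteRed2} and handle $U \leeta W$ separately.
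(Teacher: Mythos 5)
Your proof is correct, but it takes a genuinely different route from the paper's. The paper disposes of this lemma in three lines: given $t\in V^*$, Lemma~\ref{lemma:FiniteRed2} interpolates a tree $W$ with $U\leeta W\leeta[\infty] V$ and $t\in W^*$, whence $\Lint{t}\subseteq\Lint{W}=\Lint{U}$, the last equality holding because an extensional model does not distinguish finitary $\eta$-expansions; all the derivation surgery is thus delegated to Lemma~\ref{lemma:FiniteRed2} (a purely syntactic statement about trees) and to extensionality. You instead carry out the surgery directly, by structural induction on $t$: you decompose the typing of $t$ along the \texttt{lam}/\texttt{app} spine, use the inductive hypothesis to replace each argument approximant $t_p\in V_p^*$ by some $s_p\in U_p^*$ admitting the same judgments (merging the finitely many witnesses via directedness of $U_p^*$ and monotonicity of typability along $\BTle$, both sound because $\bot$ is untypable and so never actually used in a derivation), and observe that the $\vec z$-block of the body's context is contributed exactly by the $t'_q$'s and resurfaces as the tail $b_1\to\cdots\to b_m\to\gamma$ of the head type once you re-abstract only over $\vec x$ --- the point being that $a_1\to\cdots\to a_n\to b_1\to\cdots\to b_m\to\gamma$ is literally the same element of $D$ as $\alpha$ however one groups the tail. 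The bookkeeping you flag (freshness of $\vec z$ for the $V_p$'s, propagated through $\FV(V_p)\subseteq\FV(U_p)$ and Lemma~\ref{lem:env_fv}, plus the bound-versus-free head variable) is routine and you identify the right reasons for it. What your route buys is self-containedness and the observation --- correct, and consistent with Lemma~\ref{lemma_ruoppoloide}, which holds for arbitrary rgm's --- that extensionality is not actually needed for this particular inclusion, only the fact that $\eta$-expansion can never enlarge interpretations; what the paper's route buys is brevity, since Lemma~\ref{lemma:FiniteRed2} is already on the shelf.
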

\begin{proof}
By Lemma~\ref{lemma:FiniteRed2}, if $t\in V^*$ then $t\in W^*$ for some $U\leeta W$. So $\Lint{t} \subseteq \Lint{W} = \Lint{U}$, where the equality holds by extensionality. 
Since $\Lint{V}=\bigcup_{t\in V^*}\Lint t$, we get $\Lint{V} \subseteq \Lint{U}$. 
\end{proof}

In general, the hypothesis $U \leeta[\infty] V$ does not imply $\Lint{U}\subseteq\Lint{V}$, even in  presence of extensionality. 
This implication only holds when the model is in addition hyperimmune.

\begin{lemma}\label{lem:yeppa!hyper!}
Let $\cD$ be an extensional and hyperimmune \rgm{} and let $M,N\in\Lambda$. 
Then  $\,\BT{M}  \leeta[\infty] \BT{N}$ implies $\Lint{M}\subseteq\Lint{N}$.
\end{lemma}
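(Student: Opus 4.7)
My plan is to decompose the relation $\BT{M} \leeta[\infty] \BT{N}$ using Lemma \ref{lemma:decomposeGeeta}, obtaining a B\"ohm-like tree $W$ with $\BT{M} \leeta W \leeta[!] \BT{N}$, and then to establish the two inclusions $\Lint{M} \subseteq \Lint{W}$ and $\Lint{W} \subseteq \Lint{N}$ separately. Chaining them then gives the desired inclusion $\Lint{M} \subseteq \Lint{N}$.

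The second inclusion $\Lint{W} \subseteq \Lint{\BT{N}} = \Lint{N}$ is immediate from Proposition \ref{lemma:hyp=>eta!}(\ref{lemma:hyp=>eta!4}), since $\cD$ is assumed both extensional and hyperimmune and we have $W \leeta[!] \BT{N}$.

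For the first inclusion, I would first establish the companion of Lemma \ref{lem:yeppa!} for the finitary relation: under extensionality alone, $U \leeta V$ implies $\Lint{U} \subseteq \Lint{V}$ for any B\"ohm-like trees $U,V$. Combined with Lemma \ref{lem:yeppa!}, this says that finitary $\eta$-expansions preserve the interpretation in extensional models. To prove it, given $(\Gamma,\alpha) \in \Lint{U} = \bigcup_{t \in U^*} \Lint{t}$, I would pick $t \in U^*$ with $(\Gamma,\alpha) \in \Lint{t}$ and construct by induction on $t$ an approximant $s \in V^*$ such that $s \msto[\eta] t$ in the $\lam\bot$-calculus. In the step case $t = \lam \seq x.x_i t_1\cdots t_k$, the shape of $U$ forces $U = \lam \seq x.x_i U_1\cdots U_k$ with $t_j \BTle U_j$, and by Definition \ref{def:leetafin} we have $V = \lam \seq xz_1\dots z_m.x_i V_1\cdots V_k Q_1\cdots Q_m$ with $U_j \leeta V_j$ and $Q_\ell \in \NF[\beta]$ satisfying $Q_\ell \msto[\eta] z_\ell$. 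Applying the inductive hypothesis to each $t_j \BTle U_j \leeta V_j$ yields $s_j \in V_j^*$ with $s_j \msto[\eta] t_j$. Setting $s = \lam \seq xz_1\dots z_m.x_i s_1\cdots s_k Q_1\cdots Q_m$, one checks $s \in V^*$ and reduces $s \msto[\eta] \lam \seq xz_1\dots z_m.x_i t_1\cdots t_k z_1\cdots z_m \msto[\eta] \lam \seq x.x_i t_1\cdots t_k = t$, where the outer $\eta$-reductions of the $z_\ell$'s are licit because the side condition $\seq z \cap \FV(x_i U_1\cdots U_k) = \emptyset$ in Definition \ref{def:leetafin} propagates to the $s_j$'s. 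Extensionality (via Lemma \ref{lemma:D_ext_actually_ext}, extended to the $\lam\bot$-calculus) then gives $\Lint{s} = \Lint{t}$, whence $(\Gamma,\alpha) \in \Lint{s} \subseteq \Lint{V}$. Taking unions yields $\Lint{U} \subseteq \Lint{V}$. Applying this with $U = \BT{M}$ and $V = W$ completes Step (A).

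The main delicate point is the construction of $s \in V^*$ together with the chain of outer $\eta$-contractions in the auxiliary fact; all the heavy lifting has already been done earlier in the paper (decomposition, Lemma \ref{lemma:FiniteRed} and Lemma \ref{lemma:FiniteRed2}, and Proposition \ref{lemma:hyp=>eta!}), so the remaining work is mostly bookkeeping about approximants and free variables. Because the statement is phrased as an inclusion and Lemma \ref{lem:yeppa!} already supplies the reverse one, it is worth noting that the proof in fact establishes the equality $\Lint{M} = \Lint{N}$.
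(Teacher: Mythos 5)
Your proof follows exactly the paper's route: decompose $\BT{M}\leeta[\infty]\BT{N}$ into $\BT{M}\leeta W\leeta[!]\BT{N}$ via Lemma~\ref{lemma:decomposeGeeta}, obtain $\Lint{W}\subseteq\Lint{\BT{N}}$ from Proposition~\ref{lemma:hyp=>eta!}(\ref{lemma:hyp=>eta!4}), and obtain $\Lint{\BT{M}}\subseteq\Lint{W}$ from extensionality before concluding with the Approximation Theorem. The auxiliary fact you work out in detail --- that $U\leeta V$ implies $\Lint{U}\subseteq\Lint{V}$ in extensional models, via approximants $s\in V^*$ with $s\msto[\eta]t$ --- is a correct fleshing-out of the step the paper dispatches with the bare phrase ``by extensionality'', and your closing remark that the two lemmas together yield $\Lint{M}=\Lint{N}$ is also correct.
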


\begin{proof}
By Lemma~\ref{lemma:decomposeGeeta} there exists a B\"ohm-like tree $W$ such that $\BT{M} \leeta W\leeta[!] \BT{N}$. Then we have $\Lint{\BT{M}} \subseteq \Lint{W}$ by extensionality and $\Lint{W} \subseteq \Lint{\BT{N}}$ by the characterization (\ref{lemma:hyp=>eta!4}) of hyperimmunity provided by Proposition~\ref{lemma:hyp=>eta!}. 
By transitivity we obtain $\Lint{\BT{M}} \subseteq \Lint{\BT{N}}$, so we conclude $\Lint{M} \subseteq \Lint{N}$ by   Theorem~\ref{thm:app}.
\end{proof}

The following theorem constitutes the main result of the section. It is actually an adaptation to relational graph models of the characterization of fully abstract Krivine's models provided in~\cite{Breuvart14, Breuvart16}.

\begin{theorem}\label{thm:semantic_characterization2}
  Let $\cD$ be an  \rgm{}. The following statements are equivalent:
\begin{enumerate}[(i),ref={\roman*}]
\item $\cD$ is extensional and hyperimmune, \label{it:H*1}
\item $\cD$ is inequationally fully abstract for $\sqle_\Hst$, \label{it:H*2}
\item $\cD$ is fully abstract for $\Hst$. \label{it:H*3}
\end{enumerate}
\end{theorem}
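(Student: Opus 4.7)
The plan is to establish the cyclic chain of implications (\ref{it:H*1}) $\Rightarrow$ (\ref{it:H*2}) $\Rightarrow$ (\ref{it:H*3}) $\Rightarrow$ (\ref{it:H*1}).

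For (\ref{it:H*1}) $\Rightarrow$ (\ref{it:H*2}), the soundness direction $M \sqle_\Hst N \Rightarrow \cD \models M \sqle N$ proceeds through the characterization of $\sqle_\Hst$ in Theorem~\ref{thm:Hst_as_etainf}: there exist B\"ohm-like trees $U,V$, which may be chosen $\lam$-definable by Theorem~\ref{thm:BTMiffeffectiveT} (thus $U = \BT{M'}$ and $V = \BT{N'}$ for some $M', N' \in \Lam$), such that $\BT{M} \leeta[\infty] \BT{M'} \BTle \BT{N'} \geeta[\infty] \BT{N}$. Lemma~\ref{lem:yeppa!hyper!} (which crucially uses hyperimmunity) gives $\Lint{M} \subseteq \Lint{M'}$; the step $\Lint{M'} \subseteq \Lint{N'}$ follows from $\BT[*]{M'} \subseteq \BT[*]{N'}$ via the Approximation Theorem; and Lemma~\ref{lem:yeppa!} (requiring only extensionality) gives $\Lint{N'} \subseteq \Lint{N}$. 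Chaining these yields $\cD \models M \sqle N$. For the converse, I will observe that $\Thle(\cD)$ is a consistent inequational theory (e.g.\ $\Int{\bI} \neq \emptyset = \Int{\bO}$) that is sensible (Corollary~\ref{cor:BTminimalth}) and contains $\sqle_\Hst$ by the direction just proved; by the maximality of $\sqle_\Hst$ among such theories (Lemma~\ref{lemma:maximality}), this containment is forced to be an equality, establishing inequational full abstraction.

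The implication (\ref{it:H*2}) $\Rightarrow$ (\ref{it:H*3}) follows immediately from the definitions, since $\cD \models M = N$ is equivalent to the conjunction of $\cD \models M \sqle N$ and $\cD \models N \sqle M$, which by (\ref{it:H*2}) amounts to $M \sqle_\Hst N$ and $N \sqle_\Hst M$, that is, to $M =_\Hst N$. For (\ref{it:H*3}) $\Rightarrow$ (\ref{it:H*1}), extensionality of $\cD$ follows since $\Hst$ is extensional, so $\blam\eta \subseteq \Hst = \Th(\cD)$, and Lemma~\ref{lemma:D_ext_actually_ext}(\ref{lemma:D_ext_actually_ext2}) then gives extensionality of $\cD$. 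For hyperimmunity, I will fix $T \in \Trees$ arbitrarily: Proposition~\ref{prop:J_T} guarantees the existence of $\bJ_T$, and since $\BT{\bI} \leeta[\infty] \BT{\bJ_T}$, Theorem~\ref{thm:Hst_as_etainf} yields $\bI =_\Hst \bJ_T$; full abstraction then gives $\Lint{\bI} = \Lint{\bJ_T}$, whence by congruence (Lemma~\ref{lemma:monotonicity}) at a fresh variable $x$ together with soundness we obtain $\Lint{x}_x = \Lint{\bJ_T x}_x$. Proposition~\ref{prop:char_WT} immediately yields $\Wit[\cD]{T} = \emptyset$.

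The most delicate point lies in the soundness direction of (\ref{it:H*1}) $\Rightarrow$ (\ref{it:H*2}): the asymmetry between Lemmas~\ref{lem:yeppa!} and~\ref{lem:yeppa!hyper!} must be exploited correctly, since only the forward leg $\BT{M} \leeta[\infty] \BT{M'}$ genuinely requires hyperimmunity, while the symmetric leg $\BT{N} \leeta[\infty] \BT{N'}$ descends along an $\eta$-expansion and is handled by extensionality alone. The $\lam$-definability of the intermediate B\"ohm-like trees $U, V$, ensured by Theorem~\ref{thm:BTMiffeffectiveT} and noted in the remark following Theorem~\ref{thm:Hst_as_etainf}, is what allows Lemma~\ref{lem:yeppa!hyper!} (stated for $\lam$-terms) to be invoked in that crucial first step.
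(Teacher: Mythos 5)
Your proposal is correct and follows essentially the same route as the paper: the forward leg of (i)$\Rightarrow$(ii) via Theorem~\ref{thm:Hst_as_etainf} with a $\lam$-definable intermediate tree, Lemma~\ref{lem:yeppa!hyper!} for the hyperimmunity-dependent leg and Lemma~\ref{lem:yeppa!} for the other, the converse via sensibility and the maximality of $\sqle_\Hst$ (Lemma~\ref{lemma:maximality}), and (iii)$\Rightarrow$(i) via extensionality of $\Hst$ together with $\bI =_\Hst \bJ_T$ and Proposition~\ref{prop:char_WT}. The only cosmetic differences are that the paper applies the witness characterization directly to the open terms $x$ and $\bJ_Tx$ (via Proposition~\ref{lemma:hyp=>eta!}) rather than contextualizing $\bI =_\Hst \bJ_T$, and handles the middle inclusion $\Lint{U}\subseteq\Lint{V}$ by the definition of $\Lint{-}$ on B\"ohm-like trees without needing $V$ to be $\lam$-definable.
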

\begin{proof}
(\ref{it:H*1} $\Rightarrow$ \ref{it:H*2})
We must prove that $M\sqsubseteq_\Hst \!N$ if and only if $\Lint{M}\subseteq\Lint{N}$. The right-to-left implication is true for all \rgm{}'s, since they are sensible (Corollary~\ref{cor:BTminimalth}) and $\sqsubseteq_\Hst$ is the maximal sensible inequational theory. Let us prove the left-to-right implication.

By Theorem~\ref{thm:Hst_as_etainf} the hypothesis $M\sqsubseteq_\Hst \!N$ means that $\BT{M} \leeta[\infty] U \BTle V \geeta[\infty] \BT{N}$ for some B\"ohm-like trees $U$ and $V$. 
In particular $U$ can be taken of the form $U=\BT{P}$ for some $P\in\Lam$ (see~\cite[Ex.~10.6.7]{Bare}). Then we have
\[
\begin{array}{lcl@{\hspace{2cm}}r}
  \Lint M &\subseteq& \Lint{P}                   &\text{by Lemma~\ref{lem:yeppa!hyper!} }  \\
  &= & \Lint{U}               & \text{by  Theorem~\ref{thm:app} } \\
  &\subseteq &\Lint{V}        & \text{by def. \!of $\Lint{-}$ for B\"ohm-like trees } \\
  &\subseteq &\Lint{\BT{N}}  & \text{by Lemma~\ref{lem:yeppa!} } \\
  &=& \Lint{N}                   & \text{by  Theorem~\ref{thm:app}.} 
   \end{array}
\]

(\ref{it:H*2} $\Rightarrow$ \ref{it:H*3}) Trivial.

(\ref{it:H*3} $\Rightarrow$ \ref{it:H*1}) The theory $\Hst$ is extensional, so that is the case for any fully abstract \rgm{}. Moreover, the theory $\Hst$  satisfies $x\sqsubseteq_\Hst \! {\jc_Tx}\,$  for all $\,T\in\Trees\,.$ So the model is hyperimmune by the characterization (\ref{lemma:hyp=>eta!2}) of hyperimmunity provided by Proposition~\ref{lemma:hyp=>eta!}. 
\end{proof}


  

As a consequence, we get that the model $\cD_\omega$ has the same inequational and \lam-theories as Scott's $\cD_\infty$~\cite{Scott72}, namely it is fully abstract for $\Hst$.
Such a result first appeared in~\cite{Manzonetto09}.

\begin{corollary}\label{cor:Domega}
The model $\cD_\omega$ of Example~\ref{ex:graphrel} is inequationally
fully abstract for $\sqle_\Hst$. In particular $\Th(\cD_\omega) =
\Hst$.
\end{corollary}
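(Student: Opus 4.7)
\medskip

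\noindent\textbf{Proof plan.} The plan is to invoke Theorem~\ref{thm:semantic_characterization2}, which reduces the statement to showing that $\cD_\omega$ is extensional and hyperimmune. Extensionality was already observed in Example~\ref{ex:graphrel} (since the generating injection $\{(\emptymset,\star)\mapsto\star\}$ extends to a bijection in the free completion). So the real task is to establish hyperimmunity, that is, to prove $\Wit[\cD_\omega]{T}=\emptyset$ for every $T\in\Trees$.

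First I would exploit the concrete structure of the free completion $\cD_\omega=\Compl{(\{\star\},\{(\emptymset,\star)\mapsto\star\})}$ to describe the canonical form of its elements: every $\alpha\in\cD_\omega$ is either $\star$ or uniquely of the shape $\alpha=a_0\to a_1\to\cdots\to a_m\to\star$ where $a_m\neq\emptymset$ (otherwise the trailing $\emptymset\to\star$ collapses to $\star$). Because the sole identification forced by the partial pair is $\emptymset\to\star=\star$, the unique arrow decomposition reads $\star=\emptymset\to\star$, while any non-trivial $\alpha=a\to\beta$ decomposes as itself. Iterating, the maximal decomposition $\alpha=a_0\to\cdots\to a_k\to\alpha'$ has $a_j=\emptymset$ for every $j>m$. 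Using the stratification $\cD_\omega=\bigcup_{n\in\nat}A_n$ I would then define a finite \emph{rank} $c:\cD_\omega\to\nat$ by $c(\star)=0$ and $c(a\to\beta)=1+\max\{c(\beta),\max_{\gamma\in a}c(\gamma)\}$ for $(a,\beta)\neq(\emptymset,\star)$. A short induction on $m$ shows that whenever $\alpha$'s canonical form is $a_0\to\cdots\to a_m\to\star$, every element occurring in any $a_j$ has rank strictly smaller than $c(\alpha)$.

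Now suppose toward a contradiction that some $\alpha\in\Wit[\cD_\omega,f]{T}$ for some $T\in\Trees$ and $f\in\pathsof{T}$. Unfolding the coinductive Definition~\ref{def:witness} produces an infinite chain $\alpha=\beta_0,\beta_1,\beta_2,\dots$ with $\beta_{k+1}\in a^{(k)}_{f(k)}$, where $\beta_k=a^{(k)}_0\to\cdots\to a^{(k)}_{f(k)}\to\alpha^{(k)}$ is the $(f(k)+1)$-fold arrow decomposition of $\beta_k$. If $\beta_k=\star$, all $a^{(k)}_j$ are empty, contradicting $\beta_{k+1}\in a^{(k)}_{f(k)}$; hence each $\beta_k$ is non-trivial, and $a^{(k)}_{f(k)}$ must be a non-empty multiset occurring in its canonical form. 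By the rank lemma we get $c(\beta_{k+1})<c(\beta_k)$, yielding an infinite strictly decreasing sequence of natural numbers, which is absurd.

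The main obstacle is getting the bookkeeping between the ``padded'' arrow decomposition used in the definition of $\Wit$ and the canonical form right; in particular, one must check that padding $\alpha$ with extra $\emptymset\to$ factors (needed when $f(k)$ exceeds the canonical depth $m_k$) forces $a^{(k)}_{f(k)}=\emptymset$, so any genuine witness chain always consumes a truly non-empty multiset of the canonical form, which is precisely what drives the rank to strictly decrease. Once hyperimmunity is established, Theorem~\ref{thm:semantic_characterization2} delivers both $\Thle(\cD_\omega)=\,\sqle_\Hst$ and $\Th(\cD_\omega)=\Hst$.
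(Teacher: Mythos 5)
Your proposal follows exactly the paper's route: the corollary is obtained by applying Theorem~\ref{thm:semantic_characterization2} once $\cD_\omega$ is checked to be extensional and hyperimmune, the latter being asserted in the paper only as ``easy to verify'' since $\cD_\omega$ is freely generated by $\star=\emptymset\to\star$. Your well-founded rank argument (using uniqueness of arrow decompositions in the free completion, so that any witness chain must strictly decrease the rank) is a correct and complete way of discharging that verification, so the proof is sound and essentially the same as the paper's.
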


\section{Conclusions}

We have studied the class of the relational graph models living inside the relational semantics of \lam-calculus, and proved that they all enjoy the Approximation Theorem.
We exhibited a model inducing the minimum relational (in)equational graph theory, and provided sufficient and necessary conditions for a relational graph model to be fully abstract for $\Hpl$ (resp.\ $\Hst$). 
Actually such characterizations of full abstraction hold more generally for all relational models, since these theories are extensional and the class of extensional relational graph models coincide with the class of extensional reflexive objects in \MRel.

We conclude presenting some open problems that we consider interesting.



\begin{problem}
It is well known that the \lam-theory $\Hst$ satisfies the $\omega$-rule~\cite[Def.~4.1.10]{Bare}, and the analogous result was recently proved for $\Hpl$ in~\cite{BreuvartMPR16}.
Does every extensional graph model satisfy the $\omega$-rule?
\end{problem}

\begin{problem} Are all \lam-theories in the interval $[\Hpl,\Hst]$ relational graph theories? 
If it is not the case, is it possible to provide a characterization of the representable ones?
\end{problem}



\medskip

{\bf Acknowledgements.} We would like to thank Henk Barendregt, Mariangiola Dezani, Thomas Ehrhard, Jean-Jacques L\'evy, Michele Pagani, Andrew Polonsky and Simona Ronchi Della Rocca for many interesting discussions on relational models and the \lam-theory $\Hpl$.
We also wish to thank the anonymous reviewers for their valuable comments and suggestions to improve the quality of the paper. 

\bibliographystyle{plain}
\bibliography{include/bib}

\newcommand{\online}[1]{Available at \url{#1}}
\begin{thebibliography}{10}

\bibitem{abelPTS13}
A.~Abel, B.~Pientka, D.~Thibodeau, and A.~Setzer.
\newblock Copatterns: programming infinite structures by observations.
\newblock In R.~Giacobazzi and R.~Cousot, editors, {\em The 40th Annual {ACM}
  {SIGPLAN-SIGACT} Symposium on Principles of Programming Languages, {POPL}
  '13}, pages 27--38. {ACM}, 2013.

\bibitem{Abramsky91}
S.~Abramsky.
\newblock Domain theory in logical form.
\newblock {\em Ann. Pure Appl. Logic}, 51(1-2):1--77, 1991.

\bibitem{AmadioC98}
R.~Amadio and P.-L. Curien.
\newblock {\em Domains and Lambda Calculi}.
\newblock Cambridge tracts in theoretical computer science. Cambridge
  University Press, 1998.

\bibitem{AschieriZ13}
F.~Aschieri and M.~Zorzi.
\newblock Non-determinism, non-termination and the strong normalization of
  system {T}.
\newblock In M.~Hasegawa, editor, {\em Typed Lambda Calculi and Applications,
  11th International Conference, {TLCA} 2013. Proceedings}, volume 7941 of {\em
  Lecture Notes in Computer Science}, pages 31--47. Springer, 2013.

\bibitem{AspertiL91}
A.~Asperti and G.~Longo.
\newblock {\em Categories, types and structures: an introduction to category
  theory for the working computer scientist}.
\newblock MIT Press, Cambridge, MA, 1991.

\bibitem{Barendregt77}
H.P. Barendregt.
\newblock The type free lambda calculus.
\newblock In J.~Barwise, editor, {\em Handbook of Mathematical Logic},
  volume~90 of {\em Studies in Logic and the Foundations of Mathematics}, pages
  1091--1132. North-Holland, Amsterdam, 1977.

\bibitem{Bare}
H.P. Barendregt.
\newblock {\em The lambda-calculus, its syntax and semantics}.
\newblock Number 103 in Studies in Logic and the Foundations of Mathematics.
  North-Holland, second edition, 1984.

\bibitem{interbcd}
H.P. Barendregt, M.~Coppo, and M.~Dezani-Ciancaglini.
\newblock A filter lambda model and the completeness of type assignment.
\newblock {\em Journal of Symbolic Logic}, 48:931--940, 1983.

\bibitem{BareTypes}
H.P. Barendregt, W.~Dekkers, and R.~Statman.
\newblock {\em Lambda Calculus with Types}.
\newblock Perspectives in logic. Cambridge University Press, 2013.

\bibitem{Berline00}
C.~Berline.
\newblock From computation to foundations via functions and application: The
  $\lambda$-calculus and its webbed models.
\newblock {\em Theoretical Computer Science}, 249(1):81--161, 2000.

\bibitem{BerlineMS09}
C.~Berline, G.~Manzonetto, and A.~Salibra.
\newblock Effective lambda-models versus recursively enumerable
  lambda-theories.
\newblock {\em Mathematical Structures in Computer Science}, 19(5):897--942,
  2009.

\bibitem{Berry78}
G.~Berry.
\newblock Stable models of typed lambda-calculi.
\newblock In {\em Proceedings of the Fifth Colloquium on Automata, Languages
  and Programming}, volume~62 of {\em LNCS}, Berlin, 1978. Springer-Verlag.

\bibitem{Bierman95}
G.~Bierman.
\newblock What is a categorical model of {L}inear {L}ogic?
\newblock In M.~Dezani-Ciancaglini and G.~Plotkin, editors, {\em Typed Lambda
  Calculi and Applications '95}, volume 902 of {\em Lecture Notes in Computer
  Science}. Springer, April 1995.

\bibitem{Bohm68}
C.~B\"ohm.
\newblock Alcune propriet\`a delle forme $\beta$-$\eta$-normali nel
  $\lambda$-{$K$}-calcolo.
\newblock {\em INAC}, 696:1--19, 1968.

\bibitem{Breuvart13}
F.~Breuvart.
\newblock The resource lambda calculus is short-sighted in its relational
  model.
\newblock In M.~Hasegawa, editor, {\em Typed Lambda Calculi and Applications,
  11th International Conference, {TLCA} 2013. Proceedings}, volume 7941 of {\em
  Lecture Notes in Computer Science}, pages 93--108. Springer, 2013.

\bibitem{Breuvart14}
F.~Breuvart.
\newblock On the characterization of models of $\mathcal{H}^*$.
\newblock In T.~A. Henzinger and D.~Miller, editors, {\em Joint Meeting of the
  23 {EACSL} Annual Conference on Computer Science Logic {(CSL)} and the 29
  Annual {ACM/IEEE} Symposium on Logic in Computer Science (LICS)}, pages
  24:1--24:10. {ACM}, 2014.

\bibitem{Breuvart16}
F.~Breuvart.
\newblock On the characterization of models of $\mathcal{H}^*$: the semantical
  aspect.
\newblock {\em Logical Methods in Computer Science}, 12(2), 2016.

\bibitem{BreuvartMPR16}
F.~Breuvart, G.~Manzonetto, A.~Polonsky, and D.~Ruoppolo.
\newblock New results on {M}orris's observational theory.
\newblock In D.~Kesner and B.~Pientka, editors, {\em Formal Structures for
  Computation and Deduction}, volume~52 of {\em LIPIcs}, pages 15:1--15:18.
  Schloss Dagstuhl, 2016.

\bibitem{BucciarelliE91}
A.~Bucciarelli and T.~Ehrhard.
\newblock Sequentiality and strong stability.
\newblock In {\em Sixth Annual IEEE Symposium on Logic in Computer Science},
  pages 138--145. IEEE Computer Society Press, 1991.

\bibitem{BucciarelliEM07}
A.~Bucciarelli, T.~Ehrhard, and G.~Manzonetto.
\newblock Not enough points is enough.
\newblock In J.~Duparc and T.~A. Henzinger, editors, {\em Computer Science
  Logic, 21st International Workshop, {CSL} 2007. Proceedings}, volume 4646 of
  {\em Lecture Notes in Computer Science}, pages 298--312. Springer, 2007.

\bibitem{BucciarelliEM12}
A.~Bucciarelli, T.~Ehrhard, and G.~Manzonetto.
\newblock A relational semantics for parallelism and non-determinism in a
  functional setting.
\newblock {\em Ann. Pure Appl. Logic}, 163(7):918--934, 2012.

\bibitem{BucciarelliS08}
A.~Bucciarelli and A.~Salibra.
\newblock Graph lambda theories.
\newblock {\em Mathematical Structures in Computer Science}, 18(5):975--1004,
  2008.

\bibitem{CarraroES10}
A.~Carraro, T.~Ehrhard, and A.~Salibra.
\newblock Exponentials with infinite multiplicities.
\newblock In A.~Dawar and H.~Veith, editors, {\em Computer Science Logic, 24th
  International Workshop, {CSL} 2010. Proceedings}, volume 6247 of {\em Lecture
  Notes in Computer Science}, pages 170--184. Springer, 2010.

\bibitem{CarraroG14}
A.~Carraro and G.~Guerrieri.
\newblock A semantical and operational account of call-by-value solvability.
\newblock In Anca Muscholl, editor, {\em Foundations of Software Science and
  Computation Structures - 17th International Conference, {FOSSACS} 2014, Held
  as Part of the European Joint Conferences on Theory and Practice of Software,
  {ETAPS} 2014, Grenoble, France, April 5-13, 2014, Proceedings}, volume 8412
  of {\em Lecture Notes in Computer Science}, pages 103--118. Springer, 2014.

\bibitem{Church41}
A.~Church.
\newblock {\em The calculi of lambda conversion}.
\newblock Princeton Univ. Press, 1941.

\bibitem{CoppoDZ87}
M.~Coppo, M.~Dezani, and M.~Zacchi.
\newblock Type theories, normal forms and $\mathcal{D}_\infty$-lambda-models.
\newblock {\em Inf.\ Comput.}, 72(2):85--116, 1987.

\bibitem{CoppoDHL84}
M.~Coppo, M.~Dezani-Ciancaglini, F.~Honsell, and G.~Longo.
\newblock Extended type structures and filter lambda models.
\newblock In G.~Longo G.~Lolli and A.~Marcja, editors, {\em Logic Colloquium
  '82}, volume 112 of {\em Studies in Logic and the Foundations of
  Mathematics}, pages 241 -- 262. Elsevier, 1984.

\bibitem{CoppoDR78}
M.~Coppo, M.~Dezani{-}Ciancaglini, and S.~Ronchi~Della Rocca.
\newblock ({S}emi)-separability of finite sets of terms in {S}cott's
  {$\cD_\infty$}-models of the lambda-calculus.
\newblock In G.~Ausiello and C.~B{\"{o}}hm, editors, {\em Automata, Languages
  and Programming}, volume~62 of {\em Lecture Notes in Computer Science}, pages
  142--164. Springer, 1978.

\bibitem{deCarvalhoTh}
D.~de~Carvalho.
\newblock {\em S\'emantiques de la logique lin\'eaire et temps de calcul}.
\newblock PhD thesis, Universit\'e Aix-Marseille II, 2007.
\newblock Th\`ese de Doctorat.

\bibitem{deCarvalho09}
D.~de~Carvalho.
\newblock Execution time of $\lambda$-terms via denotational semantics and
  intersection types.
\newblock Draft available at \url{http://arxiv.org/abs/0905.4251}, 2009.

\bibitem{CarvalhoPF11}
D.~de~Carvalho, M.~Pagani, and L.~Tortora~de Falco.
\newblock A semantic measure of the execution time in linear logic.
\newblock {\em Th. Comp. Sci.}, 412(20):1884--1902, 2011.

\bibitem{deLiguoroP95}
U.~de'Liguoro and A.~Piperno.
\newblock Non deterministic extensions of untyped lambda-calculus.
\newblock {\em Inf. Comput.}, 122(2):149--177, 1995.

\bibitem{DezaniLP98}
M.~Dezani{-}Ciancaglini, U.~de'Liguoro, and A.~Piperno.
\newblock A filter model for concurrent lambda-calculus.
\newblock {\em {SIAM} J. Comput.}, 27(5):1376--1419, 1998.

\bibitem{Diaz-CaroMP13}
A.~D{\'{\i}}az{-}Caro, G.~Manzonetto, and M.~Pagani.
\newblock Call-by-value non-determinism in a linear logic type discipline.
\newblock In S.~N. Art{\"{e}}mov and A.~Nerode, editors, {\em Logical
  Foundations of Computer Science, {LFCS}. Proceedings}, volume 7734 of {\em
  Lecture Notes in Computer Science}, pages 164--178. Springer, 2013.

\bibitem{Ehrhard12}
T.~Ehrhard.
\newblock Collapsing non-idempotent intersection types.
\newblock In P.~C{\'{e}}gielski and A.~Durand, editors, {\em Computer Science
  Logic (CSL'12), 21st Annual Conference of the EACSL, {CSL} 2012}, volume~16
  of {\em LIPIcs}, pages 259--273. Schloss Dagstuhl - Leibniz-Zentrum fuer
  Informatik, 2012.

\bibitem{EhrhardG16}
T.~Ehrhard and G.~Guerrieri.
\newblock The bang calculus: an untyped lambda-calculus generalizing
  call-by-name and call-by-value.
\newblock In J.~Cheney and G.~Vidal, editors, {\em Proceedings of the 18th
  International Symposium on Principles and Practice of Declarative
  Programming}, pages 174--187. {ACM}, 2016.

\bibitem{lambdadiff}
T.~Ehrhard and L.~Regnier.
\newblock The differential $\lambda$-calculus.
\newblock {\em Th. Comp. Sci.}, 309(1):1--41, 2003.

\bibitem{bohmtaylor}
T.~Ehrhard and L.~Regnier.
\newblock B{\"o}hm trees, {K}rivine's machine and the {T}aylor expansion of
  lambda-terms.
\newblock In {\em CiE}, volume 3988 of {\em Lecture Notes in Computer Science},
  pages 186--197, 2006.

\bibitem{Engeler81}
E.~Engeler.
\newblock Algebras and combinators.
\newblock {\em Algebra Universalis}, 13(3):389--392, 1981.

\bibitem{jdm096}
M.~Fiore, N.~Gambino, M.~Hyland, and G.~Winskel.
\newblock The cartesian closed bicategory of generalised species of structures.
\newblock {\em Journal of the London Mathematical Society}, 77(1):203, 2008.

\bibitem{GianantonioFH99}
P.~Di Gianantonio, G.~Franco, and F.~Honsell.
\newblock Game semantics for untyped $\lambda \beta \eta$-calculus.
\newblock In {\em TLCA'99}, volume 1581 of {\em Lecture Notes in Computer
  Science}, pages 114--128. Springer, 1999.

\bibitem{Girard72}
J.-Y. Girard.
\newblock {\em Interpr\'etation {F}onctionnelle et {\'E}limination des
  {C}oupures de l'{A}rithm\'etique d'{O}rdre {S}up\'erieur}.
\newblock Th\`ese de doctorat, Universit\'e Paris 7, 1972.

\bibitem{Girard88}
J.-Y. Girard.
\newblock Normal functors, power series and $\lambda$-calculus.
\newblock {\em Ann. of Pure and App. Logic}, 37(2):129--177, 1988.

\bibitem{GouyTh}
X.~Gouy.
\newblock {\em {\'E}tude des th\'eories \'equationnelles et des propri\'et\'es
  alg\'ebriques des mod\`eles stables du $\lambda$-calcul}.
\newblock Th\`ese de doctorat, Universit\'e de Paris~7, 1995.

\bibitem{HonsellR92}
F.~Honsell and S.~Ronchi~Della Rocca.
\newblock An approximation theorem for topological lambda models and the
  topological incompleteness of lambda calculus.
\newblock {\em J. of Computer and System Sciences}, 45:49--75, 1992.

\bibitem{Hyland75errato}
J.M.E. Hyland.
\newblock A survey of some useful partial order relations on terms of the
  $\lambda$-calculus.
\newblock In {\em Lambda-Calculus and Comp. Sci. Th.}, volume~37 of {\em
  Lecture Notes in Computer Science}, pages 83--95. Springer, 1975.

\bibitem{Hyland76}
J.M.E. Hyland.
\newblock A syntactic characterization of the equality in some models for the
  $\lambda$-calculus.
\newblock {\em J. London Math. Soc. (2)}, 12(3):361--370, 1976.

\bibitem{Hyland10}
J.M.E. Hyland.
\newblock Some reasons for generalising domain theory.
\newblock {\em Mathematical Structures in Computer Science}, 20(2):239--265,
  2010.

\bibitem{HylandNPR06}
J.M.E. Hyland, M.~Nagayama, J.~Power, and G.~Rosolini.
\newblock A category theoretic formulation for {E}ngeler-style models of the
  untyped $\lambda$-calculus.
\newblock {\em Electr. Notes in TCS}, 161:43--57, 2006.

\bibitem{IntrigilaMP18}
B.~Intrigila, G.~Manzonetto, and A.~Polonsky.
\newblock Degrees of extensionality in the theory of {B}\"ohm trees and
  {S}all\'e's conjecture.
\newblock Submitted. CoRR abs/1802.07320.

\bibitem{IntrigilaMP17}
B.~Intrigila, G.~Manzonetto, and A.~Polonsky.
\newblock Refutation of {S}all{\'{e}}'s longstanding conjecture.
\newblock In D.~Miller, editor, {\em 2nd International Conference on Formal
  Structures for Computation and Deduction, {FSCD} 2017}, volume~84 of {\em
  LIPIcs}, pages 20:1--20:18. Schloss Dagstuhl - Leibniz-Zentrum fuer
  Informatik, 2017.

\bibitem{JacobsR97}
B.~Jacobs and J.~Rutten.
\newblock A tutorial on (co)algebras and (co)induction.
\newblock {\em EATCS Bulletin}, 62:62--222, 1997.

\bibitem{Kleene36}
S.~C. Kleene.
\newblock {\(\lambda\)}-definability and recursiveness.
\newblock {\em Duke Math. J.}, 2(2):340--353, 06 1936.

\bibitem{Koymans82}
C.P.J. Koymans.
\newblock Models of the lambda calculus.
\newblock {\em Information and Control}, 52(3):306--332, 1982.

\bibitem{KozenS17}
D.~Kozen and A.~Silva.
\newblock Practical coinduction.
\newblock {\em Mathematical Structures in Computer Science}, 27(7):1132--1152,
  2017.

\bibitem{Krivine93}
J.-L. Krivine.
\newblock {\em Lambda-calculus, types and models}.
\newblock Ellis Horwood, New York, 1993.
\newblock Translated from the ed. Masson, 1990, French original.

\bibitem{KrivineClass}
J.-L. Krivine.
\newblock Realizability in classical logic.
\newblock In P.-L. Curien, H.~Herbelin, J.-L. Krivine, and P.-A. Melli\`es,
  editors, {\em Interactive models of computation and program behaviour},
  number~27 in Panoramas et synth\`eses. Soci\'et\'e Math\'ematique de France,
  2009.

\bibitem{LairdMMP13}
J.~Laird, G.~Manzonetto, G.~McCusker, and M.~Pagani.
\newblock Weighted relational models of typed lambda-calculi.
\newblock In {\em 28th Annual {ACM/IEEE} Symposium on Logic in Computer
  Science, {LICS} 2013}, pages 301--310. {IEEE} Computer Society, 2013.

\bibitem{Lassen99}
S.~Lassen.
\newblock Bisimulation in untyped lambda calculus: B{\"{o}}hm trees and
  bisimulation up to context.
\newblock {\em Electr. Notes Theor. Comput. Sci.}, 20:346--374, 1999.

\bibitem{JJ}
J.-J. L\'evy.
\newblock Le lambda calcul - notes du cours, 2005.
\newblock In French.\\
  \url{http://pauillac.inria.fr/~levy/courses/X/M1/lambda/dea-spp/jjl.pdf}.

\bibitem{Levy06}
P.~B. Levy.
\newblock Call-by-push-value: Decomposing call-by-value and call-by-name.
\newblock {\em Higher-Order and Symbolic Computation}, 19(4):377--414, 2006.

\bibitem{Longo83}
G.~Longo.
\newblock Set-theoretical models of $\lambda $-calculus: theories, expansions,
  isomorphisms.
\newblock {\em Ann. Pure Appl. Logic}, 24(2):153--188, 1983.

\bibitem{LusinS04}
S.~Lusin and A.~Salibra.
\newblock The lattice of $\lambda$-theories.
\newblock {\em J. Log. Comput.}, 14(3):373--394, 2004.

\bibitem{ManzonettoTh}
G.~Manzonetto.
\newblock {\em Models and theories of lambda calculus}.
\newblock Th\`ese de doctorat, Univ. Ca'Foscari (Venice) and Univ. Paris
  Diderot (Paris 7), 2008.

\bibitem{Manzonetto09}
G.~Manzonetto.
\newblock A general class of models of $\mathcal{H}^{\star}$.
\newblock In {\em Mathematical Foundations of Computer Science 2009, 34th
  International Symposium, {MFCS} 2009. Proceedings}, volume 5734 of {\em
  Lecture Notes in Computer Science}, pages 574--586. Springer, 2009.

\bibitem{Manzonetto12}
G.~Manzonetto.
\newblock What is a categorical model of the differential and the resource
  $\lambda$-calculi?
\newblock {\em Mathematical Structures in Computer Science}, 22(3):451--520,
  2012.

\bibitem{ManzonettoR14}
G.~Manzonetto and D.~Ruoppolo.
\newblock Relational graph models, {T}aylor expansion and extensionality.
\newblock {\em Electr. Notes Theor. Comput. Sci.}, 308:245--272, 2014.

\bibitem{ManzonettoS06}
G.~Manzonetto and A.~Salibra.
\newblock Boolean algebras for lambda calculus.
\newblock In {\em 21th {IEEE} Symposium on Logic in Computer Science {(LICS}
  2006). Proceedings}, pages 317--326. {IEEE} Computer Society, 2006.

\bibitem{Morristh}
J.H. Morris.
\newblock {\em Lambda calculus models of programming languages}.
\newblock Phd, MIT, 1968.

\bibitem{PaganiR10}
M.~Pagani and S.~Ronchi~Della Rocca.
\newblock Linearity, non-determinism and solvability.
\newblock {\em Fundam. Inform.}, 103(1-4):173--202, 2010.

\bibitem{Paolini08}
L.~Paolini.
\newblock Parametric $\lambda$-theories.
\newblock {\em Theoretical Computer Science}, 398(1):51 -- 62, 2008.

\bibitem{PaoliniPR15}
L.~Paolini, M.~Piccolo, and S.~Ronchi Della~Rocca.
\newblock Essential and relational models.
\newblock {\em Mathematical Structures in Computer Science}, 27(5):626--650,
  2017.

\bibitem{Plotkin71}
G.D. Plotkin.
\newblock A set-theoretical definition of application.
\newblock Technical Report MIP-R-95, School of artificial intelligence, 1971.

\bibitem{Plotkin93}
G.D. Plotkin.
\newblock Set-theoretical and other elementary models of the
  $\lambda$-calculus.
\newblock {\em Theoretical Computer Science}, 121(1):351 -- 409, 1993.

\bibitem{Reynolds83}
J.C. Reynolds.
\newblock Types, abstraction and parametric polymorphism.
\newblock In {\em {IFIP} Congress}, pages 513--523, 1983.

\bibitem{Ronchi82}
S.~Ronchi~Della Rocca.
\newblock Characterization theorems for a filter lambda model.
\newblock {\em Information and Control}, 54(3):201--216, 1982.

\bibitem{RonchiP04}
S.~Ronchi~Della Rocca and L.~Paolini.
\newblock {\em The Parametric Lambda Calculus - {A} Metamodel for Computation}.
\newblock Texts in Theoretical Computer Science. An {EATCS} Series. Springer,
  2004.

\bibitem{phdruoppolo}
D.~Ruoppolo.
\newblock {\em Relational Graph Models and Morris's Observability:
  resource-sensitive semantic investigations on the untyped
  {$\lambda$}-calculus}.
\newblock Th\'ese de doctorat, Universit{\'e} Paris 13, 2016.

\bibitem{Salibra01}
A.~Salibra.
\newblock Nonmodularity results for lambda calculus.
\newblock {\em Fundam. Inform.}, 45(4):379--392, 2001.

\bibitem{SalvatiMGB12}
S.~Salvati, G.~Manzonetto, M.~Gehrke, and H.~Barendregt.
\newblock Loader and {U}rzyczyn are logically related.
\newblock In A.~Czumaj, K.~Mehlhorn, A.M. Pitts, and R.~Wattenhofer, editors,
  {\em Automata, Languages, and Programming - 39th International Colloquium,
  {ICALP} 2012. Proceedings, Part {II}}, volume 7392 of {\em Lecture Notes in
  Computer Science}, pages 364--376. Springer, 2012.

\bibitem{Scott72}
D.~Scott.
\newblock Continuous lattices.
\newblock In Lawvere, editor, {\em Toposes, Algebraic Geometry and Logic},
  volume 274 of {\em Lecture Notes in Math.}, pages 97--136. Springer, 1972.

\bibitem{Scott76}
D.S. Scott.
\newblock Data types as lattices.
\newblock {\em {SIAM} J. Comput.}, 5(3):522--587, 1976.

\bibitem{Selinger02}
P.~Selinger.
\newblock The lambda calculus is algebraic.
\newblock {\em J. Funct. Program.}, 12(6):549--566, 2002.

\bibitem{SeveriV16}
P.~Severi and F.{-}J. de~Vries.
\newblock The infinitary lambda calculus of the infinite eta {B\"{o}}hm trees.
\newblock {\em Mathematical Structures in Computer Science}, 27(5):681--733,
  2017.

\bibitem{Tait67}
W.~Tait.
\newblock Intensional interpretations of functionals of finite type {I}.
\newblock {\em J. Symb. Log.}, 32(2):198--212, 1967.

\bibitem{phdtranquilli}
P.~Tranquilli.
\newblock {\em Nets between {D}eterminism and {N}ondeterminism}.
\newblock Ph.{D}.\ thesis, Universit\`a Roma Tre/Universit\'e Paris Diderot
  (Paris 7), April 2009.

\bibitem{Turing37}
A.~M. Turing.
\newblock Computability and {\(\lambda\)}-definability.
\newblock {\em J. Symb. Log.}, 2(4):153--163, 1937.

\bibitem{Wadsworth76}
C.P. Wadsworth.
\newblock The relation between computational and denotational properties for
  {S}cott's $\mathscr{D}_{\infty}$-models of the lambda-calculus.
\newblock {\em {SIAM} J. Comput.}, 5(3):488--521, 1976.

\bibitem{Wadsworth78}
C.P. Wadsworth.
\newblock Approximate reduction and lambda calculus models.
\newblock {\em {SIAM} J. Comput.}, 7(3):337--356, 1978.

\end{thebibliography}
\end{document}